\let\ssection=\section
\renewcommand{\section}{\setcounter{equation}{0}\ssection}
\newtheorem{definition}{Definition}[section]
\newtheorem{theorem}{Theorem}[section]
\newtheorem{lemma}[theorem]{Lemma}
\newtheorem{remark}[theorem]{Remark}
\newtheorem{corollary}[theorem]{Corollary}
\newenvironment{proof}[1][Proof]{\noindent\textbf{#1.} }
{\ \rule{0.5em}{0.5em}}
\def\diagram{\m@th\leftwidth=\z@ \rightwidth=\z@ \topheight=\z@
\botheight=\z@ \setbox\@picbox\hbox\bgroup}
\def\enddiagram{\egroup\wd\@picbox\rightwidth\unitlength
\ht\@picbox\topheight\unitlength \dp\@picbox\botheight\unitlength
\hskip\leftwidth\unitlength\box\@picbox}
\def\bfig{\begin{diagram}}
\def\efig{\end{diagram}}
\def\ratchet#1#2{\ifnum#1<#2 \global #1=#2 \fi}
\def\putbox(#1,#2)#3{%
\horsize{\wideness}{#3} \divide\wideness by 2 {\advance\wideness
by #1 \ratchet{\rightwidth}{\wideness}} {\advance\wideness by -#1
\ratchet{\leftwidth}{\wideness}} \vertsize{\highness}{#3}
\divide\highness by 2 {\advance\highness by #2
\ratchet{\topheight}{\highness}} {\advance\highness by -#2
\ratchet{\botheight}{\highness}} \put(#1,#2){\makebox(0,0){$#3$}}}
\def\putlbox(#1,#2)#3{%
\horsize{\wideness}{#3} {\advance\wideness by #1
\ratchet{\rightwidth}{\wideness}} {\ratchet{\leftwidth}{-#1}}
\vertsize{\highness}{#3} \divide\highness by 2 {\advance\highness
by #2 \ratchet{\topheight}{\highness}} {\advance\highness by -#2
\ratchet{\botheight}{\highness}}
\put(#1,#2){\makebox(0,0)[l]{$#3$}}}
\def\putrbox(#1,#2)#3{%
\horsize{\wideness}{#3} {\ratchet{\rightwidth}{#1}}
{\advance\wideness by -#1 \ratchet{\leftwidth}{\wideness}}
\vertsize{\highness}{#3} \divide\highness by 2 {\advance\highness
by #2 \ratchet{\topheight}{\highness}} {\advance\highness by -#2
\ratchet{\botheight}{\highness}}
\put(#1,#2){\makebox(0,0)[r]{$#3$}}}
\def\adjust[#1]{} 
\newdimen\tempdimen
\newdimen\xlen
\newdimen\ylen
\newsavebox{\tempboxa}%
\newsavebox{\tempboxb}%
\newsavebox{\tempboxc}%
\newdimen\w@dth
\def\setw@dth#1#2{\setbox\z@\hbox{\m@th$#1$}\w@dth=\wd\z@
\setbox\@ne\hbox{\m@th$#2$}\ifnum\w@dth<\wd\@ne \w@dth=\wd\@ne \fi
\advance\w@dth by 1.2em}
\def\t@^#1_#2{\allowbreak\def\n@one{#1}\def\n@two{#2}\mathrel
{\setw@dth{#1}{#2} \mathop{\hbox to
\w@dth{\rightarrowfill}}\limits \ifx\n@one\empty\else
^{\box\z@}\fi \ifx\n@two\empty\else _{\box\@ne}\fi}}
\def\t@@^#1{\@ifnextchar_{\t@^{#1}}{\t@^{#1}_{}}}
\def\to{\@ifnextchar^{\t@@}{\t@@^{}}}
\def\t@left^#1_#2{\def\n@one{#1}\def\n@two{#2}\mathrel{\setw@dth{#1}{#2}
\mathop{\hbox to \w@dth{\leftarrowfill}}\limits
\ifx\n@one\empty\else ^{\box\z@}\fi \ifx\n@two\empty\else
_{\box\@ne}\fi}}
\def\t@@left^#1{\@ifnextchar_{\t@left^{#1}}{\t@left^{#1}_{}}}
\def\toleft{\@ifnextchar^{\t@@left}{\t@@left^{}}}
\def\two@^#1_#2{\allowbreak
\def\n@one{#1}\def\n@two{#2}\mathrel{\setw@dth{#1}{#2}
\mathop{\vcenter{\lineskip\z@\baselineskip\z@
                 \hbox to \w@dth{\rightarrowfill}%
                 \hbox to \w@dth{\rightarrowfill}}%
       }\limits
\ifx\n@one\empty\else ^{\box\z@}\fi \ifx\n@two\empty\else
_{\box\@ne}\fi}}
\def\tw@@^#1{\@ifnextchar _{\two@^{#1}}{\two@^{#1}_{}}}
\def\two{\@ifnextchar ^{\tw@@}{\tw@@^{}}}
\def\tofr@^#1_#2{\def\n@one{#1}\def\n@two{#2}\mathrel{\setw@dth{#1}{#2}
\mathop{\vcenter{\hbox to \w@dth{\rightarrowfill}\kern-1.7ex
                 \hbox to \w@dth{\leftarrowfill}}%
       }\limits
\ifx\n@one\empty\else ^{\box\z@}\fi \ifx\n@two\empty\else
_{\box\@ne}\fi}}
\def\t@fr@^#1{\@ifnextchar_ {\tofr@^{#1}}{\tofr@^{#1}_{}}}
\def\tofro{\@ifnextchar^ {\t@fr@}{\t@fr@^{}}}
\def\mon{\mathop{\m@th\hbox to
      14.6\P@{\lasyb\char'51\hskip-2.1\P@$\arrext$\hss
$\mathord\rightarrow$}}\limits} 
\def\leftmono{\mathrel{\m@th\hbox to
14.6\P@{$\mathord\leftarrow$\hss$\arrext$\hskip-2.1\P@\lasyb\char'50%
}}\limits} 
\mathchardef\arrext="0200       
\def\settypes(#1,#2,#3){\arrowtypea#1 \arrowtypeb#2 \arrowtypec#3}
\def\settoheight#1#2{\setbox\@tempboxa\hbox{#2}#1\ht\@tempboxa\relax}%
\def\settodepth#1#2{\setbox\@tempboxa\hbox{#2}#1\dp\@tempboxa\relax}%
\def\settokens`#1`#2`#3`#4`{%
     \def\tokena{#1}\def\tokenb{#2}\def\tokenc{#3}\def\tokend{#4}}
\def\setsqparms[#1`#2`#3`#4;#5`#6]{%
\arrowtypea #1 \arrowtypeb #2 \arrowtypec #3 \arrowtyped #4 \width
#5 \height #6 }
\def\setpos(#1,#2){\xpos=#1 \ypos#2}
\def\settriparms[#1`#2`#3;#4]{\settripairparms[#1`#2`#3`1`1;#4]}%
\def\settripairparms[#1`#2`#3`#4`#5;#6]{%
\arrowtypea #1 \arrowtypeb #2 \arrowtypec #3 \arrowtyped #4
\arrowtypee #5 \width #6 \height #6 }
\def\resetparms{\settripairparms[1`1`1`1`1;500]\width 500}
\def\mvector(#1,#2)#3{
\put(0,0){\vector(#1,#2){#3}}%
\put(0,0){\vector(#1,#2){26}}%
}
\def\evector(#1,#2)#3{{
\arrowlength #3
\put(0,0){\vector(#1,#2){\arrowlength}}%
\advance \arrowlength by-30
\put(0,0){\vector(#1,#2){\arrowlength}}%
}}
\def\horsize#1#2{%
\settowidth{\tempdimen}{$#2$}%
#1=\tempdimen \divide #1 by\unitlength }
\def\vertsize#1#2{%
\settoheight{\tempdimen}{$#2$}%
#1=\tempdimen
\settodepth{\tempdimen}{$#2$}%
\advance #1 by\tempdimen \divide #1 by\unitlength }
\def\putvector(#1,#2)(#3,#4)#5#6{{%
\ifnum3<\arrowtype \putdashvector(#1,#2)(#3,#4)#5\arrowtype \else
\ifnum\arrowtype<-3 \putdashvector(#1,#2)(#3,#4)#5\arrowtype \else
\xpos=#1 \ypos=#2 \run=#3 \rise=#4 \arrowlength=#5 \ifnum
\arrowtype<0
    \ifnum \run=0
        \advance \ypos by-\arrowlength
    \else
        \tempcounta \arrowlength
        \multiply \tempcounta by\rise
        \divide \tempcounta by\run
        \ifnum\run>0
            \advance \xpos by\arrowlength
            \advance \ypos by\tempcounta
        \else
            \advance \xpos by-\arrowlength
            \advance \ypos by-\tempcounta
        \fi
    \fi
    \multiply \arrowtype by-1
    \multiply \rise by-1
    \multiply \run by-1
\fi \ifcase \arrowtype
\or \put(\xpos,\ypos){\vector(\run,\rise){\arrowlength}}%
\or \put(\xpos,\ypos){\mvector(\run,\rise)\arrowlength}%
\or \put(\xpos,\ypos){\evector(\run,\rise){\arrowlength}}%
\fi\fi\fi }}
\def\putsplitvector(#1,#2)#3#4{
\xpos #1 \ypos #2 \arrowtype #4 \halflength #3 \arrowlength #3
\gap 140 \advance \halflength by-\gap \divide \halflength by2
\ifnum\arrowtype>0
   \ifcase \arrowtype
   \or \put(\xpos,\ypos){\line(0,-1){\halflength}}%
       \advance\ypos by-\halflength
       \advance\ypos by-\gap
       \put(\xpos,\ypos){\vector(0,-1){\halflength}}%
   \or \put(\xpos,\ypos){\line(0,-1)\halflength}%
       \put(\xpos,\ypos){\vector(0,-1)3}%
       \advance\ypos by-\halflength
       \advance\ypos by-\gap
       \put(\xpos,\ypos){\vector(0,-1){\halflength}}%
   \or \put(\xpos,\ypos){\line(0,-1)\halflength}%
       \advance\ypos by-\halflength
       \advance\ypos by-\gap
       \put(\xpos,\ypos){\evector(0,-1){\halflength}}%
   \fi
\else \arrowtype=-\arrowtype
   \ifcase\arrowtype
   \or \advance \ypos by-\arrowlength
       \put(\xpos,\ypos){\line(0,1){\halflength}}%
       \advance\ypos by\halflength
       \advance\ypos by\gap
       \put(\xpos,\ypos){\vector(0,1){\halflength}}%
   \or \advance \ypos by-\arrowlength
       \put(\xpos,\ypos){\line(0,1)\halflength}%
       \put(\xpos,\ypos){\vector(0,1)3}%
       \advance\ypos by\halflength
       \advance\ypos by\gap
       \put(\xpos,\ypos){\vector(0,1){\halflength}}%
   \or \advance \ypos by-\arrowlength
       \put(\xpos,\ypos){\line(0,1)\halflength}%
       \advance\ypos by\halflength
       \advance\ypos by\gap
       \put(\xpos,\ypos){\evector(0,1){\halflength}}%
   \fi
\fi }
\def\putmorphism(#1)(#2,#3)[#4`#5`#6]#7#8#9{{%
\run #2 \rise #3 \ifnum\rise=0
  \puthmorphism(#1)[#4`#5`#6]{#7}{#8}#9%
\else\ifnum\run=0
  \putvmorphism(#1)[#4`#5`#6]{#7}{#8}#9%
\else
\setpos(#1)%
\arrowlength #7 \arrowtype #8 \ifnum\run=0 \else\ifnum\rise=0
\else \ifnum\run>0
    \coefa=1
\else
   \coefa=-1
\fi \ifnum\arrowtype>0
   \coefb=0
   \coefc=-1
\else
   \coefb=\coefa
   \coefc=1
   \arrowtype=-\arrowtype
\fi \width=2 \multiply \width by\run \divide \width by\rise \ifnum
\width<0  \width=-\width\fi \advance\width by60 \if l#9
\width=-\width\fi
\putbox(\xpos,\ypos){#4}
{\multiply \coefa by\arrowlength
\advance\xpos by\coefa \multiply \coefa by\rise \divide \coefa
by\run \advance \ypos by\coefa
\putbox(\xpos,\ypos){#5} }%
{\multiply \coefa by\arrowlength
\divide \coefa by2 \advance \xpos by\coefa \advance \xpos by\width
\multiply \coefa by\rise \divide \coefa by\run \advance \ypos
by\coefa
\if l#9%
   \putrbox(\xpos,\ypos){#6}%
\else\if r#9%
   \putlbox(\xpos,\ypos){#6}%
\fi\fi }%
{\multiply \rise by-\coefc
\multiply \run by-\coefc \multiply \coefb by\arrowlength \advance
\xpos by\coefb \multiply \coefb by\rise \divide \coefb by\run
\advance \ypos by\coefb \multiply \coefc by70 \advance \ypos
by\coefc \multiply \coefc by\run \divide \coefc by\rise \advance
\xpos by\coefc \multiply \coefa by140 \multiply \coefa by\run
\divide \coefa by\rise \advance \arrowlength by\coefa
\ifcase\arrowtype
\or \put(\xpos,\ypos){\vector(\run,\rise){\arrowlength}}%
\or \put(\xpos,\ypos){\mvector(\run,\rise){\arrowlength}}%
\or \put(\xpos,\ypos){\evector(\run,\rise){\arrowlength}}%
\fi}\fi\fi\fi\fi}}
\def\howmanydashes{
\numbdashes=\arrowlength \lengthdash=40 \divide\numbdashes by
\lengthdash \lengthdash=\arrowlength \divide\lengthdash by
\numbdashes
\increment=\lengthdash \multiply\lengthdash by 3
\divide\lengthdash by 5 }
\def\putdashvector(#1)(#2,#3)#4#5{%
\ifnum#3=0 \putdashhvector(#1){#4}#5 \else \ifnum#2=0
\putdashvvector(#1){#4}#5\fi\fi}
\def\putdashhvector(#1,#2)#3#4{{%
\arrowlength=#3 \howmanydashes
\multiput(#1,#2)(\increment,0){\numbdashes}%
{\vrule height .4pt width \lengthdash\unitlength} \arrowtype=#4
\xpos=#1 \ifnum\arrowtype<0 \advance\arrowtype by 7 \fi
\ifcase\arrowtype \or \advance\xpos by 10
    \put(\xpos,#2){\vector(-1,0){\lengthdash}}
    \advance\xpos by 40
    \put(\xpos,#2){\vector(-1,0){\lengthdash}}
\or \advance \xpos by 10
    \put(\xpos,#2){\vector(-1,0){\lengthdash}}
    \advance\xpos by  \arrowlength
    \advance\xpos by  -50
    \put(\xpos,#2){\vector(-1,0){\lengthdash}}
\or \advance\xpos by 10
    \put(\xpos,#2){\vector(-1,0){\lengthdash}}
\or \advance\xpos by \arrowlength
    \advance\xpos by -\lengthdash
    \put(\xpos,#2){\vector(1,0){\lengthdash}}
\or {\advance\xpos by 10
    \put(\xpos,#2){\vector(1,0){\lengthdash}}}
    \advance\xpos by \arrowlength
    \advance\xpos by -\lengthdash
    \put(\xpos,#2){\vector(1,0){\lengthdash}}
\or \advance\xpos by \arrowlength
    \advance\xpos by -\lengthdash
    \put(\xpos,#2){\vector(1,0){\lengthdash}}
    \advance\xpos by -40
    \put(\xpos,#2){\vector(1,0){\lengthdash}}
   \fi
}}
\def\putdashvvector(#1,#2)#3#4{{%
\arrowlength=#3 \howmanydashes \ypos=#2 \advance\ypos by
-\arrowlength
\multiput(#1,#2)(0,\increment){\numbdashes}%
    {\vrule width .4pt height \lengthdash\unitlength}
\arrowtype=#4 \ypos=#2 \ifnum\arrowtype<0 \advance\arrowtype by 7
\fi \ifcase\arrowtype \or \advance\ypos by \arrowlength
\advance\ypos by -40
    \put(#1,\ypos){\vector(0,1){\lengthdash}}
    \advance\ypos by -40
    \put(#1,\ypos){\vector(0,1){\lengthdash}}
\or \advance\ypos by 10
    \put(#1,\ypos){\vector(0,1){\lengthdash}}
    \advance\ypos by \arrowlength \advance\ypos by -40
    \put(#1,\ypos){\vector(0,1){\lengthdash}}
\or \advance\ypos by \arrowlength \advance\ypos by -40
    \put(#1,\ypos){\vector(0,1){\lengthdash}}
\or \advance\ypos by 10
    \put(#1,\ypos){\vector(0,-1){\lengthdash}}
\or \advance\ypos by 10
    \put(#1,\ypos){\vector(0,-1){\lengthdash}}
    \advance\ypos by \arrowlength \advance\ypos by -40
    \put(#1,\ypos){\vector(0,-1){\lengthdash}}
\or \advance\ypos by 10
    \put(#1,\ypos){\vector(0,-1){\lengthdash}}
    \advance\ypos by 40
    \put(#1,\ypos){\vector(0,-1){\lengthdash}}
\fi }}
\def\puthmorphism(#1,#2)[#3`#4`#5]#6#7#8{{%
\xpos #1 \ypos #2 \width #6 \arrowlength #6 \arrowtype=#7
\putbox(\xpos,\ypos){#3\vphantom{#4}}%
{\advance \xpos by\arrowlength
\putbox(\xpos,\ypos){\vphantom{#3}#4}}%
\horsize{\tempcounta}{#3}%
\horsize{\tempcountb}{#4}%
\divide \tempcounta by2 \divide \tempcountb by2 \advance
\tempcounta by30 \advance \tempcountb by30 \advance \xpos
by\tempcounta \advance \arrowlength by-\tempcounta \advance
\arrowlength by-\tempcountb
\putvector(\xpos,\ypos)(1,0)\arrowlength\arrowtype \divide
\arrowlength by2 \advance \xpos by\arrowlength
\vertsize{\tempcounta}{#5}%
\divide\tempcounta by2 \advance \tempcounta by20
\if a#8 %
   \advance \ypos by\tempcounta
   \putbox(\xpos,\ypos){#5}%
\else
   \advance \ypos by-\tempcounta
   \putbox(\xpos,\ypos){#5}%
\fi}}
\def\putvmorphism(#1,#2)[#3`#4`#5]#6#7#8{{%
\xpos #1 \ypos #2 \arrowlength #6 \arrowtype #7
\settowidth{\xlen}{$#5$}%
\putbox(\xpos,\ypos){#3}%
{\advance \ypos by-\arrowlength
\putbox(\xpos,\ypos){#4}}%
{\advance\arrowlength by-140 \advance \ypos by-70 \ifdim\xlen>0pt
   \if m#8%
      \putsplitvector(\xpos,\ypos)\arrowlength\arrowtype
   \else
   \putvector(\xpos,\ypos)(0,-1)\arrowlength\arrowtype
   \fi
\else
   \putvector(\xpos,\ypos)(0,-1)\arrowlength\arrowtype
\fi}%
\ifdim\xlen>0pt
   \divide \arrowlength by2
   \advance\ypos by-\arrowlength
   \if l#8%
      \advance \xpos by-40
      \putrbox(\xpos,\ypos){#5}%
   \else\if r#8%
      \advance \xpos by40
      \putlbox(\xpos,\ypos){#5}%
   \else
      \putbox(\xpos,\ypos){#5}%
   \fi\fi
\fi }}
\def\putsquarep<#1>(#2)[#3;#4`#5`#6`#7]{{%
\setsqparms[#1]%
\setpos(#2)%
\settokens`#3`%
\puthmorphism(\xpos,\ypos)[\tokenc`\tokend`{#7}]{\width}{\arrowtyped}b%
\advance\ypos by \height
\puthmorphism(\xpos,\ypos)[\tokena`\tokenb`{#4}]{\width}{\arrowtypea}a%
\putvmorphism(\xpos,\ypos)[``{#5}]{\height}{\arrowtypeb}l%
\advance\xpos by \width
\putvmorphism(\xpos,\ypos)[``{#6}]{\height}{\arrowtypec}r%
}}
\def\putsquare{\@ifnextchar <{\putsquarep}{\putsquarep%
   <\arrowtypea`\arrowtypeb`\arrowtypec`\arrowtyped;\width`\height>}}
\def\square{\@ifnextchar< {\squarep}{\squarep
   <\arrowtypea`\arrowtypeb`\arrowtypec`\arrowtyped;\width`\height>}}
\def\squarep<#1>[#2`#3`#4`#5;#6`#7`#8`#9]{{
\setsqparms[#1]
\diagram
\putsquarep<\arrowtypea`\arrowtypeb`\arrowtypec`
\arrowtyped;\width`\height>
(0,0)[#2`#3`#4`{#5};#6`#7`#8`{#9}]
\enddiagram
}}                                                 
\def\putptrianglep<#1>(#2,#3)[#4`#5`#6;#7`#8`#9]{{%
\settriparms[#1]%
\xpos=#2 \ypos=#3 \advance\ypos by \height
\puthmorphism(\xpos,\ypos)[#4`#5`{#7}]{\height}{\arrowtypea}a%
\putvmorphism(\xpos,\ypos)[`#6`{#8}]{\height}{\arrowtypeb}l%
\advance\xpos by\height
\putmorphism(\xpos,\ypos)(-1,-1)[``{#9}]{\height}{\arrowtypec}r%
}}
\def\putptriangle{\@ifnextchar <{\putptrianglep}{\putptrianglep
   <\arrowtypea`\arrowtypeb`\arrowtypec;\height>}}
\def\ptriangle{\@ifnextchar <{\ptrianglep}{\ptrianglep
   <\arrowtypea`\arrowtypeb`\arrowtypec;\height>}}
\def\ptrianglep<#1>[#2`#3`#4;#5`#6`#7]{{
\settriparms[#1]
\diagram
\putptrianglep<\arrowtypea`\arrowtypeb`
\arrowtypec;\height>
(0,0)[#2`#3`#4;#5`#6`{#7}]
\enddiagram
}}                                            
\def\putqtrianglep<#1>(#2,#3)[#4`#5`#6;#7`#8`#9]{{%
\settriparms[#1]%
\xpos=#2 \ypos=#3 \advance\ypos by\height
\puthmorphism(\xpos,\ypos)[#4`#5`{#7}]{\height}{\arrowtypea}a%
\putmorphism(\xpos,\ypos)(1,-1)[``{#8}]{\height}{\arrowtypeb}l%
\advance\xpos by\height
\putvmorphism(\xpos,\ypos)[`#6`{#9}]{\height}{\arrowtypec}r%
}}
\def\putqtriangle{\@ifnextchar <{\putqtrianglep}{\putqtrianglep
   <\arrowtypea`\arrowtypeb`\arrowtypec;\height>}}
\def\qtriangle{\@ifnextchar <{\qtrianglep}{\qtrianglep
   <\arrowtypea`\arrowtypeb`\arrowtypec;\height>}}
\def\qtrianglep<#1>[#2`#3`#4;#5`#6`#7]{{
\settriparms[#1]
\width=\height                                
\diagram
\putqtrianglep<\arrowtypea`\arrowtypeb`
\arrowtypec;\height>
(0,0)[#2`#3`#4;#5`#6`{#7}]
\enddiagram
}}
\def\putdtrianglep<#1>(#2,#3)[#4`#5`#6;#7`#8`#9]{{%
\settriparms[#1]%
\xpos=#2 \ypos=#3
\puthmorphism(\xpos,\ypos)[#5`#6`{#9}]{\height}{\arrowtypec}b%
\advance\xpos by \height \advance\ypos by\height
\putmorphism(\xpos,\ypos)(-1,-1)[``{#7}]{\height}{\arrowtypea}l%
\putvmorphism(\xpos,\ypos)[#4``{#8}]{\height}{\arrowtypeb}r%
}}
\def\putdtriangle{\@ifnextchar <{\putdtrianglep}{\putdtrianglep
   <\arrowtypea`\arrowtypeb`\arrowtypec;\height>}}
\def\dtriangle{\@ifnextchar <{\dtrianglep}{\dtrianglep
   <\arrowtypea`\arrowtypeb`\arrowtypec;\height>}}
\def\dtrianglep<#1>[#2`#3`#4;#5`#6`#7]{{
\settriparms[#1]
\width=\height                                
\diagram
\putdtrianglep<\arrowtypea`\arrowtypeb`
\arrowtypec;\height>
(0,0)[#2`#3`#4;#5`#6`{#7}]
\enddiagram
}}
\def\putbtrianglep<#1>(#2,#3)[#4`#5`#6;#7`#8`#9]{{%
\settriparms[#1]%
\xpos=#2 \ypos=#3
\puthmorphism(\xpos,\ypos)[#5`#6`{#9}]{\height}{\arrowtypec}b%
\advance\ypos by\height
\putmorphism(\xpos,\ypos)(1,-1)[``{#8}]{\height}{\arrowtypeb}r%
\putvmorphism(\xpos,\ypos)[#4``{#7}]{\height}{\arrowtypea}l%
}}
\def\putbtriangle{\@ifnextchar <{\putbtrianglep}{\putbtrianglep
   <\arrowtypea`\arrowtypeb`\arrowtypec;\height>}}
\def\btriangle{\@ifnextchar <{\btrianglep}{\btrianglep
   <\arrowtypea`\arrowtypeb`\arrowtypec;\height>}}
\def\btrianglep<#1>[#2`#3`#4;#5`#6`#7]{{
\settriparms[#1]
\width=\height                               
\diagram
\putbtrianglep<\arrowtypea`\arrowtypeb`
\arrowtypec;\height>
(0,0)[#2`#3`#4;#5`#6`{#7}]
\enddiagram
}}
\def\putAtrianglep<#1>(#2,#3)[#4`#5`#6;#7`#8`#9]{{%
\settriparms[#1]%
\xpos=#2 \ypos=#3 {\multiply \height by2
\puthmorphism(\xpos,\ypos)[#5`#6`{#9}]{\height}{\arrowtypec}b}%
\advance\xpos by\height \advance\ypos by\height
\putmorphism(\xpos,\ypos)(-1,-1)[#4``{#7}]{\height}{\arrowtypea}l%
\putmorphism(\xpos,\ypos)(1,-1)[``{#8}]{\height}{\arrowtypeb}r%
}}
\def\putAtriangle{\@ifnextchar <{\putAtrianglep}{\putAtrianglep
   <\arrowtypea`\arrowtypeb`\arrowtypec;\height>}}
\def\Atriangle{\@ifnextchar <{\Atrianglep}{\Atrianglep
   <\arrowtypea`\arrowtypeb`\arrowtypec;\height>}}
\def\Atrianglep<#1>[#2`#3`#4;#5`#6`#7]{{
\settriparms[#1]
\width=\height                                     
\diagram
\putAtrianglep<\arrowtypea`\arrowtypeb`
\arrowtypec;\height>
(0,0)[#2`#3`#4;#5`#6`{#7}]
\enddiagram
}}
\def\putAtrianglepairp<#1>(#2)[#3;#4`#5`#6`#7`#8]{{%
\settripairparms[#1]%
\setpos(#2)%
\settokens`#3`%
\puthmorphism(\xpos,\ypos)[\tokenb`\tokenc`{#7}]{\height}{\arrowtyped}b%
\advance\xpos by\height
\puthmorphism(\xpos,\ypos)[\phantom{\tokenc}`\tokend`{#8}]%
{\height}{\arrowtypee}b%
\advance\ypos by\height
\putmorphism(\xpos,\ypos)(-1,-1)[\tokena``{#4}]{\height}{\arrowtypea}l%
\putvmorphism(\xpos,\ypos)[``{#5}]{\height}{\arrowtypeb}m%
\putmorphism(\xpos,\ypos)(1,-1)[``{#6}]{\height}{\arrowtypec}r%
}}
\def\putAtrianglepair{\@ifnextchar <{\putAtrianglepairp}{\putAtrianglepairp%
   <\arrowtypea`\arrowtypeb`\arrowtypec`\arrowtyped`\arrowtypee;\height>}}
\def\Atrianglepair{\@ifnextchar <{\Atrianglepairp}{\Atrianglepairp%
   <\arrowtypea`\arrowtypeb`\arrowtypec`\arrowtyped`\arrowtypee;\height>}}
\def\Atrianglepairp<#1>[#2;#3`#4`#5`#6`#7]{{
\settripairparms[#1]
\settokens`#2`
\width=\height                                
\diagram
\putAtrianglepairp                            
<\arrowtypea`\arrowtypeb`\arrowtypec`
\arrowtyped`\arrowtypee;\height>
(0,0)[{#2};#3`#4`#5`#6`{#7}]
\enddiagram
}}
\def\putVtrianglep<#1>(#2,#3)[#4`#5`#6;#7`#8`#9]{{%
\settriparms[#1]%
\xpos=#2 \ypos=#3 \advance\ypos by\height {\multiply\height by2
\puthmorphism(\xpos,\ypos)[#4`#5`{#7}]{\height}{\arrowtypea}a}%
\putmorphism(\xpos,\ypos)(1,-1)[`#6`{#8}]{\height}{\arrowtypeb}l%
\advance\xpos by\height \advance\xpos by\height
\putmorphism(\xpos,\ypos)(-1,-1)[``{#9}]{\height}{\arrowtypec}r%
}}
\def\putVtriangle{\@ifnextchar <{\putVtrianglep}{\putVtrianglep
   <\arrowtypea`\arrowtypeb`\arrowtypec;\height>}}
\def\Vtriangle{\@ifnextchar <{\Vtrianglep}{\Vtrianglep
   <\arrowtypea`\arrowtypeb`\arrowtypec;\height>}}
\def\Vtrianglep<#1>[#2`#3`#4;#5`#6`#7]{{
\settriparms[#1]
\width=\height                                 
\diagram
\putVtrianglep<\arrowtypea`\arrowtypeb`
\arrowtypec;\height>
(0,0)[#2`#3`#4;#5`#6`{#7}]
\enddiagram
}}
\def\putVtrianglepairp<#1>(#2)[#3;#4`#5`#6`#7`#8]{{
\settripairparms[#1]%
\setpos(#2)%
\settokens`#3`%
\advance\ypos by\height
\putmorphism(\xpos,\ypos)(1,-1)[`\tokend`{#6}]{\height}{\arrowtypec}l%
\puthmorphism(\xpos,\ypos)[\tokena`\tokenb`{#4}]{\height}{\arrowtypea}a%
\advance\xpos by\height
\puthmorphism(\xpos,\ypos)[\phantom{\tokenb}`\tokenc`{#5}]%
{\height}{\arrowtypeb}a%
\putvmorphism(\xpos,\ypos)[``{#7}]{\height}{\arrowtyped}m%
\advance\xpos by\height
\putmorphism(\xpos,\ypos)(-1,-1)[``{#8}]{\height}{\arrowtypee}r%
}}
\def\putVtrianglepair{\@ifnextchar <{\putVtrianglepairp}{\putVtrianglepairp%
    <\arrowtypea`\arrowtypeb`\arrowtypec`\arrowtyped`\arrowtypee;\height>}}
\def\Vtrianglepair{\@ifnextchar <{\Vtrianglepairp}{\Vtrianglepairp%
    <\arrowtypea`\arrowtypeb`\arrowtypec`\arrowtyped`\arrowtypee;\height>}}
\def\Vtrianglepairp<#1>[#2;#3`#4`#5`#6`#7]{{
\settripairparms[#1]
\settokens`#2`
\diagram
\putVtrianglepairp                             
<\arrowtypea`\arrowtypeb`\arrowtypec`
\arrowtyped`\arrowtypee;\height>
(0,0)[{#2};#3`#4`#5`#6`{#7}]
\enddiagram
}}
\def\putCtrianglep<#1>(#2,#3)[#4`#5`#6;#7`#8`#9]{{%
\settriparms[#1]%
\xpos=#2 \ypos=#3 \advance\ypos by\height
\putmorphism(\xpos,\ypos)(1,-1)[``{#9}]{\height}{\arrowtypec}l%
\advance\xpos by\height \advance\ypos by\height
\putmorphism(\xpos,\ypos)(-1,-1)[#4`#5`{#7}]{\height}{\arrowtypea}l%
{\multiply\height by 2
\putvmorphism(\xpos,\ypos)[`#6`{#8}]{\height}{\arrowtypeb}r}%
}}
\def\putCtriangle{\@ifnextchar <{\putCtrianglep}{\putCtrianglep
    <\arrowtypea`\arrowtypeb`\arrowtypec;\height>}}
\def\Ctriangle{\@ifnextchar <{\Ctrianglep}{\Ctrianglep
    <\arrowtypea`\arrowtypeb`\arrowtypec;\height>}}
\def\Ctrianglep<#1>[#2`#3`#4;#5`#6`#7]{{
\settriparms[#1]
\width=\height                               
\diagram
\putCtrianglep<\arrowtypea`\arrowtypeb`
\arrowtypec;\height>
(0,0)[#2`#3`#4;#5`#6`{#7}]
\enddiagram
}}                                           
\def\putDtrianglep<#1>(#2,#3)[#4`#5`#6;#7`#8`#9]{{%
\settriparms[#1]%
\xpos=#2 \ypos=#3 \advance\xpos by\height \advance\ypos by\height
\putmorphism(\xpos,\ypos)(-1,-1)[``{#9}]{\height}{\arrowtypec}r%
\advance\xpos by-\height \advance\ypos by\height
\putmorphism(\xpos,\ypos)(1,-1)[`#5`{#8}]{\height}{\arrowtypeb}r%
{\multiply\height by 2
\putvmorphism(\xpos,\ypos)[#4`#6`{#7}]{\height}{\arrowtypea}l}%
}}
\def\putDtriangle{\@ifnextchar <{\putDtrianglep}{\putDtrianglep
    <\arrowtypea`\arrowtypeb`\arrowtypec;\height>}}
\def\Dtriangle{\@ifnextchar <{\Dtrianglep}{\Dtrianglep
   <\arrowtypea`\arrowtypeb`\arrowtypec;\height>}}
\def\Dtrianglep<#1>[#2`#3`#4;#5`#6`#7]{{
\settriparms[#1]
\width=\height                              
\diagram
\putDtrianglep<\arrowtypea`\arrowtypeb`
\arrowtypec;\height>
(0,0)[#2`#3`#4;#5`#6`{#7}]
\enddiagram
}}                                          
\def\setrecparms[#1`#2]{\width=#1 \height=#2}%
\def\recursep<#1`#2>[#3;#4`#5`#6`#7`#8]{{\m@th
\width=#1 \height=#2 \settokens`#3`
\settowidth{\tempdimen}{$\tokena$} \ifdim\tempdimen=0pt
  \savebox{\tempboxa}{\hbox{$\tokenb$}}%
  \savebox{\tempboxb}{\hbox{$\tokend$}}%
  \savebox{\tempboxc}{\hbox{$#6$}}%
\else
  \savebox{\tempboxa}{\hbox{$\hbox{$\tokena$}\times\hbox{$\tokenb$}$}}%
  \savebox{\tempboxb}{\hbox{$\hbox{$\tokena$}\times\hbox{$\tokend$}$}}%
  \savebox{\tempboxc}{\hbox{$\hbox{$\tokena$}\times\hbox{$#6$}$}}%
\fi \ypos=\height \divide\ypos by 2 \xpos=\ypos \advance\xpos by
\width \bfig
\putCtrianglep<-1`1`1;\ypos>(0,0)[`\tokenc`;#5`#6`{#7}]%
\puthmorphism(\ypos,0)[\tokend`\usebox{\tempboxb}`{#8}]{\width}{-1}b%
\puthmorphism(\ypos,\height)[\tokenb`\usebox{\tempboxa}`{#4}]{\width}{-1}a%
\advance\ypos by \width
\putvmorphism(\ypos,\height)[``\usebox{\tempboxc}]{\height}1r%
\efig }}
\def\recurse{\@ifnextchar <{\recursep}{\recursep<\width`\height>}}
\def\puttwohmorphisms(#1,#2)[#3`#4;#5`#6]#7#8#9{{%
%
\puthmorphism(#1,#2)[#3`#4`]{#7}0a \ypos=#2 \advance\ypos by 20
\puthmorphism(#1,\ypos)[\phantom{#3}`\phantom{#4}`#5]{#7}{#8}a
\advance\ypos by -40
\puthmorphism(#1,\ypos)[\phantom{#3}`\phantom{#4}`#6]{#7}{#9}b }}
\def\puttwovmorphisms(#1,#2)[#3`#4;#5`#6]#7#8#9{{%
%
%
\putvmorphism(#1,#2)[#3`#4`]{#7}0a \xpos=#1 \advance\xpos by -20
\putvmorphism(\xpos,#2)[\phantom{#3}`\phantom{#4}`#5]{#7}{#8}l
\advance\xpos by 40
\putvmorphism(\xpos,#2)[\phantom{#3}`\phantom{#4}`#6]{#7}{#9}r }}
\def\puthcoequalizer(#1)[#2`#3`#4;#5`#6`#7]#8#9{{%
%
\setpos(#1)%
\puttwohmorphisms(\xpos,\ypos)[#2`#3;#5`#6]{#8}11%
\advance\xpos by #8
\puthmorphism(\xpos,\ypos)[\phantom{#3}`#4`#7]{#8}1{#9} }}
\def\putvcoequalizer(#1)[#2`#3`#4;#5`#6`#7]#8#9{{%
%
%
\setpos(#1)%
\puttwovmorphisms(\xpos,\ypos)[#2`#3;#5`#6]{#8}11%
\advance\ypos by -#8
\putvmorphism(\xpos,\ypos)[\phantom{#3}`#4`#7]{#8}1{#9} }}
\def\putthreehmorphisms(#1)[#2`#3;#4`#5`#6]#7(#8)#9{{%
\setpos(#1) \settypes(#8)
\if a#9 %
     \vertsize{\tempcounta}{#5}%
     \vertsize{\tempcountb}{#6}%
     \ifnum \tempcounta<\tempcountb \tempcounta=\tempcountb \fi
\else
     \vertsize{\tempcounta}{#4}%
     \vertsize{\tempcountb}{#5}%
     \ifnum \tempcounta<\tempcountb \tempcounta=\tempcountb \fi
\fi \advance \tempcounta by 60
\puthmorphism(\xpos,\ypos)[#2`#3`#5]{#7}{\arrowtypeb}{#9}
\advance\ypos by \tempcounta
\puthmorphism(\xpos,\ypos)[\phantom{#2}`\phantom{#3}`#4]{#7}{\arrowtypea}{#9}
\advance\ypos by -\tempcounta \advance\ypos by -\tempcounta
\puthmorphism(\xpos,\ypos)[\phantom{#2}`\phantom{#3}`#6]{#7}{\arrowtypec}{#9}
}}
\def\setarrowtoks[#1`#2`#3`#4`#5`#6]{%
\def\toka{#1}
\def\tokb{#2}
\def\tokc{#3}
\def\tokd{#4}
\def\toke{#5}
\def\tokf{#6}
}
\def\hex{\@ifnextchar <{\hexp}{\hexp<1000`400>}}
\def\hexp<#1`#2>[#3`#4`#5`#6`#7`#8;#9]{%
\setarrowtoks[#9] \yext=#2 \advance \yext by #2 \xext=#1
\advance\xext by \yext \bfig
\putCtriangle<-1`0`1;#2>(0,0)[`#5`;\tokb``\tokd] \xext=#1 \yext=#2
\advance \yext by #2
\putsquare<1`0`0`1;\xext`\yext>(#2,0)[#3`#4`#7`#8;\toka```\tokf]
\advance \xext by #2
\putDtriangle<0`1`-1;#2>(\xext,0)[`#6`;`\tokc`\toke] \efig }
\newcommand\eq[1]{(\ref{#1})}
\newcommand\eqs[2]{(\ref{#1}--\ref{#2})}
\newcommand\mapright[1]{\smash{
        \mathop{\mbox{\large{$\longrightarrow$}}}\limits^{#1}}}
\newcommand\mathC{\mkern1mu\raise2.2pt\hbox{$\scriptscriptstyle|$}
        {\mkern-7mu\rm C}}           
\newcommand{\mathR}{\mathbb{R}}      
\newcommand{\mathN}{\mathbb{N}}
\newcommand{\mathZ}{\mathbb{Z}}
\newcommand\vs[1]{\vspace{#1pt}}
\newcommand\hs[1]{\hspace{#1pt}}
\newcommand\displayE[4]{
    \begin{center}
    \framebox{\begin{minipage}[t]{#1cm}\vs{#2}{\ #4}\vs{#3}
                \end{minipage}}
    \end{center}}
\newcommand{\di}{\diamond}
\newcommand{\ga}{\gamma}
\newcommand{\ka}{\kappa}
\newcommand{\om}{\omega}
\renewcommand\l{\lambda}
\newcommand\s{\sigma}
\renewcommand\a{\alpha}
\newcommand\De{\Delta}
\newcommand{\Ga}{\Gamma}
\renewcommand{\O}{\Omega}
\newcommand\Si{\Sigma}
\newcommand\sq{\rightsquigarrow}
\newcommand\la{\langle}
\newcommand{\map}{\rightarrow}
\newcommand\ra{\rangle}
\newcommand\brak[2]{\la#1,#2\ra}
\newcommand{\sa}{{\rm sa}}
\newcommand{\id}{{\rm id}}
\newcommand\ie{{i.e.},}
\newcommand{\op}{{\rm op}}
\newcommand{\picl}{\pi_{{\rm cl}}}
\newcommand{\piqt}{\pi_{{\rm qt}}}
\newcommand\pr{{\rm pr}}
\newcommand\varin{\,\varepsilon\,}
\newcommand\tr[1]{{\rm tr}(#1)}
\newcommand{\A}{{\hat A}}
\newcommand{\U}{{\hat U}}
\renewcommand{\P}{{\hat P}}
\renewcommand{\S}{{\cal S}}
\newcommand\Q[1]{{\cal Q}(#1)}
\newcommand{\Hi}{{\cal H}}
\newcommand{\R}{{\cal R}}
\newcommand\BH{B\mathcal{(H)}}
\newcommand\PH{\mathcal{P(H)}}
\newcommand\BlH{\mathcal{B}l({\Hi})}
\newcommand\PV{\mathcal{P}(V)}
\newcommand\UH{\mathcal{U(H)}}
\newcommand\BV{{\rm BV}(\Ob{\V{}},\mathR)}
\newcommand\q[1]{`#1\mbox{'}}
\newcommand\bra[1]{\langle #1|\,}
\newcommand\ket[1]{\,|#1\rangle}
\newcommand\ketbra[1]{\ket{#1}\bra{#1}}
\renewcommand\sp{{\rm sp}}  
\newcommand\dasmap{\delta}
\newcommand\delo{\delta^o}
\newcommand\deli{\delta^i}
\newcommand\das[1]{\delta(\hat{#1})}
\newcommand\daso[1]{\delta^o(\hat{#1})}
\newcommand\dasi[1]{\delta^i(\hat{#1})}
\newcommand\dasto[2]{\delta(\hat{#2})_{#1}}
\newcommand\dastoo[2]{\delta^o(\hat{#2})_{#1}}
\newcommand\dastoi[2]{\delta^i(\hat{#2})_{#1}}
\newcommand\dasB[1]{\breve{\delta}(\hat{#1})}                               
\newcommand\dasBV[2]{\breve{\delta}(\hat{#2})_{#1}}                 
\newcommand\dasBi[1]{\breve{\delta}^i(\hat{#1})}                        
\newcommand\dasBVi[2]{\breve{\delta}^i(\hat{#2})_{#1}}          
\newcommand\dasBo[1]{\breve{\delta}^o(\hat{#1})}                        
\newcommand\dasBVo[2]{\breve{\delta}^o(\hat{#2})_{#1}}          
\newcommand\GT[1]{\overline {#1}}
\newcommand\Hom[3]{{\rm Hom}_{#1}\big(#2,#3\big)}
\newcommand\name[1]{\ulcorner #1\urcorner}   
\newcommand\cha[1]{\chi_{#1}}                 
\newcommand\Ob[1]{{\rm Ob(#1)}}
\newcommand\Sub[1]{{\rm Sub}(#1)}             
\newcommand\Subcl[1]{{\rm Sub}_{{\rm cl}}(#1)} 
\newcommand\fu[1]{#1}
\newcommand\F[1]{F_{\L{#1}}\big(\Sigma,\R\big)}
\renewcommand\L[1]{\mathcal{L}({#1})}
\newcommand\PL[1]{{\cal PL}(#1)}
\newcommand\LeftDB{[\mkern-3mu[}
\newcommand\RightDB{]\mkern-3mu]}
\newcommand\Val[1]{\LeftDB\,#1\,\RightDB}
\newcommand\TVal[2]{\nu\big(#1;#2\big)}         
\newcommand\TValM[1]{\nu\big(\,#1\,\big)}               
\newcommand\typeTime{{\cal T}}
\newcommand\SAin[1]{\mbox{``}A\,\varepsilon\,#1\mbox{''}}
\newcommand\Ain[1]{A\,\varepsilon\,#1}
\newcommand\va[1]{\tilde{#1}}
\newcommand\so[1]{#1}
\newcommand\ps[1]{\underline{#1}}
\newcommand\cps[1]{\overline{#1}}
\newcommand\TO{\mathbb{T}}   
\newcommand\WO{\mathbb{W}}    
\newcommand\w{\mathfrak{w}}
\newcommand{\Om}{\ps{\Omega}}
\newcommand{\G}{\ps{O}}                   
\renewcommand{\H}{\ps{I}}                 
\newcommand{\dG}{\ps{\mkern1mu\raise2.5pt\hbox{$\scriptscriptstyle|$}
        {\mkern-7mu\rm O}}}              
\newcommand{\dH}{\ps{{\rm I\! I}}}        
\newcommand{\dOU}{\ps{\mkern1mu\raise2.5pt\hbox{$\scriptscriptstyle|$}
        {\mkern-7mu\rm U}}}              
\newcommand{\Sig}{\ps{\Sigma}}            
\newcommand{\PSig}{P_{{\rm cl}}\Sig}       
\newcommand{\SR}{\ps{{\mathR}^{\succeq}}} 
\newcommand{\kSR}{k(\SR)}
\newcommand{\SpA}{\ps{\sp(\A)^{\succeq}}}
\newcommand\Sh[1]{{\rm Sh}(#1)}
\newcommand{\Loc}{{\bf Loc}}
\newcommand\Set{{\bf Sets}}                     
\newcommand\SetH[1]{\Set^{{\V{#1}}^{\rm op}}}   
\newcommand\SetC[1]{\Set^{{#1}^{\rm op}}}       
\newcommand{\Sys}{{\bf Sys}}                    
\newcommand\V[1]{{\cal V}(\Hi_{#1})}            
\newcommand{\OP}{\ps{{\mathR}^\preceq}}
\newcommand{\PR}[1]{\ps{#1^\leftrightarrow}}
\begin{document}

\begin{center}
{\large\bf `What is a Thing?': Topos Theory in the Foundations of
Physics\footnote{To appear in \emph{New Structures in Physics}, ed
R.~Coecke, Springer (2008).} }
\end{center}

\begin{center}
        Andreas~D\"oring\footnote{email: a.doering@imperial.ac.uk}\\[10pt]

\begin{center}                      and
\end{center}

        Chris Isham\footnote{email:  c.isham@imperial.ac.uk}\\[20pt]

        The Blackett Laboratory\\ Imperial College of Science,
        Technology \& Medicine\\ South Kensington\\ London SW7 2BZ\\
\end{center}

\begin{center}
      2 March 2008
     \end{center}


\begin{abstract}

\begin{quote}
\emph{``From the range of the basic questions of metaphysics we
shall here ask this \emph{one} question: ``What is a thing?'' The
question is quite old. What remains ever new about it is merely
that it must be asked again and again \cite{HeidThing}.''}
\end{quote}
\hfill{\bf Martin Heidegger}

\bigskip
The goal of this paper is to summarise the first steps in
developing a fundamentally new way of constructing theories of
physics. The motivation comes from a desire to address certain
deep issues that arise when contemplating quantum theories of
space and time. In doing so we provide a new answer to Heidegger's
timeless question ``What is a thing?''.

Our basic contention is that constructing a theory of physics is
equivalent to finding a representation in a topos of a certain
formal language that is attached to the system. Classical physics
uses the topos of sets. Other theories involve a different topos.
For the types of theory discussed in this paper, a key goal is to
represent any physical quantity $A$ with an arrow
$\breve{A}_\phi:\Si_\phi\map\R_\phi$ where $\Si_\phi$ and
$\R_\phi$ are two special objects (the `state-object' and
`quantity-value object') in the appropriate topos, $\tau_\phi$.

We discuss two different types of language that can be attached to
a system, $S$. The first, $\PL{S}$, is a propositional language;
the second, $\L{S}$, is a higher-order, typed language. Both
languages provide deductive systems with an intuitionistic logic.
With the aid of $\PL{S}$ we  expand and develop some of the
earlier work\footnote{By CJI and collaborators.} on topos theory
and quantum physics.  A key step is a process we term
`daseinisation' by which a projection operator is mapped to a
sub-object of the spectral presheaf $\Sig$---the topos quantum
analogue of a classical state space. The topos concerned is
$\SetH{}$: the category of contravariant set-valued functors on
the category (partially ordered set) $\V{}$ of commutative
sub-algebras of the algebra of bounded operators on the quantum
Hilbert space $\Hi$.

There are two types of daseinisation, called `outer' and `inner':
they involve approximating a  projection operator by projectors
that are, respectively, larger and smaller in the lattice of
projectors on $\Hi$.

We then introduce the more sophisticated language $\L{S}$ and use
it to study `truth objects' and `pseudo-states' in the topos.
These objects  topos play the role of states: a necessary
development as the spectral presheaf has no global elements, and
hence there are no microstates in the sense of classical physics.

One of the main mathematical achievements is finding  a topos
representation for self-adjoint operators. This involves showing
that, for any bounded, self-adjoint operator $\A$, there is a
corresponding arrow $\dasBo{A}:\Sig\map\SR$ where $\SR$ is the
quantity-value object for this theory. The construction of
$\dasBo{A}$ is an extension of the daseinisation of projection
operators.

The object $\SR$ is  a monoid-object only in the topos,
$\tau_\phi=\SetH{}$, of the theory, and to enhance the
applicability of the formalism we discuss another candidate,
$\PR{\mathR}$, for the quantity-value object. In this presheaf,
both inner- and outer-daseinisation are used in a symmetric way.
Another option is to apply to $\SR$ a topos analogue of the
Grothendieck extension of a monoid to a group. The resulting
object, $\kSR$, is an abelian group-object in $\tau_\phi$.

Finally we turn to considering a \emph{collection} of systems: in
particular, we are interested in the relation between the topos
representation of a composite system, and the representations of
its constituents.  Our approach to these matters is to construct a
\emph{category} of systems and to find  coherent topos
representations of the entire category.

\end{abstract}

\tableofcontents

\section{Introduction}
Many people who work in quantum gravity would  agree that a deep
change in our understanding of foundational issues will occur at
some point along the path. However, opinions differ greatly on
whether a radical revision  is necessary at the very
\emph{beginning} of the process, or if it will emerge `along the
way' from an existing, or future, research programme that is
formulated using the current paradigms. For example, many (albeit
not all) of the current generation of string theorists seem
inclined to this view, as do a, perhaps smaller, fraction of those
who work in loop quantum gravity.

In this article we take the iconoclastic view that a radical step
is needed at the very outset.  However, for anyone in this camp
the problem is always knowing where to start. It is easy to talk
about a `radical revision of current paradigms'---the phrase slips
lightly off the tongue---but converting this pious hope into a
concrete theoretical structure is a problem of the highest order.

For us, the starting point is quantum theory itself. More
precisely, we believe that this theory needs to be radically
revised, or even completely replaced, before a satisfactory theory
of quantum gravity can be obtained.

In this context,  a striking feature of the various current
programmes for quantising gravity---including superstring theory
and loop quantum gravity---is that, notwithstanding their
disparate views on the nature of space and time, they almost all
use more-or-less standard quantum theory. Although understandable
from a pragmatic viewpoint (since all we have \emph{is}
more-or-less standard quantum theory) this situation is
nevertheless questionable when viewed from a wider perspective.

For us, one of the most important issues is the use in the
standard quantum formalism of  critical mathematical ingredients
that are taken for granted and yet which, we claim, implicitly
assume certain properties of space and/or time. Such an \emph{a
priori} imposition of spatio-temporal concepts would be a major
category\footnote{The philosophy of Kant runs strongly in our
veins.} error  if they turn out to be fundamentally incompatible
with what is needed for a theory of quantum gravity.

A prime example is the use of the \emph{continuum}\footnote{When
used in this rather colloquial way, the word `continuum' suggests
primarily the cardinality of the sets concerned, and, secondly,
the topology that is conventionally placed on these sets.} by
which, in this context, is meant the real and/or complex numbers.
These are a central ingredient in all the various mathematical
frameworks in which quantum theory is commonly discussed. For
example, this is clearly so with the use of (i) Hilbert spaces or
$C^*$-algebras; (ii) geometric quantisation; (iii) probability
functions on a non-distributive quantum logic; (iv) deformation
quantisation; and (v) formal (\ie\ mathematically ill-defined)
path integrals and the like. The \emph{a priori} imposition of
such continuum concepts could be radically incompatible with a
quantum-gravity formalism in which, say, space-time is
fundamentally discrete: as, for example,  in  the causal-set
programme.

As we shall argue later, this issue is closely connected with the
question of what is meant by the `value' of a physical quantity.
In so far as the concept is meaningful at all at the Planck scale,
why should the value be a real number defined mathematically in
the usual way?

Another significant reason  for aspiring to change the quantum
formalism is the peristalithic problem of deciding how  a `quantum
theory of cosmology' could be interpreted  if one was lucky enough
to find one. Most people who worry about foundational issues in
quantum gravity would probably place the
quantum-cosmology/closed-system problem at, or near, the top of
their list of reasons for re-envisioning quantum theory. However,
although we are deeply interested in such conceptual issues,  the
primary motivation for our research programme is not to find a new
interpretation of quantum theory. Rather, our main goal is to find
a novel structural framework within which new  \emph{types} of
theories of physics can  be constructed.

However, having said that, in the context of quantum cosmology it
is certainly true that the lack of any external `observer'  of the
universe `as a whole' renders inappropriate the standard
Copenhagen interpretation with its instrumentalist use of
counterfactual statements about what \emph{would} happen \emph{if}
a certain measurement is performed. Indeed, the Copenhagen
interpretation is inapplicable for \emph{any}\footnote{The
existence of the long-range, and all penetrating, gravitational
force means that, at a fundamental level, there is only \emph{one}
truly  closed system, and that is the universe itself.} system
that is truly `closed' (or `self-contained') and for which,
therefore, there is no `external' domain in which an observer can
lurk. This problem has motivated much research over the years and
continues to be of wide interest.

The philosophical questions that arise are profound, and look back
to the birth of Western philosophy in ancient Greece, almost three
thousand years ago.  Of course, arguably, the longevity of these
issues suggests that these questions are ill-posed in the first
place, in which case the whole enterprise is a complete waste of
time! This is probably the view of  most, if not all, of our
colleagues at Imperial College; but we beg to differ\footnote{Of
course, it is also possible that our colleagues are right.}.

When considering a closed system, the inadequacy of the
conventional instrumentalist interpretation of quantum theory
encourages the search for an interpretation that is more `realist'
in some way. For over eighty years, this has been a recurring
challenge for those concerned with the conceptual foundations of
modern physics. In rising to this challenge we join our Greek
ancestors in confronting once more the fundamental
question:\footnote{``What is a thing?''\ is the title of one of
the more comprehensible of Heidegger's works \cite{HeidThing}. By
this, we mean comprehensible to the authors of the present
article. We cannot speak for our colleagues across the channel:
from some of them we may need to distance ourselves.}
\vs{-6}\displayE{3.5}{4}{2}{``What is a thing?''}\vs{7}

Of course, as written, the question is itself questionable. For
many philosophers, including Kant, would assert that the correct
question is not ``What is a thing?'' but rather ``What is a thing
as it appears to \emph{us}?'' However, notwithstanding Kant's
strictures, we seek the thing-in-itself, and, therefore, we
persevere with Heidegger's form of the question.

Nevertheless, having said that,  we can hardly ignore the last
three thousand years of philosophy. In particular, we must defend
ourselves against the charge of being `naive
realists'.\footnote{If we were professional philosophers this
would be a terrible insult. :-)} At this point it become clear
that theoretical physicists have a big advantage over professional
philosophers. For we are permitted/required to study such issues
in the context of specific mathematical frameworks for addressing
the physical world; and one of the great fascinations of this
process is the way in which various philosophical positions are
implicit in the ensuing structures. For example, the exact meaning
of `realist' is infinitely debatable but, when used by a classical
physicist, it invariably means the following:
\begin{enumerate}
\item The idea of `a property of the system' (for example, `the value of a
physical quantity at a certain time') is meaningful, and
mathematically representable  in the theory.

\item Propositions about the system (typically asserting that the system has this or that property) are handled using  Boolean
logic. This requirement is compelling in so far as we humans are
inclined to think in a Boolean way.

\item There is  a space of `microstates' such  that specifying a
microstate\footnote{In simple non-relativistic systems, the state
is specified  at any given moment of time. Relativistic systems
(particularly quantum gravity!) require a more sophisticated
understanding of `state', but the general idea is the same.} leads
to unequivocal  truth values for all propositions  about the
system: \ie\ a state\footnote{We are a little slack in our use of
language here and in what follows by frequently referring to a
microstate as just a `state'. The distinction only becomes
important if one wants to introduce things like mixed states (in
quantum theory), or macrostates (in classical physics) all of
which are often just known as `states'. Then one must talk about
microstates (pure states) to distinguish them from the other type
of state.} encodes ``the way things are''. This is a natural way
of ensuring that the  first two conditions above  are satisfied.
\end{enumerate}
The standard interpretation of classical physics satisfies these
requirements and provides the paradigmatic example of a realist
philosophy in science. Heidegger's answer to his own question
adopts a similar position \cite{HeidThing}:
\begin{quote}
\emph{``A thing is always something that has such and such
properties, always something that is constituted in such and such
a way. This something is the bearer of the properties; the
something, as it were, that underlies the qualities.''}
\end{quote}

In quantum theory, the situation is very different. There,  the
existence of  any such realist interpretation is foiled by the
famous Kochen-Specker theorem \cite{KS67}. This asserts that it is
impossible to assign values to all physical quantities at once if
this assignment is to satisfy the consistency condition that the
value of a function of a physical quantity is that function of the
value. For example, the value of `energy squared' is the square of
the value of energy.

Thus, from a conceptual perspective, the challenge is to find a
quantum formalism that is `realist enough' to provide an
acceptable alternative to the Copenhagen interpretation, with its
instrumentally-construed intrinsic probabilities, whilst taking on
board the implications of the Kochen-Specker theorem.

So, \emph{in toto} what we seek is a formalism that is (i) free of
\emph{prima facie} prejudices about the nature of the values of
physical quantities---in particular, there should be no
fundamental use of the real or complex numbers; and (ii)
`realist', in at least  the minimal  sense that propositions are
meaningful, and  are assigned `truth values', not just
instrumentalist probabilities of what would happen if appropriate
measurements are made.

However, finding such a formalism is not easy: it is notoriously
difficult to modify the mathematical framework of quantum theory
without destroying the entire edifice. In particular, the Hilbert
space structure is very rigid and cannot easily be changed; and
the formal path-integral techniques do not fare much better.

To seek inspiration let us return  briefly to the situation in
classical physics. There, the concept of realism (as asserted in
the three statements above)\ is encoded mathematically in the idea
of a space of states, $\S$,  where specifying a particular state
(or `micro-state'), $s\in\S$, determines entirely `the way things
are' for the system. In particular, this suggests that each
physical quantity $A$ should be associated with a real-valued
function $\breve{A}:\S\map\mathR$ such that when the state of the
system is $s$, the value of $A$ is $\breve{A}(s)$. Of course, this
is indeed precisely how the formalism of classical physics works.

In the spirit of general abstraction, one might one wonder if this
formalism can be generalised to a structure in which $A$ is
represented by an arrow $\breve{A}:\Si\map\R$ where $\Si$ and $\R$
are objects in some category, $\tau$,  other than the category of
sets, $\Set$? In such a theory, one would seek to represent
propositions about the `values' (whatever that might  mean) of
physical quantities with sub-objects of $\Si$, just as in
classical physics propositions are represented by subsets of the
state space $\S$ (see Section \ref{SubSec:GenesisToposIdeas} for
more detail of this). \displayE{10}{5}{5}{Our central conceptual
idea is that such a categorial structure constitutes a
generalisation of the concept of `realism' in which the `values'
of a physical quantity are coded  in the arrow
$\breve{A}:\Si\map\R$. }\vs{7}

Clearly the propositions  will play a key role in any such theory,
and, presumably, the minimum required  is that the associated
sub-objects of $\Si$ form some sort of `logic', just as the
subsets of $\S$ form a Boolean algebra.

This rules out most categories since, generically, the sub-objects
of an object do not have any logical structure. However,  if the
category $\tau$ is a  `topos' then the sub-objects of any object
\emph{do} have this property, and hence the current research
programme.

Our  suggestion, therefore, is to try to construct physical
theories that are formulated in a topos other than $\Set$ . This
topos will depend on both the theory-type and the system. More
precisely, if  a theory-type (such as classical physics, or
quantum physics) is applicable to a certain class of systems,
then, for each system in this class, there is a topos in which the
theory is to be formulated. For some theory-types the topos is
system-independent: for example, classical physics always uses the
topos of sets. For other theory-types, the topos varies from
system to system:  as we shall see, this is the case in quantum
theory.

In somewhat more detail, any particular example of  our suggested
scheme will have the following ingredients:
\begin{enumerate}

\item There are two special objects in the topos $\tau_\phi$:  the
`state-object'\footnote{The meaning of the subscript `$\phi$' is
explained in the main text. It refers to a particular
topos-representation of a formal language attached to the
system.}, $\Si_\phi$ and the `quantity-value object', $\R_\phi$.
Any physical quantity, $A$, is represented by an arrow
$A_\phi:\Si_\phi\map\R_\phi$ in the topos. Whatever meaning can be
ascribed to the concept of the `value' of a physical quantity is
encoded in (or derived from) this representation.

\item Propositions about a system are represented by sub-objects
of the state-object $\Si_\phi$. These sub-objects form a Heyting
algebra (as indeed do the  sub-objects of any object in a topos):
a distributive lattice that differs from a Boolean algebra only in
that the \emph{law of excluded middle} need not hold, \ie\ $
\alpha\lor\lnot \alpha\preceq1$. A Boolean algebra is a Heyting
algebra with strict equality: $\alpha\lor\lnot\alpha=1$.

\item Generally speaking (and unlike in set theory), an object in a topos may not be determined
by its `points'. In particular, this may be so for the
state-object, in which case the concept of a microstate is not so
useful.\footnote{In  quantum theory, the state-object has no
points/microstates  at all. As we shall see, this statement is
equivalent to the Kochen-Specker theorem.} Nevertheless, truth
values can be assigned to propositions with the aid of a `truth
object' (or `pseudo-state').  These truth values lie in another
Heyting algebra.

\end{enumerate}

Of course, it is not instantly obvious that quantum theory can be
written in this way. However, as we shall see, there \emph{is} a
topos reformulation of quantum theory, and this has two immediate
implications. The first is that we acquire a new type of `realist'
interpretation of standard quantum theory. The second is that this
new approach suggests ways of generalising  quantum theory that
make no fundamental reference to Hilbert spaces, path integrals,
etc. In particular, there is no \emph{prima facie} reason for
introducing standard continuum quantities. As emphasised above,
this is one of our main motivations for developing the topos
approach. We shall say more about this later.

From a conceptual perspective, a central feature of our scheme is
the `neo-realist' structure reflected mathematically  in the three
statements above. This neo-realism is the conceptual fruit of  the
 fact that, from a categorial perspective, a physical
theory  expressed in a topos `looks'  like \emph{classical}
physics expressed in the topos of sets.

The fact that (i) physical quantities are  represented by arrows
whose domain is the state-object, $\Si_\phi$; and (ii)
propositions are represented by sub-objects of $\Si_\phi$,
suggests strongly that $\Si_\phi$ can be regarded as the
topos-analogue of a classical state space. Indeed, for any
classical system the topos is just the category of sets, $\Set$,
and the ideas above  reduce to the familiar picture in which (i)
there is a state space (set) $\S$; (ii) any physical quantity,
$A$,  is represented by a real-valued functions
$\breve{A}:\S\map\mathR$; and (iii) propositions are represented
by subsets of $\S$  with a logical structure given by the
associated Boolean algebra.

Evidently the suggested mathematical structures could be used in
two different ways. The first is that of the `conventional'
theoretical physicist with little interest in conceptual matters.
For him/her, what we and our colleagues are developing is a new
tool-kit with which to construct novel types of theoretical model.
Whether or not Nature has chosen such models  remains to be seen,
but, at the very least, the use of topoi certainly suggests new
techniques.

For those physicists who are interested in conceptual issues, the
topos framework gives a radically new way of thinking about the
world. The neo-realism inherent in the formalism is described
mathematically using the \emph{internal} language that is
associated with any topos. This describes how things look from
`within' the topos: something that should be particularly useful
in the context of quantum cosmology\footnote{In this context see
the work of Markopoulou who considers a topos description of the
universe as seen by different observers who live inside it
\cite{Fotini00}.}.

On the other hand, the pragmatic theoretician with no interest in
conceptual matters can use the `external' description of the topos
in which the category of sets provides a metalanguage with which
to formulate the theory. From a mathematical perspective, the
interplay between the internal and external languages of a topos
is one of the fascinations of the subject. However, much remains
to be said about the significance of this interaction for real
theories of physics.

This present article is partly an amalgam of a series of four
papers that we placed on the ArXiv server\footnote{These are due
to published in \emph{Journal of Mathematical Physics} in the
Spring of 2008.} in March, 2007 \cite{DI(1),DI(2),DI(3),DI(4)}.
However, we have  added a fair amount of new material, and also
made a few minor corrections (mainly typos).\footnote{Some of the
more technical theorems have been placed in the Appendix with the
hope that this makes the article a little easier to read.} We have
also added some remarks about developments made by researchers
other than ourselves since the ArXiv preprints were written. Of
particular importance to our general programme is the work of
Heunen and Spitters \cite{HeuSpit07} which adds some powerful
ingredients to the topoi-in-physics toolkit. Finally, we have
included some background material from the earlier papers that
formed the starting point for the current research programme
\cite{IB98,IB99,IB00,IB02}.

We must emphasise that this is \emph{not}\ a review article about
the general application of topos theory to physics; this would
have made the article far too long. For example, there has been a
fair amount of study of the use of synthetic differential geometry
in physics. The reader can find references to much of this on the,
so-called, `Siberian toposes' web site\footnote{This is
http://users.univer.omsk.su/\~\,topoi/. See also Cecilia Flori's
website that deals more generally with topos theory and physics:
http://topos-physics.org/}.  There is also the work by Mallios and
collaborators on `Abstract Differential Geometry'
\cite{Mallios1,Mallios2,Raptis,MalliosZafiris}. Of course, as
always these days, Google will speedily reveal all that we have
omitted.

But even less is this paper a review of the use of category theory
in general in physics. For there any many important topics that we
do not mention at all. For example, Baez's advocation of
$n$-categories \cite{Baez1,Baez2}; `categorial quantum theory'
\cite{AC04,Vic06}; Takeuti's theory\footnote{Takeuti's work is not
exactly about category theory applied to quantum theory: it is
more about the use  of formal logic, but the spirit is similar.
For a recent paper in this genre see \cite{Ozawa06}.} of `quantum
sets' \cite{Takeuti81}; and Crane's work on categorial models of
space-time \cite{Crane}.

Finally, a word about the style in which this article is written.
We spent much time pondering on this, as we did before writing the
four ArXiv preprints. The intended audience is our colleagues who
work in theoretical physics, especially those whose interests
included foundational issues in quantum gravity and quantum
theory. However, topos theory is not an easy branch of
mathematics, and this poses  the dilemma of how much background
mathematics should be assumed of the reader, and how much  should
be explained as we go along.\footnote{The references that we have
found most helpful in our research are
\cite{McL71,Gol84,LamScott86,Bell88,MM92,Jst02}.} We have
approached this problem by including a short mathematical appendix
on topos theory. However, reasons of space precluded a thorough
treatment, and we hope that, fairly soon, someone  will write an
introductory review of topos theory in a style that is accessible
to a typical theoretical-physicist reader.

This article is structured in the following way. We begin with a
discussion of some of the conceptual background, in particular the
role of the real numbers in conventional theoretical physics. Then
in Section \ref{Sec:ToposLogic} we introduce the idea of attaching
a propositional language, $\PL{S}$, to each physical system $S$.
The intent is that each theory of $S$ corresponds to a particular
representation of $\PL{S}$. In particular, we show how classical
physics satisfies this requirement in a very natural way.

Propositional languages have limited scope (they lack the
quantifiers `$\forall$' and `$\exists$'), and  in Section
\ref{Sec:TypedLanguage} we propose the use of a higher-order
language $\L{S}$. Languages of this type are a central feature of
topos theory and it is natural to consider the  idea of
representing $\L{S}$ in different topoi. Classical physics always
takes place in the topos, $\Set$, of sets but our expectation is
that other areas of physics will use a different topos.

This expectation is confirmed in Section \ref{Sec:QuPropSpec}
where we discuss in detail the representation of $\PL{S}$ for a
quantum system (the representation of $\L{S}$ is discussed in
Section \ref{Sec:psSR}). The central idea is to represent
propositions as sub-objects of the `spectral presheaf' $\Sig$
which belongs to the topos, $\SetH{}$, of presheaves (set-valued,
contravariant functors) on the category, $\V{}$, of abelian
sub-algebras of the algebra $\BH$ of all bounded operators on
$\Hi$. This representation employs the idea of `daseinisation' in
which any given projection operator $\P$ is represented at each
context/stage-of-truth $V$ in $\V{}$ by the `closest' projector to
it in $V$. There are two variants of this: (i) `outer'
daseinisation, in which $\P$ is approached from above (in the
lattice of projectors in $V$); and (ii) `lower' daseinisation, in
which $\P$ is approached from below.

The next key move is to discuss the `truth values' of propositions
in a quantum theory. This requires the introduction of some
analogue of the microstates of classical physics. We say
`analogue' because the spectral presheaf $\Sig$---which is the
quantum topos equivalent of a classical state space---has no
global elements, and hence there are no microstates at all: this
is equivalent to the Kochen-Specker theorem. The critical idea is
that of a `truth object', or `pseudo-state' which, as we  show in
Section \ref{Sec:TruthValues}, is the closest one can get  in
quantum theory to a microstate.

In Section \ref{Sec:deG} we introduce the `de Groote' presheaves
and the associated ideas that lead to the concept of daseinising
an arbitrary bounded self-adjoint operator, not just a projector.
Then, in Section \ref{Sec:psSR},  the spectral theorem is used to
construct several possible models for the quantity-value presheaf
in quantum physics. The simplest choice is $\SR$, but this uses
only outer daseinisation, and a more balanced choice is
$\PR{\mathR}$ which uses both inner and outer daseinisation.
Another possibility is $k(\SR)$: the Grothendieck topos extension
of the monoid object $\SR$. A\ key result  is the `non-commutative
spectral theorem' which involves showing how each bounded,
self-adjoint operator $\A$ can be represented by an arrow
$\breve{A}:\Sig\map\PR{\mathR}$.

In Section \ref{Sec:Unitary} we discuss the way in which unitary
operators act on the quantum topos objects. Then, in Sections
\ref{Sec:CatSys}, \ref{Sec:ToposAxioms} and \ref{Sec:ReviewQT} we
discuss the problem of handling `all' possible systems in a single
coherent scheme. This involves introducing a category of systems
which, it transpires, has a natural monoidal structure. We show in
detail how this scheme works in the case of classical and quantum
theory.

Finally, in Section \ref{Sec:CharPropsObjects} we
discuss/speculate on some properties of the state object,
quantity-value object, and  truth objects that might be present in
any topos representation of a physical system.

To facilitate reading this long article, some of the more
technical material has been put in Appendix 1. In Appendix 2 there
is a short introduction to some of the relevant parts of topos
theory.

%
\section{The Conceptual Background of our Scheme}
\label{Sec:ConceptualBackground}
\subsection{The Problem of Using Real Numbers a Priori}
As mentioned in the Introduction, one of the main goals of our
work is to find new tools with which to develop theories that are
significant extensions of, or developments from, quantum theory
but without being tied \emph{a priori} to the use of the standard
real or complex numbers.

In this context we note  that  real numbers  arise in theories of
physics in three different (but related) ways: (i)  as the values
of physical quantities; (ii) as the values of probabilities; and
(iii) as a fundamental ingredient in models of space and time
(especially in those based on differential geometry). All three
are of direct concern vis-a-vis our worries about making
unjustified, \emph{a priori} assumptions in quantum theory. We
shall now examine them in detail.

\subsubsection{Why Are Physical Quantities Assumed to be Real-Valued?}
One reason for assuming physical quantities are real-valued is
undoubtedly  grounded in the remark that, traditionally (\ie\ in
the pre-digital age), they are measured with rulers and pointers,
or they are defined operationally in terms of such measurements.
However, rulers and pointers are taken to be classical objects
that exist in the physical space of classical physics, and this
space is modelled using the reals. In this sense there is a direct
link between the space in which physical quantities take their
values (what we call the `quantity-value space') and the nature of
physical space or space-time \cite{Isham03}.

If  conceded, this claim means  the assumption that physical
quantities are real-valued is problematic in any theory in which
space, or space-time, is not modelled by a smooth manifold.
Admittedly, if the theory employs a \emph{background} space, or
space-time---and if this background is a manifold---then the use
of real-valued physical quantities \emph{is} justified in so far
as their value-space can be related to this background. Such a
stance is particularly appropriate in situations where the
background plays a central role in giving meaning to concepts like
`observers' and  `measuring devices', and thereby provides a basis
for an instrumentalist interpretation of the theory.

But even here caution is needed since many theoretical physicists
have claimed that the notion of a `space-time point in a manifold'
is intrinsically flawed. One argument (due to Penrose) is based on
the observation that any attempt to localise a `thing' is bound to
fail beyond a certain point because of the quantum production of
 pairs of particles from the energy/momentum uncertainty
caused by the spatial localisation. Another argument concerns the
artificiality\footnote{The integers, and associated rationals,
have a `natural' interpretation from a physical perspective since
we can all count. On the other hand,  the Cauchy-sequence and/or
the Dedekind-cut definitions of the reals are distinctly
un-intuitive from a physical perspective.} of the use of real
numbers as coordinates with which to identify a space-time point.
There is also Einstein's famous `hole argument' in general
relativity which asserts that the notion of a space-time point (in
a manifold) has no physical meaning in a theory that is invariant
under the group of space-time diffeomorphisms.

Another cautionary caveat concerning  the invocation of a
background  is that this background structure may arise only in
some `sector' of the theory; or  it may exist only in some
limiting, or approximate, sense. The associated instrumentalist
interpretation would then be similarly limited in scope.  For this
reason, if no other, a `realist' interpretation is more attractive
than an instrumentalist one.

In fact, in  such circumstances, the phrase `realist
interpretation' does not really do justice to the situation since
it tends to imply that there are other interpretations of the
theory, particularly instrumentalism, with which the realist one
can contend on a more-or-less equal footing. But, as we just
argued, the instrumentalist interpretation may be severely limited
 as compared   to the realist one. To flag this point,  we
will sometimes refer to a `realist formalism', rather than a
`realist interpretation'.\footnote{Of course, such discussions are
unnecessary in classical physics since, there, if knowledge of the
value of a physical quantity is gained by making a (ideal)
measurement, the reason why we obtain the result that we do, is
because the quantity \emph{possessed} that value immediately
before the measurement was made. In other words, ``epistemology
models ontology''.}

\subsubsection{Why Are Probabilities Required to Lie in the
Interval $[0,1]$?}  The motivation for using  the subset $[0,1]$
of the real numbers as the value space for probabilities comes
from the relative-frequency interpretation of probability. Thus,
in principle, an experiment is to be repeated a large number, $N$,
times, and the probability associated with a particular result is
defined to be the ratio $N_i/N$, where $N_i$ is the number of
experiments in which that result was obtained. The rational
numbers $N_i/N$ necessarily lie between $0$ and $1$, and if the
limit $N\map\infty$ is taken---as is appropriate for a
hypothetical `infinite ensemble'---real numbers in the closed
interval $[0,1]$ are obtained.

The relative-frequency interpretation of probability is natural in
instrumentalist theories of physics,  but it is not meaningful if
there is no classical spatio-temporal background in which the
necessary  measurements could be made; or, if there is  a
background, it is one to which the relative-frequency
interpretation cannot be adapted.

In the absence of   a relativity-frequency interpretation, the
concept of `probability' must be understood in a different way. In
the physical sciences, one of the most discussed approaches
involves  the concept of `potentiality', or `latency', as favoured
by Heisenberg \cite{Heisenberg52}, Margenau \cite{Margenau49}, and
Popper \cite{Popper82} (and, for good measure, Aristotle). In this
case there is no  compelling reason why the probability-value
space should necessarily be a subset of the real numbers. The
minimal requirement is that this value-space is an ordered set, so
that  one proposition can be said to be more or less probable than
another. However, there is no \emph{prima facie} reason why this
set should be \emph{totally} ordered: \ie\ there may be  pairs of
propositions whose potentialities  cannot be compared---something
that seems eminently plausible in the context of non-commensurable
quantities in quantum theory.

By invoking the idea of `potentiality', it becomes feasible to
imagine a quantum-gravity theory with no spatio-temporal
background but where probability is still a fundamental concept.
However,  it could also be that the  concept of probability plays
no fundamental role in such circumstances, and can be given a
meaning \emph{only} in the context of a sector, or limit, of the
theory where a background does exist. This background could then
support a limited instrumentalist interpretation which  would
include  a (limited) relative-frequency understanding of
probability.

In fact, most modern approaches  to quantum gravity aspire to  a
formalism that is background independent
\cite{Bae00,CM05,Smo05,Smo06}. So, if a background space  does
arise, it will  be in one of the restricted senses mentioned
above.  Indeed, it is often asserted that a proper theory of
quantum gravity will not involve \emph{any} direct spatio-temporal
concepts, and that  what we commonly call `space' and `time' will
`emerge' from the formalism only in some appropriate limit
\cite{BI01}. In  this case, any instrumentalist interpretation
could only `emerge' in the same limit, as would the associated
relative-frequency interpretation of probability.

In a theory of this type, there will be no \emph{prima facie} link
between the values of physical quantities and the nature of space
or space-time, although, of course, this cannot be totally ruled
out. In any event, part of the fundamental specification of the
theory will involve deciding what the `quantity-value space'
should be.

These considerations suggest that quantum theory must be radically
changed if one wishes to  accommodate situations where there is no
background  space/space-time, manifold within which an
instrumentalist interpretation can be formulated. In such a
situation, some sort of `realist' formalism is essential.

These reflections also suggest  that the quantity-value space
employed in  an instrumentalist realisation of a theory---or a
`sector', or `limit', of the theory---need not be the same as the
quantity-value space in a neo-realist formulation. At first sight
this may seem strange but, as is shown in Section \ref{Sec:psSR},
this is precisely what happens in the topos reformulation of
standard quantum theory.

\subsection{The Genesis of Topos Ideas in Physics}
\label{SubSec:GenesisToposIdeas}
\subsubsection{Why are Space and Time Modelled with Real Numbers?}
\label{SubSubSec:WhyReals} Even setting aside the more exotic
considerations of quantum gravity, one can still query the use of
real numbers to model space and/or time. One might argue that (i)
the use of (triples of) real numbers to model space is based on
empirically-based reflections about the nature of `distances'
between objects; and (ii) the use of real numbers to model time
reflects our experience that `instants of time' appear to be
totally ordered, and that intervals of time are always
divisible\footnote{These remarks are expressed in the context of
the Newtonian view of space and time, but it is easy enough to
generalise them to special relativity.}.

However, what does it really mean to say that two particles are
separated by a distance of, for example, $\sqrt{2}$cms? From an
empirical perspective, it would be impossible to make a
measurement that could unequivocally reveal precisely that value
from among the continuum of real numbers that lie around it. There
will always be experimental errors of some sort: if nothing else,
there are thermodynamical fluctuations in  the measuring device;
and, ultimately, uncertainties arising from quantum
`fluctuations'. Similar remarks apply to attempts to measure time.

Thus, from an operational perspective, the use of real numbers to
label `points' in space and/or time is a theoretical abstraction
that can never be realised in practice.  But if the notion of a
space/time/space-time `point' in a continuum, is an abstraction,
why do we use it? Of course it works well in theories used in
normal physics, but at a fundamental level it must be seen as
questionable.

These operational remarks say nothing about the structure of space
(or time) `in itself', but, even assuming that this concept makes
sense, which is debatable, the use of real numbers is still a
metaphysical assumption with no fundamental justification.

Traditionally, we teach our students that  measurements of
physical quantities that are represented theoretically by real
numbers, give results that fall into `bins',  construed as being
subsets of the real line. This suggests that, from an operational
perspective, it would be more appropriate  to base mathematical
models of space or time on a theory of `regions', rather than the
real numbers themselves.

But then one asks ``What is a region?'', and if we answer ``A
subset of triples of real numbers for space, and a subset of real
numbers for time'', we are thrown back  to the real numbers. One
way of avoiding this circularity is to focus on relations between
these `subsets' and see if they can be axiomatised in some way.
The natural operations to perform on regions are (i) take
intersections, or unions, of pairs of regions; and (ii) take the
complement of a region. If the regions are modelled on Borel
subsets of $\mathR$, then the intersections and unions could be
extended to countable collections. If they are modelled on open
sets, it would be arbitrary unions and finite intersections.

From a physical perspective, the use of open subsets as models of
regions is attractive as it leaves a certain, arguably desirable,
`fuzziness' at the edges, which is absent for closed sets. Thus,
following this path, we would axiomatise that a mathematical model
of space or time (or space-time) involves an algebra of entities
called `regions', and with operations that are the analogue of
unions and intersections for subsets of a set. This algebra would
allow arbitrary `unions' and finite `intersections', and would
distribute\footnote{If the distributive law is dropped we could
move towards the quantum-set ideas of \cite{Takeuti81}; or,
perhaps, the ideas of non-commutative geometry instigated by Alain
Connes \cite{Connes}.} over these operations. In effect, we are
axiomatising that an appropriate mathematical model of space-time
is an object in the category of locales.

However, a locale is the same thing as a complete Heyting algebra
(for the definition see below), and, as we shall, Heyting algebras
are inexorably linked with topos theory.

\subsubsection{Another Possible Role for Heyting Algebras}
The use of a Heyting algebra to model space/time/space-time is an
attractive possibility, and was the origin of the interest in
topos theory of one of us (CJI) some years ago. However, there is
another motivation which is based more on logic, and the desire to
construct a `neo-realist' interpretation of quantum theory.

To motivate  topos theory as the source of neo-realism let us
first consider classical physics, where everything is defined in
the category, $\Set$, of sets and functions between sets. Then (i)
any physical quantity, $A$, is represented by a real-valued
function $\breve{A}:\S\map\mathR$, where $\S$ is the space of
microstates; and (ii) a proposition of the form $\SAin\De$ (which
asserts that the value of the physical quantity $A$ lies in the
subset $\De$ of the real line $\mathR$)\footnote{In the rigorous
theory of classical physics, the set $\S$ is a symplectic
manifold, and $\De$ is a \emph{Borel} subset of $\mathR$. Also,
the function
 $\breve{A}:{\cal S}\map\mathR$  may be required to be
measurable, or continuous, or smooth, depending on the quantity,
$A$, under consideration.} is represented by the
subset\footnote{Throughout this article we will adopt the notation
in which $A\subseteq B$ means that $A$ is a subset of $B$ that
could equal $B$; while $A\subset B$ means that $A$ is a
\emph{proper} subset of $B$; \ie\ $A$ does not equal $B$. Similar
remarks apply to other pairs of ordering symbols like
$\prec,\preceq$; or $\succ,\succeq$, etc.}
$\breve{A}^{-1}(\De)\subseteq \S$. In fact any proposition $P$
about the system is represented by an associated subset, $\S_P$,
of $\S$: namely, the set of states for which $P$ is true.
Conversely, every (Borel) subset of $\S$ represents a
proposition.\footnote{More precisely, every Borel subset of $\S$
represents \emph{many} propositions about the values of physical
quantities. Two propositions are said to be `physically
equivalent' if they are represented by the same subset of $\S$.}

It is easy to see how the logical calculus of propositions arises
in this picture. For let $P$ and $Q$ be propositions, represented
by the subsets $\S_P$ and $\S_Q$ respectively, and consider the
proposition ``$P$ and $Q$''. This is true if, and only if, both
$P$ and $Q$ are true, and hence the subset of states that
represents this logical conjunction consists of those states that
lie in both $\S_P$ and $\S_Q$---\ie\ the set-theoretic
intersection $\S_P\cap\S_Q$. Thus ``$P$ and $Q$'' is represented
by $\S_P\cap\S_Q$. Similarly, the proposition ``$P$
 or $Q$'' is true if either $P$ or $Q$ (or both) are true,
and hence this logical disjunction is represented by those states
that lie in $\S_P$ plus those states that lie in $\S_Q$---\ie\ the
set-theoretic union $\S_P\cup\S_Q$. Finally, the logical negation
``not $P$'' is represented by all those points in $\S$ that do not
lie in $\S_P$---\ie\ the set-theoretic complement $\S/\S_P$.

In this way, a fundamental relation is established between the
logical calculus of propositions about a physical system, and the
Boolean algebra of subsets of the state space. Thus the
mathematical structure of classical physics is such that, \emph{of
necessity}, it reflects a `realist' philosophy, in the sense in
which we are using the word.

One way to escape from the tyranny of Boolean algebras and
classical realism is via topos theory. Broadly speaking, a topos
is a category that behaves very much like the category of sets; in
particular, the collection of sub-objects of an object forms a
\emph{Heyting algebra}, just as the collection of subsets of a set
form a Boolean algebra. Our intention, therefore, is to explore
the possibility of associating physical propositions with
sub-objects of some object $\Si$ (the analogue of a classical
state space) in some topos.

A Heyting algebra, $\mathfrak{h}$, is  a distributive lattice with
a zero element, $0$, and a unit element, $1$, and with the
property that to each pair $\a,\beta\in\mathfrak{h}$ there is an
implication $\a\Rightarrow\beta$, characterized by
\begin{equation}
\ga\preceq(\a\Rightarrow\beta)\mbox{ if and only if } \ga\land
\a\preceq \beta.
\end{equation}
The negation is defined as $\lnot\a:=(\a\Rightarrow0)$ and has the
property that the \emph{law of excluded middle} need not hold,
\ie\ there may exist $\a\in\mathfrak{h}$, such that $ \a\lor\lnot
\a\prec1$ or, equivalently, there may exist$\a\in\mathfrak{h}$
such that $\lnot\lnot \a\succ\a$. This is the characteristic
property of an intuitionistic logic.\footnote{Here,
$\a\Rightarrow\beta$ is nothing but the category-theoretical
exponential $\beta^\alpha$ and $\ga\land\alpha$ is the product
$\ga\times\alpha$. The definition uses the adjunction between the
exponential and the product,
$\operatorname{Hom}(\ga,\beta^\alpha)=\operatorname{Hom}(\ga\times%
\alpha,\beta)$. A slightly easier, albeit `less categorical'
definition is: a Heyting algebra, $\mathfrak{h}$, is a
distributive lattice such that for any two elements
$\alpha,\beta\in\mathfrak{h}$, the set
$\{\ga\in\mathfrak{h}\mid\ga\land\alpha\leq\beta\}$ has a maximal
element, denoted by $(\alpha\Rightarrow\beta)$.} A Boolean algebra
is the special case of a Heyting algebra in which there is the
strict equality:\ \ie\ $\alpha\lor\lnot\alpha=1$ for all $\a$. It
is known from Stone's theorem \cite{Sto36} that each Boolean
algebra is isomorphic to an algebra of (clopen, i.e., closed and
open) subsets of a suitable (topological) space.

The elements of a Heyting algebra can be manipulated in a very
similar way to those in a Boolean algebra. One of our claims is
that,  as far as theories of physics are concerned, Heyting logic
is a  viable\footnote{The main difference between theorems proved
using Heyting logic and those using Boolean logic is that proofs
by contradiction cannot be used in the former. In particular, this
means that one cannot prove that something exists by arguing that
the assumption that it does not  leads to contradiction; instead
it is necessary to provide a \emph{constructive} proof of the
existence of the entity concerned. Arguably, this does not place
any major restriction on building theories of physics. Indeed,
over the years, various physicists (for example, Bryce DeWitt)
have argued that constructive proofs should always be used in
physics.} alternative to Boolean logic.

To give some idea of the difference between a Boolean algebra and
a Heyting algebra, we note that the paradigmatic example of the
former is the collection of all measurable subsets of a measure
space $X$. Here, if $\alpha\subseteq X$ represents a proposition,
the logical negation, $\neg\alpha $, is just the set-theoretic
complement $X\backslash\alpha$.

On the other hand, the paradigmatic example of a Heyting  algebra
is the collection of all open sets in a topological space $X$.
Here, if $\alpha\subseteq X$ is open, the logical negation
$\neg\alpha$ is defined to be the \emph{interior} of the
set-theoretical complement $X\backslash\alpha$. Therefore, the
difference between $\neg\alpha$ in the topological space $X$, and
$\lnot\alpha$ in the measurable space generated by the topology of
$X$, is just the `thin' boundary of the closed set
$X\backslash\alpha$.

\subsubsection{Our Main Contention about Topos Theory and Physics}
We contend that, for a given theory-type (for example, classical
physics, or quantum physics), each system $S$ to which the theory
is applicable is associated with a particular topos $\tau_\phi(S)$
within whose framework the theory, as applied to $S$, is to be
formulated and interpreted. In this context, the
`$\phi$'-subscript is a label that changes as the theory-type
changes. It signifies the representation of a system-language in
the topos $\tau_\phi(S)$: we will come to this later.

The conceptual interpretation of this formalism is `neo-realist'
in the following sense:
\begin{enumerate}
\item A physical quantity, $A$, is to be represented by an arrow
$A_{\phi,S}:\Si_{\phi,S}\map\R_{\phi,S}$ where $\Si_{\phi,S}$ and
$\R_{\phi,S}$ are two special objects in the topos $\tau_\phi(S)$.
These are the analogues of, respectively, (i) the classical state
space, $\cal S$; and (ii) the real numbers, $\mathR$, in which
 classical physical quantities take their values.

In what follows,  $\Si_{\phi,S}$ and $\R_{\phi,S}$ are called the
`state object', and the `quantity-value object', respectively.

\item Propositions about the system $S$ are represented by
sub-objects of $\Si_{\phi,S}$. These sub-objects form a Heyting
algebra.

\item Once the topos analogue of a state (a `truth object') has
been specified,  these propositions are assigned truth values in
the Heyting logic associated with the global elements of the
sub-object classifier, $\O_{\tau_\phi(S)}$, in the topos
$\tau_\phi(S)$.
\end{enumerate}

Thus a theory expressed in this way \emph{looks} very much like
classical physics except that whereas classical physics always
employs the topos of sets, other theories---including quantum
theory and, we conjecture, quantum gravity---use a different
topos.

One deep result in topos theory is that there is an \emph{internal
language} associated with each topos. In fact, not only does each
topos generate an internal language, but, conversely, a language
satisfying appropriate conditions generates a topos. Topoi
constructed in this way are called `linguistic topoi', and every
topos can be regarded as a linguistic topos. In many respects,
this is one of the profoundest  ways of understanding what a topos
really `is'.\footnote{This aspect of topos theory is discussed at
length in the books by Bell \cite{Bell88}, and Lambek and Scott
\cite{LamScott86}}.

These results are exploited  in Section \ref{Sec:TypedLanguage}
where we introduce the idea that, for any applicable\ theory-type,
each physical system $S$  is associated with a `local' language,
$\L{S}$. The application of the theory-type to $S$ is then
involves finding a representation of $\L{S}$ in an appropriate
topos; this is equivalent to finding a `translation' of $\L{S}$
into the internal language of that topos.

Closely related to the existence of this linguistic structure is
the striking fact that a topos can be used as a \emph{foundation}
for  mathematics itself, just as set theory is used in the
foundations of `normal' (or `classical') mathematics. In this
context, the key remark  is that the internal language of a topos
has a form that is similar in many ways to the formal language on
which normal set theory is based. It is this internal, topos
language that is used to \emph{interpret} the theory in a
`neo-realist' way.

The main difference with classical logic is that the logic of the
topos language does not satisfy the principle of excluded middle,
and  hence proofs by contradiction are not permitted. This has
many intriguing consequences. For example, there are topoi in
which there exist genuine \emph{infinitesimals} that can be used
to construct a rival to normal calculus. The possibility of such
quantities stems from the fact that the normal proof that they do
\emph{not} exist is a proof by contradiction.

Thus each topos carries its own world of mathematics: a world
which, generally speaking, is \emph{not} the same as that of
classical mathematics.

Consequently, by postulating that, for a given theory-type, each
physical system carries its own topos, we are also saying that to
each physical system plus theory-type there is associated a
framework for mathematics itself! Thus classical physics uses
classical mathematics; and quantum theory uses `quantum
mathematics'---the mathematics formulated in the topoi of quantum
theory. To this we might add the conjecture: ``Quantum gravity
uses `quantum gravity' mathematics''!


\section{Propositional Languages and Theories of Physics}
\label{Sec:ToposLogic}
\subsection{Two Opposing Interpretations of Propositions}
\label{SubSec:BackgroundRemarks} Attempts to construct a na\"ive
realist interpretation of quantum theory founder on the
Kochen-Specker theorem. However, if, despite this theorem, some
degree of realism is still sought, there are not that many
options.

One approach is to focus on a particular, maximal commuting subset
of physical quantities and declare by fiat that these are the ones
that `have' values; essentially, this is what is done in  `modal'
interpretations of quantum theory. However, this leaves open the
question of why Nature should select this particular set, and the
reasons proposed vary greatly from one scheme to another.

In our work, we take a completely different approach and try to
formulate a scheme which takes into account \emph{all} these
different choices for commuting sets of physical quantities; in
particular,  equal ontological status is ascribed to all of them.
This scheme is grounded in the topos-theoretic approach that was
first  proposed in \cite{IB98, IB99, IB00, IB02}. This uses a
technique whose first step is to construct a category, $\cal C$,
the objects of which can be viewed as \emph{contexts} in which the
quantum  theory can be displayed: in fact, they are just the
commuting sub-algebras of operators in the theory. All this will
be explained in more detail in Section \ref{Sec:QuPropSpec}.

In this earlier work, it was postulated that the logic for
handling quantum propositions from this perspective is that
associated with the topos of presheaves\footnote{In quantum
theory,  the category $\cal C$ is just a partially-ordered set,
which simplifies many manipulations.} (contravariant functors from
$\cal C$ to $\Set$), $\SetC{\cal C}$. The idea is that a single
presheaf will encode  quantum propositions from the perspective of
\emph{all} contexts at once. However, in the original papers, the
crucial `daseinisation' operation (see Section
\ref{Sec:QuPropSpec}) was not known and, consequently, the
discussion became rather convoluted in places. In addition, the
generality and power of the underlying procedure was not fully
appreciated by the authors.

For this reason, in the present article we   return to the basic
questions and reconsider them in the light of the overall topos
structure that has now become clear.

We start by considering the way in which  propositions arise, and
are manipulated, in physics. For simplicity, we will concentrate
on systems that are associated with `standard' physics. Then, to
each such system $S$ there is associated a set of physical
quantities---such as energy, momentum, position, angular momentum
etc.\footnote{This set does not have to contain `\emph{all}'
possible physical quantities: it suffices to concentrate on a
subset that are deemed to be of particular interest. However, at
some point, questions may arise about the `completeness' of  the
set.}---all of which are real-valued.   The associated
propositions are of the form $\SAin\De$, where $A$ is a physical
quantity, and $\De$ is a subset\footnote{As was remarked earlier,
for various reasons, the subset $\De\subseteq\mathR$ is usually
required to be a \emph{Borel} subset, and for the most part we
will assume this without further comment.} of $\mathR$.

From a conceptual perspective, the proposition $\SAin\De$ can be
read in two, very different, ways:
\begin{enumerate}
\item[(i)] {\bf The (na\"ive) realist interpretation:} ``The
physical quantity $A$ \emph{has} a value, and that value lies in
$\De$.''

\item[(ii)] {\bf The instrumentalist interpretation:}  ``\emph{If}
a measurement is made of $A$, the result will be found to lie in
$\De$.''
\end{enumerate}
The former is the familiar, `commonsense' understanding of
propositions in both classical physics and daily life. The latter
underpins the Copenhagen interpretation of quantum theory. Of
course, the instrumentalist interpretation can also be applied to
classical physics, but it does not lead to anything new. For, in
classical physics, what is measured is what \emph{is} the case:
``Epistemology models ontology''.

We will now study the role of propositions in physics more
carefully, particularly in the context of  `realist'
interpretations.

\subsection{The Propositional Language $\PL{S}$}
\label{SubSec:PropLangPhys}
\subsubsection{Intuitionistic Logic and the Definition of $\PL{S}$}
We are going to construct a formal language, $\PL{S}$, with which
to express propositions about a physical system, $S$, and to make
deductions concerning them. Our intention is to interpret these
propositions  in a `realist' way: an endeavour whose  mathematical
underpinning  lies in constructing a representation of  $\PL{S}$
in a Heyting algebra, $\mathfrak{H}$, that is part of the
mathematical framework involved in the application of a particular
theory-type to $S$.

In constructing $\PL{S}$ we suppose that we have first identified
some set, $\Q{S}$, of \emph{physical quantities}: this plays a
fundamental role in our language. In addition, for \emph{any}
system $S$, we have the set, $P_B\mathR$ of (Borel) subsets of
$\mathR$. We use the sets $\Q{S}$ and $P_B\mathR$ to construct the
`primitive propositions' about the system $S$. These are of the
form $\SAin\De$ where $A\in\Q{S}$ and $\De\in P_B\mathR$.

We denote the set of all such strings by $\PL{S}_0$. Note that
what has been here called a `physical quantity' could better (but
more clumsily) be termed the `name' of the physical quantity. For
example, when we talk about the `energy' of a system, the word
`energy' is the same, and functions in the same way in the formal
language, irrespective of the details of the actual Hamiltonian of
the system.

The primitive propositions $\SAin\De$ are used  to define
`sentences'. More precisely, a new set of symbols
$\{\neg,\land,\lor,\Rightarrow\}$ is added to the language, and
then a \emph{sentence} is defined inductively by the following
rules (see Ch.~6 in \cite{Gol84}):
\begin{enumerate}
\item Each primitive proposition $\SAin\De$ in $\PL{S}_0$ is a
sentence.

\item If $\alpha$ is a sentence, then so is $\neg\alpha$.

\item If $\alpha$ and $\beta$ are sentences, then so are
$\alpha\land\beta$, $\alpha\lor\beta$, and
$\alpha\Rightarrow\beta$.
\end{enumerate}
The collection of all sentences, $\PL{S}$, is an elementary formal
language that can be used to express and manipulate propositions
about the system $S$.  Note that, at this stage, the symbols
$\neg$, $\land$, $\lor$, and $\Rightarrow$ have no explicit
meaning, although of course the implicit intention is that they
should stand for `not', `and', `or' and `implies', respectively.
This implicit meaning becomes explicit when a representation of
$\PL{S}$ is constructed as part of the application of a
theory-type to $S$ (see below). Note also that $\PL{S}$ is a
\emph{propositional} language only: it does not contain the
quantifiers `$\forall$' or `$\exists$'. To include them requires a
higher-order language. We shall return to this  in our discussion
of the language $\L{S}$.

The next step arises because $\PL{S}$ is not only a vehicle for
expressing propositions about the system $S$: we also want to
\emph{reason} with it about the system. To achieve this, a series
of axioms for a deductive logic  must be added to $\PL{S}$. This
could be either classical logic or intuitionistic logic, but we
select the latter since it allows a larger class of
representations/models, including representations in topoi in
which the law of excluded middle fails.

The axioms for intuitionistic logic consist of a finite collection
of sentences in $\PL{S}$ (for example,
$\alpha\land\beta\Rightarrow\beta\land\alpha$), plus  a single
rule of inference, \emph{modus ponens} (the `rule of detachment')
which says that from $\alpha$ and $\alpha\Rightarrow\beta$ the
sentence $\beta$ may be derived.

Others axioms  might be added to $\PL{S}$ to reflect the implicit
meaning of  the primitive proposition $\SAin\De$: \ie\ (in a
realist reading) ``$A$ has a value, and that value lies in
$\De\subseteq\mathR$''. For example, the sentence ``$\Ain{\De_1}
\land \Ain{\De_2}$'' (`$A$ belongs to $\De_1$' \emph{and} `$A$
belongs to $\De_2$')  might seem to be equivalent to ``$A$ belongs
to $\De_1\cap\De_2$'' \ie\ ``$\Ain{\De_1\cap\De_2}$''. A similar
remark applies to ``$\Ain{\De_1}\lor \Ain{\De_2}$''.

Thus, along with the axioms of intuitionistic logic and
detachment, we might be tempted to add the following axioms:
\begin{eqnarray}
\Ain{\De_1}\land \Ain{\De_2}&\Leftrightarrow&
                     \Ain{\De_1\cap\De_2} \label{AinD1andD2}\\
\Ain{\De_1}\lor \Ain{\De_2}&\Leftrightarrow&
                     \Ain{\De_1\cup\De_2} \label{AinD1orD2}
\end{eqnarray}
These axioms are consistent with the intuitionistic logical
structure of $\PL{S}$.

We shall see later the extent to which the axioms
\eqs{AinD1andD2}{AinD1orD2} are compatible with the topos
representations of classical and quantum physics. However, the
other obvious proposition to consider in this way---``It is
\emph{not} the case that $A$ belongs to $\De$''---is clearly
problematical.

In classical logic, this proposition\footnote{The parentheses
$(\;)$ are not  symbols in the language; they are just a way of
grouping letters and sentences.}, ``$\neg(\Ain\De)$'', is
equivalent to ``$A$ belongs to $\mathR\backslash\De$'', where
$\mathR\backslash\De$ denotes the set-theoretic complement of
$\De$ in $\mathR$. This  might suggest augmenting
\eqs{AinD1andD2}{AinD1orD2} with a third axiom
\begin{equation}
    \neg(\Ain\De) \Leftrightarrow \Ain{\mathR\backslash\De}
                                                \label{negAinD}
\end{equation}
However, applying `$\neg$' to both sides of \eq{negAinD} gives
\begin{equation}
        \neg\neg(\Ain\De) \Leftrightarrow \Ain\De
\end{equation}
because of the set-theoretic result
$\mathR\backslash(\mathR\backslash\De)=\De$. But in an
intuitionistic logic we do not have
$\alpha\Leftrightarrow\neg\neg\alpha$ but only
$\alpha\Rightarrow\neg\neg\alpha$, and so \eq{negAinD} could be
false in a Heyting-algebra representation of $\PL{S}$ that is not
Boolean. Therefore, adding \eq{negAinD} as an axiom in $\PL{S}$ is
not indicated if representations are  to be sought in non-Boolean
topoi.

\subsubsection{Representations of $\PL{S}$.}
\label{SubSubSec:RepPLS} To use a language $\PL{S}$  `for real'
for some specific physical system $S$ one must first decide on the
set $\Q{S}$ of physical quantities that are to be used in
describing $S$. This language must then be represented in the
concrete mathematical structure that arises when a theory-type
(for example: classical physics, quantum physics,
DI-physics,\ldots ) is applied to $S$. Such a representation,
$\pi$, maps each primitive proposition, $\a$, in $\PL{S}_0$ to an
element, $\pi(\a),$ of some Heyting algebra (which could be
Boolean), $\mathfrak{H}$, whose specification is part of the
theory of $S$. For example, in classical mechanics, the
propositions are represented in the Boolean algebra of all (Borel)
subsets of the classical state space.

The representation of the primitive propositions can be extended
recursively to all of $\PL{S}$ with the aid of the following rules
\cite{Gol84}:
\begin{eqnarray}
&(a)& \pi(\a\lor\beta):=\pi(\alpha)\lor\pi(\beta)
                                                \label{pi(a)}\\
&(b)& \pi(\a\land\beta):=\pi(\alpha)\land\pi(\beta)
                                                \label{pi(b)}\\
&(c)& \pi(\neg\a):=\neg\pi(\alpha)\hspace{3cm} \label{pi(c)} \\
&(d)&
\pi(\alpha\Rightarrow\beta):=\pi(\alpha)\Rightarrow\pi(\beta)
                                                \label{pi(d)}
\end{eqnarray}
Note that,  on the left hand side of \eqs{pi(a)}{pi(d)}, the
symbols $\{\neg,\land,\lor,\Rightarrow\}$ are elements of the
language $\PL{S}$, whereas on the right hand side they denote the
logical connectives in the Heyting algebra, $\mathfrak{H}$, in
which the representation takes place.

This extension of $\pi$ from $\PL{S}_0$ to $\PL{S}$ is consistent
with the axioms for the intuitionistic, propositional logic  of
the language $\PL{S}$. More precisely, these axioms become
tautologies: \ie\ they are all represented by the maximum element,
$1$, in the Heyting algebra. By construction, the map
$\pi:\PL{S}\map\mathfrak{H}$ is then a representation of $\PL{S}$
in the Heyting algebra $\mathfrak{H}$. A logician would say that
$\pi:\PL{S}\map\mathfrak{H}$ is  an
\emph{$\mathfrak{H}$-valuation}, or \emph{$\mathfrak{H}$-model},
of the language $\PL{S}$.

Note that different systems, $S$, can have the same language. For
example, consider a point-particle moving in one dimension, with a
Hamiltonian function $H(x,p)=\frac{p^2} {2m}+V(x)$ and state space
$T^*\mathR$. Different potentials $V$ correspond to different
systems (in the sense in which we are using the word `system'),
but the physical quantities for these systems---or, more
precisely, the `names' of these quantities, for example, `energy',
`position', `momentum'---are the same for them all. Consequently,
the language $\PL{S}$ is independent of $V$. However, the
\emph{representation} of, say, the proposition
``$E\varepsilon\De$'' (where `E' is the energy), with a specific
subset of the state space \emph{will} depend on the details of the
Hamiltonian.

Clearly, a major consideration in using the language $\PL{S}$ is
choosing the Heyting algebra in which the representation is to
take place. A fundamental result in topos theory is that the set
of all sub-objects of any object in a topos is  a Heyting algebra,
and these are the Heyting algebras with which we will be
concerned.

Of course, beyond  the language, $\S$, and its representation
$\pi$, lies the question of whether or not a proposition is
`true'. This requires the concept of a `state' which, when
specified, yields `truth values' for the primitive propositions in
$\PL{S}$. These can then be extended recursively to the rest of
$\PL{S}$. In classical physics, the possible truth values are just
`true' or `false'. However,  as we shall see, the situation in
topos theory is more complex.

\subsubsection{Using  Geometric Logic}
\label{SubSubSec: PosMoveGeomLog} The inductive definition  of
$\PL{S}$ given above  means that sentences can involve only a
\emph{finite} number of primitive propositions, and therefore only
a finite number of disjunctions (`$\lor$') or conjunctions
(`$\land$'). An interesting variant of this structure is the,
so-called, `propositional geometric logic'. This is characterised
by modifying the language and logical axioms so that:
\begin{enumerate}
        \item There are arbitrary disjunctions, including the empty disjunction (`0').
        \item There are finite conjunctions, including the empty conjunction (`1')
        \item Conjunction distributes over arbitrary disjunctions; disjunction distributes over finite conjunctions.
\end{enumerate}
This structure does \emph{not} include negation, implication, or
infinite conjunctions.

From a conceptual viewpoint, this set of rules is  obtained by
considering what it means to actually `affirm' the propositions in
$\PL{S}$. A careful analysis of this concept is given by Vickers
\cite{Vickers89}; the idea itself goes back to work by Abramsky
\cite{Abramsky87}. The conclusion is that the set of `affirmable'
propositions should satisfy the rules above.

Clearly such  a logic is tailor-made for seeking representations
in the open sets of a topological space---the paradigmatic example
of a Heyting algebra. The phrase `geometric logic' is normally
applied to a \emph{first-order} logic with the properties above,
and we will return to this in our discussion of the typed language
$\L{S}$. What we have here is just the propositional part of this
logic.

The restriction to geometric logic would be  easy to incorporate
into our languages $\PL{S}$: for example, the axiom \eq{AinD1orD2}
(if added) could be extended to read\footnote{Note that  the
bi-implication $\Leftrightarrow$ used in, for example,
\eqs{AinD1andD2}{AinD1orD2}, is not available if there is no
implication symbol. Thus we have assumed that we are now working
with a logical structure in which `equality' is a meaningful
concept; hence the introduction of `$=$' in \eq{GeomLogicBigVee}.
}
\begin{equation}
        \bigvee_{i\in I}(A\varin\De_i)=A\varin\bigcup_{i\in I}\De_i
        \label{GeomLogicBigVee}
\end{equation}
for all index sets $I$.

The  move to geometric logic is motivated by a conception of truth
that is grounded in the actions of making real measurements. This
resonates strongly with the logical positivism that seems still to
lurk in the collective unconscious of the physics profession, and
which, of course, was strongly affirmed by Bohr in his analysis of
quantum theory. However, our drive towards `neo-realism' involves
replacing the idea of observation/measurement with that of `the
way things are', albeit in a more sophisticated interpretation
than that of the ubiquitous cobbler-in-the-market. Consequently,
the conceptual reasons for using `affirmative' logic are less
compelling. This issue deserves further thought: at the moment we
are open-minded about it.

The use of geometric logic becomes more interesting in the context
of the typed language $\L{S}$, and we shall return to this in
Section \ref{SubSubSec:TheoryTrLang}

\subsubsection{Introducing Time Dependence}
\label{SubSec:IntTimeDep} In addition to describing `the way
things are' there is also the question of how the-way-things-are
changes in time. In the form presented above, the language
$\PL{S}$ may seem geared towards a `canonical' perspective in so
far as the propositions concerned are implicitly taken to be
asserted at a particular moment of time. As such, $\PL{S}$ deals
with the values of physical quantities at that time. In other
words, the underlying spatio-temporal perspective seems thoroughly
`Newtonian'.

However, this is only partly true since the phrase `physical
quantity' can have meanings other than the canonical one. For
example, one could talk about the `time average of momentum', and
call that a physical quantity. In this case, the propositions
would be about \emph{histories} of the system, not just `the way
things are' at a particular moment in time.

In practice, the question of time dependence can be addressed in
various ways. One  is to attach a (external) time label, $t$, to
the physical quantities, so that the primitive propositions become
of the form ``$A_t\,\varin\,\De$''. This can be interpreted in two
ways. The first is to think of $\Q{S}$ as including the symbols
$A_t$ for all physical quantities $A$ and all values of time
$t\in\mathR$. The second is to keep $\Q{S}$ fixed, but instead let
the language itself becomes time-dependent, so that we should
write $\PL{S}_t$, $t\in\mathR$.

In the former case, $\PL{S}$ would naturally include
\emph{history} propositions of the form
\begin{equation}
(A_{1t_1}\varin\De_1)\land (A_{2t_2}\varin\De_2)\land\cdots
        \land (A_{nt_n}\varin\De_n)
\label{A1t1in...Antnin}
\end{equation}
and other obvious variants of this. Here we assume that $t_1\leq
t_2\leq\cdots\leq t_n$.

The sequential proposition in \eq{A1t1in...Antnin} is to be
interpreted (in a realist reading) as asserting that `` `The
physical quantity $A_1$ has a value that lies in $\De_1$ at time
$t_1$' \emph{and} `the physical quantity $A_2$ has a value that
lies in $\De_2$ at time $t_2$' \mbox{\emph{and} $\cdots$
\emph{and} }`the physical quantity $A_n$ has a value that lies in
$\De_n$ at time $t_n$' ''. Clearly what we have here is a type of
\emph{temporal} logic. Thus this would be an appropriate structure
with which to discuss the `consistent histories' interpretation of
quantum theory, particularly in the, so-called, HPO (history
projection formalism) \cite{Isham94}. In that context,
\eq{A1t1in...Antnin}  represents a, so-called, `homogeneous'
history.

From a general conceptual perspective, one might prefer to have an
internal time object, rather than adding external time labels in
the language. Indeed, in our later discussion of the higher-order
language $\L{S}$ we will strive to eliminate external entities.
However, in the present case,  $\De\subseteq\mathR$ is already an
`external' (to the language) entity, as indeed is $A\in\Q{S}$, so
there seems no particular objection to adding a time label too.

In the second approach, where there is only one time label, the
representation $\pi$ will map ``$A_t\,\varepsilon\,\De$'' to a
time-dependent element, $\pi(A_t\,\varepsilon\,\De)$,  of the
Heyting algebra, $\mathfrak{H}$; one could say that this is a type
of `Heisenberg picture'.

This suggests another option, which is to keep the language free
of any time labels, but allow the \emph{representation} to be
time-dependent. In this case, $\pi_t(\Ain\De)$ is  a
time-dependent member of $\mathfrak{H}$.\footnote{Perhaps we
should also consider the possibility that the Heyting algebra is
time dependent, in which case  $\pi_t(\Ain\De)$ is  a member of
$\mathfrak{H}_t$.}

A different approach is to ascribe time dependence to the `truth
objects' in the theory: this corresponds to a type of
Schr\"odinger picture. The concept of a truth object is discussed
in detail in Section \ref{Sec:TruthValues}.

\subsubsection{The Representation of $\PL{S}$ in Classical Physics}
\label{SubSubSec:PhysRepPLS} Let us now look at the representation
of $\PL{S}$ that corresponds to classical physics. In this case,
the topos involved is just the category, $\Set$, of sets and
functions between sets.

We will denote by $\picl$ the representation of  $\PL{S}$ that
describes the classical, Hamiltonian mechanics of a system, $S$,
whose state-space is a symplectic (or Poisson) manifold $\S$. We
denote by $\breve{A}:\S\map\mathR$ the real-valued
function\footnote{In practice, $\breve{A}$ is required to be
measurable, or smooth, depending on the type of physical quantity
that $A$ is. However, for the most part, these details of
classical mechanics are not relevant to our discussions, and
usually we will not characterise $\breve{A}:\S\map\mathR$ beyond
just saying that it is a function/map from $\S$ to $\mathR$.} on
$\S$ that represents the physical quantity $A$.

Then the representation $\picl$ maps the primitive proposition
$\SAin\De $ to the subset of $\S$ given by
\begin{eqnarray}
\picl(\Ain\De)&:=&\{s\in\S\mid \breve{A}(s)\in\De\}
                                    \nonumber\\
                  &=& \breve{A}^{-1}(\De).\label{sigmaSAinDelta}
\end{eqnarray}
This representation can be extended to all the sentences in
$\PL{S}$ with the aid of \eqs{pi(a)}{pi(d)}. Note that, since
$\De$ is a Borel subset of $\mathR$, $\breve{A}^{-1}(\De)$ is a
Borel subset of the state-space $\S$. Hence, in this case,
 $\mathfrak{H}$ is equal to the Boolean algebra of all
Borel subsets of $\S$.

We note that, for all (Borel) subsets $\De_1,\De_2$ of $\mathR$ we
have
\begin{eqnarray}
   \breve{A}^{-1}(\De_1)\cap\breve{A}^{-1}(\De_2) &=& \breve{A}^{-1}(\De_1\cap\De_2)
                                                \label{A-1D1andD2}\\
   \breve{A}^{-1}(\De_1)\cup\breve{A}^{-1}(\De_2) &=& \breve{A}^{-1}(\De_1\cup\De_2)
                                                \label{A-1D1orD2}\\
                \neg\breve{A}^{-1}(\De_1)&=&\breve{A}^{-1}(\mathR\backslash\De_1)
                                                \label{notA-1D}
\end{eqnarray}
and hence, in classical physics, all three conditions
\eqs{AinD1andD2}{negAinD} that we discussed earlier can be added
consistently  to the language $\PL{S}$.

Consider now the assignment of truth values to the propositions in
this theory. This involves the idea of a `microstate' which, in
classical physics, is simply an element $s$ of the state space
$\S$. Each microstate $s$ assigns to each primitive proposition
$\SAin\De$, a truth value, $\TVal{\Ain\De}{s}$,  which lies in the
set $\{{\rm false},{\rm true}\}$ (which we identify with
$\{0,1\}$) and is defined as
\begin{equation}\label{Def:[AinD]Class}
        \TVal{\Ain\De}{s}:=
        \left\{\begin{array}{ll}
            1 & \mbox{\ if\ $\breve{A}(s)\in\De$;} \\
            0 & \mbox{\ otherwise}
         \end{array}
        \right.
\end{equation}
for all $s\in\S$.

\subsubsection{The Failure to Represent $\PL{S}$ in Standard Quantum Theory.}
The procedure above that works so easily for classical physics
fails completely if one tries to apply it  to standard quantum
theory.

In quantum physics, a physical quantity $A$ is represented by a
self-adjoint operator $\A$ on a Hilbert space $\Hi$, and the
proposition $\SAin\De$ is represented by the projection operator
$\hat E[A\in\De]$ which projects onto  the subset $\De$ of the
spectrum of $\A$; \ie\
\begin{equation}
        \pi(\Ain\De):=\hat E[A\in\De].
                        \label{Def:rhoQT}
\end{equation}

Of course, the set of all projection operators, $\PH$, in $\Hi$
has a `logic' of its own---the `quantum logic'\footnote{For an
excellent survey of quantum logic see \cite{DCG02}. This includes
a discussion of a first-order axiomatisation of quantum logic, and
with an associated sequent calculus. It is interesting to compare
our work with what the authors of this paper have done. We hope to
return to this at some time in the future.} of the Hilbert space
$\Hi$---but this is incompatible with the intuitionistic logic of
the language $\PL{S},$ and the representation \eq{Def:rhoQT}.

Indeed, since the `logic' $\PH$ is non-distributive, there will
exist non-commuting operators $\A,\hat B,\hat C$, and Borel
subsets  $\De_A,\De_B,\De_C$ of $\mathR$ such that\footnote{There
is a well-known example that uses three rays in $\mathR^2$, so
this phenomenon is not particularly exotic.}
\begin{eqnarray}
\hat E[A\in\De_A]\land\left(\hat E[B\in\De_B]
        \lor \hat E[C\in\De_C]\right)&\neq&\nonumber\\
\left(\hat E[A\in\De_A]\land \hat E[B\in\De_B]\right)&\lor &
 \left(\hat E[A\in\De_A]\land\hat E[C\in\De_C]\right)\hspace{1cm}
\end{eqnarray}
while, on the other hand, the logical bi-implication
\begin{equation}
        \alpha\land(\beta\lor\gamma)\Leftrightarrow
                (\alpha\land\beta)\lor(\alpha\land\gamma)
\end{equation}
can be deduced from the axioms of the language $\PL{S}$.

This failure of distributivity bars any na\"ive realist
interpretation of quantum logic. If an instrumentalist
interpretation is used instead, the spectral projectors $\hat
E[A\in\De]$ now represent propositions about what would happen
\emph{if} a measurement is made, not propositions about what is
`actually the case'.  And, of course, when a state is specified,
this does not yield actual truth values but only the Born-rule
probabilities of getting certain results.


\section{A Higher-Order, Typed  Language for Physics}
\label{Sec:TypedLanguage}
\subsection{The Basics of the Language $\L{S}$}
We want now  to consider the possibility of representing the
physical quantities of a system by arrows in a topos other than
$\Set$.

The physical meaning of such an arrow is not clear, \emph{a
priori}. Nor is it even clear   \emph{what} it is that is being
represented in this way. However, what  \emph{is} clear is that in
such a situation it is not correct to assume that the
quantity-value object is necessarily the real-number object in the
topos (assuming that there is one). Rather, the target-object,
$\R_S$, has to be determined for each topos, and is therefore an
important part of the `representation'.

A powerful technique for allowing the quantity-value object to be
system-dependent is to add a symbol `$\R$' to the system language.
Developing this line of thinking  suggests that `$\Si$', too,
should be added to the language, as should a set of symbols of the
form `$A:\Si\map\R$', to be construed as `what it is' (hopefully a
physical quantity) that is represented by arrows in a topos.
Similarly, there should be a symbol `$\O$', to act as the
linguistic precursor to the sub-object classifier in the topos; in
the topos $\Set$, this is just the set $\{0,1\}$.

The clean way of doing all this is to construct a `local language'
\cite{Bell88}. Our basic assumption is that such a language,
$\L{S}$, can be associated with each system $S$. A physical theory
of $S$ then corresponds to a representation of $\L{S}$ in an
appropriate topos.

\paragraph{The symbols of $\L{S}$.}  We first
consider  the minimal set of symbols needed to handle elementary
physics. For more sophisticated theories in physics it will be
necessary to change, or enlarge, this set of `ground-type'
symbols.

The symbols for the local language, $\L{S}$, are defined
recursively as follows:
\begin{enumerate}
\item
\begin{enumerate}
  \item The basic \emph{type symbols} are $1,\O,\Si,\R$.
  The last two, $\Si$ and $\R$, are known as
  \emph{ground-type symbols}. They are the linguistic precursors
  of the state object, and quantity-value object, respectively.

If $T_1,T_2,\ldots,T_n$, $n\geq1$, are type symbols, then so
is\footnote{By definition, if $n=0$ then $T_1\times
T_2\times\cdots\times T_n:=1$.} $T_1\times T_2\times\cdots\times
T_n$.
        \item  If $T$ is a type symbol, then so is $PT$.
\end{enumerate}
\item
\begin{enumerate}
        \item   For each type symbol, $T$, there is associated a
        countable  set of \emph{variables of type $T$}.

        \item There is a special symbol $*$.
\end{enumerate}
\item
\begin{enumerate}
\item To each pair $(T_1,T_2)$ of type symbols there is associated
a set, $F_{\L{S}}(T_1,T_2)$, of \emph{function symbols}. Such a
symbol, $A$, is said to have \emph{signature} $T_1\map T_2$; this
is indicated  by writing $A:T_1\map T_2$.

\item Some of these sets of function symbols may be empty.
However, in our case, particular importance is attached to the
set, $F_{\L{S}}(\Si,\R)$, of   function symbols
\mbox{$A:\Si\map\R$}, and we assume this set is non-empty.

\end{enumerate}
\end{enumerate}

The function symbols $A:\Si\map\R$ represent the `physical
quantities' of the system, and hence $F_{\L{S}}(\Si,\R)$ will
depend on the system $S$. In fact, the only parts of the language
that are system-dependent are these function symbols. The set
$F_{\L{S}}(\Si,\R)$ is the analogue of the set, $\Q{S}$, of
physical quantities associated with the propositional language
$\PL{S}$.

For example, if $S_1$ is a point particle moving in one dimension,
the set of physical quantities could be chosen to be
$F_{\L{S_1}}(\Si,\R)=\{x,p,H\}$ which represent the position,
momentum, and energy of the system. On the other hand, if $S_2$ is
a particle moving in three dimensions, we could have
$F_{\L{S_2}}(\Si,\R)=\{x,y,z,p_x,p_y,p_z,H\}$ to allow for
three-dimensional position and momentum (with respect to some
given Euclidean coordinate system). Or, we could decide to add
angular momentum too, to give the set $F_{\L{S_2}}(\Si,{\R})
=\{x,y,z,p_x,p_y,p_z,J_x,J_y,J_z,H\}$. A still further extension
would be to add the quantities $\underline{x}\cdot\underline{n}$
and $\underline{p}\cdot \underline{m}$ for all unit vectors
$\underline{n}$ and $\underline{m}$; and so on.

Note that, as with the propositional language $\PL{S}$, the fact
that a given system has a specific Hamiltonian\footnote{It must be
emphasised once more that the use of a local language is
\emph{not} restricted to standard, canonical systems in which the
concept of a `Hamiltonian' is meaningful. The scope of the
linguistic ideas is \emph{much} wider than that and the canonical
systems are only an example. Indeed, our long-term interest is in
the application of these ideas to quantum gravity where the local
language is likely to be very different from that used here.
However, we anticipate that the basic  ideas will be the
same.}---expressed as a particular function of position and
momentum coordinates---is not something that is to be coded into
the language: instead, such system dependence arises in the choice
of \emph{representation} of the language. This means that many
different systems can have the same local language.

Finally, it should be emphasised that this list of symbols is
minimal and one will certainly want to add more. One obvious,
general, example is a type symbol $\mathN$ that is to be
interpreted as the linguistic analogue of the natural numbers. The
language could then be augmented with the axioms of Peano
arithmetic.

\paragraph{The terms of $\L{S}$.}
The next step is to enumerate the `terms' in the language,
together with their associated types \cite{Bell88,LamScott86}:
\begin{enumerate}
\item
\begin{enumerate}
\item For each type symbol $T$, the variables of type $T$ are
terms of type $T$.

\item The symbol $*$ is a term of type $1$.

\item A term of type $\O$ is called a \emph{formula}; a formula
with no free variables is called a \emph{sentence}.
\end{enumerate}

\item If $A$ is function symbol with signature $T_1\map T_2$, and
$t$ is a term of type $T_1$, then $A(t)$ is  term of type $T_2$.

In particular, if $A:\Si\map\R$ is a physical quantity, and $t$ is
a term of type $\Si$, then $A(t)$ is a term of type $\R$.

\item
\begin{enumerate}
        \item If $t_1,t_2,\ldots,t_n$ are terms of type
        $T_1,T_2,\ldots,T_n$, then
        $\langle t_1,t_2,\ldots,t_n\rangle$ is
        a term of type $T_1\times T_2\times\cdots\times T_n$.

       \item If $t$ is a term of type
       $T_1\times T_2\times\cdots\times T_n$,
       and if $1\leq i\leq n$, then $(t)_i$ is a term of type $T_i$.
\end{enumerate}

\item
\begin{enumerate}
        \item If $\om$ is a term of type $\O$, and $\va{x}$ is a
        variable of type $T$, then $\{\va{x}\mid \om\}$ is a term of
        type $PT$.

        \item If $t_1,t_2$ are terms of the same type, then `$t_1=t_2$' is a term of type $\O$.

        \item If $t_1,t_2$ are terms of type $T,PT$ respectively,
        then $t_1\in t_2$ is a term of type $\O$.
\end{enumerate}
\end{enumerate}

Note that the logical operations are not included in the set of
symbols. Instead,  they can all be defined using what is already
given. For example, (i) $true:= (*=*)$; and (ii) if $\alpha$ and
$\beta$ are terms of type $\O$, then\footnote{The parentheses
$(\;)$ are not  symbols in the language, they are just a way of
grouping letters and sentences. The same remark applies to the
inverted commas `'.} $\alpha\land\beta:=\big(\la\alpha,
\beta\ra=\la{\rm true}, {\rm true}\ra\big).$   Thus, in terms of
the original set of symbols, we have
\begin{equation}
  \alpha\land\beta:=\big(\langle\alpha,
   \beta\rangle=\langle*=*,*=*\rangle\big)
\end{equation}
and so on.

\paragraph{Terms of particular interest to us.}
Let  $A$ be a physical quantity in the set $\F{S}$, and therefore
a function symbol of signature $\Si\map\R$. In addition, let
$\va\De$ be a variable (and therefore a term) of type $P\R$; and
let $\va{s}$ be a variable (and therefore a term) of type $\Si$.
Then some terms of particular interest to us are  the following:
\begin{enumerate}
\item $A(\va{s})$ is a term of type $\R$ with a free variable,
$\va{s}$, of type $\Si$.

\item `$A(\va{s})\in\va\De$' is a term of type $\O$ with
free variables (i) $\va{s}$ of type $\Si$; and (ii) $\va{\De}$ of
type $P\R$.

\item $\{\va{s}\mid A(\va{s})\in\va\De\}$ is a term of
type $P\Si$ with a free variable $\va{\De}$ of type $P\R$.
\end{enumerate}
As we shall  see, `$A(\va{s})\in\va\De$' is an analogue of the
primitive propositions $\SAin\De$ in the propositional language
$\PL{S}$. However, there is a crucial difference. In  $\PL{S}$,
the `$\De$' in $\SAin\De$ is a specific subset of the external (to
the language) real line $\mathR$. On the other hand, in the local
language $\L{S}$, the `$\va\De$' in `$A(\va{s})\in\va\De$' is an
\emph{internal} variable within the language.

\paragraph{Adding axioms to the language.}
To make the language $\L{S}$ into  a deductive system we need to
add a set of appropriate axioms and rules of inference. The former
are expressed using \emph{sequents}: defined as expressions of the
form $\Ga:\alpha$ where $\alpha$ is a formula (a term of type
$\O$) and $\Ga$ is a set of such formula. The intention is that
`$\Ga:\alpha$' is to be read intuitively as ``the collection of
formula in $\Ga$ `imply' $\alpha$''. If $\Ga$ is empty we just
write $:\alpha$.

The basic axioms include things like `$\alpha:\alpha$'
(tautology), and  `$: \va{t}\in\{\va{t}\mid\alpha\}
\Leftrightarrow \alpha$' (comprehension) where $\va{t}$ is a
variable of type $T$. These axioms\footnote{The complete set is
\cite{Bell88}:
\begin{eqnarray*}
\mbox{Tautology:}       &&\alpha=\alpha\\[2pt]
\mbox{Unity}:           &&\va{x}_1=* \mbox{\ where $\va{x}_1$
                is a variable of type $1$.}       \\[2pt]
\mbox{Equality:}         && x=y,\alpha(\va{z}/x):
             \alpha(\va{z}/y). \mbox{ Here, $\alpha(\va{z}/x)$
             is the term $\alpha$ with $\va{z}$ replaced} \\
             &&\mbox{by the term $x$ for each free occurrence of the variable
         $\va{z}$. The terms}        \\
        && \mbox{$x$ and $y$ must be of the same type as $\va{z}$}.   \\[2pt]
\mbox{Products:}&& :(\la x_1,\ldots,x_n\ra)_i=x_i\\
                &&:x=\la(x)_1,\ldots,(x)_n\ra           \\[2pt]
\mbox{Comprehension:}&&:\va{t}\in\{\va{t}\mid\alpha\}
\Leftrightarrow \alpha
\end{eqnarray*}
} and the rules of inference (sophisticated analogues of
\emph{modus ponens})  give rise to a deductive system using
intuitionistic logic. For the details see
\cite{Bell88,LamScott86}.

For applications in physics we could, and presumably should, add
extra axioms (in the form of sequents). For example, perhaps the
quantity-value object  should always be an abelian-group object,
or at least a semi-group\footnote{One could go even further and
add the axioms for real numbers.  However, the example of quantum
theory suggests that this is inappropriate: in general, the
quantity-value object will \emph{not} be the real-number object
\cite{DI(3)}.}? This can be coded into the language by adding the
axioms for an abelian group structure for $\R$. This involves the
following steps:
\begin{enumerate}
\item Add the following symbols:
\begin{enumerate}

\item A `unit' function symbol $0:1\map\R$; this will be the
linguistic analogue of the unit element in an abelian group.

\item An `addition' function symbol $+:\R\times\R\map\R$.

\item An `inverse' function symbol $-:\R\map\R$
\end{enumerate}

\item Then add axioms like
`$:\forall\va{r}\big(+\langle\va{r}, 0(*)\rangle= \va{r}\big)$'
where $\va{r}$ is a variable of type $\R$, and so on.
\end{enumerate}

For another example, consider a point particle moving in three
dimensions, with the function symbols
$F_{\L{S}}(\Si,{\R})=\{x,y,z,p_x,p_y,p_z,J_x,J_y,J_z,H\}$. As
$\L{S}$ stands,  there is no way to specify, for example, that
`$J_x=yp_z-zp_y$'. Such relations can only be implemented in a
\emph{representation} of the language. However, if
 this relation is felt to be `universal' (\ie\ if it is expected to hold in all
physically-relevant representations) then it could be added to the
language with the use of extra axioms.

One of the delicate decisions that has to be made about $\L{S}$ is
what extra axioms  to add to the base language. Too few, and the
language lacks content; too many, and representations of potential
physical significance are excluded. This is one of the places in
the formalism where a degree of physical insight is necessary!

\subsection{Representing $\L{S}$ in a Topos}
The construction of a theory of the system $S$  involves choosing
a representation\footnote{The word `interpretation' is often used
in the mathematical literature, but we want to reserve that for
use in discussions of interpretations of quantum theory, and the
like.}/model, $\phi$, of the language $\L{S}$ in a
topos\footnote{A more comprehensive notation is $\tau_\phi(S)$,
which draws attention to the system $S$ under discussion;
similarly, the state object could be written as $\Sigma_{\phi,S}$,
and so on. This extended notation is used in Section
\ref{Sec:CatSys} where we are concerned with the relations between
\emph{different} systems, and then it is essential to indicate
which system is meant. However, in the present article, only one
system at a time is being considered, and so the truncated
notation is fine.} $\tau_\phi$. The choice of both topos and
representation  depend on the theory-type being used.

For example, consider a  system, $S$, that can be treated using
both classical physics and quantum physics, such as   a point
particle moving in three dimensions. Then, for the application of
the theory-type `classical physics', in a representation denoted
$\s$, the topos $\tau_\s$ is $\Set$, and $\Si$ is represented by
the symplectic manifold $\Si_\s:= T^*\mathR^3$; $\R$ is
represented by the usual real numbers $\mathR$.

On the other hand, as we shall see in Section
\ref{Sec:QuPropSpec}, for  the application of the theory-type
`quantum physics', $\tau_\phi$ is the topos, $\SetH{}$, of
presheaves over the category\footnote{We recall that the objects
in $\V{}$ are the unital, commutative von Neumann sub-algebras of
the algebra, $\BH$, of all bounded operators on $\Hi$. We will
explain, and motivate, this later.} $\V{}$, where ${\cal H}\simeq
L^2(\mathR^3,d^3x)$ is the Hilbert space of the system $S$. In
this case, $\Si$ is represented by $\Si_\phi:=\Sig$, where $\Sig$
is the spectral presheaf; this representation is discussed at
length in Sections \ref{Sec:QuPropSpec}. For both theory types,
the \emph{details} of, for example, the Hamiltonian, are coded in
the representation.

We now list the $\tau_\phi$-representation of the  most
significant symbols and terms in our language, $\L{S}$ (we have
picked out only the parts that are immediately relevant to our
programme: for full details see \cite{Bell88, LamScott86}).
\begin{enumerate}

\item \begin{enumerate} \item The ground type symbols $\Si$ and
$\cal R$ are represented by objects $\Si_\phi$ and ${\cal R}_\phi$
in $\tau_\phi$. These are identified physically as the state
object and quantity-value object, respectively.

   \item The symbol $\O$, is represented by
   $\O_\phi:=\O_{\tau_\phi}$, the sub-object classifier of
the topos $\tau_\phi$.

   \item The  symbol $1$, is represented by
$1_\phi:={1}_{\tau_\phi}$,
   the terminal object in $\tau_\phi$.
\end{enumerate}

\item For each type symbol $PT$, we have $(PT)_\phi:=PT_\phi$, the
power object of the object $T_\phi$ in $\tau_\phi$.

In particular, $(P\Si)_\phi=P\Si_\phi$ and $(P{\R})_\phi
=P\R_\phi$.

\item Each function symbol $A:\Si\map\R$ in $\F{S}$ (\ie\ each
physical quantity) is represented by an arrow
$A_\phi:\Si_\phi\map{\R}_\phi$ in $\tau_\phi$.

We will generally require the representation to be
\emph{faithful}: \ie\ the map $A\mapsto A_\phi$\ is one-to-one.

\item  A term of type $\O$ of the form
`$A(\va{s})\in\va\De$' (which has free variables $\va{s},\va{\De}$
of type $\Si$ and $P\R$ respectively) is represented by an arrow
$\Val{A(\va{s})\in\va\De}_\phi :\Si_\phi\times P{\R}_\phi\map
\O_{\tau_\phi}$. In detail, this arrow is
\begin{equation}
\Val{A(\va{s})\in\va\De}_\phi=e_{\R_\phi}\circ
        \la\Val{A(\va{s})}_\phi,\Val{\va\De}_\phi\ra
\end{equation}
where $e_{\R_\phi}:\R_\phi\times P\R_\phi\map\O_{\tau_\phi}$ is
the usual evaluation map;
$\Val{A(\va{s})}_\phi:\Si_\phi\map\R_\phi$ is the arrow $A_\phi$;
and $\Val{\va{\De}}_\phi:P\R_\phi\map P\R_\phi$ is the identity.

Thus $\Val{A(\va{s})\in\va\De}_\phi$ is the chain of arrows:
\begin{equation}
\Si_\phi\times P\R_\phi\mapright{A_\phi\times\id}
        \R_\phi\times P\R_\phi\mapright{e_{\R_\phi}}\O_{\tau_\phi}.
                                        \label{A(s)intildeDeChain}
\end{equation}
We see that the analogue of the `$\De$' used in the
$\PL{S}$-proposition $\SAin\De$ is played by sub-objects of
$\R_\phi$ (\ie\ global elements of $P\R_\phi$) in the domain of
the arrow in \eq{A(s)intildeDeChain}. These objects are, of
 course, representation-dependent (\ie\ they depend on $\phi$).

\item A term of type $P\Si$ of the form $\{\va{s}\mid
A(\va{s})\in\va\De\}$ (which has a free variable $\va\De$ of type
$P\R$) is represented by an arrow $\Val{\{\va{s}\mid
A(\va{s})\in\va\De\}}_\phi : P\R_\phi\map P\Si_\phi$. This arrow
is the power transpose\footnote{One of the basic properties of a
topos is that there is a one-to-one correspondence between arrows
$f:A\times B\map\O$ and arrows $\name{f}:B\map PA:=\O^A$. In
general, $\name{f}$ is called the \emph{power transpose} of $f$.
If $B\simeq 1$ then  $\name{f}$ is known as the \emph{name} of the
arrow $f:A\map\O$.}of
 $\Val{A(\va{s})\in\va\De}_\phi$:
\begin{equation}
\Val{\{\va{s}\mid A(\va{s})\in\va\De\}}_\phi =
\name{\Val{A(\va{s})\in\va\De}_\phi}\label{[]=nametildeDe}
\end{equation}

\item A term, $\om$, of type $\O$ with no free variables is
represented by a global element $\Val{\om}_\phi:1_{\tau_\phi} \map
\O_{\tau_\phi}$. These will typically act as `truth values' for
propositions about the system.

\item Any axioms that have been added to the language are required
to be represented by the arrow
$true:1_{\tau_\phi}\map\O_{\tau_\phi}$.
\end{enumerate}

\subsubsection{The Local Set Theory of a Topos.}
We should emphasise that the decision to focus on the particular
type of language that we have, is not an arbitrary one. Indeed,
there is a deep connection between such languages and topos
theory.

In this context, we first note that to any local language, ${\cal
L}$, there is associated a `local set theory'. This involves
defining an `${\cal L}$-set' to be a term $X$ of power type (so
that expressions of the form $x\in X$ are meaningful) and with no
free variables. Analogues of all the usual set operations can be
defined on $\cal L$-sets. For example, if $X,Y$ are $\cal L$-sets
of type $PT$, one can define $X\cap Y:=\{\va{x}\mid \va{x}\in
X\land \va{x}\in Y\}$ where $\va{x}$ is a variable of type $T$.

Furthermore, each local language, ${\cal L}$, gives rise to an
associated topos, ${\cal C}({\cal L})$, whose objects are
equivalence classes of ${\cal L}$-sets, where $X\equiv Y$ is
defined to mean that the equation $X=Y$ (\ie\ a term of type
$\Omega$ with no free variables) can be proved using the sequent
calculus of the language with its axioms. From this perspective, a
representation of the system-language $\L{S}$ in a topos $\tau$ is
equivalent to a \emph{functor} from the topos ${\cal C}(\L{S})$ to
$\tau$.

\subsubsection{Theory Construction as a Translation of Languages}
\label{SubSubSec:TheoryTrLang} Conversely, for each topos $\tau$
there is a local language, ${\cal L}(\tau)$, whose ground-type
symbols are the objects of $\tau$, and whose function symbols are
the arrows in $\tau$. It then follows that a representation of a
local language, $\cal L$, in $\tau$ is equivalent to a
`translation' of ${\cal L}$ in ${\cal L}(\tau)$. \vspace{5pt}

\displayE{10}{5}{4}{Thus constructing a theory of physics is
equivalent to finding a suitable \emph{translation} of the system
language, $\L{S}$,  to the language, $\L{\tau}$, of an appropriate
topos $\tau$.}

\noindent As we will see later, the idea of translating one local
language into another plays a central role in the discussion of
composite systems and sub-systems.

In the case of spoken languages, one can translate  from, say, (i)
English to German;  or (ii) from English to Greek, and then from
Greek to German. However, no matter how good the translators,
these two ways of going from English to German will generally not
agree. This is partly because the translation process is not
unique, but also because each language possesses certain intrinsic
features that simply do not admit of translation.

There is an interesting analogous question for the representation
of the local languages $\L{S}$. Namely, suppose
$\phi_1:\L{S}\map\L{\tau_{\phi_1}}$ and
$\phi_2:\L{S}\map\L{\tau_{\phi_2}}$ are two different topos
theories of the same system $S$ (these could be, say, classical
physics and quantum physics). The question is if/when will there
be a translation
$\phi_{12}:\L{\tau_{\phi_1}}\map\L{\tau_{\phi_2}}$ such that
\begin{equation}
                        \phi_2=\phi_{12}\circ\phi_1
                                       \label{comptransl}
\end{equation}
In terms of the representation functors from the topos ${\cal
C}(\L{S})$ to the topoi $\tau_{\phi_1}$ and $\tau_{\phi_2}$, the
question is if there exists an interpolating
 functor
from $\tau_{\phi_1}$ to $\tau_{\phi_2}$.

In Section \ref{SubSec:GeneralToposAxioms}, we will introduce a
certain category, ${\cal M}(\Sys)$, whose objects are topoi and
whose arrows are geometric morphisms between topoi. It would be
natural to require the arrow from $\tau_{\phi_1}$ to
$\tau_{\phi_2}$ (if it exists)  to be an arrow in this category.

It is at this point that `geometric logic' enters the scene (cf.\
Section \ref{SubSubSec: PosMoveGeomLog}). A formula in $\L{S}$ is
said to be \emph{positive} if it does not contain the
symbols\footnote{Here, the formula $\a\Rightarrow\beta $ is
defined as $\a\Rightarrow\beta:=(\a\land\beta)=\alpha$; $\forall$
is defined as $\forall x\alpha:=(\{x\mid\a\}=\{x\mid {\rm
true}\})$; where ${\rm true}:=*=*$} $\Rightarrow$ or $\forall$.
These conditions imply that $\neg$ is also absent. In fact, a
positive formula uses only $\exists,\land$ and $\vee$. A
disjunction can have an arbitrary index set, but a conjunction can
have only a finite index set. A sentence of the form $\forall
x(\a\Rightarrow\beta)$ is said to be a \emph{geometric
implication} if both $\a$ and $\beta$ are positive. Then a
\emph{geometric logic} is one in which only geometric implications
are present in the language.

\displayE{12}{5}{4}{The advantage of using just the geometric part
of logic is that geometric implications are \emph{preserved} under
geometric morphisms. This  makes it appropriate to ask for the
existence of `geometric translations'
$\phi_{12}:\L{\tau_{\phi_1}}\map\L{\tau_{\phi_2}}$, as in
\eq{comptransl}, since these will preserve the logical structure
of the language $\L{S}$.}

The notion of `toinvariance' introduced recently by Landsmann
\cite{Land08a} can be interpreted within our structures as
asserting that the translations
$\phi_{12}:\L{\tau_{\phi_1}}\map\L{\tau_{\phi_2}}$ should always
exist; or, at least, they should under appropriate conditions. Of
course, the significance of this depends on how much information
about the system is reflected in the language $\L{S}$ and how much
in the individual representations.

For example, in the case of classical and quantum physics, one
might go so far as  to include information about the dynamics of
the system within the local language $\L{S}$. If the topoi
$\phi_1$ and $\phi_2$ are those for the classical and quantum
physics of $S$ respectively (so that $\phi_1$ is $\Set$ and
$\phi_2$ is $\SetH{}$), then an interpolating translation
$\phi_{12}:\L{\Set}\map\L{\SetH{}}$ would be a nice realisation of
Landsmann's long-term goal of regarding quantisation as some type
of functorial operation.

Of course, introducing dynamics raises interesting questions about
the status of the concept of `time' (cf the discussion in Section
\ref{SubSec:IntTimeDep}). In particular, is time to be identified
as an object in representing topos, or is it an external
parameter, like the `$\De$' quantities in the propositional
languages $\PL{S}$?

\subsection{Classical Physics in the Local Language $\L{S}$}
The quantum theory representation of $\L{S}$ is studied in Section
\ref{Sec:QuPropSpec}. Here we will look at the concrete form of
the expressions above  for the example of classical physics. In
this case, for all systems $S$, and all classical representations,
$\s$, the topos $\tau_\s$ is $\Set$. This representation of
$\L{S}$ has the following ingredients:
\begin{enumerate}
\item
\begin{enumerate}
        \item The ground-type symbol $\Si$ is represented by
        a symplectic manifold, $\Si_\s$, that is the
        state-space for the system $S$.

        \item The ground-type symbol $\R$ is represented by the
        real line, \ie\ $\R_\s:=\mathR$.

        \item The type symbol $P\Si$ is represented by the set,
        $P\Si_\s$, of all\footnote{To be super precise, we really need to use the collection $P_{\rm Bor}\Si_\s$ of all \emph{Borel} subsets of $\Si_\s$.} subsets of the state space $\Si_\s$.

        The type symbol $P\R$ is represented by the set, $P\mathR$,
        of all subsets of $\mathR$.
\end{enumerate}

\item
\begin{enumerate}
   \item The  type symbol $\O$, is represented by
   $\Omega_\Set:=\{0,1\}$:  the sub-object classifier in $\Set$.

   \item The type symbol $1$, is represented by the
   singleton set: \ie\ $1_\Set=\{*\}$, the terminal object in $\Set$.
\end{enumerate}

\item Each function symbol $A:\Si\map\cal R$, and hence each
physical quantity, is represented by a real-valued function,
$A_\s:\Si_\s\map\mathR$, on the state space $\Si_\s$.

\item  The term `$A(\va{s})\in\va\De$' of type $\O$ (where
$\va{s}$ and $\va\De$ are free variables of type $\Si$ and $P\R$
respectively) is represented by the function
$\Val{A(\va{s})\in\va\De}_\s:\Si_\s\times P\mathR \map\{0,1\}$
that is defined by (c.f.\ \eq{A(s)intildeDeChain})
\begin{equation}
 \Val{A(\va{s})\in\va\De}_\s(s,\De)=
        \left\{\begin{array}{ll}
            1 & \mbox{\ if\ $A_\s(s)\in \De$;} \\
            0 & \mbox{\ otherwise.}
         \end{array}
        \right. \label{A(s)intildeDeChainCL}
\end{equation}
for all $(s,\De)\in\Si_\s\times P\mathR$.

\item The term  $\{\va{s}\mid A(\va{s}) \in\va{\De}\}$ of
type $P\Si$ (where $\va{\De}$ is a free variable of type $P\cal
R$) is represented by the function $\Val{\{\va{s}\mid A(\va{s})
\in\va{\De}\}}_\s: P\mathR \map P\Si_\s$ that is defined by
\begin{eqnarray}        \label{Def:sigmaS(A(s)inDelta)=}
        \Val{\{\va{s}\mid A(\va{s})
\in\va{\De}\}}_\s(\De) &:=&
        \{s\in\Si_\phi\mid A_\s(s)\in\De\}\nonumber\\
        &=& A_\s^{-1}(\De)
\end{eqnarray}
for all $\De\in P\mathR$.
\end{enumerate}

\subsection{Adapting the Language $\L{S}$ to Other Types of Physical
System}\label{SubSec:AdaptLanguage} Our central contention in this
work is that (i) each physical system, $S$,  can be equipped with
a local language, $\L{S}$; and (ii) constructing an explicit
theory of $S$ in a particular theory-type is equivalent to finding
a representation of $\L{S}$ in a topos which may well be other
than the topos of sets.

There are many situations in which  the language is independent of
the theory-type, and then, for a given system $S$, the different
topos representations of $\L{S}$ correspond to the application of
the different theory-types to the same system $S$. We gave an
example earlier of a point particle moving in three dimensions:
the classical physics representation is in the topos $\Set$, but
the quantum-theory representation is in the presheaf topos
$\Set^{{\cal V}(L^2(\mathR^3,\,d^3x))}$ .

However, there are other situations where the relationship between
the language and its representations is more complicated than
this. In particular, there is the critical question about what
features of the theory should go into the language, and what into
the representation. The first step in adding  new features is to
augment the set  of ground-type symbols. This is because these
represent the entities that are going to be of generic interest
(such as a state object or quantity-value object). In doing this,
extra axioms may also be introduced to encode the properties that
the new objects are expected to possess in all representations of
physical interest.

For example, suppose we want to use our formalism to discuss
space-time physics: where does the information about the
space-time go? If the subject is classical  field theory in a
curved  space-time, then the topos $\tau$ is  $\Set$, and the
space-time manifold is  part of the \emph{background} structure.
This makes it natural to have the manifold assumed in the
representation; \ie\ the information about the space-time is in
the representation.

Alternatively, one can add  a new ground type symbol, `$M$', to
the language, to serve as the linguistic progenitor of
`space-time'; thus $M$ would have  the same theoretical status as
the symbols $\Si$ and $\R$. In this context, we recall the brief
discussion in Section \ref{SubSubSec:WhyReals} about the use of
the real numbers in modelling space and/or time, and the
motivation this provides for representing space-time as an object
in a topos, and whose sub-objects represent the fundamental
`regions'.

If `$M$' is added to the language, a function symbol
$\psi:M\map\R$ is then  the progenitor of a physical field. In a
representation, $\phi$, the object $M_\phi$ plays the role of
`space-time' in the topos $\tau_\phi$, and
$\psi_\phi:M_\phi\map\R_\phi$ is the representation of the field.

Of course, the language $\L{S}$  says nothing about what sort of
entity $M_\phi$  is, except in so far as such information is
encoded in extra axioms. For example, if the subject is classical
field theory, then $\tau_\phi=\Set$, and $M_\phi$ would be a
standard differentiable manifold. On the other hand, if the topos
$\tau_\phi$ admits `infinitesimals', then $M_\phi$ could be a
manifold according to the language of synthetic differential
geometry \cite{Kock81}.

The same type of argument applies to the status of `time' in a
canonical theory. In particular, it would be possible to add a
ground-type symbol, $\typeTime$, so that, in any representation,
$\phi$, the object $\typeTime_\phi$ in the topos $\tau_\phi$ is
the analogue of the `time-line' for that theory. For standard
physics in $\Set$ we have $\typeTime_\phi=\mathR$, but the form of
$\typeTime_\phi$ in a more general topos, $\tau_\phi$, would be a
rich subject for speculation.

The addition of  a  `time-type' symbol, $\typeTime$, to the
language $\L{S}$ is a prime example of a situation where  one
might want to add extra axioms. These could involve ordering
properties, or algebraic  properties like those of  an abelian
group, and so on. In any topos representation, these properties
would  then be realised as the corresponding type of object in
$\tau_\phi$. Thus abelian group axioms mean that $\typeTime_\phi$
is an abelian-group object in the topos $\tau_\phi$;
total-ordering axioms for the time-type $\typeTime$ mean that
$\typeTime_\phi$ is a totally-ordered object in $\tau_\phi$, and
so on.

As an  interesting extension of this idea, one could have a
space-time ground type symbol $M$, but then add the axioms for a
partial ordering. In that case, $M_\phi$ would be a poset-object
in $\tau_\phi$, which could be interpreted physically as the
$\tau_\phi$-analogue of a causal set \cite{Dow05}.


\section{Quantum Propositions as Sub-Objects of the Spectral Presheaf}
\label{Sec:QuPropSpec}
\subsection{Some Background Remarks}
\label{SubSec:QuPropSpecBackgroundRemarks}
\subsubsection{The Kochen-Specker Theorem}
The idea of representing quantum theory in a topos of presheaves
stemmed originally \cite{IB98} from a desire to acquire a new
perspective on the  Kochen-Specker theorem \cite{KS67}. It will be
helpful at this stage to review some of this older material.

A commonsense belief, and one apparently shared by Heidegger, is
that at any given time any physical quantity \emph{must} have a
value even if we do not know what it is. In classical physics,
this is not problematic  since the underlying mathematical
structure is geared precisely to realise it. Specifically, if $\S$
is the state space of some classical system, and if the physical
quantity $A$ is represented by a real-valued function
$\breve{A}:\S\map\mathR$,  then the value $V_s(A)$ of $A$ in any
state $s\in\S$ is simply
\begin{equation}
    V^s(A)=\breve{A}(s).                      \label{Vs(A)=A(s)}
\end{equation}
Thus all physical quantities possess a value in any state.
Furthermore, if $h:\mathR\map\mathR$ is a real-valued function, a
new physical quantity $h(A)$ can be defined by requiring the
associated function $\breve{h(A)}$ to be
\begin{equation}
    \breve{h(A)}(s):=h(\breve{A}(s))       \label{Def:h(A)}
\end{equation}
for all $s\in\S$; {\em i.e.}, $\breve{h(A)}:=h\circ\breve
A:\S\map\mathR$. Thus the physical quantity $h(A)$ is {\em
defined\/} by saying that its value in any state $s$ is the result
of applying the function $h$ to the value of $A$; hence, by
definition, the values of the physical quantities $h(A)$ and $A$
satisfy the `functional composition principle'
\begin{equation}
    V^s(h(A))=h(V^s(A))                     \label{FUNCT-class}
\end{equation}
for all states $s\in\cal S$.

However, standard quantum theory precludes any such naive realist
interpretation of the relation between formalism and physical
world. And this obstruction comes from the mathematical formalism
itself, in the guise of the famous Kochen-Specker theorem which
asserts the impossibility of assigning values to all physical
quantities whilst, at the same time, preserving the functional
relations between them \cite{KS67}.

In a quantum theory, a physical quantity $A$ is represented by a
self-adjoint operator $\A$ on the Hilbert space of the system, and
the first thing one has to decide is whether to regard a valuation
as a function of the physical quantities themselves, or on the
operators that represent them.  From a mathematical perspective,
the latter strategy is preferable, and we shall therefore define a
valuation to be a real-valued function $V$ on the set of all
bounded, self-adjoint operators, with the properties that : (i)
the value $V(\A)$ of the physical quantity $A$ represented by the
operator $\A$ belongs to the spectrum of $\A$ (the so-called
`value rule'); and (ii) the functional composition principle (or
FUNC for short) holds:
\begin{equation}
    V(\hat B)=h(V(\A))                  \label{funct-rule}
\end{equation}
for any pair of self-adjoint operators $\A$, $\hat B$ such that
$\hat B= h(\A)$ for some real-valued function $h$. If they
existed, such valuations could be used to embed the set of
self-adjoint operators in the commutative ring of real-valued
functions on an underlying space of microstates, thereby laying
the foundations for a hidden-variable interpretation of quantum
theory.

Several important results follow from the definition of a
valuation. For example, if $\A_1$ and $\A_2$ commute, it follows
from the spectral theorem that there exists an operator $\hat C$
and functions $h_1$ and $h_2$ such that $\A_1=h_1(\hat C)$ and
$\A_2=h_2(\hat C)$. It then follows from FUNC that
\begin{equation}
    V(\A_1+\A_2)=V(\A_1)+V(\A_2)
\end{equation}
and
\begin{equation}
    V(\A_1\A_2)=V(\A_1)V(\A_2).\label{VAB}
\end{equation}

The defining equation \eq{funct-rule}) for a valuation makes sense
whatever the nature of the spectrum $\sp(\A)$ of the operator
$\A$. However, if $\sp(\A)$ contains a continuous part, one might
doubt the physical meaning of assigning one of its elements as a
value.  To handle the more general case, we shall view a valuation
as primarily giving {\em truth-values\/} to {\em propositions\/}
about the values of a physical quantity, rather than assigning a
specific value to the quantity itself.

As in Section \ref{Sec:ToposLogic}, the propositions concerned are
of the type $\SAin\De$, which (in a realist reading) asserts that
the value of the physical quantity $A$ lies in the (Borel) subset
$\De$ of the spectrum $\sp(\A)$ of the associated operator $\A$.
This proposition is represented by the spectral projector $\hat
E[A\in\De]$, which motivates studying the general mathematical
problem of assigning truth-values to projection operators.

If $\P$ is a projection operator, the identity $\P=\P^2$ implies
that $V(\P)=V(\P^2)=(V(\P))^2$ (from \eq{VAB}); and hence,
necessarily, $V(\P)=0$ or $1$. Thus $V$ defines a homomorphism
from the Boolean algebra $\{\hat 0, \hat 1, \P,\neg\P\equiv(\hat
1-\P)\}$ to the `false(0)-true(1)' Boolean algebra $\{0,1\}$. More
generally, a valuation $V$ induces a homomorphism
$\chi^V:W\rightarrow\{0,1\}$ where $W$ is any Boolean sub-algebra
of the lattice ${\cal P}(\Hi)$ of projectors on $\Hi$. In
particular,
\begin{equation}
    \hat\alpha\preceq\hat\beta\mbox{\ \ implies \ }
    \chi^V(\hat\alpha)\leq\chi^V(\hat\beta) \label{a<b->V(a)<V(b)}
\end{equation}
where `$\hat\a\preceq\hat\beta$' refers to the partial ordering in
the lattice ${\cal P}(\Hi)$, and
`$\chi^V(\hat\alpha)\leq\chi^V(\hat\beta)$' is the ordering in the
Boolean algebra $\{0,1\}$.

The Kochen-Specker theorem asserts that no global valuations exist
if the dimension of the Hilbert space $\Hi$ is greater than two.
The obstructions to the existence of such valuations typically
arise when trying to assign a single value to an operator $\hat C$
that can be written as $\hat C=g(\A)$ and as $\hat C=h(\hat B)$
with $[\A,\,\hat B]\neq 0$.

The various interpretations of quantum theory that aspire to use
`beables', rather than `observables', are all concerned in one way
or another with addressing this issue. Inherent in such schemes is
a type of `contextuality' in which a value given to a physical
quantity $C$ cannot be part of a global assignment of values but
must, instead, depend on some {\em context\/} in which $C$ is to
be considered. In practice, contextuality is endemic in any
attempt to ascribe properties to quantities in a quantum theory.
For example, as emphasized by Bell \cite{Bel66}, in the situation
where $\hat C=g(\hat A)=h(\hat B)$, if the value of $C$ is
construed counterfactually as referring to what would be obtained
{\em if\/} a measurement of $A$ or of $B$ is made---and with the
value of $C$ then being {\em defined\/} by applying to the result
of the measurement the relation $C=g(A)$, or $C=h(B)$---then one
can claim that the actual value obtained depends on whether the
value of $C$ is determined by measuring $A$, or by measuring $B$.

In the programme to be discussed here, the idea of a contextual
valuation will be developed in a different direction from that of
the existing modal interpretations in which `reality' is ascribed
to only some commutative subset of physical quantities. In
particular, rather than accepting such a limited domain of beables
we shall propose a theory of `generalised' valuations that are
defined globally on {\em all\/} propositions about values of
physical quantities. However, the price of global existence is
that any given proposition may have only a `{\em generalised\/}'
truth-value. More precisely, (i) the truth-value of a proposition
$\SAin\Delta$ belongs to a logical structure that is larger than
$\{0,1\}$; and (ii) these target-logics, and truth values, are
context-dependent.

It is clear that the main task is to formulate mathematically the
idea of a contextual, truth-value in such a way that the
assignment of generalised truth-values is consistent with an
appropriate analogue of the functional composition principle FUNC.

\subsubsection{The Introduction of Coarse-Graining}
In the original paper \cite{IB98}, this task is tackled using a
type of `coarse-graining' operation. The key idea is that,
although in a given situation in quantum theory it may not be
possible to declare a particular proposition $\SAin\De$ to be true
(or false), nevertheless there may be (Borel) functions $f$ such
that the associated propositions ``$f(A)\varin f(\De)$'' {\em
can\/} be said to be true.  This possibility arises for the
following reason.

Let $W_A$ denote the spectral algebra of the operator $\hat A$
that represents a physical quantity $A$. Thus $W_A$ is the Boolean
algebra of projectors $\hat E[A\in\De]$ that project onto the
eigenspaces associated with the Borel subsets $\De$ of the
spectrum $\sp(\A)$ of $\A$; physically speaking, $\hat
E[A\varin\De]$ represents the proposition $\SAin\De$.  It follows
from the spectral theorem that, for all Borel subsets $J$ of the
spectrum of $f(\A)$, the spectral projector $\hat E[f(A)\varin J]$
for the operator $f(\A)$ is equal to the spectral projector $\hat
E[A\varin f^{-1}(J)]$ for $\A$. In particular, if $f(\De)$ is a
Borel subset of $\sp(f(\A))$ then, since $\De\subseteq
f^{-1}(f(\De))$, we have $\hat E[A\varin\De]\preceq \hat E[A\varin
f^{-1}(f(\De))]$; and hence
\begin{equation}
    \hat E[A\varin\De]\preceq\, \hat E[f(A)\varin f(\De)]
                        \label{1EA<EfA}.
\end{equation}

Physically, the inequality in \eq{1EA<EfA} reflects the fact that
the proposition ``$f(A)\varin f(\Delta)$'' is generally weaker
than the proposition $\SAin\De$ in the sense that the latter
implies the former, but not necessarily vice versa. For example,
the proposition ``$f(A)=f(a)$'' is weaker than the original
proposition ``$A=a$'' if the function $f$ is many-to-one and such
that more than one eigenvalue of $\A$ is mapped to the same
eigenvalue of $f(\A)$. In general, we shall say that ``$f(A)\varin
f(\De)$'' is a {\em coarse-graining\/} of $\SAin\De$.

Now, if the proposition $\SAin\De$ is evaluated as `true' then,
from \eq{a<b->V(a)<V(b)} and \eq{1EA<EfA}, it follows that the
weaker proposition ``$f(A)\varin f(\De)$'' is also evaluated as
`true'.

This remark provokes the following observation. There may be
situations in which, although the proposition $\SAin\De$ cannot be
said to be either true or false, the weaker proposition
``$f(A)\varin f(\De)$'' can. In particular, if the latter {\em
can\/} be given the value `true', then---by virtue of the remark
above---it is natural to suppose that any further coarse-graining
to give an operator $g(f(\A))$ will yield a proposition
``$g(f(A))\in g(f(\De))$'' that will also be evaluated as `true'.
Note that there may be more than one possible choice for the
`initial' function $f$, each of which can then be further
coarse-grained in this way.  This multi-branched picture of
coarse-graining is one of the main justifications for our
invocation of the topos-theoretic idea of a presheaf.

It transpires that the key remark above is the statement:
\displayE{10}{3}{5}{
 If ``$f(A)\varin f(\De)$'' is true, then so
 is ``$g(f(A)\varin g(f(\De)$''
 for any function $g:\mathR\map\mathR$.}
This is key because  the property thus asserted can be restated by
saying that the collection of all functions $f$ such that
``$f(A)\varin f(\De)$'' is  `true' is a \emph{sieve}; and sieves
are closely associated with global elements of the sub-object
classifier in a category of presheaves.

To clarify this we start by defining a category $\cal O$ whose
objects are the bounded, self-adjoint operators on $\Hi$. For the
sake of simplicity, we will assume for the moment that $\cal O$
consists only of the operators whose spectrum is discrete. Then we
say that there is a `morphism' from $\hat B$ to $\A$ if there
exists a Borel function (more precisely, an equivalence class of
Borel functions) $f:\sp(\A)\map \mathR$ such that $\hat B=f(\A)$,
where $\sp(\A)$ is the spectrum of $\A$. Any such function on
$\sp(\A)$ is unique (up to the equivalence relation), and hence
there is at most one morphism between any two operators. If $\hat
B=f(\A)$, the corresponding morphism in the category $\cal O$ will
be denoted $f_{\cal O}: \hat B\map\A$. It then becomes clear that
the statement in the box above is equivalent to the statement that
the collection of all functions $f$ such that ``$f(A)\varin
f(\De)$'' is  `true', is a sieve\footnote{It is a matter of
convention whether this is called a sieve or a co-sieve.} on the
object $\A$ in the category $\cal O$.

This motivates very strongly looking at the topos category,
$\SetC{\cal O}$ of contravariant\footnote{Ab initio, we could just
as well have looked at covariant functors, but with our
definitions the contravariant ones are more natural.}, set-valued
functors on $\cal O$. Then, bearing in mind our discussion of
values of physical quantities, it is rather natural to construct
the following object in this topos:
\begin{definition}\label{Defn:spectral-presheaf}
The {\em spectral presheaf} on ${\cal O}$ is the contravariant
functor $\Sig:{\cal O} \map\Set$ defined as follows:
\begin{enumerate}
    \item On objects: $\Sig(\A):=\sp(\A)$.

    \item On morphisms: If $f_{{\cal O}}:\hat B\rightarrow \hat
A$, so that $\hat B=f(\hat A)$, then ${\bf\Sigma}(f_{{\cal
O}}):\sigma(\A)\rightarrow \sigma (\hat B)$ is defined by
${\bf\Sigma}(f_{{\cal O}})(\lambda):=f(\lambda)$ for all
$\lambda\in \sigma(\hat A)$.
\end{enumerate}
\end{definition}
Note that ${\bf\Si}(f_{{\cal O}})$ is well-defined since, if
$\lambda\in\sigma(\hat A)$, then $f(\lambda)$ is indeed an element
of the spectrum of $\hat B$; indeed, for these discrete-spectrum
operators we have $\sigma(f(\hat A))=f(\sigma(\hat A))$.

    The key remark now is the following. If $\cal C$ is any category,
a {\em global element\/}, of a contravariant functor $\ps{X}:{\cal
C}\map\Set$ is defined to be a function $\ga$ that assigns to each
object $A$ in the category $\cal C$ an element $\ga_A\in
\ps{X}(A)$ in such a way that if $f:B\map A$ then
$\ps{X}(f)(\ga_A)=\ga_B$ (see Appendix 2 for more details).

In the case of the spectral functor $\Sig$, a global element is
therefore a function $\ga$ that assigns to each (bounded, discrete
spectrum) self-adjoint operator $\A$, a real number $\ga_A\in
\sp(\A)$ such that if $\hat B=f(\A)$ then $f(\ga_A)=\ga_B$. But
this is precisely the condition FUNC in Eq.\ (\ref{funct-rule})
for a valuation!

\displayE{10}{3}{3}{Thus, the Kochen-Specker theorem is equivalent
to the statement that, if $\dim{\Hi}>2$, the spectral presheaf
$\Sig$ has no global elements.}

It was this observation that motivated the original suggestion by
one of us (CJI) and his collaborators that quantum theory should
be studied from the perspective of topos theory. However, as it
stands, the discussion above works only for operators with a
discrete spectrum. This is fine for finite-dimensional Hilbert
spaces, but in an infinite-dimensional space operators can have
continuous parts in their spectra, and then things get more
complicated.

One powerful way of tackling this problem is to replace the
category of operators with a category, $\V{}$, whose objects are
commutative von Neumann sub-algebras of the algebra $\BH$ of all
bounded operators on $\Hi$. There is a close link with the
category $\cal C$ since each self-adjoint operator generates a
commutative von Neumann algebra, but using $\V{}$ rather than
$\cal C$ solves all the problems associated with continuous
spectra \cite{IB00}.

Of course, this particular motivation for introducing $\V{}$ is
purely mathematical, but there are also very good physics reasons
for this step. As we have mentioned earlier, one approach to
handling the implications of the Kochen-Specker theorem is to
`reify' only a subset of physical variables, as is done in the
various `modal interpretations'. The topos-theoretic extension of
this idea of `partial reification', first proposed in \cite{IB98,
IB99, IB00, IB02}, is to build a structure in which all possible
reifiable sets of physical variables are included on an equal
footing. This involves constructing a category, $\cal C$,  whose
objects are collections of quantum observables that \emph{can} be
simultaneously reified because the corresponding self-adjoint
operators \emph{commute}. The application of this type of topos
scheme to an actual modal interpretation is discussed in the
recent paper by Nakayama \cite{Nakayama07}

From a physical perspective, the objects in the category $\cal C$
can be viewed as \emph{contexts} (or `world-views', or `windows on
reality', or `classical snapshots') from whose perspectives the
quantum  theory can be displayed. This is the physical motivation
for using commutative von Neumann algebras.

 In the normal, instrumentalist interpretation of quantum
theory, a context is therefore a collection of physical variables
that can be measured simultaneously. The physical significance of
this contextual logic is discussed at length in \cite{IB98, IB99,
IB00, IB02, IB00b} and \cite{DI(2),DI(3)}.

\subsubsection{Alternatives to von Neumann Algebras}

It should be remarked that $\V{}$ is not the only possible choice
for the category of concepts. Another possibility is to construct
a category whose objects are the Boolean sub-algebras of the
non-distributive lattice of projection operators on the Hilbert
space; more generally we could consider the Boolean sub-algebras
of \emph{any} non-distributive lattice. This option was discussed
in \cite{IB98}.

Yet another possibility is to consider the abelian
$C^*$-sub-algebras of the algebra $\BH$ of all bounded operators
on $\Hi$. More generally, one could consider the abelian
sub-algebras of \emph{any} $C^*$-algebra; this is the option
adopted by Heunen and Spitters \cite{HeuSpit07} in their very
interesting recent development of our scheme. One disadvantage of
a $C^*$-algebra is that does not contain projectors, and if  one
wants to include
 them it is necessary to move to
$AW*$-algebras, which are the abstract analogue of the concrete
von Neumann algebras that we employ. For each of these choices
there is a corresponding spectral object, and these different
spectral objects are closely related.

It is clear that a similar procedure could be followed for
\emph{any} algebraic quantity $\mathfrak{A}$ that has an
`interesting' collection of commutative sub-algebras. We will
return to this remark  in Section \ref{SubSubSec:AppAlgebra}.

\subsection{From Projections to Global Elements of the Outer
Presheaf}\label{_SubS_FromProjsToGlobSecs}
\subsubsection{The Definition of $\dasto{V}{P}$}\label{SubSub:defdas}
The fundamental thesis of our work is that in constructing
theories of physics one should seek representations of a formal
language in a topos that may be other than $\Set$. We  want now to
study this idea closely in the context of the `toposification' of
standard quantum theory, with particular emphasis on a topos
representation of propositions. Most `standard' quantum systems
(for example, one-dimensional motion with a Hamiltonian
$H=\frac{p^2}{2m}+V(x)$) are obtained by `quantising' a classical
system, and consequently the formal language is the same as it is
for the classical system. Our immediate goal is to represent
physical propositions with sub-objects of the spectral presheaf
$\Sig$.

In this Section we concentrate   on the propositional language
$\PL{S}$ introduced in Section \ref{SubSec:PropLangPhys}. Thus a
key task is to find the map $\piqt:\PL{S}_0\map \Sub{\Sig}$, where
the primitive propositions in $\PL{S}_0$ are of the form
$\SAin\De$. As we shall see, this is where the critical concept of
\emph{daseinisation} arises: the procedure whereby a projector
$\P$ is transformed to a
 sub-object, $\ps{\das{P}}$, of the spectral presheaf, $\Sig$, in
the topos $\SetH{}$ (the precise definition of $\Sig$   is given
in Section \ref{SubSebSec:GlobalElOtoSig}).

In standard quantum theory, a physical quantity is represented by
a self-adjoint operator $\hat{A}$ in the algebra, $\BH,$ of all
bounded operators on $\Hi$. If $\De\subseteq\mathR$ is a Borel
subset, we know from the spectral theorem  that the proposition
$\SAin{\De}$ is represented by\footnote{Note, however, that the
map from propositions to projections is not injective: two
propositions $\SAin{\De_1}$ and ``$B\varepsilon\De_{2}$''
concerning two distinct physical quantities, $A$ and $B$, can be
represented by the same projector: \ie\ $\hat E[A\in\De_1]=\hat
E[B\in\De_2]$.} the projection operator $\hat E[A\in\De]$ in
$\BH$. For typographical simplicity, for the rest of this Section,
$\hat E[A\in\De]$ will be denoted by $\P$.

We are going to consider the projection operator  $\P$ from the
perspective of the `category of contexts'---a keystone of the
topos approach to quantum theory. As we have remarked earlier,
there are several possible choices for this category   most of
which are considered in detail in the original papers
\cite{IB98,IB99,IB00,IB02}. Here we have elected to use the
category $\V{}$ of unital, abelian sub-algebras  of $\BH$. This
partially-ordered set has a  category structure in which (i) the
objects are the abelian sub-algebras of $\BH$; and (ii)  there is
an arrow $i_{V^\prime V}:V^\prime\map V$, where
$V^\prime,V\in\Ob{\V{}}$,\footnote{We denote by $\Ob{{\cal C}}$
the collection of all objects in the category $\cal C$.} if and
only if $V^\prime\subseteq V$.  By definition, the trivial
sub-algebra $V_0=\mathC\hat{1}$ is not included in the objects of
$\V{}$. A context could also be called a `world-view', a
`classical snap-shot',  a `window on reality', or even a
Weltanschauung\footnote{`Weltanschauung' is a splendid German
word. `Welt' means world; `schauen' is a verb and means to look,
to view; `anschauen' is to look at; and `-ung' at the end of a
word can make a noun from a verb. So it's Welt-an-schau-ung.};
mathematicians often refer to it as a `stage of truth'.

The critical question is what can be said about the projector $\P$
`from the perspective' of a particular context $V\in\Ob{\V{}}$? If
$\P$ belongs to $V$ then  a `full' image  of $\P$ is obtained from
this view-point, and there is nothing more to say. However,
suppose the abelian sub-algebra $V$ does \emph{not} contain $\P$:
what then?

We need to `approximate' $\P$ from the perspective of $V$, and an
important ingredient in our work is to define this as meaning the
`smallest' projection operator, $\dasto{V}{P}$, in $V$ that is
greater than, or equal to, $\P$:
\begin{equation}
\dasto{V}{P}:=\bigwedge\big\{\hat\a\in\PV\mid \hat\a\succeq
\P\big\}.           \label{Def:dasouter}
\end{equation}
where `$\succeq$' is the usual  ordering of projection operators,
and where $\PV$ denotes the set of all projection operators in
$V$.

To see what this means, let $\P$ and $\hat Q$ represent the
propositions $\SAin\De$ and $\SAin{\De^\prime}$ respectively with
$\De\subseteq\De^{\prime}$, so that $\P\preceq\hat Q$.
 Since we learn less about the value of
$A$ from the proposition $\SAin{\De^{\prime}}$ than from
$\SAin{\De}$, the former proposition is said to be \emph{weaker}.
Clearly, the weaker proposition $\SAin{\De^{\prime}}$ is implied
by the stronger proposition $\SAin{\De}$.  The construction of
$\dasto{V}{P}$ as the smallest projection in $V$ greater than or
equal to $\P$ thus gives the strongest proposition expressible in
$V$ that is implied by $\P$ (although, if $\hat A\notin V$, the
projection $\dasto{V}{P}$ cannot usually be interpreted as a
proposition \emph{about} $A$).\footnote{Note that the definition
in \eq{Def:dasouter} exploits the fact that the lattice $\PV$ of
projection operators in $V$ is complete. This is the main reason
why we chose von Neumann sub-algebras rather than $C^*$-algebras:
the former contain enough projections, and their projection
lattices are complete.} Note that if $\P$ belongs to $V$, then
$\dasto{V}{P}=\P$. The mapping $\P\mapsto\dasto{V}{P}$ was
originally introduced by de Groote in \cite{deG05},  who called it
the `$V$-support' of $\P$.

The key idea in this part of our  scheme is that rather than
thinking of a quantum proposition, $\SAin\De$,  as being
represented by the \emph{single} projection operator $\hat
E[A\in\De]$, instead we consider the entire \emph{collection}
$\{\delta\big(\hat E[A\in\De]\big)_V\mid V\in\Ob{\V{}}\}$ of
projection operators, one for each context $V$. As we will see,
the link with topos theory is that this collection of projectors
is a global element of a certain presheaf.

This `certain' presheaf is in fact the `outer' presheaf, which is
defined as follows:

{\definition The \emph{outer\footnote{In  the original papers by
CJI\ and collaborators, this was called the `coarse-graining'
presheaf, and was denoted $\ps{G}$. The reason for the change of
nomenclature will become apparent later.} presheaf} $\G$ is
defined over the category $\V{}$ as follows \cite{IB98,IB00}:
\begin{enumerate}
\item[(i)] On objects
$V\in\Ob{\V{}}$: We have  $\G_V:=\PV$

\item[(ii)] On morphisms $i_{V^{\prime}V}:V^{\prime
}\subseteq V:$ The mapping $\G(i_{V^{\prime} V}):\G_V
\map\G_{V^{\prime}}$ is given by
$\G(i_{V^{\prime}V})(\hat{\alpha}):=\dasto{V^\prime}{\alpha}$ for
all $\hat{\alpha}\in\PV$.
\end{enumerate}
}

With this definition, it is clear that, for each projection
operator $\P$, the assignment $V\mapsto \dasto{V}{P}$ defines a
global element of the presheaf $\G$. Indeed, for each context $V$,
we have the projector $\dasto{V}{P}\in\PV=\G_V$, and if
$i_{V^{\prime}V}:V^{\prime }\subseteq V$, then
\begin{equation}
\delta\big(\dasto{V}{P}\big)_{V^\prime}
=\bigwedge\big\{\hat{Q}\in\mathcal{P}(V^{\prime})\mid
\hat{Q}\succeq \dasto{V}{P}\big\}=\dasto{V^\prime}{P}
\end{equation}
and so the elements $\dasto{V}{P}$, $V\in\Ob{\V{}}$, are
compatible with the structure of the outer presheaf. Thus we have
a mapping
\begin{eqnarray}
\dasmap:\PH  &\map&\Ga\G                        \nonumber\\
\P  &  \mapsto&\{\dasto{V}{P}\mid {V\in\Ob{\V{}}\}}
\end{eqnarray}
from the projectors in $\PH$ to the global elements, $\Ga\G$, of
the outer presheaf.\footnote{Vis-a-vis our use of the language
$\L{S}$ a little further on, we should  emphasise that the outer
presheaf has no linguistic precursor, and  in this sense, it has
no fundamental status in the theory. In fact, we could avoid the
outer presheaf altogether and always work directly with the
spectral presheaf, $\Sig$, which, of course, \emph{does} have a
linguistic precursor. However, it is technically convenient to
introduce the outer  presheaf as an intermediate tool.}

\subsubsection{Properties of the Mapping $\dasmap:\PH\rightarrow\Ga$\underline O.}
\label{SubSubSec:PropDas} Let us now note some properties of the
map $\dasmap:\PH\map\Ga\G$ that are relevant to our overall
scheme.
\begin{enumerate}
\item For all contexts $V$, we have $\dasto{V}{0}=\hat0$.

The null projector represents all propositions of the form
$\SAin\De$ with the property that $\sp(\A)\cap\De=\varnothing$.
These propositions are trivially false.

\item For all contexts $V$, we have $\dasto{V}{1}=\hat 1$.

The unit operator $\hat 1$ represents all propositions of the form
$\SAin\De$  with the property that $\sp(\A)\cap\De=\sp(\A)$. These
propositions are trivially true.

\item There exist global elements  of $\G$ that are \emph{not}
of the form $\dasto{}{P}$ for any  projector $\P$. This phenomenon
will be discussed later. However, if $\ga\in\Ga\G$ \emph{is} of
the form $\das{P}$ for some $\P$, then
\begin{equation}\label{ProjIsMinOfDas(P)}
                \P=\bigwedge_{V\in\Ob{\V{}}}\dasto{V}{P},
\end{equation}
because $\dasto{V}{P}\succeq \P$ for all $V\in\Ob{\V{}}$, and
$\dasto{V}{P}=\P$ for any  $V$ that contains $\P$.
\end{enumerate}

The next result is important as it means that `nothing is lost' in
mapping a projection operator $\P$ to its associated global
element, $\das{P}$, of the presheaf $\G$.

\displayE{10.0}{-5}{0}{\begin{theorem} The map
$\dasmap:\PH\map\Ga\G$ is injective.
\end{theorem}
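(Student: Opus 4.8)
The plan is to show injectivity by exhibiting an explicit left inverse via the formula \eqref{ProjIsMinOfDas(P)} that was already recorded in item 3 of the preceding list. The key observation is that each projector $\P$ can be recovered from its image $\das{P}=\{\dasto{V}{P}\mid V\in\Ob{\V{}}\}$ under daseinisation, so no two distinct projectors can have the same image.

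First I would recall the two facts that together pin down $\P$. For every context $V$, outer daseinisation gives the \emph{smallest} projector in $V$ dominating $\P$, so $\dasto{V}{P}\succeq\P$ holds universally. On the other hand, the operator $\P$ generates a commutative von Neumann algebra, and any context $V$ containing $\P$ itself satisfies $\dasto{V}{P}=\P$ (this is the remark, stated just after \eqref{Def:dasouter}, that daseinisation acts as the identity whenever the projector already lies in the context). Taking the meet over all contexts of the family $\{\dasto{V}{P}\}$ therefore yields exactly $\P$, since the meet is bounded below by $\P$ (every term dominates it) yet attains $\P$ on at least one context. This is precisely equation \eqref{ProjIsMinOfDas(P)}.

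From here the argument is immediate. Suppose $\P$ and $\hat Q$ are projectors with $\dasmap(\P)=\dasmap(\hat Q)$, i.e.\ $\dasto{V}{P}=\dasto{V}{Q}$ for all $V\in\Ob{\V{}}$. Applying \eqref{ProjIsMinOfDas(P)} to each projector gives
\begin{equation}
\P=\bigwedge_{V\in\Ob{\V{}}}\dasto{V}{P}
 =\bigwedge_{V\in\Ob{\V{}}}\dasto{V}{Q}=\hat Q,
\end{equation}
so $\P=\hat Q$ and the map is injective.

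The only point requiring genuine care — and what I expect to be the main obstacle — is justifying that a context containing $\P$ actually exists inside the category $\V{}$, so that the meet in \eqref{ProjIsMinOfDas(P)} is truly attained rather than merely bounded. This holds because the abelian von Neumann algebra generated by $\{\hat 1,\P\}$ is a legitimate object of $\V{}$ (it is unital and commutative, and it is nontrivial provided $\P\neq\hat 0,\hat 1$, the trivial algebra $\mathC\hat 1$ being explicitly excluded). For the edge cases $\P=\hat 0$ and $\P=\hat 1$, injectivity is anyway transparent from properties~1 and~2 above, since $\das{0}$ and $\das{1}$ are the constant global elements $\hat 0$ and $\hat 1$, which are distinct and distinct from the image of any nontrivial projector. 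Thus \eqref{ProjIsMinOfDas(P)} is valid for every $\P\in\PH$, and the stated injectivity follows.
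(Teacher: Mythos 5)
Your proposal is correct and follows essentially the same route as the paper: both deduce injectivity directly from the identity $\P=\bigwedge_{V\in\Ob{\V{}}}\dasto{V}{P}$, which the paper records as \eq{ProjIsMinOfDas(P)} and justifies exactly as you do (every $\dasto{V}{P}$ dominates $\P$, and equality holds at any context containing $\P$). Your extra care about the existence of a context containing $\P$ and the edge cases $\P=\hat 0,\hat 1$ is a reasonable supplement but does not change the argument.
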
}

This simply follows from (\ref{ProjIsMinOfDas(P)}): if
$\das{P}=\das{Q}$ for two projections $\P,\hat Q$, then
\begin{equation}
        \P=\bigwedge_{V\in\Ob{\V{}}}\dasto{V}{P}
        =\bigwedge_{V\in\Ob{\V{}}}\dasto{V}{Q}=\hat Q.
\end{equation}

\subsubsection{A Logical Structure for $\Ga$\G?}
\label{SubSec:LogStructG} We have seen that the quantities
$\das{P}:=\{\dasto{V}{P}\mid V\in\Ob{\V{}}\}$, $\P\in\PH$, are
elements of $\Ga\G$, and if they are to represent  quantum
propositions, one might expect/hope that (i) these global elements
of $\G$ form a Heyting algebra; and (ii) this algebra is related
in some way to the Heyting algebra of sub-objects of $\Sig$. Let
us see how far we can go in this direction.

Our first remark is that any two global elements $\ga_{1},\ga_{2}$
of $\G$ can be compared at each stage $V$ in the sense of logical
implication. More precisely, let $\ga_{1}{}_{V}\in\PV$ denote the
$V$'th `component' of $\ga_1$, and ditto for $\ga_{2}{}_{V}$. Then
we have the following result:
\begin{definition}
A partial ordering on $\Ga\G$ can be constructed in a `local' way
(\ie\ `local' with respect to the objects in the category $\V{}$)
by defining
\begin{equation}
\ga_{1}\succeq\ga_{2} \mbox{  if, and only if, } \forall
V\in\Ob{\V{}},\ \ga_{1}{}_V\succeq\ga_{2}{}_V \label{Def:g1>g2}
\end{equation}
where the ordering on the  right hand side of \eq{Def:g1>g2} is
the usual ordering in the lattice of projectors $\PV$.
\end{definition}
It is trivial to check that \eq{Def:g1>g2} defines a partial
ordering on $\Ga\G$. Thus $\Ga\G$ is a partially ordered set.

Note that if $\P,\hat Q$ are projection operators, then it follows
from \eq{Def:g1>g2} that
\begin{equation}
\das{P}\succeq\das{Q} \mbox{ if and only if } \P\succeq\hat Q
\end{equation}
since $\P\succeq\hat Q$ implies $\dasto{V}{P}\succeq \dasto{V}{Q}$
for all contexts $V$.\footnote{On the other hand, in general,
$\P\succ \hat Q$ does not imply $\dasto{V}{P}\succ \dasto{V}{Q}$
but only $\dasto{V}{P}\succeq \dasto{V}{Q}$.} Thus the mapping
$\delta:\PH\map\Ga\G$ respects the partial order.

The next thing is to see if a  logical $`\lor$'-operation can be
defined on $\Ga\G$. Once again, we try a `local' definition:
\begin{theorem}
A `$\lor$'-structure on $\Ga\G$ can be defined locally by
\end{theorem}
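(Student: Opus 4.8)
The plan is to define the connective stage-wise, in exact parallel with the partial order of \eq{Def:g1>g2}, by setting
\[
(\ga_{1}\lor\ga_{2})_{V}:=\ga_{1}{}_{V}\lor\ga_{2}{}_{V},
\]
where the join on the right is the join in the projection lattice $\PV$ (which is complete, so the operation makes sense). Everything then reduces to a single algebraic fact about outer daseinisation, which I would isolate first: for all $\hat\alpha,\hat\beta\in\PV$ and all $V'\subseteq V$,
\[
\delta(\hat\alpha\lor\hat\beta)_{V'}=\dasto{V'}{\alpha}\lor\dasto{V'}{\beta}.
\]
Once this distributivity is in hand, the theorem follows by routine bookkeeping.

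To establish the displayed equality I would argue by two inequalities. For `$\preceq$', note that outer daseinisation is order-preserving: this is immediate from \eq{Def:dasouter}, since enlarging the projector shrinks the set of projectors in $V'$ dominating it and hence can only raise the infimum. Therefore $\hat\alpha\preceq\hat\alpha\lor\hat\beta$ gives $\dasto{V'}{\alpha}\preceq\delta(\hat\alpha\lor\hat\beta)_{V'}$, and likewise for $\hat\beta$; taking the join of these two yields $\dasto{V'}{\alpha}\lor\dasto{V'}{\beta}\preceq\delta(\hat\alpha\lor\hat\beta)_{V'}$. For `$\succeq$', observe that $\dasto{V'}{\alpha}\lor\dasto{V'}{\beta}$ lies in $\mathcal{P}(V')$ (the projectors in $V'$ are closed under joins) and dominates $\hat\alpha\lor\hat\beta$, because $\dasto{V'}{\alpha}\succeq\hat\alpha$ and $\dasto{V'}{\beta}\succeq\hat\beta$. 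Since $\delta(\hat\alpha\lor\hat\beta)_{V'}$ is by definition the \emph{smallest} projector in $V'$ above $\hat\alpha\lor\hat\beta$, it must be $\preceq\dasto{V'}{\alpha}\lor\dasto{V'}{\beta}$, and the two bounds give equality.

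Given the lemma, the remaining steps are straightforward. First I would verify that the locally-defined family really is a global element of $\G$: applying the presheaf map $\G(i_{V'V})$ to $(\ga_{1}\lor\ga_{2})_{V}=\ga_{1}{}_{V}\lor\ga_{2}{}_{V}$ gives $\delta(\ga_{1}{}_{V}\lor\ga_{2}{}_{V})_{V'}$, which by the lemma equals $\delta(\ga_{1}{}_{V})_{V'}\lor\delta(\ga_{2}{}_{V})_{V'}=\ga_{1}{}_{V'}\lor\ga_{2}{}_{V'}=(\ga_{1}\lor\ga_{2})_{V'}$, where the middle equality uses only that $\ga_{1}$ and $\ga_{2}$ are themselves global elements. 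Second, I would confirm that $\ga_{1}\lor\ga_{2}$ is the least upper bound of $\ga_{1},\ga_{2}$ in the poset $(\Ga\G,\succeq)$ of \eq{Def:g1>g2}: it dominates each $\ga_{i}$ stage-wise by construction, and if $\delta\succeq\ga_{1}$ and $\delta\succeq\ga_{2}$ then $\delta_{V}\succeq\ga_{1}{}_{V}\lor\ga_{2}{}_{V}=(\ga_{1}\lor\ga_{2})_{V}$ for every $V$, whence $\delta\succeq\ga_{1}\lor\ga_{2}$.

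The one genuinely substantive point is the lemma, and it is also the place where I expect the analogous construction for `$\land$' to fail. The order-preservation argument alone yields only the `$\preceq$' direction for either connective; the reverse direction relies on $\mathcal{P}(V')$ being closed under the operation in question together with the minimality built into \eq{Def:dasouter}, and this goes through cleanly only for joins. For meets the same reasoning gives just $\delta(\hat\alpha\land\hat\beta)_{V'}\preceq\dasto{V'}{\alpha}\land\dasto{V'}{\beta}$, with strict inequality in general, so a naively local `$\land$' need not return a global element of $\G$. This asymmetry is exactly what forces the two connectives to be treated by different constructions, and it is why the purely local definition succeeds here for `$\lor$'.
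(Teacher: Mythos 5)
Your proposal follows essentially the same route as the paper: the paper also reduces the theorem to the lemma $\G(i_{V'V})(\hat\alpha\lor\hat\beta)=\G(i_{V'V})(\hat\alpha)\lor\G(i_{V'V})(\hat\beta)$ and then notes that compatibility with the presheaf maps is immediate. The only difference is that the paper dismisses the lemma as a straightforward consequence of the definition, whereas you supply the two-inequality argument (order-preservation of $\delta$ for one direction, closure of $\PV[V']$ under joins plus minimality for the other), which is correct and exactly the intended justification.
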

\begin{equation}
(\ga_1\lor\ga_2)_V:=\ga_1{}_V\lor\ga_2{}_V
        \label{Def:g1lorg2}
\end{equation}
for all $\ga_1,\ga_2\in\Ga\G$, and for all $V\in\Ob{\V{}}$.

\begin{proof}
It is not instantly clear that \eq{Def:g1lorg2} defines a global
element of $\G$. However, a key result in this direction is the
following:
\begin{lemma}
For each context $V$, and for all $\hat\alpha,\hat\beta\in\PV$, we
have
\begin{equation}
\G(i_{V^{\prime}V})(\hat{\alpha}\lor\hat\beta)=
\G(i_{V^{\prime}V})(\hat{\alpha})\lor
\G(i_{V^{\prime}V})(\hat{\beta}) \label{G(alorb)=}
\end{equation}
for all contexts $V^\prime$ such that  $V^\prime\subseteq V$.
\end{lemma}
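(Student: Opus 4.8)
The plan is to prove the equation $\G(i_{V^\prime V})(\hat\alpha\lor\hat\beta)=\G(i_{V^\prime V})(\hat\alpha)\lor\G(i_{V^\prime V})(\hat\beta)$ by unwinding the definition of the restriction map and establishing the two corresponding inequalities separately. Recall that $\G(i_{V^\prime V})(\hat\alpha)=\dasto{V^\prime}{\alpha}$ is, by \eq{Def:dasouter} applied with $V$ replaced by $V^\prime$, the \emph{smallest} projector in $\mathcal{P}(V^\prime)$ that dominates $\hat\alpha$. The first thing I would record is that outer daseinisation is \emph{monotone}: if $\hat\alpha\preceq\hat\alpha^\prime$, then $\{\hat\gamma\in\mathcal{P}(V^\prime)\mid\hat\gamma\succeq\hat\alpha^\prime\}\subseteq\{\hat\gamma\in\mathcal{P}(V^\prime)\mid\hat\gamma\succeq\hat\alpha\}$, so taking meets over the smaller set yields $\dasto{V^\prime}{\alpha}\preceq\dasto{V^\prime}{\alpha^\prime}$.

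For the inequality $\dasto{V^\prime}{\alpha}\lor\dasto{V^\prime}{\beta}\preceq\delta(\hat\alpha\lor\hat\beta)_{V^\prime}$, I would apply monotonicity to the trivial relations $\hat\alpha\preceq\hat\alpha\lor\hat\beta$ and $\hat\beta\preceq\hat\alpha\lor\hat\beta$, obtaining $\dasto{V^\prime}{\alpha}\preceq\delta(\hat\alpha\lor\hat\beta)_{V^\prime}$ and $\dasto{V^\prime}{\beta}\preceq\delta(\hat\alpha\lor\hat\beta)_{V^\prime}$; since the right-hand side is then a common upper bound of the two left-hand sides, it dominates their join.

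For the reverse inequality, the key point is that $\dasto{V^\prime}{\alpha}\lor\dasto{V^\prime}{\beta}$ is itself an element of $\mathcal{P}(V^\prime)$ which dominates $\hat\alpha\lor\hat\beta$ (because $\dasto{V^\prime}{\alpha}\succeq\hat\alpha$ and $\dasto{V^\prime}{\beta}\succeq\hat\beta$). Appealing to the defining minimality of $\delta(\hat\alpha\lor\hat\beta)_{V^\prime}$ as the smallest projector in $\mathcal{P}(V^\prime)$ lying above $\hat\alpha\lor\hat\beta$, we then get $\delta(\hat\alpha\lor\hat\beta)_{V^\prime}\preceq\dasto{V^\prime}{\alpha}\lor\dasto{V^\prime}{\beta}$. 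Combining the two inequalities yields the asserted equality.

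The step I expect to be the main obstacle, and the one to state carefully, is the claim that $\dasto{V^\prime}{\alpha}\lor\dasto{V^\prime}{\beta}$ really lies in $\mathcal{P}(V^\prime)$ and that the join computed there agrees with the join $\hat\alpha\lor\hat\beta$ taken in the full lattice $\PH$. Since every algebra in $\V{}$ is abelian, the projectors $\dasto{V^\prime}{\alpha},\dasto{V^\prime}{\beta}\in V^\prime$ commute, so their join is simply $\dasto{V^\prime}{\alpha}+\dasto{V^\prime}{\beta}-\dasto{V^\prime}{\alpha}\,\dasto{V^\prime}{\beta}$, which manifestly lies in $V^\prime$; more generally this closure is exactly the reason the paper works with \emph{von Neumann} sub-algebras, whose projection lattices are complete sublattices of $\mathcal{P}(\Hi)$. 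Once this closure property is invoked, the minimality argument above goes through verbatim and no further analytic input is required.
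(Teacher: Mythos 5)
Your proof is correct, and it is essentially the argument the paper has in mind: the paper simply declares the lemma ``a straightforward consequence of the definition of the presheaf $\G$'' and omits the details, which are exactly the two inequalities you establish via monotonicity of $\delta(\cdot)_{V^\prime}$ and the minimality in \eq{Def:dasouter}. Your care over the closure of $\mathcal{P}(V^\prime)$ under joins (and the agreement of the join in $\mathcal{P}(V^\prime)$ with that in $\PH$, guaranteed because $V^\prime$ is a von Neumann sub-algebra) correctly identifies the only point that needs checking.
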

The proof  is a straightforward consequence of the definition of
the presheaf $\G$.

One immediate consequence is that \eq{Def:g1lorg2} defines a
global element\footnote{The existence of the $\lor$-operation on
$\Ga\G$ can be extended to $\G$ itself. More precisely, there is
an arrow $\lor:\G\times \G\map\G$ where $\G\times\G$ denotes the
product presheaf over $\V{}$, whose objects are
$(\G\times\G)_V:=\G_V\times\G_V$. Then the arrow $\lor:\G\times
\G\map\G$ is defined at any context $V$ by
$\lor_{V}(\hat\alpha,\hat\beta):=\hat\alpha\lor\hat\beta$ for all
$\hat\alpha,\hat\beta\in\G_V$.} of $\G$. Hence the theorem is
proved.\end{proof}

\

It is also straightforward to  show that, for any pair of
projectors $\P,\hat Q\in\PH$, we have $\delta(\P\lor \hat
Q)_{V}=\dasto{V}{P}\vee\dasto{V}{Q}$, for all contexts
$V\in\Ob{\V{}}$. This means that, as elements of $\Ga\G$,
\begin{equation}
\delta(\P\vee \hat Q)=\das{P}\vee\das{Q}.
\end{equation}
Thus the mapping $\delta:\PH\map\Ga\G$ preserves the logical
`$\lor$' operation.

However, there is no analogous equation for the logical
`$\land$'-operation. The obvious local definition would be, for
each context $V$,
\begin{equation}
(\ga_1\land\ga_2)_V:=\ga_1{}_V\land\ga_2{}_V\label{Def:g1landg2}
\end{equation}
but this does not define a global element of $\G$ since, unlike
\eq{G(alorb)=}, for the $\land$-operation we have only
\begin{equation}
\G(i_{V^{\prime}V})(\hat{\alpha}\land\hat\beta)\;\preceq\;
\G(i_{V^{\prime}V})(\hat{\alpha})\land
\G(i_{V^{\prime}V})(\hat{\beta}) \label{G(alandb)=}
\end{equation}
for all $V^\prime\subseteq V$. As a consequence, for all $V$, we
have only the inequality
\begin{equation}
\delta(\P\land\hat Q)_{V}\;\preceq\;\dasto{V}{P}\land\dasto{V}{Q}
\label{delta(PlandQ)}
\end{equation}
and hence
\begin{equation}
\delta(\P\land\hat Q)\;\preceq\;\das{P}\land\das{Q}.
\end{equation}

It is easy to find examples where the inequality is strict. For
example, let $\P\neq \hat 0,\hat 1$ and $\hat Q=\hat 1-\P$. Then
$\P\land \hat Q=0$ and hence $\delta_{V}(\P\land \hat Q)=\hat 0$,
while $\dasto{V}{P}\land\dasto{V}{Q}$ can be strictly larger than
$\hat 0$, since $\dasto{V}{P}\succeq \P$ and $\dasto{V}{Q}\succeq
\hat Q$.

\subsubsection{Hyper-Elements of $\Ga$\G.}
We have seen that the global elements of $\G$, \ie\ the elements
of $\Ga\G$, can be equipped with a partial-ordering and a
`$\lor$'-operation, but attempts to define a `$\land$'-operation
in the same way fail because of the inequality in
\eq{delta(PlandQ)}.

However, the form of \eqs{G(alandb)=}{delta(PlandQ)} suggests the
following procedure. Let us define a \emph{hyper-element} of $\G$
to be an association, for each stage $V\in\Ob{\V{}}$, of an
element $\ga_V\in\G_V$ with the property that
\begin{equation}
         \ga_{V^\prime}\;\succeq\; \G(i_{V^\prime V})(\ga_V)
         \label{gammaVpsucceqG}
\end{equation}
for all $V^\prime\subseteq V$. Clearly every element of $\Ga\G$ is
a hyper-element, but not conversely.

Now, if $\ga_1$ and $\ga_2$ are hyper-elements, we can define the
operations `$\lor$' and  `$\land$' locally as:
\begin{eqnarray}
   (\ga_1\lor\ga_2)_V&:=&\ga_1{}_V\lor\ga_2{}_V\\[2pt]
   (\ga_1\land\ga_2)_V&:=&\ga_1{}_V\land\ga_2{}_V
\end{eqnarray}

Because of \eq{G(alandb)=} we have, for all $V^\prime\subseteq V$,
\begin{eqnarray}
\G(i_{V^{\prime}V})\big((\ga_1\land\ga_2)_V\big)
     &=&\G(i_{V^{\prime}V})\big(\ga_1{}_V\land \ga_2{}_V\big)\\
        &\preceq&\G(i_{V^{\prime}V})(\ga_1{}_V)\land
                               \G(i_{V^{\prime}V})(\ga_2{}_V)\\
      &\preceq&\ga_1{}_{V^{\prime}}\land\ga_2{}_{V^{\prime}}\\
        &=&(\ga_1\land\ga_2)_{V^{\prime}}
\end{eqnarray}
so that the hyper-element condition \eq{gammaVpsucceqG} is
preserved.

The occurrence of a logical `$\lor$' and $\land$' structure is
encouraging, but it is not yet what we want. For one thing, there
is no mention of a negation operation; and, anyway, this is not
the expected algebra of sub-objects of a `state space' object. To
proceed further we must study more carefully the sub-objects of
the spectral presheaf.

\subsection{Daseinisation:\ Heidegger Encounters Physics}
\subsubsection{From Global Elements of \G\ to Sub-Objects of \underline{$\Sigma$}.}
\label{SubSebSec:GlobalElOtoSig} The spectral presheaf, $\Sig$,
played a central role in the earlier discussions of quantum theory
from a topos perspective \cite{IB98,IB99,IB00,IB02}. Here is the
formal definition.

{\definition\label{Def_SpectralPresheaf} The \emph{spectral
presheaf}, $\Sig$, is defined as the following functor from
$\V{}^\op$ to $\Set$:
\begin{enumerate}
\item On objects $V$:  $\Sig_V$ is the Gel'fand spectrum of the unital, abelian
sub-algebra $V$ of $\BH$; \ie\  the set of all multiplicative
linear functionals $\l:V\map\mathC$ such that $\brak{\l}{\hat
1}=1$.

\item On morphisms $i_{V^{\prime}V}:V^\prime\subseteq V$:
$\Sig(i_{V^{\prime}V}):\Sig_V\map \Sig_{V^\prime}$ is defined by
$\Sig(i_{V^{\prime}V})(\l):= \l|_{V^\prime}$; \ie\ the restriction
of the functional $\l:V\map\mathC$ to the sub-algebra
$V^\prime\subseteq V$.
\end{enumerate}
} \noindent One central result of spectral theory is that $\Sig_V$
has a topology that is compact and Hausdorff, and with respect to
which the Gel'fand transforms\footnote{If $\hat A\in V$, the
Gel'fand transform, $\GT{A}:\Sig_V\map\mathC$, of $\A$ is defined
by $\GT{A}(\l):=\la\l,\A\ra$ for all $\l\in\Sig_V$.} of the
elements of $V$  are continuous functions from $\Sig_V$ to
$\mathC$. This will be important in what follows \cite{KR83a}.

The spectral presheaf plays a fundamental role in our research
programme as applied to quantum theory. For example, it was shown
in the earlier work that the Kochen-Specker theorem \cite{KS67} is
equivalent to the statement that $\Sig$ has no global elements.
However, $\Sig$ \emph{does} have sub-objects, and these are
central to our scheme: {\definition A \emph{sub-object} $\ps{S}$
of the spectral presheaf $\Sig$ is a functor $\ps{S}:\V{}
^{op}\map\Set$ such that
\begin{enumerate}
\item$\ps{S}_V$ is a subset of $\Sig_V$ for all $V$.

\item  If $V^{\prime}\subseteq V$, then
$\ps{S}(i_{V^{\prime}V}):\ps{S}_V\map \ps{S}_{V^{\prime}}$ is just
the restriction $\l\mapsto\l|_{V^{\prime}}$ (\ie\ the same as for
$\Sig$), applied to the elements $\l\in \ps{S}_V\subseteq\Sig_V$.
\end{enumerate}
}

This definition of a sub-object is standard. However, for our
purposes we need something slightly different, namely  concept of
a `clopen' sub-object. This is defined to be a sub-object $\ps{S}$
of $\Sig$ such that, for all  $V$, the set $\ps{S}_V$ is a
\emph{clopen}\footnote{A `clopen' subset of a topological space is
one that is both open and closed.} subset of the compact,
Hausdorff space $\Sig_V$. We denote by $\Subcl\Sig$ the set of all
clopen sub-objects of $\Sig$. We will show later (in the Appendix)
that, like $\Sub\Sig$, the set $\Subcl\Sig$ is a Heyting algebra.
In Section \ref{SubSec:PresheafP-clSig} we show that there is an
object $\PSig$ whose global elements are precisely the clopen
sub-objects of $\Sig$.

This interest in clopen sets is easy to explain. For, according to
the Gel'fand spectral theory, a projection operator
$\hat\a\in\mathcal{P}(V)$  corresponds to a unique clopen subset,
$S_{\hat\a}$ of the Gel'fand spectrum, $\Sig_V$. Furthermore, the
Gel'fand transform $\GT\alpha:\Sig_V\map\mathC$ of $\hat\alpha$
takes the values $0,1$ only, since the spectrum of a projection
operator is just $\{0,1\}$.

It follows that $\GT{\alpha}$ is the characteristic function of
the subset, $S_{\hat\alpha}$, of $\Sig_V$, defined by
\begin{equation}
S_{\hat\a}:=\{\l\in\Sig_V \mid\brak{\l}{\hat\a}=1\}.
\label{Def:Salphahat}
\end{equation}
The clopen nature of $S_{\hat\a}$  follows from the fact that, by
the spectral theory, the function $\GT{\alpha}:\Sig_V\map\{0,1\}$
is continuous.

In fact, there is a lattice isomorphism between the lattice $\PV$
of projectors in $V$ and the lattice $\mathcal{C}L(\Sig_{V})$ of
clopen subsets of $\Sig_{V}$,\footnote{The lattice structure on
$\mathcal{C}L(\Sig_{V})$ is defined as follows: if $(U_i)_{i\in
I}$ is an arbitrary family of clopen subsets of $\Sig_{V}$, then
the \emph{closure} $\overline{\bigcup_{i\in I}U_i}$ is the
maximum. The closure is necessary since the union of infinitely
many closed sets need not be closed. The \emph{interior}
$\operatorname{int}\bigcap_{i\in I}U_i$ is the minimum of the
family. One must take the interior since $\bigcap_{i\in I}U_i$ is
closed, but not necessarily open.} given by
\begin{equation}
    \hat{\a}\mapsto S_{\hat{\a}}:=
\{ \l\in\Sig_{V}\mid\brak\l{\hat{\a}}=1\}.
\label{LatticeIsomProjsAndClopenSubsets}
\end{equation}
Conversely, given a clopen subset $S\in\mathcal{C}L(\Sig_{V})$, we
get the corresponding projection $\hat{\alpha}$ as the (inverse
Gel'fand transform of the) characteristic function of $S$. Hence,
each $S\in\mathcal{C}L(\Sig_{V})$ is of the form
$S=S_{\hat{\alpha}}$ for some $\hat{\alpha}\in\PV$.

Our claim is the following:
\begin{theorem}
For each projection operator $\P\in\PH$, the collection
\begin{equation}
\ps{\das{P}}:=\{S_{\dasto{V}{P}}\subseteq\Sig_V\mid
V\in\Ob{\V{}}\}
\end{equation}
forms a (clopen) \emph{sub-object} of the spectral presheaf
$\Sig$.
\end{theorem}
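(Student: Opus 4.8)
The plan is to verify directly the two defining requirements for $\ps{\das{P}}$ to be a clopen sub-object of $\Sig$: that each component is a clopen subset of the relevant Gel'fand spectrum, and that the whole family is compatible with the restriction maps of $\Sig$, so that it constitutes a functor $\V{}^{\op}\map\Set$ satisfying the second clause of the definition of a sub-object.

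The clopen condition is immediate and I would dispose of it first. For each $V\in\Ob{\V{}}$ the operator $\dasto{V}{P}$ is, by the very definition of outer daseinisation, a member of $\PV$. The lattice isomorphism \eq{LatticeIsomProjsAndClopenSubsets} between $\PV$ and the clopen subsets $\mathcal{C}L(\Sig_V)$ then shows that $S_{\dasto{V}{P}}=\{\l\in\Sig_V\mid\brak{\l}{\dasto{V}{P}}=1\}$ is clopen in $\Sig_V$. So every component is of the required kind, and the entire content of the theorem reduces to functoriality.

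For functoriality I would fix a pair $V^\prime\subseteq V$ and prove that the restriction map $\Sig(i_{V^\prime V}):\l\mapsto\l|_{V^\prime}$ carries $S_{\dasto{V}{P}}$ into $S_{\dasto{V^\prime}{P}}$. Two facts combine to give this. The first is the monotonicity of daseinisation under coarsening of context: because $\dasto{V^\prime}{P}=\bigwedge\{\hat Q\in\mathcal{P}(V^\prime)\mid\hat Q\succeq\dasto{V}{P}\}$ (this is exactly the compatibility of $\das{P}$ with the outer presheaf $\G$ established earlier), the meet of projectors each dominating $\dasto{V}{P}$ again dominates $\dasto{V}{P}$, so $\dasto{V^\prime}{P}\succeq\dasto{V}{P}$; for projectors this is equivalent to $\dasto{V^\prime}{P}\,\dasto{V}{P}=\dasto{V}{P}$. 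The second is multiplicativity of the Gel'fand functionals: for any $\l\in\Sig_V$ with $\brak{\l}{\dasto{V}{P}}=1$ we get $1=\brak{\l}{\dasto{V}{P}}=\brak{\l}{\dasto{V^\prime}{P}}\,\brak{\l}{\dasto{V}{P}}$, which forces $\brak{\l}{\dasto{V^\prime}{P}}=1$. As $\dasto{V^\prime}{P}\in V^\prime$, evaluating the restricted functional gives $\brak{\l|_{V^\prime}}{\dasto{V^\prime}{P}}=\brak{\l}{\dasto{V^\prime}{P}}=1$, that is, $\l|_{V^\prime}\in S_{\dasto{V^\prime}{P}}$.

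Thus the restriction map of $\Sig$ restricts to a well-defined map $S_{\dasto{V}{P}}\map S_{\dasto{V^\prime}{P}}$, which is precisely what the second clause of the sub-object definition demands, and together with the clopen property this establishes $\ps{\das{P}}\in\Subcl\Sig$. I expect the monotonicity $\dasto{V^\prime}{P}\succeq\dasto{V}{P}$ to be the conceptual crux — it encodes the fact that a smaller (coarser) context can only approximate $\P$ more weakly from above — but, having already been recorded in the discussion of the outer presheaf, it enters here as a cited fact rather than a genuine obstacle; the remaining work is the short multiplicativity argument, whose only delicate point is keeping the direction of the projector inequality consistent with the inclusion of characteristic-function supports.
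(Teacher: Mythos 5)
Your proof is correct and follows essentially the same route as the paper: both rest on the monotonicity $\dasto{V^\prime}{P}\succeq\dasto{V}{P}$ inherited from the outer-presheaf compatibility, followed by a one-line argument that a Gel'fand functional taking the value $1$ on $\dasto{V}{P}$ must also take the value $1$ on $\dasto{V^\prime}{P}$. The only (immaterial) difference is that you exploit multiplicativity of $\l$ on the product $\dasto{V^\prime}{P}\,\dasto{V}{P}=\dasto{V}{P}$, whereas the paper applies $\l$ additively to the difference projection $\dasto{V^\prime}{P}-\dasto{V}{P}$; your explicit verification of the clopen condition via the lattice isomorphism is a welcome addition that the paper leaves to the surrounding discussion.
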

\begin{proof} To see this, let $\l\in S_{\dasto{V}{P}}$.
Then if $V^{\prime}$ is some abelian sub-algebra of $V$, we have
$\dasto{V^{\prime}}{P}
=\bigwedge\big\{\hat{\a}\in\mathcal{P}(V^{\prime})\mid
\hat{\a}\succeq \dasto{V}{P}\big\}\succeq \dasto{V}{P}$. Now let
$\hat{\a}:=\dasto{V^{\prime}}{P}-\dasto{V}{P}$. Then
$\brak{\l}{\dasto{V^{\prime}}{P}}
=\brak{\l}{\dasto{V}{P}}+\brak\l{\hat{\a}}=1$, since
$\brak\l{\dasto{V}{P}}=1$ and $\brak\l{\hat{\a}}\in\{0,1\}$. This
shows that
\begin{equation}
\{\l|_{V^{\prime}}\mid \l\in S_{\dasto{V}{P}}\}\subseteq
S_{\dasto{V^{\prime}}{P}}.\label{ldum1}
\end{equation}
However, the left hand side of \eq{ldum1} is the subset
$\G(i_{V^{\prime}V})(S_{\dasto{V}{P}})\subseteq\Sig_{V^\prime}$ of
the outer-presheaf restriction of elements in $S_{\dasto{V}{P}}$
to $\Sig_{V^{\prime}}$, and the restricted elements all lie in
$S_{\dasto{V^{\prime}}{P}}$. It follows that the collection of
sets
\begin{equation}
\ps{\das{P}}:=\{S_{\dasto{V}{P}}\subseteq\Sig_V\mid{V\in\Ob{\V{}}}\}
\end{equation}
forms a (clopen) sub-object of the spectral presheaf $\Sig$.
\end{proof}

By these means we have constructed a mapping
\begin{eqnarray}
\dasmap:\PH &\longrightarrow&\Subcl{\Sig}\nonumber\\
\P  &  \mapsto& \ps{\das{P}}:=\{S_{\dasto{V}{P}} \mid
V\in\Ob{\V{}}\} \label{delP(H)toSub}
\end{eqnarray}
which sends projection  operators on $\Hi$ to clopen sub-objects
of $\Sig$. As a matter of notation, we will denote the clopen
subset $S_{\dasto{V}{P}}\subseteq\Sig_V$ as $\ps{\das{P}}_V$. The
notation $\das{P}_V$ refers to the element (\ie\ projection
operator) of $\G_V$ defined earlier.

\subsubsection{The Definition of Daseinisation}
As usual, the projection $\P$ is regarded  as representing a
proposition about the quantum system. Thus $\delta$  maps
propositions about a quantum system to (clopen) sub-objects of the
spectral presheaf. This is strikingly analogous to the situation
in classical physics, in which propositions are represented by
subsets of the classical state space.

\displayE{12}{-7}{4} {\definition \noindent The map $\delta$ in
\eq{delP(H)toSub} is a fundamental part of our constructions. We
call it the \emph{daseinisation} of $\P$. We shall  use the same
word to refer to the operation in \eq{Def:dasouter} that relates
to the outer presheaf.}

The expression `daseinisation' comes from the German word
\emph{Dasein}, which plays a central role in Heidegger's
existential philosophy. Dasein translates to `existence' or, in
the very literal sense often stressed by Heidegger, to
being-there-in-the-world\footnote{The hyphens are \emph{very}
important.}. Thus daseinisation
`brings-a-quantum-property-into-existence'\footnote{The hyphens
are \emph{very} important.} by  hurling it into the collection of
all possible classical snap-shots of the world provided by the
category of contexts. \

We will summarise here some useful properties of daseinisation.
\begin{enumerate}
    \item The null projection $\hat 0$ is mapped to the empty
     sub-object of $\Sig$:
\begin{equation}
     \ps{\delta(\hat 0)}=\{\varnothing_{V}\mid
     V\in\Ob{\V{}}\}\ \ \ \ \
\end{equation}

    \item The identity projection $\hat 1$ is mapped to the
    unit sub-object of $\Sig$:
\begin{equation}
    \ps{\delta(\hat 1)}=
    \{\Sig_V\mid V\in\Ob{\V{}}\}=\Sig
\end{equation}

    \item Since the daseinisation map $\dasmap:\PH\map\Ga\G$
    is injective (see Section \ref{SubSubSec:PropDas}), and the
    mapping $\Ga\G\map\Ga(\PSig)$ is injective (because there
    is a monic arrow $\G\map\PSig$ in $\SetH{}$; see
    Section \ref{SubSec:MonicGPSig}), it follows that the
    daseinisation map
    $\dasmap:\PH\map\Ga(\PSig)\simeq\Subcl\Sig$
    is also injective. Thus no information about the projector
    $\P$ is lost when it is daseinised to become $\ps{\das{P}}$.
\end{enumerate}

\subsection{The Heyting Algebra Structure on $\operatorname{Sub}_{\operatorname{cl}}$(\underline{$\Sigma$}).}
The reason for daseinising projections is that the set,
$\Sub{\Sig}$, of sub-objects of the spectral presheaf forms a
\emph{Heyting algebra}. Thus the idea is to find a map
$\piqt:\PL{S}_0\map\Sub{\Sig}$ and then extend it to all of
$\PL{S}$ using the simple recursion ideas discussed in Section
\ref{SubSubSec:RepPLS}.

In our case, the act of daseinisation gives  a map from the
projection operators to the clopen sub-objects of $\Sub\Sig$, and
therefore  a map $\piqt:\PL{S}_0\map\Subcl{\Sig}$ can be defined
by
\begin{equation}
   \piqt(\Ain\De):=\ps{\delta\big(\hat E[A\in\De]\big)}
\label{Def:pi(AinDelta)}
\end{equation}
However,  to extend this definition to $\PL{S}$, it is necessary
to show that the set of clopen sub-objects, $\Subcl{\Sig}$, is a
Heyting algebra. This is not completely obvious from the
definition alone. However, it is true, and the proof is given in
Theorem \ref{Th:SubclSig} in the Appendix.

In conclusion: daseinisation can be used to give a
representation/model of the language $\PL{S}$ in the Heyting
algebra $\Subcl\Sig$.\footnote{Since the clopen subobjects of
$\Sig$ correspond bijectively to the global sections of the outer
presheaf $\G$, it is clear that $\Ga\G$ too is a Heyting algebra.}

\subsection{Daseinisation and the Operations of Quantum Logic.}
It is interesting to ask to what extent the map
$\delta:\PH\map\Subcl\Sig$ respects the lattice structure on
$\PH$. Of course, we know that it cannot be \emph{completely}
preserved since the quantum logic $\PH$ is non-distributive,
whereas $\Subcl\Sig$ is a Heyting algebra, and hence distributive.

We saw in Section \ref{SubSec:LogStructG} that, for the mapping
$\delta:\PH\map\Ga\G$, we have
\begin{eqnarray}
  \delta(\P\lor\hat{Q})_{V}&=&\dasto{V}{P}\lor\dasto{V}{Q},\\
 \delta(\P\land\hat{Q})_{V}&\preceq&\dasto{V}{P}\land\dasto{V}{Q}
\end{eqnarray}
for all contexts $V$ in $\Ob{\V{}}$.

The clopen subset of $\Sig_V$ that corresponds  to
$\dasto{V}{P}\lor\dasto{V}{Q}$ is $S_{\dasto{V}{P}}\cup
S_{\dasto{V}{Q}}$. This implies that the daseinisation map
$\delta:\PH\map\Subcl\Sig$ is a morphism of $\lor$-semi-lattices.

On the other hand, $\dasto{V}{P}\land\dasto{V}{Q}$ corresponds to
the subset $S_{\dasto{V}{P}}\cap S_{\dasto{V}{Q}}$ of $\Sig_V$.
Therefore, since $S_{\delta(\P\wedge \hat{Q})_{V}}\subseteq
S_{\dasto{V}{P}}\cap S_{\dasto{V}{Q}}$, daseinisation is
\emph{not} a morphism of $\wedge$-semi-lattices. In summary, for
all projectors $\P,\hat Q$ we have
\begin{eqnarray}
\ps{\delta(\P\lor\hat Q)}&=&\ps{\das{P}}\lor\ps{\das{Q}}
                                            \label{del(PorQ)}\\[3pt]
\ps{\delta(\P\land\hat Q)}&\preceq&\ps{\das{P}}\land\ps{\das{Q}}
                                            \label{del(PandQ)}
\end{eqnarray}
where the logical connectives on the left hand side lie in the
quantum logic $\PH$, and those on the right hand side lie in the
Heyting algebra $\Subcl\Sig$, as do the symbols `$=$' and
`$\preceq$'.

As remarked above, it is not surprising that  \eq{del(PandQ)} is
not an equality. Indeed, the quantum logic $\PH$ is
non-distributive, whereas the Heyting algebra $\Subcl\Sig$
\emph{is} distributive, and so it would be impossible for both
\eq{del(PorQ)} and \eq{del(PandQ)} to be equalities. The
inequality in \eq{del(PandQ)} is the price that must be paid for
liberating the projection operators from the shackles of quantum
logic and transporting  them to the existential world of Heyting
algebras.

\subsubsection{The Status of the Possible Axiom
`$\Ain{\De_1}\land \Ain{\De_2}\Leftrightarrow
                     \Ain{\De_1\cap\De_2}$'}
We have the representation in \eq{Def:pi(AinDelta)},
$\piqt(\Ain\De):=\ps{\delta\big(\hat E[A\in\De]\big)}$, of the
primitive propositions $\Ain\De$, and, as explained in Section
\ref{SubSubSec:RepPLS}, this can be extended to compound sentences
by making the obvious definitions:
\begin{eqnarray}
&(a)&\ \piqt(\alpha\lor\beta):=\piqt(\alpha)\lor\piqt(\beta)
                                                \label{piqt(a)}\\
&(b)&\ \piqt(\alpha\land\beta):=\piqt(\alpha)\land\piqt\beta)
                                                \label{piqt(b)}\\
&(c)&\ \piqt(\neg\alpha):=\neg\piqt(\alpha)\hspace{3cm}
                                                \label{piqt(c)}\\
&(d)&\ \piqt(\alpha\Rightarrow\beta):=
        \piqt(\alpha)\Rightarrow\piqt(\beta)    \label{piqt(d)}
\end{eqnarray}

As a result, we necessarily get a representation of the full
language $\PL{S}$ in the Heyting algebra $\Subcl\Sig$. However, we
then find that:
\begin{eqnarray}
        \piqt(\Ain{\De_1}\land\Ain{\De_2})&:=&
 \piqt(\Ain{\De_1})\land\piqt(\Ain{\De_2})\label{D1andD2a} \\
&=&\ps{\delta(\hat E[A\in\De_1])}\land\ps{\delta(\hat
E[A\in\De_2])}
                                        \label{D1andD2b}\\
&\succeq& \ps{\delta(\hat E[A\in\De_1]\land \hat E[A\in\De_2])}
                                         \label{D1andD2c}\\
&=&\ps{\delta(\hat E[A\in\De_1\cap\De_2)])} \label{D1andD2d}\\
&=&\piqt(\Ain{\De_1\cap\De_2})
\end{eqnarray}
where, \eq{D1andD2c} comes from \eq{del(PandQ)}, and in
\eq{D1andD2d} we have used the property of spectral projectors
that $\hat E[A\in\De_1]\land \hat E[A\in\De_2]= \hat
E[A\in\De_1\cap\De_2)]$. Thus, although by definition,
$\piqt(\Ain{\De_1}\land\Ain{\De_2})=
          \piqt(\Ain{\De_1})\land\piqt(\Ain{\De_2})$,
we only have the inequality
\begin{equation}
\piqt(\Ain{\De_1\cap\De_2})\preceq
\piqt(\Ain{\De_1}\land\Ain{\De_2})
\end{equation}

On the other hand, the same line of argument shows that
\begin{equation}
  \piqt(\Ain{\De_1}\lor\Ain{\De_2})=
                \piqt(\Ain{\De_1\cup\De_2})
\end{equation}
Thus it would be consistent to add the axiom
\begin{equation}
        \Ain{\De_1}\lor \Ain{\De_2}\Leftrightarrow
                     \Ain{\De_1\cup\De_2}
\end{equation}
to the language $\PL{S}$, but not
\begin{equation}
        \Ain{\De_1}\land \Ain{\De_2}\Leftrightarrow
                     \Ain{\De_1\cap\De_2}
\end{equation}
Of, course, both axioms are consistent with the representation of
$\PL{S}$ in classical physics.

It should be emphasised that there is nothing wrong with this
result: indeed, as stated above, it is the necessary price to be
paid for forcing a non-distributive algebra to have a
`representation' in a Heyting algebra.

\subsubsection{Inner Daseinisation and $\delta(\neg\P)$.}
In the same spirit, one might ask about ``$\neg(\Ain{\De})$''. By
definition, as in \eq{pi(d)}, we have
$\piqt(\neg(\Ain{\De})):=\neg\piqt(\Ain\De)
=\neg\ps{\delta\big(\hat E[A\in\De]\big)}$. However, the question
then is how, if at all, this is related  to $\ps{\delta(\hat
E[A\in\mathR/\De])}=\ps{\delta(\neg\hat E[A\in\De])}$, bearing in
mind the axiom
\begin{equation}
 \neg(\Ain\De) \Leftrightarrow \Ain{\mathR\backslash\De}
\end{equation}
that can be added to the classical representation of $\PL{S}$.
Thus something needs to be said about $\ps{\delta(\neg\P)}$, where
$\neg\P=\hat 1-\P$ is the negation operation in the quantum logic
$\PH$.

To proceed further, we need to introduce another operation:
{\definition  The \emph{inner daseinisation}, $\delta^i(\P)$, of
$\P$ is defined for each context $V$ as
\begin{equation}
\dastoi{V}{P}:=\bigvee\big\{\hat{\beta}\in\mathcal{P}(V)\mid
\hat{\beta}\preceq \P\big\}. \label{Def:dasinner}
\end{equation}
} This should be contrasted with the definition of outer
daseinisation in  \eq{Def:dasouter}.

Thus $\dastoi{V}{P}$ is the best approximation that can be made to
$\P$ by taking the `largest' projector in $V$ that implies $\P$.

As with the other daseinisation construction, this operation  was
first introduced by de Groote in \cite{deG05} where he called it
the \emph{core} of the projection operator $\P$. We prefer to use
the phrase `inner daseinisation', and then to refer to
\eq{Def:dasouter} as the `outer daseinisation' operation on $\P$.
The existing notation $\dasto{V}{P}$ will be replaced with
$\dastoo{V}{P}$ if there is any danger of confusing the two
daseinisation operations.

With the aid of inner daseinisation, a new presheaf, $\H$, can be
constructed as an exact analogue of the outer presheaf, $\G$,
defined in Section \ref{SubSub:defdas}. Specifically:

{\definition The \emph{inner presheaf} $\H$ is defined over the
category $\V{}$ as follows:
\begin{enumerate}
\item[(i)] On objects $V\in\Ob{\V{}}$:  We have
$\H_V:=\PV$

\item[(ii)] On morphisms $i_{V^{\prime}V}:V^{\prime }\subseteq
V$:  The mapping $\H(i_{V^{\prime} V}):\H_V \map\H_{V^{\prime}}$
is given by
$\H(i_{V^{\prime}V})(\hat{\alpha}):=\dastoi{V}{\alpha}$ for all
$\hat\alpha\in\PV$.
 \end{enumerate}
} It is easy to see that the collection $\{ \dastoi{V}{P}\mid
V\in\Ob{\V{}}\}$ of projection operators given by
\eq{Def:dasinner} is a global element of $\H$.

It is also straightforward to show that
\begin{equation}
  \G(i_{V^\prime V})(\neg\hat \alpha)=
\neg\, \H(i_{V^{\prime} V})(\hat\alpha) \label{Gi(negalpha)=}
\end{equation}
for all projectors $\hat\alpha$ in $V$, and for all
$V^\prime\subseteq V$. It follows from \eq{Gi(negalpha)=} that
\begin{equation}
\delta^{o}(\neg \P)_V=\hat 1-\dastoi{V}{P} \label{domega=}
\end{equation}
for all projectors $\P$ and all contexts $V$.

It is clear from \eq{Gi(negalpha)=} that the negation operation on
projectors defines a map $\neg:\Ga\G\map\Ga\H$, $\ga\mapsto
\neg\ga$; \ie\ for all contexts $V$, we map
$\ga(V)\mapsto\neg\ga(V):=\hat 1-\ga(V)$. Actually, one can go
further than this and show that the presheaves $\G$ and $\H$ are
isomorphic in the category $\SetH{}$. This  means that, in
principle, we can always work with one presheaf only. However, for
reasons of symmetry it is sometime useful to invoke both
presheaves.

As with outer daseinisation, inner daseinisation can also be used
to define a mapping from projection operators to sub-objects of
the spectral presheaf. Specifically, if $\P$ is a projection,  for
each $V\in\Ob{\V{}}$ define
\begin{equation}
        T_{\dastoi{V}{P}}:=\{\l\in\Sig_V\mid
        \la\l,\dastoi{V}{P}\ra=0\}.
\end{equation}
It is easy to see that these subsets form a clopen subobject,
$\ps{\dastoi{}{P}}$,  of $\Sig$. It follows from (\ref{domega=})
that $T_{\dastoi{V}{P}}=S_{\delta^o(\neg \P)_V}$.

\subsubsection{Using Boolean Algebras as the Base Category}
\label{SubSubSec:BoolAlgBase} As we have mentioned several times
already, the collection, $\V{}$, of all commutative von Neumann
sub-algebras of $\BH$ is not the only possible choice for the base
category over which to construct presheaves. In fact, if we are
only interested in the propositional language\ $\PL{S}$, a
somewhat simpler choice is the collection, $\BlH$  of all Boolean
sub-algebras of the non-distributive lattice, $\PH$, of projection
operators on $\Hi$. More abstractly, for any non-distributive
lattice $\mathfrak{B}$, one could use  the category of Boolean
sub-algebras of $\mathfrak{B}$. This possibility was raised in the
original paper \cite{IB98} but has not been used much thereafter.
However, it does have some interesting features.

The analogue of the (von Neumann algebra) spectral presheaf,
$\Sig$, is the so-called \emph{dual} presheaf, $\ps{D}$:
\begin{definition}
\label{Defn:dual-presheaf-PH} The {\em dual presheaf\/} on $\BlH$
is the contravariant functor $\ps{D}:{\BlH}\map\Set$ defined as
follows:
\begin{enumerate}
\item On objects in $\BlH$: $\ps{D}(B)$ is the {\em dual\/} of $B$;
\ie\ the set ${\rm Hom}(B,\{0,1\})$ of all homomorphisms from the
Boolean algebra $B$ to the Boolean algebra $\{0,1\}$.

\item On morphisms in $\BlH$: If $i_{B_2B_1}:B_2\subseteq B_1$ then
$\ps{D}(i_{B_2B_1}): \ps{D}(B_1)\map \ps{D}(B_2)$ is defined by
$\ps{D}(i_{B_2B_1})(\chi):=\chi|_{B_2}$, where $\chi|_{B_2}$
denotes the restriction of $\chi\in \ps{D} (B_1)$ to the
sub-algebra $B_2\subseteq B_1$.
\end{enumerate}
\end{definition}

    A global element of the functor $\ps{D}:\BlH^\op\map
{\rm Set}$ is then a function $\ga$ that associates to each
$B\in\Ob{\BlH}$ an element $\ga_B$ of the dual of $B$ such that if
$i_{B_2B_1}: B_2\map B_1$ then $\ga_{B_1}|_{B_2}=\ga_{B_2}$; thus,
for all $\hat\a\in B_2$,
\begin{equation}
\ga_{B_2}(\hat\a)=\ga_{B_1}((i_{B_2B_1}(\hat\a)).
\end{equation}

Since each projection operator, $\hat\a$  belongs to at least one
Boolean algebra (for example, the algebra $\{\hat 0,\hat
1,\hat\a,\neg\hat\a\}$) it follows that a global element of the
presheaf $\ps{D}$ associates to each projection operator $\hat\a$
a number $V(\hat\a)$ which is either $0$ or $1$, and is such that,
if $\hat\a\land\hat\beta=\hat 0$, then
$V(\hat\a\lor\hat\beta)=V(\hat\a)+V(\hat\beta)$. These  types of
valuation are often used in the proofs of the Kochen-Specker
theorem that focus on the construction of specific
counter-examples. In fact, it is easy to see the following:

\displayE{11}{3}{3}{The Kochen-Specker theorem is equivalent to
the statement that, if $\dim{\cal H}>2$, the dual presheaf
$\ps{D}:\BlH^\op\map\Set$ has no global elements. }

It is easy to apply the concept of `daseinisation'  to the topos
$\Set^{\BlH^\op}$. In the case of von Neumann algebras, the outer
daseinisation of a projection operator $\P$ was defined as (see
\eq{Def:dasouter})
\begin{equation}
\dasto{V}{P}:=\bigwedge\big\{\hat{\a}\in\PV\mid \hat{\a}\succeq
\P\big\}           \label{Def:dasouter2}
\end{equation}
where $\PV$ denotes the collection of all projection operators in
the commutative von Neumann algebra $V$. In this form, $\das{P}$
appears as a global element of the outer presheaf $\G$.

When using the base category, $\BlH$, of Boolean sub-algebras of
$\PH$, we define
\begin{equation}
\dasto{B}{P}:=\bigwedge\big\{\hat{\a}\in B\mid \hat{\a}\succeq
\P\big\}           \label{Def:dasouterBool}
\end{equation}
for each Boolean sub-algebra $B$ of projection operators on $\Hi$.
Clearly, the (outer) daseinisation, $\das{P}$, is now a global
element of the obvious $\BH{}$-analogue  of the outer presheaf
$\G$. There are parallel remarks for the inner daseinisation and
inner presheaf. The existence of these daseinisation operations
means that  the propositional language $\PL{S}$ can be represented
in the topos $\Set^{\BlH^\op}$ in a way that is closely analogous
to that used above for the topos $\SetH{}$.

Note that (i) each Boolean algebra of projection operators $B$
generates a commutative von Neumann algebra, $B^{\prime\prime}$,
(the double commutant); and, conversely, (ii) to each von Neumann
algebra $V$ there is associated the Boolean algebra $\PV$ of the
projection operators in $V$. This implies that the operation
\begin{eqnarray}
               \phi: \BlH&\map&\V{}    \\
                   B&\mapsto& B^{\prime\prime}
\end{eqnarray}
defines a full and faithful functor between the categories $\BlH$
and $\V{}$. This functor can be used to pull-back the spectral
presheaf, $\Sig$, in $\SetH{}$ to the object
$\phi^*\Sig:=\Sig\circ\phi$ in $\Set^{\BlH^\op}$. This pull-back
is closely related to the dual presheaf $\ps{D}$.

\subsection{The Special Nature of Daseinised Projections}
\label{Sec:SpecNatureDasP}
\subsubsection{Daseinised Projections as Optimal Sub-Objects}
We have shown how  daseinisation leads to an interpretation/model
of the language $\PL{S}$ in the Heyting algebra $\Subcl\Sig$. In
particular, any primitive proposition $\SAin\De$ is represented by
the clopen sub-object $\ps{\delta({\hat E[A\in\De]})}$.

We have seen  that, in general, the `and',
$\ps{\das{P}}\land\ps{\das{Q}}$, of the daseinisation of two
projection operators $\P$ and $\hat Q$, is not itself of the form
$\ps{\das{R}}$ for any projector $\hat R$.  The same applies to
the negation $\lnot\ps{\das{P}}$.

This raises the question of whether the sub-objects of $\Sig$ that
\emph{are} of the form $\ps{\das{P}}$ can be characterised in a
simple way. Rather interestingly, the answer is `yes', as we will
now see.

Let $V^{\prime},V\in\Ob{\V{}}$ be such that $V^{\prime}\subseteq
V$. As would be expected, there is a close connection between the
restriction $\G(i_{V^{\prime}V}):\G_V\map\G_{V^{\prime}}$,
$\dasto{V}{P}\mapsto \dasto{V^{\prime}}{P}$, of the outer
presheaf, and the restriction $\Sig(i_{V^{\prime}V}):
\Sig_V\map\Sig_{V^{\prime}}$, $\l\mapsto\l|_{V^{\prime}}$, of the
spectral presheaf. Indeed, if $\P\in\PH$ is a projection operator,
and $S_{\dasto{V}{P}}\subseteq\Sig_V$  is  defined as in
\eq{Def:Salphahat}, we have the following result:
\begin{equation}
S_{\G(i_{V^{\prime}V})(\dasto{V}{P})}= \Sig (i_{V^{\prime}
V})(S_{\dasto{V}{P}}).
\end{equation}
The proof is given in Theorem \ref{Theorem: SG=SigS} in the
Appendix

This result shows that the sub-objects $\ps{\das{P}}=
\{S_{\dasto{V}{P}}\mid V\in\Ob{\V{}}\}$ of $\Sig$ are of a very
special kind. Namely, they are such that the restrictions
\begin{equation}
\Sig(i_{V^{\prime}V}):S_{\dasto{V}{P}}\map
                        S_{\dasto{V^\prime}{P}}
\end{equation}
are \emph{surjective} mapping of sets.

For an arbitrary sub-object $\ps{K}$ of $\Sig$, this will
 not be the case and $\Sig(i_{V^{\prime}V})$  only
maps $\ps{K}_V$ \emph{into} $\ps{K}_{V^\prime}$. Indeed, this is
essentially the \emph{definition} of a sub-object of a presheaf.
Thus we see that the daseinised projections $\ps{\das{P}}=
\{S_{\dasto{V}{P}}\mid V\in\Ob{\V{}}\}$ are optimal in the
following sense. As we go `down the line' to smaller and smaller
sub-algebras of a context $V$---for example, from $V$ to
$V^{\prime}\subseteq V$, then to $V^{\prime\prime }\subseteq
V^{\prime}$ etc.---then the subsets $S_{\dasto{V^\prime}{P}}$,
$S_{\dasto{V^{\prime\prime}}{P}}$,... are as small as they can be;
\ie\ $S_{\dasto{V^\prime}{P}}$ is the \emph{smallest} subset of
$\Sig_{V^{\prime}}$ such that
$\Sig(i_{V^{\prime}V})(S_{\dasto{V}{P}})\subseteq
S_{\dasto{V^\prime}{P}}$, likewise
$S_{\dasto{V^{\prime\prime}}{P}}$ is the smallest subset of
$\Sig_{V^{\prime\prime}}$ such that
$\Sig(i_{V^{\prime\prime}V^{\prime}})
(S_{\dasto{V^\prime}{P}})\subseteq
S_{\dasto{V^{\prime\prime}}{P}}$, and so on.

It is also clear from this result that there are lots of
sub-objects of $\Sig$ that are \emph{not} of the form
$\ps{\das{P}}$ for any projector $\P\in\PH$.

These more general sub-objects of $\Sig$ show up explicitly in the
representation of the more sophisticated language $\L{S}$. This
will be discussed thoroughly in Section \ref{Sec:psSR} when we
analyse the representation, $\phi$, of the language $\L{S}$ in the
topos $\SetH{}$. This involves constructing the quantity-value
object $\R_\phi$ (to be denoted $\ps{\R}$), and then finding the
representation of a function symbol $A:\Si\map\R$ in $\L{S}$, in
the form of a specific arrow $\breve{A}:\Sig\map\ps{\R}$ in the
topos. The generic sub-objects of $\Sig$ are then of the form
$\breve{A}^{-1}(\ps{\Xi})$ for sub-objects $\ps{\Xi}$ of $\ps\R$.
This is an illuminating way of studying the sub-objects of $\Sig$
that do not come from the propositional language $\PL{S}$.


\section{Truth Values in Topos Physics}
\label{Sec:TruthValues}
\subsection{The Mathematical Proposition ``$x\in K$''}
So far we have concentrated on finding a Heyting-algebra
representation of the propositions in quantum theory, but  of
course there is more to physics than that. We also want to know
if/when a certain proposition is \emph{true}: a question which, in
physical theories, is normally answered by specifying a
\emph{(micro) state} of the system, or something that can play an
analogous role.

In classical physics, the situation is straightforward (see
Section \ref{SubSubSec:PhysRepPLS}). There, a proposition
$\SAin\De$ is represented by the subset
$\picl(\Ain\De):=\breve{A}^{-1}(\Delta)\subseteq\S$ of the state
space $\S$; and then, the proposition is true in a state $s$ if
and only if $s\in\breve{A}^{-1}(\De)$; \ie\ if and only if the
(micro-) state $s$ belongs to the subset, $\picl(\Ain\De)$, of
$\S$ that represents the proposition.

Thus, each state $s$ assigns to any primitive proposition
$\SAin\De$, a truth value, $\TVal{\Ain\De}{s}$,  which lies in the
set $\{{\rm false},{\rm true}\}$ (which we identify with
$\{0,1\}$) and is defined as 
\begin{equation}\label{Def:[AinD]ClassRep1}
        \TVal{\Ain\De}{s}:=
        \left\{\begin{array}{ll}
            1 & \mbox{\ if\ $s\in\picl(\Ain\De):=
                                \breve{A}^{-1}(\De)$;} \\
            0 & \mbox{\ otherwise.}
         \end{array}
        \right.
\end{equation}

However, the situation in quantum theory is very different. There,
the spectral presheaf $\Sig$---which is the analogue of the
classical state space $\S$---has no global elements at all. Our
expectation is that this will be true in any topos-based theory
that goes `beyond quantum theory': \ie\ $\Ga\Si_\phi$ is empty;
or, if $\Si_\phi$ does have global elements, there are not enough
of them to determine $\Si_\phi$ as an object in the topos.  In
this circumstance,  a new concept is required to replace the
familiar idea of a `state of the system'. As we shall see, this
involves the concept of a `truth object', or `pseudo-state'.

In physics, the propositions of interest are of the form
$\SAin\De$, which refers to the value of a \emph{physical}
quantity. However, in constructing a theory of physics, such
physical propositions must first be translated into
\emph{mathematical} propositions. The concept of `truth' is then
studied in the context of the latter.

Let us start with set-theory based mathematics, where the most
basic proposition is of the form ``$x\in K$'', where $K$ is a
subset of a set $X$, and $x$ is an element of $X$. Then the truth
value, denoted $\TValM{x\in K}$, of the proposition ``$x\in K$''
is
\begin{equation}
 \TValM{x\in K}=
        \left\{\begin{array}{ll}
            1 & \mbox{\ if\ $x$ belongs to $K$;} \\
            0 & \mbox{\ otherwise.}
         \end{array}
        \right.                                 \label{TVxinKcl}
\end{equation}
Thus the proposition ``$x\in K$'' is true if, and only if, $x$
 belongs to $K$. In other words,
 $x\mapsto\TValM{x\in K}$ is the characteristic function of the subset $K$\ of $X$; cf.\ \eq{Def:chiK} in the Appendix.

This remark is the foundation of the assignment of truth values in
classical physics. Specifically, if the state is $s\in\S$, the
truth value, $\TVal{\Ain\De}{s}$,  of the \emph{physical}
proposition $\SAin\De$ is defined to be the truth value of the
\emph{mathematical} proposition ``$\breve{A}(s)\in\De$''; or,
equivalently, of the mathematical proposition ``$s\in
\breve{A}^{-1}(\De)$''. Thus, using \eq{TVxinKcl}, we get, for all
$s\in\S$,
\begin{equation}
        \TVal{\Ain\De}{s}:=
        \left\{\begin{array}{ll}
            1 & \mbox{\ if\ $s$ belongs to $\breve{A}^{-1}(\De)$;}
         \\
            0 & \mbox{\ otherwise.}\label{Def:[AinD]Class2}
         \end{array}
        \right.
\end{equation}
which reproduces \eq{Def:[AinD]ClassRep1}.

We now consider the analogue of the above in a general topos
$\tau$. Let $X$ be an object in $\tau$, and let $\so{K}$ be a
sub-object of $X$. Then $\so{K}$ is determined by a characteristic
arrow $\cha{K}: X \map\O_\tau$, where $\O_\tau$ is the sub-object
classifier; equivalently, we have an arrow $\name{K}:1_\tau\map
PX$.

Now suppose that $x:1_\tau\map X$ is a global element of $X$; \ie\
$x\in\Ga X:=\Hom{\tau}{1_\tau}{X}$. Then the truth value of the
mathematical proposition ``$x\in\so{K}$'' is defined to be
\begin{equation}
        \TValM{x\in\so{K}}:=\chi_{\so{K}}\circ {x}
        \label{VxinpsK}
\end{equation}
where  $\cha{K}\circ {x}:1_\tau\map\O_\tau$. Thus
$\TValM{x\in\so{K}}$ is an element of $\Ga\O_\tau$; \ie\ it is a
global element of the sub-object classifier $\O_\tau$.

The connection with the result  \eq{TVxinKcl} (in the topos
$\Set$) can be seen by noting that, in \eq{TVxinKcl}, the
characteristic function of the subset $K\subseteq X$ is the
function $\cha{K}:X\map\{0,1\}$ such that $\chi_K(x)=1$ if $x\in
K$, and $\chi_K(x)=0$ otherwise. It follows that \eq{TVxinKcl} can
be rewritten as
\begin{eqnarray}
        \TValM{x\in K}&=&\chi_K(x)          \\
                    &=&\chi_K\circ{x}\label{TVxinKcl(b)}
\end{eqnarray}
where in \eq{TVxinKcl(b)}, ${x}$ denotes the function
${x}:\{*\}\map X$ that is defined by ${x}(*):=x$. The link with
\eq{VxinpsK} is clear when one remembers that, in the topos
$\Set$, the terminal object, $1_\Set$, is just the singleton set
$\{*\}$.

In quantum theory,  the topos is $\SetH{}$, and so the objects are
all presheaves. In particular, at each stage $V$, the sub-object
classifier $\Om:=\O_{\SetH{}}$  is the set of sieves on $V$. In
this case, if $\ps{K}$ is a sub-object of $\ps{X}$, and
$x\in\Ga\ps{X}$, the explicit form for \eq{TVxinKcl(b)} is the
sieve
\begin{equation}
  \TValM{x\in\ps{K}}_V:=\{V^\prime\subseteq V
                \mid x_{V^\prime}\in \ps{K}_{V^\prime}\}
        \label{ValxinKpresheaf}
\end{equation}
at each stage $V\in\Ob{\V{}}$. In other words, at each
stage/context V, the truth value of the mathematical proposition
``$x\in\ps{K}$'' is defined to be all those stages
$V^\prime\subseteq V$ `down the line' such that the `component',
$x_{V^\prime}$ of $x$ at that stage \emph{is} an element of the
component, $\ps{K}_{V^\prime}\subseteq \ps{X}_{V^\prime}$, of
$\ps{K}$ at that stage.

The definitions \eq{VxinpsK} and \eq{ValxinKpresheaf} play a
central role in constructing truth values in out quantum topos
scheme. However, as $\Sig$ has no global elements, these truth
values cannot be derived from some expression $\TValM{s\in\ps{K}}$
with ${s}:1_{\SetH{}}\map\Sig$. Therefore, we must proceed
differently, as will become clear by the end of the following
Section.

However, before we do so, let us make one final remark concerning
\eq{TVxinKcl}. Namely, in normal set theory the proposition
``$x\in K$'' is true if, and only if,
\begin{equation}
                        \{x\}\subseteq K  \label{{x}subsetK}
\end{equation}\ie\ if an only if the set $\{x\}$ is a subset of $K$.
The transition from the proposition ``$x\in K$'' to the
proposition ``$\{x\}\subseteq K$'' is  seemingly trivial, but in a
topos other than sets it takes on a new significance. In
particular, as we shall see shortly, although the spectral
presheaf, $\Sig$, has no global elements, it does have certain
`minimal' sub-objects that are as `close' as one can get to a
global element, and then the topos analogue of \eq{{x}subsetK} is
very important.

\subsection{Truth Objects}\label{SubSub:TruthObjects}
\subsubsection{Linguistic Aspects of Truth Objects.}
To understand how `truth values' of physical propositions arise we
return again to our earlier discussion of  local languages. In
this Section we will employ the local language $\L{S}$ rather than
the  propositional language, $\PL{S}$,  that was used earlier in
this article.

Thus, let $\L{S}$ be the local language for a system $S$. This is
a typed language whose minimal set of ground-type symbols is $\Si$
and $\R$. In addition, there is a non-empty set, $\F{S}$, of
function symbols $A:\Si\map\R$ that correspond to the physical
quantities of $S$.

Now  consider a  representation, $\phi$, of $\L{S}$ in a topos
$\tau_\phi$.  As discussed earlier, the propositional aspects of
the language $\L{S}$ are captured in the term
$\q{A(\va{s})\in\va\De}$ of type $\O$, where $\va{s}$ and
$\va{\De}$ are  variables of type $\Si$ and $P\R$ respectively
\cite{DI(1)}. In  a topos representation, $\phi$, the
representation, $\Val{A(\va{s})\in\va\De}_\phi$, of the term
$\q{A(\va{s})\in\va\De}$  is given by the chain of
arrows\footnote{In \eq{A(s)intildeDeChainRep1}, $e_{\R_\phi}:
\R_\phi\times P\R_\phi\map\O_{\tau_\phi}$ is the evaluation arrow
associated with the power object $P\R_\phi$.}\cite{Bell88} (cf
\eq{A(s)intildeDeChain})
\begin{equation}
\Si_\phi\times P\R_\phi\mapright{A_\phi\times\id}
        \R_\phi\times P\R_\phi\mapright{e_{\R_\phi}}\O_{\tau_\phi}
                        \label{A(s)intildeDeChainRep1}
\end{equation}
in the topos $\tau_\phi$. Then, if $\name{\Xi}:1_{\tau_\phi}\map
P\R_\phi$ is the name of a sub-object, $\Xi$, of the
quantity-value object $\R_\phi$, we get the chain
\begin{equation}
\Si_\phi\simeq\Si_\phi\times 1_{\tau_\phi}\mapright{\id\times
\name{\Xi}} \Si_\phi\times P\R_\phi\mapright{A_\phi\times\id}
        \R_\phi\times P\R_\phi\mapright{e_{\R_\phi}}\O_{\tau_\phi}.
\end{equation}
which is  the characteristic arrow of the sub-object of $\Si_\phi$
that represents the physical proposition $\SAin \Xi$.

Equivalently,  we can use the term, $\{\va{s}\mid
A(\va{s})\in\va\De\}$, which has a free variable $\va\De$ of type
$P\R$ and is of type $P\Si$. This term is represented by the arrow
$\Val{\{\va{s}\mid A(\va{s})\in\va\De\}}_\phi : P\R_\phi\map
P\Si_\phi$, which  is the power transpose of
 $\Val{A(\va{s})\in\va\De}_\phi$ (cf \eq{[]=nametildeDe}):
\begin{equation}
\Val{\{\va{s}\mid A(\va{s})\in\va\De\}}_\phi =
\name{\Val{A(\va{s})\in\va\De}_\phi}\label{[]=nametildeDeRep1}
\end{equation}
The proposition $\SAin\Xi$ is then represented by the arrow
$\Val{\{\va{s}\mid
A(\va{s})\in\va\De\}}_\phi\circ\name{\Xi}:1_{\tau_\phi}\map
P\Si_\phi$; this is the name of the sub-object of $\Si_\phi$ that
represents $\SAin\Xi$.

We note an important difference from the analogous situation for
the language $\PL{S}$. In propositions of the type $\SAin\De$, the
symbol `$\Delta$' is a specific subset of $\mathR$ and is hence
\emph{external} to the language. In particular, it is
\emph{independent} of the representation of $\PL{S}$. However, in
the case of $\L{S}$, the variable $\va{\De}$ is \emph{internal} to
the language, and the quantity $\Xi$ in the proposition $\SAin
\Xi$ is a sub-object of $\R_\phi$ in a \emph{specific} topos
representation,  $\phi$, of $\L{S}$.

So, this is how physical propositions are represented
mathematically. But  how are truth values to be assigned to these
propositions? In the topos $\tau_\phi$ a truth value is an element
of the Heyting algebra $\Ga\O_{\tau_\phi}$. Thus the challenge is
to assign a global element of $\O_{\tau_\phi}$ to each proposition
associated with the representation of the term  $\{\va{s}\mid
A(\va{s})\in\va\De\}$ of type $P\Si$;  (or, equivalently, the
representation of the term `$A(\va{s})\in\va\De$').

Let us first pose this question at a linguistic level. In a
representation $\phi$, an element of $\Ga\O_{\tau_\phi}$ is
associated with a representation of a term of type $\O$ with no
free variables. Hence the question can be rephrased as asking how
a term, $t$, in $\L{S}$ of type $P\Si$ can be `converted' into a
term of type $\O$? At this stage, we are happy to have free
variables, in which case the desired term will be represented by
an arrow in $\tau_\phi$ whose co-domain is $\O_{\tau_\phi}$, but
whose domain is other than $1_{\tau_\phi}$. This would be an
intermediate stage to obtaining a global element of
$\O_{\tau_\phi}$.

In the context of the language $\L{S}$ there are three obvious
ways of `converting' the term $t$ of type $P\Si$ to a term of type
$\O$:
\begin{enumerate}
        \item Choose a term, $s$, of type $\Si$; then  the
        term `$s\in t$' is of type $\O$.
We will call this the `micro-state' option.
        \item Choose a term,  $\TO$, of type $PP\Si$; then
        the term `$t\in \TO$' is of type $\O$. We shall refer to this as the `truth-object' option.

        \item Choose a term, $\w$, of type $P\Si$; then
        the term `$\w\subseteq t $' is of type
        $\O$.\footnote{In general, if $t$ and $s$ are
        set-like terms (\ie\ terms of power type, $PX$, say),
        then `$t\subseteq s$' is defined as the term
        `$\forall \va{x}\in t(\va{x}\in s)$'; here,
        $\va{x}$ is a variable of type $X$.} For reasons that will become clear later we shall refer to this as the `pseudo-state' option.
\end{enumerate}

\subsubsection{The Micro-State Option}
In regard to the first option, the simplest example of a term of
type $\Si$ is a variable $\va{s_1}$ of type $\Si$. Then, the term
`$\va{s_1}\in \{\va{s}\mid A(\va{s})\in\va{\De}\}$' is of type
$\O$ with the free variables $\va{s_1}$ and $\va{\De}$ of type
$\Si$ and $P\R$ respectively. However, the axiom of comprehension
in $\L{S}$ says that
\begin{equation}
\va{s_1}\in \{\va{s}\mid A(\va{s})\in\va{\De}\}\Leftrightarrow
        A(\va{s_1})\in\va{\De}
\end{equation}
and so we are back with the term `$A(\va{s})\in\va{\De}$', which
is of type $\O$ and with the free variable $\va{s}$ of type $\Si$.

As stated above,  the $\phi$-representation,
$\Val{A(\va{s})\in\va\De}_\phi$, of $\q{A(\va{s})\in\va{\De}}$ is
the chain of arrows in \eq{A(s)intildeDeChainRep1}. Now, suppose
the representation, $\phi$, is such that there exist global
elements, ${s}:1_{\tau_\phi}\map\Si_\phi$, of $\Si_\phi$. Then
each such element can be regarded as a `(micro)-state' of the
system in that topos representation. Furthermore, let
$\name{\Xi}:1_{\tau_\phi}\map P\R_\phi$ be the name of a
sub-object, $\Xi$, of the quantity-value object $\R_\phi$. Then,
by the basic property of the product $\Si_\phi\times P\R_\phi$,
there is an arrow $\la
s,\name\Xi\ra:1_{\tau_\phi}\map\Si_\phi\times P\R_\phi$. This can
be combined with the arrow
$\Val{A(\va{s})\in\va{\De}}_\phi:\Si_\phi\times
P\R_\phi\map\O_{\tau_\phi}$ to give the arrow
\begin{equation}
\Val{A(\va{s})\in\va{\De}}_\phi\circ \la{s},\name\Xi\ra:
1_{\tau_\phi}\longrightarrow
      \O_{\tau_\phi} \label{AinDelcircs}
\end{equation}
This is the desired global element of $\O_{\tau_\phi}$.

In other words, when the `state of the system' is
$s\in\Ga\Si_\phi$, the `truth value' of the proposition $\SAin
\Xi$ is the global element of $\O_{\tau_\phi}$ given  by the arrow
$\Val{A(\va{s})\in\va{\De}}_\phi\circ\la {s},\name{\Xi}\ra:
1_{\tau_\phi}\map\O_{\tau_\phi}$.

This is the procedure that is adopted in classical physics when a
truth value is assigned to propositions by specifying a
micro-state, $s\in\Si_\s$, where $\Si_\s$ is the classical state
space in the representation $\s$ of $\L{S}$. Specifically, for all
$s\in\Si_\s$, the truth value of the proposition $\SAin\De$ as
given by \eq{AinDelcircs} is (c.f. \eq{Def:[AinD]ClassRep1})
\begin{equation}
\TVal{\Ain\De}{s}
        =\Val{A(\va{s})\in\va{\De}}_\s(s,\De)=
        \left\{\begin{array}{ll}
            1 & \mbox{\ if\ $A_\s(s)\in\De$;}\\
            0 & \mbox{\ otherwise.}
         \end{array}
        \right.                 \label{nuAinDe;s}
\end{equation}
where $\Val{A(\va{s})\in\va{\De}}_\s:\Si_\s\times
P\mathR\map\O_{\tau_\s}\simeq\{0,1\}$. Thus we recover the earlier
result \eq{Def:[AinD]Class2}.

\subsubsection{The Truth Object Option.}
By hindsight, we know that the option to use global elements of
$\Si_\phi$ is not available in the quantum case. For there the
state object, $\Sig$, is the spectral presheaf, and this has no
global elements by virtue of the Kochen-Specker theorem. The
absence of global elements of the state object $\Si_\phi$ could
well be true in many other topos models of physics  (particularly
those that go `beyond quantum theory'), and therefore an
alternative general strategy is needed to that employing
micro-states $\name{s}:1_{\tau_\phi}\map\Si_\phi$.

This takes us to the second possibility: namely, to introduce a
term, $\TO$,    of type $PP\Si$, and then work with the term
`$\{\va{s}\mid A(\va{s})\in\va{\De}\} \in\TO$', which is  of type
$\O$, and has whatever free variables are contained in $\TO$, plus
the variable $\va{\De}$ of type $P\R$.

The simplest choice is to let the term of type $PP\Si$ be a
variable, $\va{\TO}$, of type $PP\Si$, in which case the term
$\q{\{\va{s}\mid A(\va{s})\in\va\De\}\in\va\TO}$ has variables
$\va\De$ and $\va{\TO}$ of type $P\cal R$ and $PP\Si$
respectively. Therefore, in a topos representation  it is
represented by an arrow $\Val{\{\va{s}\mid A(\va{s})\in
\va\De\}\in\va{\TO}}_\phi:P{\cal R}_\phi\times
P(P\Si_\phi)\map\O_{\tau_\phi}$. In detail (see \cite{Bell88}) we
have that
\begin{equation}
\Val{\{\va{s}\mid A(\va{s})\in \va\De\}\in\va{\TO}}_\phi =
        e_{P\Si_\phi}\circ \Val{\{\va{s}\mid A(\va{s})\in
\va\De\}}_\phi\times\Val{\va{\TO}}_\phi
\end{equation}
where $e_{P\Si_\phi}:P\Si_\phi\times
P(P\Si_\phi)\map\O_{\tau_\phi}$ is the usual evaluation arrow. In
using this expression we need the $\phi$-representatives:
\begin{eqnarray}
    \Val{\{\va{s}\mid A(\va{s})\in\va\De\}}_\phi
        :P\R_\phi&\map& P\Si_\phi\\[3pt]
    \Val{\va{\TO}}_\phi:P(P\Si_\phi) &
    \overset{\id}{\longrightarrow}& P(P\Si_\phi)
\end{eqnarray}

Finally, let $\la\name{\Xi},\name{\TO}\ra$ be a pair of global
elements in $P\R_\phi$ and $P(P\Si_\phi)$ respectively, so that
$\name{\Xi}:1_{\tau_\phi}\map P\R_\phi$ and
$\name{\TO}:1_{\tau_\phi}\map P(P\Si_\phi)$. Thus, $\name\TO$ is
the name of a
 `truth object', $\TO$, in $\tau_\phi$. Then, for the physical
proposition $\SAin\Xi$, we have the truth value
\begin{equation}
\TVal{\Ain\Xi}{\TO}=\Val{\{\va{s}\mid A(\va{s})\in
\va\De\}\in\va{\TO}}_\phi\circ\la\name{\Xi},\name{\TO}\ra
:1_{\tau_\phi}\map\O_{\tau_\phi}\label{AinXiTO}
\end{equation}
 where $\la\name{\Xi},\name\TO\ra:1_{\tau_\phi}\map
P\R_\phi\times P(P\Si_\phi)$.

\paragraph{A\ small generalisation:}
Slightly more  generally, if $\va{J}$ and $\va\TO$ are variables
of type $P\Si$ and $P(P\Si)$ respectively, the term of interest is
`$\va{J}\in\va\TO$'. In the representation, $\phi$, of $\L{S}$,
this term maps to an arrow
$\Val{\va{J}\in\va\TO}_\phi:P\Si_\phi\times
P(P\Si_\phi)\map\O_{\tau_\phi}$.  Here,
$\Val{\va{J}\in\va{\TO}}_\phi =
        e_{P\Si_\phi}\circ \Val{\va{J}}_\phi\times
\Val{\va{\TO}}_\phi$ where $\Val{\va{J}}_\phi
        :P\Si_\phi\overset{\id}{\map} P\Si_\phi$ and
$\Val{\va{\TO}}_\phi:P(P\Si_\phi) \overset{\id} {\longrightarrow}
P(P\Si_\phi)$. Let $\name{J}$, $\name\TO$ be  global elements of
$P\Si_\phi$ and $P(P\Si_\phi)$ respectively, so that
$\name{J}:1_{\tau_\phi}\map P\Si_\phi$ and
$\name\TO:1_{\tau_\phi}\map P(P\Si_\phi)$.  Then  the truth of the
(mathematical) proposition ``${J}\in\TO$'' is
\begin{eqnarray}
\TValM{{J}\in\TO}&=&
        \Val{\va{J}\in\va\TO}_\phi\circ \la
        \name{J},\name\TO\rangle                \nonumber\\
        &=&e_{P\Si_\phi}\circ\la\name{J},\name{\TO}\ra:
                1_{\tau_\phi}\map \O_{\tau_\phi}
        \label{TValMgainTO}
\end{eqnarray}

\subsubsection{The Example of Classical Physics.}
If classical physics is studied this way, the general formalism
simplifies, and the term `$\{\va{s}\mid A(\va{s})\in\va{\De}\}
\in\va\TO$' is represented by the function $\TVal{\Ain\De}{\TO}
 :=\Val{\{\va{s}\mid
A(\va{s})\in\va{\De}\} \in\va\TO}_\s :P\mathR\times
P(P\Si_\s)\map\O_{\Set}\simeq\{0,1\}$ defined by
\begin{eqnarray}
\TVal{\Ain\De}{\TO}:= \Val{\{\va{s}\mid A(\va{s})\in\va{\De}\}
\in\va\TO}_\s(\De,\TO)
                &=&
 {\left\{\begin{array}{ll}
            1 & \mbox{\ if\ $\{s\in\Si_\s\mid A_\s(s)
                        \in\De\} \in \TO$;} \\
            0 & \mbox{\ otherwise}
         \end{array}
        \right.}\nonumber\\[5pt]
&=&
 {\left\{\begin{array}{ll}
            1 & \mbox{\ if\ $A_\s^{-1}(\De) \in \TO$;} \\
            0 & \mbox{\ otherwise}
         \end{array}
        \right.}\label{Def:nu(AinD;T}
\end{eqnarray}
for all $\TO\in P(P\Si_\s)$. We can clearly see the sense in which
the truth object $\TO$ is playing the role of a state. Note that
the result \eq{Def:nu(AinD;T} of classical physics is a special
case of \eq{AinXiTO}.

To recover the usual truth values given in  \eq{nuAinDe;s},  an
appropriate truth object, $\TO^s$, must be associated with each
micro-state $s\in\Si_\s$. The correct choice is
\begin{equation}
        \TO^s:=\{J\subseteq\Si_\s\mid s\in J\}
                        \label{Def:ClassTO}
\end{equation}
for each $s\in\Si_\s$. It is clear that $s\in A_\s^{-1}(\De)$ (or,
equivalently, $A_\s(s)\in\De$) if, and only if,
$A_\s^{-1}(\De)\in\TO^s$. Hence \eq{Def:nu(AinD;T} can be
rewritten as
\begin{equation}\label{Def:[AinD]Class(2)}
        \TVal{\Ain\De}{\TO^s}=
        \left\{\begin{array}{ll}
            1 & \mbox{\ if\ $s\in A_\s^{-1}(\De)$;} \\
            0 & \mbox{\ otherwise.}
         \end{array}
        \right.
\end{equation}
which reproduces  \eq{nuAinDe;s} once $\TVal{\Ain\De}{s}$ is
identified with $\TVal{\Ain\De}{\TO^s}$.

\subsection{Truth Objects in Quantum Theory}
\label{SubSubSec:TOQT}
\subsubsection{Preliminary Remarks}
We can now start to discuss the application of these ideas to
quantum theory.  In order to use  \eq{AinXiTO} (or
\eq{TValMgainTO}) we need to construct  concrete truth objects,
${\ps\TO}$, in the topos $\tau_\phi:=\SetH{}$. Thus the presheaf
$\ps\TO$ is a sub-object of $P\Sig$; equivalently,
$\name{\ps\TO}:1_{\tau_\phi}\map P(P\Sig)$.

However, we have to keep in mind the need to restrict to
\emph{clopen} sub-objects of $\Sig$. In particular, we must show
that there is a well-defined presheaf $\PSig$ such that
\begin{equation}
        \Subcl{\Sig}\simeq\Ga(\PSig)      \label{Subcl=GPSig}
\end{equation}
We will prove this in Section \ref{SubSec:PresheafP-clSig}. Given
\eq{Subcl=GPSig} and  $\ps{J}\in\Subcl\Sig$, it is then clear that
a truth object, $\ps{\TO}$, actually has to be a sub-object of
$\PSig$ in order that the valuation $\TValM{{\ps J}\in\ps\TO}$ in
\eq{TValMgainTO} is meaningful.

This truth value, $\TValM{{\ps J}\in\ps\TO}$, is a global element
of $\Om$, and in the topos of presheaves, $\SetH{}$, we have (see
\eq{ValxinKpresheaf})
\begin{equation}
\TValM{{\ps{J}}\in\ps\TO}_V:=\{V^\prime\subseteq V\mid
                        \ps{J}_{V^\prime}\in\ps\TO_{V^\prime}\}
\end{equation}
for each context $V$.

There are various examples of  the presheaf $\ps{J}$ that are of
interest to us. In particular, let $\ps{J}=\ps{\das{P}}$ for some
projector $\P$. Then, using the propositional language $\PL{S}$
introduced earlier, the `truth' of the proposition represented by
$\P$ (for example,  $\SAin\De$) is
\begin{equation}
\TValM{{\ps{\das{P}}}\in\ps\TO}_V =\{V^\prime\subseteq V\mid
\ps{\dasto{}{P}}_{V^\prime}\in\ps\TO_{V^\prime}\}\label{ValDasFinal}
\end{equation}
for all stages $V$.

When using the local language $\L{S}$, an important class of
examples of the sub-object $\ps{J}$ of $\Sig$ are of the form
$A_\phi^{-1}(\ps\Xi)$, for some sub-object $\ps\Xi$ of $\ps{\R}$.
This will yield the truth value, $\TVal{\Ain\ps\Xi}{\ps\TO}$, in
\eq{AinXiTO}. However, to discuss this further requires  the
representation of function symbols $A:\Si\map\R$ in the topos
$\SetH{}$, and this is deferred until Section \ref{Sec:deG}.

\subsubsection{The Truth Objects \ps{$\TO$}$^{\ket\psi}$.}
The definition of truth objects in quantum theory was studied in
the original papers \cite{IB98,IB99,IB00,IB02}. It was shown there
that to each quantum state $\ket\psi\in\cal H$, there corresponds
a truth object, $\ps\TO^{\ket\psi}$, which was defined as the
following sub-object of the outer presheaf, $\G$:
\begin{eqnarray}
\ps\TO^{\ket\psi}_V&:=&\{\hat\alpha\in \G_V\mid
                {\rm Prob}(\hat\alpha;\ket\psi)=1\}\label{TOpsi1}
                                                \nonumber\\[2pt]
        &=&\{\hat\alpha\in \G_V\mid
                \bra\psi\hat\alpha\ket\psi=1\}     \label{TOpsi2}
\end{eqnarray}
for all stages $V\in\Ob{\V{}}$. Here, ${\rm
Prob}(\hat\alpha;\ket\psi)$ is the usual expression for the
probability that the proposition represented by the projector
$\hat\alpha$ is true, given that the quantum state is the
(normalised) vector $\ket\psi$.

It is easy to see that \eq{TOpsi2} defines a genuine sub-object
$\ps\TO^{\ket\psi}=\{\ps\TO^{\ket\psi}_V\mid V\in\Ob{\V{}}\}$ of
$\G$. Indeed, if $\hat\beta\succeq\hat\alpha$, then
$\bra\psi\hat\beta\ket\psi \geq\bra\psi\hat\alpha\ket\psi$, and
therefore, if $V^\prime\subseteq V$ and $\hat\alpha\in \G_V$, then
$\bra\psi \G(i_{V^\prime V})(\hat\alpha)\ket\psi
\geq\bra\psi\hat\alpha\ket\psi$. In particular, if
$\bra\psi\hat\alpha\ket\psi=1$ then $\bra\psi \G(i_{V^\prime
V})(\hat\alpha)\ket\psi =1$.

The next step is to define the presheaf $\PSig$, and show that
there is a monic arrow $\G\map \PSig$, so that $\G$ is a
sub-object of $\PSig$. Then, since $\ps\TO^{\ket\psi}$ is a
sub-object of $\G$, and $\G$ is a sub-object of $\PSig$, it
follows that $\ps\TO^{\ket\psi}$ is a sub-object of $\PSig$, as
required. The discussion of the construction of $\PSig$ is
deferred to Section \ref{SubSec:PresheafP-clSig} so as not to
break the flow of the presentation.

With this definition of $\ps\TO^{\ket\psi}$, the truth value,
\eq{ValDasFinal}, for the propositional language $\PL{S}$ becomes
\begin{equation}
\TValM{{\ps{\das{P}}}\in\ps\TO^{\ket\psi}}_V =\{V^\prime\subseteq
V\mid \bra\psi\dasto{V^\prime}{P}\ket\psi=1\}\label{nudPinT}
\end{equation}

It is easy to see that the definition of a truth object in
\eq{TOpsi2} can be extended to a mixed state with a density-matrix
operator $\hat\rho$:\ simply replace the definition in \eq{TOpsi2}
with
\begin{eqnarray}
\ps\TO^{\hat\rho}_V&:=&\{\hat\alpha\in \G_V\mid
                {\rm Prob}(\hat\alpha;\rho)=1\}      \nonumber\\[2pt]
        &=&\{\hat\alpha\in \G_V\mid
               {\rm tr}(\hat\rho\hat\alpha)=1\}     \label{TOrho}
\end{eqnarray}

However there is an important difference between the truth object
associated with a vector state, $\ket\psi$, and the one associated
with a density matrix, $\rho$. In the vector case, it is easy to
see that the mapping $\ket\psi\map\ps\TO^{\ket\psi}$ is one-to-one
(up to a phase factor on $\ket\psi$) so that, in principle, the
state $\ket\psi$ can be \emph{recovered} from $\ps\TO^{\ket\psi}$
(up to a phase-factor). On the other hand, there are simple
counterexamples which show  that, in general, the density matrix,
$\rho$ \emph{cannot} be recovered from $\ps\TO^{\hat\rho}$.

In a sense, this should not surprise us. The analogue of a density
matrix in classical physics is a probability measure $\mu$ defined
on the classical state space $\cal S$. Individual microstates
$s\in\cal S$ are in one-to-one correspondence with probability
measures of the form $\mu_s$ defined by $\mu_s(J)=1$ if $s\in J$,
$\mu_s(J)=0$ if $s\not\in J$.

However, one of the main claims of our programme is that any
theory can be made to `look like' classical physics in the
appropriate topos. This suggests that, in the topos version of
quantum theory, a density matrix should be represented by some
sort of measure on the state object $\Sig$ in the topos
$\tau_\phi$; and  this should relate in some way to an `integral'
of `vector truth objects'. The recent work by Heunen and Spitters
provides the mathematical basis for such a construction
\cite{HeuSpit07}. We shall return to some of their ideas later.

\subsection{The  Pseudo-state Option}\label{SubSec:IdeaWurst}
\subsubsection{Some Background Remarks}

We turn now to the third way mentioned above whereby a term, $t$,
of type $P\Si$ in $\L{S}$ can be `converted' to a term of type
$\O$. Namely, choose a term, $\w$, of type $P\Si$ and then use
`$\w\subseteq t$'. As we shall  see, this idea is easy to
implement in the case of quantum theory and leads to an
alternative way of thinking about truth objects.

Let us start by considering once more the case of classical
physics. There, for each microstate $s$ in the symplectic state
manifold $\Si_\s$, there is an associated truth object, $\TO^s$,
defined by $\TO^s:=\{J\subseteq\Si_\s\mid s\in J\}$, as in
\eq{Def:ClassTO}. It is clear that the state $s$ can be uniquely
recovered from the collection of sets $\TO^s$ as
\begin{equation}
  s=\bigcap \{J\subseteq\Si_\s\mid s\in J\}
                \label{s=bigcapK}
\end{equation}
Note that \eq{s=bigcapK} implies that $\TO^s$ is an
\emph{ultrafilter} of subsets of $\Si_\s$\footnote{Let
$\mathbb{L}$ be a lattice with zero element $0$. A subset
$F\subset\mathbb{L}$ is a `filter base' if (i) $0\notin F$ and
(ii) for all $a,b\in F$, there is some $c\in F$ such that $c\leq
a\wedge b$. A subset $D\subset\mathbb{L}$ is called a `(proper)
dual ideal' or a `filter' if (i) $0\notin D$, (ii) for all $a,b\in
D$, $a\wedge b\in D$ and (iii) $a\in D$ and $b>a$ implies $b\in
D$. A maximal dual ideal/filter $F$ in a complemented,
\emph{distributive} lattice $\mathbb{L}$ is called an
`ultrafilter'. It has the property that for all $a\in\mathbb{L}$,
either $a\in F$ or $a^{\prime}\in F$, where $a^{\prime}$ is the
complement of $a$.}. As we shall shortly see, there is an
intriguing analogue of this property for the quantum truth
objects.

The analogue of \eq{s=bigcapK} in the case of quantum theory is
rather interesting. Now, of course, there are no microstates, but
we do have the truth objects defined in \eq{TOpsi1}, one for each
vector state $\ket\psi\in\Hi$. To proceed further we note that
$\bra\psi\hat\a\ket\psi=1$ if and only if
$\ketbra\psi\preceq\hat\a$. Thus $\ps\TO^{\ket\psi}$ can be
rewritten as
\begin{equation}
        \ps\TO^{\ket\psi}_V:=\{\hat\a\in\G_V\mid
                  \ketbra\psi\preceq\hat\a\}  \label{TOpsi2b}
\end{equation}
for each stage $V$. Note that, as defined in \eq{TOpsi2b},
$\ps\TO^{\ket\psi}$ is a sub-object of $\G$; \ie\ it is defined in
terms of projection operators. However,  as will be shown in
Section \ref{SubSec:MonicGPSig}, there is a monic arrow
$\G\map\PSig$, and  by using this arrow, $\ps\TO^{\ket\psi}$ can
be regarded as a sub-object of $\PSig$; hence
$\Ga\ps\TO^{\ket\psi}$ is a collection of clopen sub-objects of
$\Sig$. In this form, the definition of $\ps\TO^{\ket\psi}$
involves clopen subsets of the spectral sets $\Sig_V$,
$V\in\Ob{\V{}}$.

It is clear from \eq{TOpsi2b} that, for each $V$,
$\ps\TO^{\ket\psi}_V$ is a \emph{filter} of projection operators
in $\G_V\simeq\PV$; equivalently, it is a filter of clopen
sub-sets of $\Sig_V$.

These ordering properties are associated with the following
observation. If $\ket\psi$ is any vector state, we can collect
together all the projection operators that are `larger' or equal
to $\ketbra\psi$ and define:
\begin{equation}
        T^{\ket\psi}:=\{\hat\alpha\in\PH\mid
        \ketbra\psi\preceq\hat\alpha\}\label{Def:TpsiGlobal}
\end{equation}
It is clear that, for all stages/contexts $V\in\Ob{\V{}}$, we have
\begin{equation}
\ps\TO^{\ket\psi}_V=T^{\ket\psi}\cap V
\end{equation}
Thus the presheaf $\ps\TO^{\ket\psi}$ is obtained by `localising'
$T^{\ket\psi}$ at each context $V$.

The significance of this localisation property is that
$T^{\ket\psi}$ is a \emph{maximal} (proper) filter in the
non-distributive lattice, $\PH$, of all projection operators on
$\Hi$. Such maximal filters in the projection lattices of von
Neumann algebras were extensively discussed by de Groote
\cite{deG05c} who called them `quasi-points'. In particular,
$T^{\ket\psi}$ is a, so-called, `atomic' quasi-point in $\PH$.
Every pure state $\ket\psi$ gives rise to an atomic quasi-point,
$T^{\ket\psi}$, and vice versa. We will return to these entities
in Section \ref{SubSubSec:PhysIntArrow}.

\subsubsection{Using Pseudo-States in Lieu of Truth Objects}
\label{PseudoStatesLieu} The equation \eq{s=bigcapK} from
classical physics suggests
 that, in the quantum case, we look at the set-valued function on $\Ob{\V{}}$ defined by
\begin{equation}
       V\mapsto\bigwedge\{\hat\alpha\in\ps\TO^{\ket\psi}_V\}=
       \bigwedge \{\hat\a\in\G_V\mid
                  \ketbra\psi\preceq\hat\a\}\label{Def:wpsi}
\end{equation}
where we have used \eq{TOpsi2b} as the definition of
$\ps\TO^{\ket\psi}$. It is easy to check that this is a global
element of $\G$; in fact, the right hand side of \eq{Def:wpsi} is
nothing but the outer daseinisation $\dasmap(\ketbra\psi)$ of the
projection operator $\ketbra\psi$! Evidently, the quantity
\begin{equation}
 \w^{\ket\psi}:= \dasmap(\ketbra\psi)=V\mapsto \bigwedge
                 \{\hat\a\in\G_V\mid\ketbra\psi\preceq\hat\a\}
                 \label{Def:wpsi_2}
\end{equation}
is of considerable interest. We shall refer to it as a
`pseudo-state' for reasons that appear below.

Note that  $\w^{\ket\psi}$ is defined by \eq{Def:wpsi_2}  as an
element of $\Ga\G$. However,  because of the monic $\G\map\PSig$
we can also regard $\w^{\ket\psi}$ as an element of
$\Ga(\PSig)\simeq \Subcl\Sig$. The corresponding (clopen)
sub-object of $\Sig$ will be denoted
$\ps\w^{\ket\psi}:=\ps{\delta(\ketbra\psi)}$.

We know that the map $\ket\psi\mapsto\ps\TO^{\ket\psi}$ is
injective. What can be said about the map
${\ket\psi}\mapsto\ps\w^{\ket\psi}$? In this context, we note that
$\ps\TO^{\ket\psi}$ is readily recoverable from
$\w^{\ket\psi}\in\Ga\G$ as
\begin{equation}
        \ps\TO^{\ket\psi}_V=\{\hat\alpha\in\G_V\mid\hat\alpha\succeq
        \w^{\ket\psi}_V\}\label{Trecov_w}
\end{equation}
for all contexts $V$. From these relations it follows that is
${\ket\psi}\mapsto\ps\w^{\ket\psi}$ is injective.

Note that, \eq{Trecov_w} essentially follows from the fact that,
for each $V$, the collection, $\ps\TO^{\ket\psi}_V$ of projectors
in $\G_V$ is an \emph{upper} set (in fact, as remarked earlier, it
is a filter). In this respect, the projectors/clopen subsets
$\ps\TO^{\ket\psi}_V$ behave like the filter of \emph{clopen
neighbourhoods} of a subset in a topological space. This remark
translates globally to the relation of the collection,
$\Ga\ps\TO^{\ket\psi}$, of sub-objects of $\Sig$ to the specific
sub-object $\ps{\w}^{\ket\psi}$.

It follows that there is a one-to-one correspondence between truth
objects, $\ps\TO^{\ket\psi}$, and pseudo-states,
$\ps\w^{\ket\psi}$. However, the former is (a representation of) a
term of type $P(P\Si)$, whereas the latter is of type $P\Si$. So
how is this reflected in the assignment of generalised truth
values?

Note first that, from the definition of $\w^{\ket\psi}$, it
follows that if $\hat\a\in\ps\TO_V^{\ket\psi}$ then
$\hat\a\succeq\w^{\ket\psi}_{V}$. On the other hand, from
\eq{Trecov_w} we have that if $\hat\a\succeq\w^{\ket\psi}_{V}$
then $\hat\a\in\ps\TO_V^{\ket\psi}$. Thus we have the simple, but
important, result:
\begin{equation}
      \hat\a\in\ps\TO_V^{\ket\psi} \mbox{ if, and only if }
      \hat\a\succeq\w^{\ket\psi}_V\label{betainTiff}
\end{equation}
In particular, for any projector $\P$ we have $
    {\das{P}}_V\in\ps\TO_V^{\ket\psi} \mbox{ if, and only if }
         {\das{P}}_V\succeq\w^{\ket\psi}_V.
$

In terms of sub-objects of $\Sig$, we have ${\das{P}}_V
\succeq\w^{\ket\psi}_V$ if and only if $\ps{\das{P}}_V\supseteq
\ps{\w}^{\ket\psi}_V$. Hence, \eq{betainTiff} can be rewritten as
\begin{equation}
{\das{P}}_V\in\ps\TO_V^{\ket\psi} \mbox{ if, and only if }
         \ps{\das{P}}_V\supseteq \ps{\w}_V
\end{equation}
and so \eq{ValDasFinal} can be written as
\begin{equation}
\TValM{{\ps{\das{P}}}\in\ps\TO}_V =\{V^\prime\subseteq V\mid
 \ps{\das{P}}_V\supseteq \ps{\w}^{\ket\psi}_V\}
\label{ValDasFinal(2)}
\end{equation}

However, the right hand side of \eq{ValDasFinal(2)} is just the
topos truth value, $\nu(\ps\w^{\ket\psi}\subseteq\ps{\das{P}})$.
It follows that
\begin{equation}
 ``{\ps{\das{P}}}\in\ps\TO^{\ket\psi}\mbox{''} \mbox{ is equivalent to }
``\ps\w^{\ket\psi}\subseteq
\ps{\das{P}}\mbox{''}\label{dPinTequivmsubd}
\end{equation}
and hence we can use the generalised truth values
$\TValM{{\ps{\das{P}}}\in\ps\TO^{\ket\psi}}$ or
$\TValM{\ps\w^{\ket\psi}\subseteq{\ps{\das{P}}}}$ interchangeably.

Thus, if desired, a truth object in quantum theory can be regarded
as a sub-object of $\Sig$, rather than a sub-object of $P\Sig$. In
a sense, these sub-objects, $\ps\w^{\ket\psi}$, of $\Sig$ are the
`closest' we can get to global elements of $\ps\Si$. This is why
we call them `pseudo-states'. However,  note that a pseudo-state
is not a \textit{minimal} element of the Heyting algebra
$\Subcl\Sig$ since these will include stalks that are empty sets,
something that is not possible for a pseudo-state. \footnote{Note
that the sub-objects $\ps\w^{\ket\psi}$ do not have any global
elements since any such would give a global element of $\Sig$ and,
of course, there are none. Thus if one is seeking examples of
presheaves with no global elements, the collection
$\ps\w^{\ket\psi}$, $\ket\psi\in\Hi$, afford many such.}

\subsubsection{Linguistic Implications}
The result \eq{dPinTequivmsubd} is very suggestive for a more
general development. In our existing treatment, in the formal
language $\L{S}$ we have concentrated on propositions of the form
``$\va{J}\in\va\TO$'' which, in a representation $\phi$, maps to
the arrow $\Val{\va{J}\in\va\TO}_\phi:P\Si_\phi\times
P(P\Si_\phi)\map\O_{\tau_\phi}$.  Here $\va{J}$ and $\va{\TO}$ are
variables of type $P\Si$ and $P(P\Si)$ respectively.

\displayE{11}{5}{4}{What is suggested by the discussion above is
that we could equally focus on terms of the form
``$\va\w\subseteq\va{J}$'', where both $\va\w$ and $\va{J}$ are
variables of type $P\Si$.}

\noindent Note that, in general, the $\phi$-representation of such
a term is of the form
\begin{equation}
\Val{\va{\w}\subseteq\va{J}}_\phi:P\Si_\phi\times
P\Si_\phi\map\O_{\tau_\phi} \label{ValwinK}
\end{equation}
where the `first slot' on the right hand side of the pairing in
\eq{ValwinK} is a truth-object (in pseudo-state form), and the
second correspond to a proposition represented by a sub-object of
$\Si_\phi$.

However, this raises the rather obvious question ``What \emph{is}
a pseudo-state?''. More precisely, we would like to know a generic
set of characteristic properties of those sub-objects of
$\Si_\phi$ that can be regarded as `pseudo-states'. A first step
would be to answer this question in the case of quantum theory. In
particular, are there any quantum pseudo-states that are
\emph{not} of the form $\ps\w^{\ket\psi}$ for some vector
$\ket\psi\in\Hi$?

In this context the localisation property expressed by
\eq{Def:TpsiGlobal} is rather suggestive. In the case that $\Hi$
has infinite dimension, de Groote has shown that there exist
quasi-points in $\PH$ that are not of the form $T^{\ket\psi}$ for
some $\ket\psi\in\Hi$ \cite{deG05c}.\footnote{However, he has also
shown that, in an appropriate topology, the set of all atomic
quasi-points is \emph{dense} in the set of all quasi-points. Of
course, none of these intriguing structures arise in a
finite-dimensional Hilbert space in anything other than a trivial
way. So, in that sense, it is unlikely that they will play any
fundamental role in explicating the topos representation of
quantum theory.} If $T$ is any such quasi-point,
\eq{Def:TpsiGlobal} suggests strongly that we define an associated
presheaf, $\ps{T}$, by
\begin{equation}
        \ps{T}_V=T\cap V\label{Def:psT}
\end{equation}
for all $V\in\Ob{\V{}}$. This construction seems natural enough
from a mathematical perspective, but we are not yet clear of the
physical significance of the existence of such `quasi
truth-objects'. The same applies to the associated `quasi
pseudo-state', $\ps\w^T$, defined by
\begin{equation}
        \ps\w^T_V:=\bigwedge\{\hat\a\in \ps{T}_V\}=
                \bigwedge\{\hat\a\in T\cap V\}
\end{equation}

\subsubsection{Time-Dependence and the Truth Object.}
As emphasised at the end of Section \ref{SubSec:PropLangPhys}, the
question of time dependence depends on the theory-type  being
considered. The structure of the language $\L{S}$ that has been
used so far is such that the time variable lies outside the
language. In this situation, the time dependence of the system can
be  implemented in several  ways.

For example, we can make the truth object time dependent, giving a
family of truth objects, $t\mapsto\ps\TO^t$, $t\in\mathR$. In the
case of classical physics, with the truth objects $\TO^s$,
$s\in\Si_\s$, the time evolution  comes from the time dependence,
$t\mapsto s_t$, of the microstate in accordance with the classical
equations of motion. This gives the family $t\mapsto \TO^{s_t}$ of
truth objects.

Something very similar  happens in quantum theory, and we acquire
a family, $t\mapsto \ps\TO^{\ket\psi_t}$, of truth objects, where
the states $\ket\psi_t$ satisfy the usual time-dependent
Schr\"odinger equation. Thus both classical and quantum truth
objects belong to a `Schr\"odinger picture' of time evolution. Of
course, there is a pseudo-state analogue of this in which we get a
one-parameter family, $t\mapsto\ps\w^{\ket\psi_t}$, of clopen
sub-objects of $\Sig$.

It is also possible to construct a `Heisenberg picture' where the
truth object is constant but the physical quantities and
associated propositions are time dependent. We will return to this
in Section \ref{Sec:Unitary}  when we discuss the use of unitary
operators.

\subsection{The Presheaf $P_{\operatorname{cl}}$(\underline{$\Sigma$}).}
\label{SubSec:PresheafP-clSig}
\subsubsection{The Definition of $P_{\operatorname{cl}}$(\underline{$\Sigma$}).}
We must now show that there really is a presheaf $\PSig$.

The easiest way of defining $\PSig$ is to start with the concrete
expression for the normal power object $P\Sig$ \cite{Gol84}.
First, if $\ps{F}$ is any presheaf over $\V{}$,  define the
\emph{restriction} of $\ps{F}$ to $V$ to be the functor
$\ps{F}\!\downarrow\!V$ from the category\footnote{The notation
$\downarrow\!\!V$ means the partially-ordered set of all
sub-algebras $V^\prime\subseteq V$.} $\downarrow\!\!V$ to $\Set$
that assigns to each $V_{1}\subseteq V$, the set $\ps{F}_{V_{1}}$,
and with the obvious induced presheaf maps.

Then, at each stage $V$,  $P\Sig_V$ is the set of natural
transformations from $\Sig\!\downarrow\!V$ to $\Om\!\downarrow\!
V$. These are in one-to-one correspondence with families of maps
 $\s:=\{\s_{V_{1}}:\Sig_{V_{1}}
\map\Om_{V_{1}}\mid V_{1}\subseteq V\}$, with the following
commutative diagram for all $V_2\subseteq V_1\subseteq V$:
\footnote{Note that any sub-object, $\ps{J}$ of $\Sig$, gives rise
to such a natural transformation from $\Sig\!\downarrow\!V$ to
$\Om\!\downarrow\!V$ for all stages $V$. Namely, for all
$V_1\subseteq V$, $\s_{V_1}:\Sig_{V_1}\map\Om_{V_1}$ is defined to
be the characteristic arrow
${\chi_{\ps{J}}}_{V_1}:\Sig_{V_1}\map\Om_{V_1}$ of the sub-object
$\ps{J}$ of $\Sig$. }
\begin{center}
\setsqparms[1`1`1`1;700`600]
\square[\Sig_{V_1}`\Om_{V_1}`\Sig_{V_2}`\Om_{V_2};
\s_{V_1}`\Sig(i_{V_2V_1})`\Om(i_{V_2V_1})`\s_{V_2}]
\end{center}
The presheaf maps are defined by
\begin{eqnarray}
        P\Sig(i_{V_1 V}):P\Sig_{V}&\map& P\Sig_{V_1}\\
         \s\ \ &\mapsto& \{\s_{V_2}\mid V_2\subseteq V_1\}
\end{eqnarray}
and the evaluation arrow ${\rm ev}:P\Sig\times\Sig\map\Om$, has
the form, at each stage $V$:
\begin{eqnarray}
        {\rm ev}_V:P\Sig_V\times\Sig_V&\map&\Om_V  \\
                        (\s,\l) &\mapsto& \s_V(\l)
\end{eqnarray}

Moreover, in general, given a map $\chi:\Sig_V\map\Om_V$, the
subset of $\Sig_V$ associated with the corresponding sub-object is
$\chi^{-1}(1)$, where $1$ is the unit (`truth') in the Heyting
algebra $\Om_V$.

This suggests strongly that an object, $\PSig$, in $\SetH{}$ can
be defined using the same definition of $P\Sig$ as above, except
that the family of maps  $\sigma:=\{\sigma_{V_{1}}:\Sig_{V_{1}}
\map\Om_{V_{1}}\mid V_{1}\subseteq V\}$ must be such that, for all
$V_1\subseteq V$, $\sigma_{V_1}^{-1}(1)$ is a \emph{clopen} subset
of the (extremely disconnected) Hausdorff space $\Sig_{V_1}$. It
is straightforward to check that such a restriction is consistent,
and that $\Subcl\Sig\simeq\Ga(\PSig)$ as required.

\subsubsection{The Monic Arrow From \G\ to $P_{\operatorname{cl}}$(\underline{$\Sigma$}).}
\label{SubSec:MonicGPSig} We define $\iota:\G\times\Sig\map\Om$,
with the power transpose $\name{\iota}:\G\map\PSig$, as follows.
First recall that in any topos, $\tau$  there is a bijection ${\rm
Hom}_\tau(A,C^B)\simeq {\rm Hom}_\tau(A\times B,C)$, and hence, in
particular, (using $P\Sig=\Om^{\Sig}$)
\begin{equation}
        {\rm Hom}_{\SetH{}}(\G, P\Sig)\simeq
        {\rm Hom}_{\SetH{}}(\G\times\Sig,\Om).
\end{equation}

Now let $\hat\alpha\in\PV$, and let $S_{\hat\alpha}:=
\{\l\in\Sig_V \mid\brak\l{\hat\alpha}=1\}$ be the clopen subset of
$\Sig_V$ that corresponds to the projector $\hat\alpha$ via the
spectral theorem; see \eq{Def:Salphahat}. Then we define
$\iota:\G\times\Sig\map\Om$ at stage $V$ by
\begin{equation}
 \iota_V(\hat\alpha,\l):=\{V^\prime\subseteq V\mid
    \Sig(i_{V^\prime\,V})(\l)\in
 S_{\G(i_{V^\prime\, V})(\hat\alpha)}\}\label{Def:iV}
\end{equation}
for all $(\hat\alpha,\l)\in \G_V\times\Sig_V$.

On the other hand, the basic result relating coarse-graining to
subsets of $\Sig$ is
\begin{equation}
S_{\G(i_{V^{\prime}\,V)}(\dasto{V}{\alpha})}=
        \Sig (i_{V^{\prime}\,V})(S_{\dasto{V}{\alpha}})
\end{equation}
for all $V^\prime\subseteq V$ and for all $\hat\alpha\in\G_V$. It
follows that
\begin{equation}
        \iota_V(\hat\alpha,\l):=\{V^\prime\subseteq V\mid
        \Sig(i_{V^\prime\,V})(\l)\in
        \Sig(i_{V^\prime\, V})(S_{\hat\alpha})\}
\end{equation}
for all $(\hat\alpha,\l)\in \G_V\times\Sig_V$. In this form is is
clear that $\iota_V(\hat\alpha,\l)$ is indeed a \emph{sieve} on
$V$; \ie\ an element of $\Om_V$.

The next step is to show that the collection of maps
$\iota_V:\G_V\times\Sig_V\map \Om_V$ defined in \eq{Def:iV}
constitutes a natural transformation from the object
$\G\times\Sig$ to the object $\Om$ in the topos $\SetH{}$. This
involves chasing around  a few commutative squares, and we will
spare the reader the ordeal. There is some subtlety,  since we
really want to deal with ${\rm Hom}_{\SetH{}}(\G,\PSig)$, not
${\rm Hom}_{\SetH{}}(\G,P\Sig)$; but all works in the end.

To prove that $\name{\iota}:\G\map\PSig$ is monic, it suffices to
show that the map $\name{\iota}_V:\G_V\map\PSig_V$ is injective at
all stages $V$. This is a straightforward exercise and the details
will not be given here.

The conclusion of this exercise is that, since
$\name{\iota}:\G\map\PSig$ is monic, the truth sub-objects
$\ps\TO^{\ket\psi}$ of $\G$ can also be regarded as sub-objects of
$\PSig$, and hence the truth value assignment in \eq{ValDasFinal}
is well-defined.

Finally then, for any given quantum state $\ket\psi$ the basic
proposition $\SAin\De$ can be assigned a generalised truth value
$\TVal{\Ain\De}{\ket\psi}$ in $\Ga\Om$, where $\tau:=\SetH{}$ is
the topos of presheaves over $\V{}$. This is defined at each
stage/context $V$ as
\begin{eqnarray}
\TVal{\Ain\De}{\ket\psi}_V&:=&\TValM{{\ps{\delta(\hat
E[A\in\De])}}\in\ps\TO^{\ket\psi}}_V
                                \nonumber       \\[4pt]
                &=&  \{V^\prime\subseteq V\mid
{\ps{\delta\big(\hat
E[A\in\De]\big)}}_{V^\prime}\in\ps\TO^{\ket\psi}_{V^\prime}\}
\end{eqnarray}

\subsection{Yet Another Perspective on the K-S\ Theorem}
\label{SubSec:ObpsWO} In classical physics, the pseudo-state
$\w^s\subseteq\S$ associated with the microstate $s\in\S$ is just
$\w^s:=\{s\}$. This gives the diagram
\begin{equation}
        \setsqparms[1`0`1`1;700`700] \label{ComDiagS}
        \square[\w^s` \S`\{*\}` P\S;
        {} ` {} ` \pi` \name{\w^s}]
\end{equation}
where $\name{\w^s}(*):=\{s\}$ and $\pi$ is the canonical map
\begin{eqnarray}
        \pi:\S&\longrightarrow&PS\nonumber \label{S->PS}\\
            s&\mapsto&\{s\}
\end{eqnarray}
The singleton $\{*\}$ is the terminal object in the category,
$\Set$, of sets, and the subset embedding $\w^s\map\S$ in
\eq{ComDiagS} is the categorical pull-back by $\pi$ of the monic
$\name{\w^s}:\{*\}\map P\S$.

In the quantum case, the analogue of the diagram \eq{ComDiagS} is
\begin{equation}
        \setsqparms[1`0`1`1;700`700] \label{ComDiagC2}
        \square[\ps\w^{\ket\psi}` \Sig`\ps{1}` P\Sig;
        {} ` {} ` \pi` \name{\ps\w^{\ket\psi}}]
\end{equation}
where the arrow $\pi:\Sig\map P\Sig$ has yet to be defined. To
proceed further, let us first return to  the set-theory map
\begin{eqnarray}
                X&\map& PX       \label{XmapPX}\\
                x&\mapsto& \{x\}                \nonumber
\end{eqnarray}
where $X$ is any set.

 We can think of \eq{XmapPX} as the power transpose,
 $\name{\beta}:X\map PX$, of the map $\beta:X\times X\map\{0,1\}$ defined by
\begin{equation}
       \beta(x,y):=\left\{\begin{array}{ll}
            1 & \mbox{\ if\ $x=y$;} \\
            0 & \mbox{\ otherwise.}     \label{Def:b(x,y)}
         \end{array}
        \right.
\end{equation}
In our topos case, the obvious definition for the arrow
$\pi:\Sig\map P\Sig$ is the power transpose $\name{\beta}:\Sig\map
P\Sig$, of the arrow $\beta:\Sig\times\Sig\map\Om$, defined by
\begin{equation}
        \beta_{V}(\l_1,\l_2):=
                \{V'\subseteq V\mid \l_1|_{V'}=\l_2|_{V'}\}
                \label{beta(l1l2)}
\end{equation}
for all stages $V$. Note that, in linguistic terms, the arrow
defined in \eq{beta(l1l2)} is just the representation in the
quantum topos $\SetH{}$, of the term`$\va\s_1=\va\s_2$', where
$\va\s_1$ and $\va\s_2$ are terms of type $\Si$; \ie
$\Val{\va\s_1=\va\s_2}:\Sig\times\Sig\map\Om$.

With this definition of $\pi$, the diagram in \eq{ComDiagC2}
becomes meaningful: in particular the monic
$\ps\w^{\ket\psi}\hookrightarrow\Sig$ is the categorical pull-back by $\pi$
of the monic $\name{\ps\w^{\ket\psi}} :\ps{1}\map P\Sig$.

There is, however, a significant difference between \eq{ComDiagC2}
and its classical analogue \eq{ComDiagS}. In the latter case, the
function $\name{\w^s}:\{*\}\map P\S $  can be `lifted' to a
function $\name{\w^s}^\uparrow:\{*\}\map \S$ to give a commutative
diagram: \ie\, such that
\begin{equation}
        \pi\circ\name{\w^s}^\uparrow=\name{\w^s}.
\end{equation}
Indeed, simply define
\begin{equation}
        \name{\w^s}^\uparrow(*):=s
\end{equation}

However, in the quantum case there can be no `lift'
$\name{\ps\w^{\ket\psi}}^\uparrow:1\map \Sig$, as this would
correspond to a global element of the spectral presheaf $\Sig$,
and of course there are none. Thus, from this perspective, the
Kochen-Specker theorem can be understood as asserting the
existence of an \emph{obstruction} to lifting the  arrow
$\name{\ps\w^{\ket\psi}}:1\map P\Sig$.

Lifting problems of the type
\begin{equation}
        \setsqparms[0`0`1`1;700`700] \label{ComDiagABC}
        \square[`A`C`B;
        {} ` {} ` \pi` \phi]
\end{equation}
occur in many places in mathematics. A special, but very
well-known, example of \eq{ComDiagABC} arises when trying to
construct cross-sections of a non-trivial principle fiber bundle
$\pi:P\map M$. In diagrammatic terms we have
\begin{equation}
        \setsqparms[0`0`1`1;700`700] \label{LiftPBun}
        \square[` P` M ` M;
        {} ` {} ` \pi`\id]
\end{equation}
A cross-section   of this bundle corresponds to a lifting of the
map $\id:M\map M$.

The obstructions to lifting $\id:M\map M$ through $\pi$ can be
studied in various ways. One technique is to decompose the bundle
$\pi: P\map M$ into a series of interpolating fibrations $P\map
P_1\map P_2\map\cdots M$ where each fibration $P_i\map P_{i+1}$
has the special property that the fiber is a particular
Eilenberg-McLane space (this is known as a `Postnikov tower'). One
then studies the sequential lifting of the function $\id:M\map M$,
\ie\ first try to lift it through the fibration $P_1\map M$;  if
that is successful try to lift it through $P_2\map P_1$; and so
on. Potential obstructions to performing these liftings appear as
elements of the cohomology groups $H^k(M;\pi^{k-1}(F))$,
$k=1,2,\ldots$, where $F$ is the fiber of the bundle.

We have long felt that it should possible to describe the
non-existence of global elements of $\Sig$ (\ie\ the
Kochen-Specker theorem) in some cohomological way, and the remark
above suggests one possibility. Namely, perhaps there is some
analogue of a `Postnikov factorisation' for the arrow
$\ps\pi:\Sig\map P\Sig$  that could give a cohomological
description of the obstructions to a global element of $\Sig$,
\ie\ to the lifting of a pseudo-state $\name{\ps\w^{\ket\psi}}:1
\map P\Sig$ through the arrow $\ps\pi:\Sig\map\ps P\Sig$ to give
an arrow $\ps{1}\map \Sig$.

Related to this is the question of if there is a `pseudo-state
object', $\ps\WO$, with the defining property that $\Ga\ps\WO$ is
equal to the set of all pseudo-states. Of course, to do this
properly requires a definition of a pseudo-state that goes beyond
the specific constructions of the objects $\ps\w^{\ket\psi}$,
$\ket\psi\in\Hi$. In particular, are there pseudo-states that are
\emph{not} of the form $\ps\w^{\ket\psi}$?

If such an object, $\ps\WO$ \emph{can} be found then $\ps\WO$ will
be a sub-object of $P\Sig$, and in the diagram in \eq{ComDiagC2}
one could then look to replace $P\Sig$ with $\ps\WO$.

\section{The de Groote Presheaves of Physical Quantities}
\label{Sec:deG}
\subsection{Background Remarks}
Our task now is to consider the representation of the local
language, $\L{S}$, in the case of quantum theory. We assume that
the relevant topos  is the same as that used for the propositional
language $\PL{S}$, \ie\ $\SetH{}$, but the emphasis is very
different.

From a physics perspective, the key symbols in $\L{S}$ are (i) the
ground-type symbols, $\Si$ and $\mathcal R$---the linguistic
precursors of the state object and the quantity-value object
respectively---and (ii) the function symbols $A:\Si\map\R$, which
are the precursors of physical quantities. In the quantum-theory
representation, $\phi$, of $\L{S}$,  the representation,
$\Si_\phi$, of $\Si$ is defined to be the spectral presheaf $\Sig$
in the topos  $\SetH{}$.

The critical question is to find the object, $\R_\phi$
(provisionally denoted as a presheaf $\ps{\R}$), in $\SetH{}$ that
represents $\R$, and is hence the quantity-value object. One might
anticipate that $\ps{\R}$ is just the real-number object in the
topos $\SetH{}$, but that turns out to be quite  wrong, and the
right answer cannot just be guessed. In fact, the correct choice
for $\ps{\R}$ is found indirectly by considering a related
question: namely, how to represent each function symbol
$A:\Si\map\R$, with a concrete arrow $A_\phi:\Si_\phi\map\R_\phi$
in  $\SetH{}$, \ie\ with a natural transformation
$\breve{A}:\Sig\map\ps{\R}$ between the presheaves $\Sig$ and
$\ps{\R}$.

Critical to this task are the daseinisation operations on
projection operators that were defined earlier as
\eq{Def:dasouter} and \eq{Def:dasinner}, and which are repeated
here for convenience: {\definition If $\hat P$ is a projection
operator, and $V\in\Ob{\V{}}$ is any context/stage,  we define:
\begin{enumerate}
\item The `outer daseinisation' operation is
\begin{equation}
        \dastoo{V}{P}:=\bigwedge\big\{\hat{\a}\in\PV\mid
          \P\preceq\hat\a\big\}.\label{Def:dasouterRep1}
\end{equation}
where `$\,\preceq$' denotes the usual ordering of projection
operators, and where $\PV$ is the set of all projection operators
in $V$.

\item Similarly, the `inner daseinisation' operation is defined in
the context $V$ as (c.f. \eq{Def:dasinner})
\begin{equation}
        \dastoi{V}{P}:=\bigvee\big\{\hat\beta\in\PV\mid
                \hat\beta\preceq \P\big\}. \label{Def:dasinnerRep1}
\end{equation}
\end{enumerate}
} \noindent Thus $\dastoo{V}{P}$ is the best approximation to $\P$
in $V$ from `above', being the smallest projection in $V$ that is
larger than or equal to $\P$. Similarly, $\dastoi{V}{P}$ is the
best approximation to $\P$ from `below', being the largest
projection in $V$ that is smaller than or equal to $\P$.

In Section \ref{SubSec:PresheafP-clSig}, we showed that the outer
presheaf is a sub-object of the power object $\PSig$ (in the
category $\SetH{}$), and hence that the global element $\daso{P}$
of $\G$ determines a (clopen) sub-object, $\ps{\daso{P}}$, of the
spectral presheaf $\Sig$. By these means, the quantum logic of the
lattice $\PH$ is mapped into the Heyting algebra of the set,
$\Subcl{\Sig}$, of clopen sub-objects of $\Sig$.

Our task now is to perform the second stage of the programme:
namely  (i) identify the quantity-value presheaf, $\ps\R$; and
(ii) show that any physical quantity can be represented by an
arrow from $\Sig$ to $\ps\R$.

\subsection{The Daseinisation of an Arbitrary Self-Adjoint Operator}
\subsubsection{Spectral Families and Spectral Order}
We now want to extend the daseinisation operations from
projections to arbitrary (bounded) self-adjoint operators. To this
end, consider first a bounded, self-adjoint operator, $\A$, whose
spectrum is purely discrete. Then the spectral theorem can be used
to write $\A=\sum_{i=1}^\infty a_i\P_i$ where $a_1,a_2,\ldots $
are the eigenvalues of $\A$, and $\P_1,\P_2,\ldots $ are the
spectral projection operators onto the corresponding eigenspaces.

A construction that comes immediately to mind is to use the
daseinisation operation on projections to define
\begin{equation}
\delta^o(\A)_V:= \sum_{i=1}^\infty a_i\,\daso{P_i}_V
                                                \label{Def:DasAWrong!}
\end{equation}
for each stage $V$. However, this procedure is rather unnatural.
For one thing, the  projections, $\P_i$, $i=1,2,\ldots$ form a
complete orthonormal set:
\begin{eqnarray}
        \sum_{i=1}^\infty \P_i&=&\hat 1,\\
        \P_i\P_j&=&\delta_{ij}\P_i,
\end{eqnarray}
whereas, in general, the collection of daseinised projections,
$\daso{P_i}_V$, $1=1,2,\ldots$ will not satisfy either of these
conditions. In addition, it is hard to see how the expression
$\daso{A}_V:= \sum_{i=1}^\infty a_i\,\daso{P_i}_V$ can be
generalised to operators, $\A$, with a continuous spectrum.

The answer to this conundrum lies in the work of de Groote. He
realised that although it is not  useful to daseinise the spectral
projections of an operator $\A$, it \emph{is} possible to
daseinise the \emph{spectral family} of $\A$ \cite{deG05}.

\paragraph{Spectral families.}
We first recall that a {\em spectral family} is a family of
projection operators $\hat E_\l$, $\l\in\mathR$, with the
following properties:
\begin{enumerate}
\item If $\l_2\leq\l _1$ then $\hat E_{\l_2}\preceq\hat E_{\l_1}$.
\item The net $\l\mapsto \hat E_\l$ of projection operators in the
lattice $\PH$  is bounded above by $\hat 1$, and below by $\hat
0$. In fact,
\begin{eqnarray}
                \lim_{\l\map\infty} \hat E_\l &=&\hat 1,\\
                \lim_{\l\map -\infty}\hat E_\l&=&\hat 0.
\end{eqnarray}
\item The map $\l\mapsto \hat E_\l$ is right-continuous:\footnote{
It is a matter of convention whether one chooses right-continuous
or left-continuous.}
\begin{equation}
  \bigwedge_{\epsilon\downarrow 0} \hat E_{\l+\epsilon}=\hat E_\l
\end{equation}
for all $\l\in\mathR$.
\end{enumerate}
The spectral theorem asserts that for any self-adjoint operator
$\A$, there exists a spectral family, $\l\mapsto \hat E^A_{\l}$,
such that
\begin{equation}
        \A=\int_\mathR \l\, d \hat E^A_{\l}   \label{SpTh}
\end{equation}
We are only concerned with bounded operators, and so the (weak
Stieljes) integral in \eq{SpTh} is really over the bounded
spectrum of $\A$ which, of course, is a compact subset of
$\mathR$. Conversely, given a bounded spectral family $\{\hat
E_\l\}_{\l\in\mathR}$,\footnote{I.e., there are $a,b\in\mathR$
such that $\hat E_\l=\hat 0$ for all $\l\leq a$ and $\hat
E_\l=\hat 1$ for all $\l\geq b$.} there is a bounded self-adjoint
operator $\A$ such that $\A=\int_\mathR \l\, d \hat E_{\l}$.

\paragraph{The spectral order.}
A key element for our work is the so-called {\em spectral order}
that was introduced in \cite{Ols71}.\footnote{The spectral order
was later reinvented by de Groote, see \cite{deG04}.} It is
defined as follows.  Let $\A$ and $\hat B$ be (bounded)
self-adjoint operators with spectral families $\{\hat
E^A_\l\}_{\l\in\mathR}$ and $\{\hat E^B_\l\}_{\l\in\mathR}$,
respectively. Then define:
\begin{equation}
        \A\preceq_s\hat B\mbox{ if and only if }
        \hat E^B_\l\preceq\hat E^A_\l \mbox{ for all $\l\in\mathR$}.
        \label{Def:ApreceqsB}
\end{equation}
It is easy to see that \eq{Def:ApreceqsB} defines a genuine
partial ordering on $\BH_\sa$ (the self-adjoint operators in
$\BH$). In fact, $\BH_\sa$ is a `boundedly complete' lattice with
respect to the spectral order, \ie\ each bounded set $S$ of
self-adjoint operators has a minimum $\bigwedge S\in\BH_\sa$ and a
maximum $\bigvee S\in\BH_\sa$ with respect to this order.

If $\hat P,\hat Q$ are projections, then
\begin{equation}
        \hat P\preceq_s\hat Q \mbox{ if and only if }
        \hat P\preceq\hat Q,             \label{[PQ]=0->s=}
\end{equation}
so the spectral order coincides with the usual partial order on
$\PH$. To ensure this, the `reverse' relation in
\eq{Def:ApreceqsB} is necessary, since the spectral family of a
projection $\hat P$ is given by
\begin{equation}
E_{\l}^{\hat P}=\left\{
\begin{tabular}
[c]{ll}%
$\hat{0}$ & if $\l<0$\\
$\hat{1}-\P$ & if $0\leq\l<1$\\
$\hat{1}$ & if $\l\geq1.$%
\end{tabular}
\right.
\end{equation}

If $\A,\hat B$ are self-adjoint operators such that (i) either
$\A$ or $\hat B$ is a projection, or (ii) $[\A,\hat B]=\hat 0$,
then $\A\preceq_s\hat B \mbox{ if and only if } \A\preceq\hat B$.
Here `$\preceq$' denotes the usual ordering on
$\BH_\sa$.\footnote{ The `usual' ordering is $\A\preceq\hat B$ if
$\bra\psi\A\ket\psi \leq \bra\psi\hat B\ket\psi$ for all vectors
$\ket\psi\in\mathcal H$.}

Moreover, if $\A,\hat B$ are arbitrary self-adjoint operators,
then $\A\preceq_s\hat B$ implies $\A\preceq\hat B$, but not vice
versa in general. Thus the spectral order is a partial order on
$\BH_\sa$ that is coarser than the usual one.

\subsubsection{Daseinisation of Self-Adjoint Operators.}
De Groote's crucial observation was the following. Let
$\l\mapsto\hat E_\l$ be a spectral family in $\PH$ (or,
equivalently, a self-adjoint operator $\A$). Then, for each stage
$V$, the following maps:
\begin{eqnarray}
\l&\mapsto& \bigwedge_{\mu>\l}\dastoo{V}{E_\mu} \label{dasoE}\\
\l&\mapsto& \dastoi{V}{E_\l}                    \label{dasiE}
\end{eqnarray}
also define spectral families.\footnote{The reason  \eq{dasoE} and
\eq{dasiE} have a different form is that
$\l\mapsto\dastoi{V}{E_\l}$ is right continuous whereas
$\l\mapsto\dastoo{V}{E_\l}$ is not. On the other hand, the family
$ \l\mapsto \bigwedge_{\mu>\l}\dastoo{V}{E_\mu}$ \emph{is} right
continuous.} These spectral families lie in $\PV$ and hence, by
the spectral theorem, define self-adjoint operators in $V$. This
leads to the definition of the two daseinisations of an arbitrary
self-adjoint operator:

{\definition Let $\A$ be an arbitrary self-adjoint operator. Then
the \emph{outer} and $\emph{inner}$ daseinisations of $\A$ are
defined at each stage $V$ as:
\begin{eqnarray}
\dastoo{V}{A}&:=&\int_\mathR \l\,
d\big(\delta^i_V(\hat E^A_{\l}) \big),\label{Def:dastooVA}\\
\dastoi{V}{A}&:=&\int_{\mathR}\l\, d
\big(\bigwedge_{\mu>\l}\delta^o_V(\hat E^A_{\mu})\big),
                                            \label{Def:dastoiVA}
\end{eqnarray}
respectively. }

\

Note that for all $\l\in\mathR$, and for all stages $V$, we have
\begin{equation}
       \dastoi{V}{E_\l}\preceq\bigwedge_{\mu>\l}\dastoo{V}{E_\mu}
                                            \label{dasiE<dasioE}
\end{equation}
and hence, for all $V$,
\begin{equation}
       \dastoi{V}{A}\preceq_s\dastoo{V}{A}.
\end{equation}
This explains why the `$i$' and `$o$' superscripts in
\eqs{Def:dastooVA}{Def:dastoiVA} are defined the way round that
they are.

Both outer daseinisation \eq{Def:dastooVA} and inner daseinisation
\eq{Def:dastoiVA} can be used to `adapt' a self-adjoint operator
$\A$ to contexts $V\in\Ob{\V{}}$ that do not contain $\A$. (On the
other hand, if $\A\in V$, then $\dastoo{V}{A}=\dastoi{V}{A}=\A$.)

\subsubsection{Properties of Daseinisation.}
We will now list  some useful properties of daseinisation.

{\bf 1.} It is clear that the outer, and inner, daseinisation
operations can be extended to situations where the self-adjoint
operator $\A$ does not belong to $\BH_\sa$, or where $V$ is not an
{\em abelian} sub-algebra of $\BH$. Specifically, let
$\mathcal{N}$ be an arbitrary von Neumann algebra, and let
$\mathcal S\subset\mathcal{N}$ be a proper von Neumann sub-algebra
such that $\hat 1_{\mathcal{N}}=\hat 1_{\mathcal{S}}=\hat 1$. Then
outer and inner daseinisation can be defined as the mappings
\begin{eqnarray}
   \delo:\mathcal{N}_\sa&\map&\mathcal{S}_\sa           \nonumber\\
        \A&\mapsto&\int_\mathR \l\, d
        \big(\deli_{\mathcal{S}}(\hat E^A_{\l}) \big),\\[8pt]
   \deli:\mathcal{N}_\sa&\map&\mathcal{S}_\sa   \nonumber\\
         \A&\mapsto&\int_{\mathR}\l\,
d\big(\bigwedge_{\mu>\l}\delo_{\mathcal{S}}(\hat E^A_{\mu})\big).
\end{eqnarray}

A particular case is  $\mathcal{N}=V$ and $\mathcal{S}=V^{\prime}$
for two contexts $V,V^{\prime}$ such that $V^{\prime}\subset V$.
Hence,  a self-adjoint operator can be restricted from one context
to a sub-context.

For the moment, we will let $\mathcal{N}$ be an arbitrary von
Neumann algebra, with $\mathcal{S}\subset\mathcal{N}$.

{\bf 2.} By construction,
\begin{equation}
  \dastoo{\mathcal S}{A}=\bigwedge\{\hat B\in
                \mathcal{S}_\sa\mid\hat B\succeq_s\A\},
\end{equation}
where the minimum is taken with respect to the spectral order;
\ie\ $\dastoo{\mathcal S}{A}$ is the smallest self-adjoint
operator in $\mathcal S$ that is spectrally larger than (or equal
to) $\A$. This implies $\dastoo{\mathcal S}{A}\succeq\A$ in the
usual order. Likewise,
\begin{equation}
  \dastoi{\mathcal S}{A}=
  \bigvee\{\hat B\in\mathcal S_\sa\mid\hat B\preceq_s\A\},
\end{equation}
so $\dastoi{\mathcal S}{A}$ is the largest self-adjoint operator
in $\mathcal S$ spectrally smaller than (or equal to) $\A$, which
 implies $\dastoi{\mathcal S}{A}\preceq\A$.

{\bf 3.} In general, neither $\dastoo{\mathcal S}{A}$ nor
$\dastoi{\mathcal S}{A}$ can be written as Borel functions of the
operator $\A$, since daseinisation changes the elements of the
spectral family, while a function merely `shuffles them around'.

{\bf 4.} Let $\A\in\mathcal N$ be self-adjoint. The spectrum,
$\sp{(\A)}$, consists of all $\l\in\mathR$ such that the spectral
family $\{\hat E^A_\l\}_{\l\in\mathR}$ is non-constant on any
neighbourhood of $\l$. By definition, outer daseinisation of $\A$
acts on the spectral family of $\A$ by sending $\hat E^A_\l$ to
$\hat E^{\dastoo{\mathcal S}{A}}_\l=\delta^i(\hat
E^A_\l)_{\mathcal S}$. If $\{\hat E^A_\l\}_{\l\in\mathR}$ is
constant on some neighbourhood of $\l$, then the spectral family
$\{\hat E^{\dastoo{\mathcal S}{A}}_\l\}_{\l\in\mathR}$ of
$\dastoo{\mathcal S}{A}$ is also constant on this neighbourhood.
This shows that
\begin{equation}                \label{sp(das(A))IsInsp(A)}
    {\rm sp}(\dastoo{\mathcal S}{A})\subseteq\sp(\A)
\end{equation}
for all self-adjoint operators $\A\in\mathcal{N}_\sa$ and all von
Neumann sub-algebras $\mathcal S$. Analogous arguments apply to
inner daseinisation.

Heuristically, this result implies that the spectrum of the
operator $\dastoo{\mathcal S}{A}$ is more degenerate than that of
$\A$; \ie\ the effect of daseinisation is to `collapse'
eigenvalues.

{\bf 5.} Outer and inner daseinisation are both non-linear
mappings. We will show this for projections explicitly. For
example, let $\hat{Q}:=\hat{1}-\P$. Then
$\delta^{o}(\hat{Q}+\P)_{\mathcal S}= \dastoo{\mathcal
S}{1}=\hat{1}$, while $\delta^{o}(\hat{1}-\P)_{\mathcal
S}\succ\hat{1}-\P$ and $\dastoo{\mathcal S}{P}\succ\P$ in general,
so $\delta^{o}(\hat {1}-\P)_{\mathcal S}+\dastoo{\mathcal S}{P}$
is the sum of two non-orthogonal projections in general (and hence
not equal to $\hat{1}$). For inner daseinisation, we have
$\delta^{i}(\hat{1}-\P)_{\mathcal S}\prec\hat{1}-\P$ and
$\dastoi{\mathcal S}{P}\prec\P$ in general, so $\delta^{i}
(\hat{1}-\P)_{\mathcal S}+\dastoi{\mathcal S}{P}\prec\hat{1}
=\delta^{i}(\hat{1}-\P+\P)_{\mathcal S}$ in general.

{\bf 6.} If $a\geq 0$, then $\delta^o(a\A)_{\mathcal
S}=a\dastoo{\mathcal S}{A}$ and $\delta^i(a\A)_{\mathcal
S}=a\dastoi{\mathcal S}{A}$. If $a<0$, then
$\delta^o(a\A)_{\mathcal S}=a\dastoi{\mathcal S}{A}$ and
$\delta^i(a\A)_{\mathcal S}=a\dastoo{\mathcal S}{A}$. This is due
the behaviour of spectral families under the mapping $\A\mapsto
-\A$.

{\bf 7.} Let $\A$ be a self-adjoint operator, and let $\hat
E[A\leq\l]=\hat E^A_\l$ be an element of the spectral family of
$\A$. From \eq{Def:dastooVA} we get
\begin{equation}
\hat E[\delta^o_{\mathcal S}(A)\leq\l]=\delta^i_{\mathcal
S}\big(\hat E[A\leq\l])
\end{equation}
and then
\begin{eqnarray}
     \hat E[\dastoo{{\mathcal S}}{A}>\l] &=& \hat 1-
     \hat E[\dastoo{{\mathcal S}}{A}\leq\l]\\
      &=&\hat 1-\delta^i_{\mathcal S}\big(\hat E[A\leq\l]\big)\\
     &=& \delta^o_{\mathcal S}\big(\hat 1-\hat E[A\leq\l]\big)
                                       \label{Eoi>doE=}
\end{eqnarray}
where we have used the general result that, for any projection
$\P$, we have $\hat 1-\dastoi{{\mathcal S}}{P}=\delta^o_{\mathcal
S}(\hat 1-\P)$. Then, \eq{Eoi>doE=} gives
\begin{equation}
        \hat E[\dastoo{{\mathcal S}}{A}>\l]=
        \delta^o\big(\hat E[A>\l]\big)_{\mathcal S}.
                                \label{E[dA>l]=}
\end{equation}

\subsubsection{The de Groote Presheaves}  We know that
$V\mapsto \dastoo{V}{P}$ and $V\mapsto\dastoi{V}{P}$ are global
elements of the outer presheaf, $\G$, and inner presheaf, $\H$,
respectively. Using the daseinisation operation for self-adjoint
operators, it is straightforward to construct analogous presheaves
for which $V\mapsto\dastoo{V}{A}$ and $V\mapsto\dastoi{V}{A}$ are
global elements. One of these presheaves was  briefly considered
in \cite{deG05}. We call these the `de Groote presheaves' in
recognition of the importance of de Groote's work.

{\definition The {\em outer de Groote presheaf}, $\dG$, is defined
as follows:
\begin{enumerate}
\item[(i)] On objects $V\in\Ob{\V{}}$: We define  $\dG_V:=V_{\rm sa}$, the
collection of self-adjoint members of $V$.

\item[(ii)] On morphisms $i_{V^{\prime}V}:V^{\prime }\subseteq V:$
The mapping $\dG(i_{V^{\prime}\, V}):\dG_V \map\dG_{V^{\prime}}$
is given by
\begin{eqnarray}
        \dG(i_{V^{\prime}\, V})(\A )&:=&\dastoo{V^{\prime}}{A}\\
&=&\int_\mathR \l\,d\big(\delta^i(\hat E^A_\l)_{V^{\prime}}\big)\\
&=&\int_\mathR \l\,d\big(\H(i_{V^\prime\,V})(\hat E^A_\l)\big)
\end{eqnarray}
for all $\A\in\dG_V$.
\end{enumerate}
} Here we used the  fact that the restriction mapping
$\H(i_{V^\prime\,V})$ of the inner presheaf $\H$ is the inner
daseinisation of projections
$\delta^i:\PV\map\mathcal{P}(V^{\prime})$.

{\definition The {\em inner de Groote presheaf}, $\dH$, is defined
as follows:
\begin{enumerate}
\item[(i)] On objects $V\in\Ob{\V{}}$:  We define $\dH_V:=V_{\rm sa}$, the
collection of self-adjoint members of $V$.

\item[(ii)] On morphisms $i_{V^{\prime}V}:V^{\prime }\subseteq V:$
The mapping $\dH(i_{V^{\prime}\, V}):\dH_V \map\dH_{V^{\prime}}$
is given by
\begin{eqnarray}
     \dH(i_{V^{\prime}\, V})(\A)&:=&\dastoi{V^{\prime}}{A}\\
     &=&\int_\mathR \l\,  d\big(\bigwedge_{\mu>\l}
     (\delta^o(\hat E^A_\mu)_{V^{\prime}}\big)\\
   &=&\int_\mathR \l\,  d\big(\bigwedge_{\mu>\l}
   (\G(i_{V^\prime\,V})(\hat E^A_\mu)\big)
\end{eqnarray}
for all $\A\in\dG_V$ (where $\G(i_{V^\prime\,V})=\delta^o:
\PV\map\mathcal{P}(V^{\prime})$).
\end{enumerate}
}

It is now clear that, by construction,
$\dastoo{}{A}:=V\mapsto\dastoo{V}{A}$ is a global element of
$\dG$, and $\dastoi{}{A}:=V\mapsto\dastoi{V}{A}$ is a global
element of $\dH$.

De Groote found an example of an element of $\Ga\dG$ that is
\emph{not} of the form $\dastoo{}{A}$ (as mentioned in
\cite{deG05}). The same example can be used to show that there are
global elements of the outer presheaf $\G$ that are not of the
form $\dastoo{}{P}$ for any projection $\P\in\PH$.

On the other hand, we have:
\begin{theorem}\label{DasOfsaOpsInjective}
The mapping
\begin{eqnarray}
      \deli:\BH_\sa&\map&\Ga\dH\\
                        \hat A&\mapsto&\dastoi{}{A}
\end{eqnarray}
from self-adjoint operators in $\BH$ to global sections of the
outer de Groote presheaf is injective. Likewise,
\begin{eqnarray}
    \delo:\BH_\sa&\map&\Ga\dG\\
                        \hat A&\mapsto&\dastoo{}{A}
\end{eqnarray}
is injective.
\end{theorem}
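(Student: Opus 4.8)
The plan is to adapt the argument used earlier to prove that $\dasmap:\PH\map\Ga\G$ is injective, replacing the lattice order on projections by the spectral order on self-adjoint operators. The two essential facts I would rely on are both recorded above: first, that if $\A\in V$ then $\dastoo{V}{A}=\dastoi{V}{A}=\A$, so an operator is reproduced exactly at any context containing it; and second, property 2 in the list of properties of daseinisation, namely that $\dastoo{\mathcal S}{A}\succeq_s\A$ and $\dastoi{\mathcal S}{A}\preceq_s\A$ in the spectral order, for every von Neumann subalgebra $\mathcal S$.

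First I would observe that any non-scalar self-adjoint operator $\A$ generates a commutative von Neumann algebra $V_{\hat A}$, which is a genuine object of $\V{}$, and that $\A\in V_{\hat A}$. Now suppose $\delo(\A)=\delo(\hat B)$, i.e.\ $\dastoo{V}{A}=\dastoo{V}{B}$ for all $V\in\Ob{\V{}}$. Evaluating this equality at the stage $V=V_{\hat A}$ and using the first fact gives $\A=\dastoo{V_{\hat A}}{A}=\dastoo{V_{\hat A}}{B}\succeq_s\hat B$, where the final step is the second fact applied to $\hat B$ in $\mathcal S=V_{\hat A}$. Evaluating instead at $V=V_{\hat B}$ gives, symmetrically, $\hat B=\dastoo{V_{\hat B}}{B}=\dastoo{V_{\hat B}}{A}\succeq_s\A$. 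Since $\succeq_s$ is a genuine partial order on $\BH_\sa$ (as stated when the spectral order was introduced), antisymmetry forces $\A=\hat B$, proving injectivity of $\delo$. The inner case is identical up to reversing the order: from $\deli(\A)=\deli(\hat B)$ I would deduce $\A=\dastoi{V_{\hat A}}{B}\preceq_s\hat B$ and, symmetrically, $\hat B\preceq_s\A$, hence $\A=\hat B$. Either the spectral order or the usual operator order could be used for the final antisymmetry step, but the spectral order is the more natural choice since it is exactly the order with respect to which daseinisation is defined.

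The only genuine subtlety — and really the only place where care is needed — is the degenerate case of a scalar operator $\A=c\hat 1$, for which the generated algebra $V_{\hat A}=\mathC\hat 1$ is precisely the trivial subalgebra $V_0$ that is excluded from $\Ob{\V{}}$. Here I would instead use the fact that a scalar operator belongs to \emph{every} context $V$, so that $\dastoo{V}{A}=c\hat 1=\A$ for all $V$; equality of the global sections then forces $\dastoo{V}{B}=c\hat 1$ at a context containing $\hat B$, and the same spectral-order argument pins down $\hat B=\A$. Beyond this bookkeeping the proof contains no real obstacle: the entire content is the recognition that daseinisation is order-faithful (an operator reappears exactly at a context that contains it, and daseinisation only ever moves operators upward or downward in the spectral order) together with the antisymmetry of that order.
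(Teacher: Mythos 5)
Your proof is correct and is essentially the paper's own argument: both rest on the two facts that $\dastoo{V}{A}=\A$ at any context containing $\A$ and that $\dastoo{V}{A}\succeq_s\A$ (resp.\ $\dastoi{V}{A}\preceq_s\A$) everywhere else. The paper packages this as $\A=\bigwedge_{V\in\Ob{\V{}}}\dastoo{V}{A}$ and $\A=\bigvee_{V\in\Ob{\V{}}}\dastoi{V}{A}$, whereas you compare at the two contexts $V_{\hat A}$ and $V_{\hat B}$ and invoke antisymmetry of $\preceq_s$ — a cosmetic repackaging (your explicit treatment of the scalar case is a small point the paper elides with ``$\hat A$ is contained in at least one context'').
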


\begin{proof}
By construction, $\hat A\geq_s\dastoi{V}{A}$ for all
$V\in\Ob{\V{}}$. Since $\hat A$ is contained in at least one
context, so
\begin{equation}
       \hat A=\bigvee_{V\in\Ob{\V{}}}\dastoi{V}{A},
\end{equation}
where the maximum is taken with respect to the spectral order. If
$\dastoi{}{A}=\dastoi{}{B}$, then we have
\begin{equation}
  \hat A=\bigvee_{V\in\Ob{\V{}}}\dastoi{V}{A}
                        =\bigvee_{V\in\Ob{\V{}}}\dastoi{V}{B}=\hat B.
\end{equation}
Analogously, $\hat A\leq_s\dastoo{V}{A}$ for all $V\in\Ob{\V{}}$,
so
\begin{equation}
                        \hat A=\bigwedge_{V\in\Ob{\V{}}}\dastoo{V}{A}.
\end{equation}
If $\dastoo{}{A}=\dastoo{}{B}$, then we have
\begin{equation}
                        \hat A=\bigwedge_{V\in\Ob{\V{}}}\dastoo{V}{A}
                        =\bigwedge_{V\in\Ob{\V{}}}\dastoo{V}{B}=\hat B.
\end{equation}
\end{proof}

The same argument also holds more generally for arbitrary von
Neumann algebras, not just $\BH$.

\section{The Presheaves \ps{$\sp(\A)^{\succeq}$}, \ps{$\mathR^{\succeq}$}\ and \ps{$\mathR^\leftrightarrow$}}
\label{Sec:psSR}
\subsection{Background to the Quantity-Value Presheaf \ps{$\R$}.}
Our goal now is to construct a `quantity-value' presheaf $\ps{\R}$
with the property that inner and/or outer daseinisation of an
self-adjoint operator $\A$ can be used to define an arrow, \ie\ a
natural transformation, from $\Sig$ to $\ps{\R}$.\footnote{In
fact, we will define several closely related presheaves that can
serve as a quantity-value object.}

The arrow corresponding to a self-adjoint operator $\A\in\BH$ is
denoted for now by $\breve{A}:\Sig\map\ps{\R}$. At each stage $V$,
we need a mapping
\begin{eqnarray}
              \breve{A}_V:\Sig_V &\map& \ps{\R}_V\\
              \l &\mapsto& \breve{A}_V(\l),
\end{eqnarray}
and we make the basic assumption that this mapping is given by
evaluation. More precisely, $\l\in\Sig_V$ is a spectral
element\footnote{A `spectral element', $\l\in\Sig_V$ of $V$, is a
multiplicative, linear functional $\l:V\map\mathC$ with
$\brak\l{\hat 1}=1$, see also Def. \ref{Def_SpectralPresheaf}.} of
$V$ and hence can be evaluated on operators lying in $V$. And,
while $\A$ will generally not lie in $V$, both the inner
daseinisation $\dastoi{V}{A}$ and the outer daseinisation
$\dastoo{V}{A}$ do.

Let us start by considering the operators $\dastoo{V}{A}$, $V\in
\Ob{\V{}}$. Each of these is a self-adjoint operator in the
commutative von Neumann algebra $V$, and hence, by the spectral
theorem, can be represented by a function, (the Gel'fand
transform\footnote{This use of the `overline' symbol for the
Gel'fand transform should not be confused with our later use of
the same symbol to indicate a co-presheaf.})
$\GT{\dastoo{V}{A}}:\Sig_V\map {\rm sp}(\dastoo{V}{A})$, with
values in the spectrum ${\rm sp}(\dastoo{V}{A})$ of the
self-adjoint operator $\dastoo{V}{A}$. Since the spectrum of a
self-adjoint operator is a subset of $\mathR$, we can also write
$\GT{\dastoo{V}{A}}:\Sig_V\map\mathR$. The question now is whether
the collection of maps $\GT{\dastoo{V}{A}}:\Sig_V\map\mathR$,
$V\in\Ob{\V{}}$, can be regarded as an arrow from $\Sig$ to some
presheaf $\ps{\R}$.

To answer this we need to see how these operators behave as we go
`down a chain' of sub-algebras $V^\prime\subseteq V$. The first
remark is that if $V^\prime\subseteq V$ then
$\dastoo{V^\prime}{A}\succeq\dastoo{V}{A}$. When applied to the
Gel'fand transforms, this leads to the equation
\begin{equation}
\GT{\dastoo{V^\prime}{A}}(\l|_{V^{\prime}})\ge
\GT{\dastoo{V}{A}}(\l)
                        \label{dasBVVprime>V}
\end{equation}
for all $\l\in\Sig_V$, where $\l|_{V^\prime}$ denotes the
restriction of the spectral element $\l\in\Sig_V$ to the
sub-algebra $V^\prime\subseteq V$. However, the definition of the
spectral presheaf is such that
$\l|_{V^\prime}=\Sig(i_{V^\prime\,V})(\l)$, and hence
\eq{dasBVVprime>V} can be rewritten as
\begin{equation}
\GT{\dastoo{V^\prime}{A}}\big(\Sig(i_{V^\prime\,V})(\l) \big)
        \geq \GT{\dastoo{V}{A}}(\l) \label{dasBVVprime>V2}
\end{equation}
for all $\l\in\Sig_V$.

It is a standard result that the Dedekind real number object,
$\ps{\mathR}$, in a presheaf topos $\Set^{\mathcal C^{op}}$ is the
\emph{constant} functor from $\mathcal C^{op}$ to $\mathR$
\cite{MM92}. It follows that the family of Gel'fand transforms,
$\GT{\dastoo{V}{A}}$, $V\in\Ob{\V{}}$,  of the daseinised
operators $\dastoo{V}A$, $V\in\Ob{\V{}}$, does \emph{not} define
an arrow from $\Sig$ to $\ps{\mathR}$, as this would require an
equality in \eq{dasBVVprime>V2}, which is not true. Thus the
quantity-value presheaf, $\ps{\R}$, in the topos $\SetH{}$ is
\emph{not} the real-number object $\ps{\mathR}$, although clearly
$\ps{\R}$ has \emph{something} to do with the real numbers. We
must take into account the growth of these real numbers as we go
from $V$ to smaller sub-algebras $V^{\prime}$. Similarly, if we
consider inner daseinisation, we get a series of falling real
numbers.

The presheaf, $\ps{\R}$, that we will choose, and which will be
denoted by $\PR{\mathR}$, incorporates both aspects (growing and
falling real numbers).

\subsection{Definition of the Presheaves \ps{$\sp(\A)^{\succeq}$}, \ps{$\mathR^{\succeq}$}\ and \ps{$\mathR^\leftrightarrow$}.}
\label{SubSec:SR} The inapplicability of the real-number object
$\underline{\mathR}$ may seem strange at first,\footnote{Indeed,
it puzzled us for a while!} but actually it is not that
surprising. Because of the Kochen-Specker theorem, we do not
expect to be able to assign (constant) real numbers as values of
physical quantities, at least not globally. Instead, we draw on
some recent results of M. Jackson \cite{Jac06}, obtained as part
of his extensive study of measure theory on a topos of presheaves.
Here, we use a single construction in Jackson's thesis: the
presheaf of `order-preserving functions' over a partially ordered
set---in our case, $\V{}$. In fact, we will need both
order-reversing and order-preserving functions.

\begin{definition}
Let $(\mathcal{Q},\preceq)$ and $(\mathcal{P},\preceq)$ be
partially ordered
sets. A function%
\begin{equation}
                \mu:\mathcal{Q}\map\mathcal{P}%
\end{equation}
is  \emph{order-preserving} if $q_{1}\preceq q_{2}$ implies $\mu
(q_{1})\preceq\mu(q_{2})$ for all $q_1,q_2\in\mathcal Q$. It is
\emph{order-reversing} if $q_{1}\preceq q_{2}$ implies
$\mu(q_{1})\succeq\mu(q_{2})$.  We denote by
$\mathcal{OP(Q},\mathcal{P)}$ the set of order-preserving
functions $\mu:\mathcal{Q}\map\mathcal{P}$, and by
$\mathcal{OR(Q},\mathcal{P)}$ the set of order-reversing
functions.
\end{definition}
\noindent We note that if $\mu$ is order-preserving, then $-\mu$
is order-reversing, and vice versa.

Adapting Jackson's definitions slightly, if $\cal P$ is any
partially-ordered set, we have the following. {\definition The
\emph{${\cal P}$-valued presheaf, $\ps{{\cal P}}^\succeq$, of
order-reversing functions over $\V{}$} is defined as follows:
\begin{enumerate}
\item[(i)] On objects $V\in\Ob{\V{}}$:
\begin{equation}
     \ps{{\cal P}}^\succeq_V:=\{\mu:\downarrow\!\!V\map {\cal P}\mid
     \mu\in\mathcal{OR}(\downarrow\!\! V,{\cal P})\}
                                                    \label{Def:PGe}
\end{equation}
where $\downarrow\!\!V\subset\Ob{\V{}}$ is the set of all von
Neumann sub-algebras of $V$.

\item[(ii)] On morphisms $i_{V^{\prime}V}:V^{\prime }\subseteq V:$
The mapping $\ps{{\cal P}}^\succeq(i_{V^{\prime}\, V}):\ps{{\cal
P}}^\succeq_V \map \ps{{\cal P}}^\succeq_{V^{\prime}}$ is given by
\begin{equation}
   \ps{{\cal P}}^\succeq(i_{V^{\prime}\, V})(\mu):=
   \mu_{|_{V^\prime}}
\end{equation}
where $\mu_{|_{V^\prime}}$ denotes the restriction of the function
$\mu$ to $\downarrow\!\!V^\prime\subseteq\downarrow\!\!V$.
\end{enumerate}
} \noindent Jackson uses order-preserving functions with
$\mathcal{P} := [0,\infty)$ (the non-negative reals), with the
usual order $\leq$.

Clearly, there is an analogous definition of the ${\cal P}$-valued
presheaf, $\ps{{\cal P}}^\preceq$, of order-preserving functions
from $\downarrow\!\! V$ to ${\cal P}$. It can be shown that
$\ps{{\cal P}}^\succeq$ and $\ps{{\cal P}}^\preceq$ are isomorphic
objects in $\SetH{}$.

Let us first consider $\ps{{\cal P}}^\succeq$. For us, the key
examples for the partially ordered set ${\cal P}$ are (i)
$\mathR$, the real numbers with the usual order $\leq$, and (ii)
$\rm{sp}(\A)\subset\mathR$, the spectrum of some bounded
self-adjoint operator $\A$, with the order $\leq$ inherited from
$\mathR$. Clearly, the associated presheaf $\SpA$ is a sub-object
of the presheaf $\SR$.

Now let $\A\in\BH_\sa$, and let $V\in \Ob{\V{}}$. Then to each
$\l\in\Sig_{V}$ there is associated the function
\begin{equation}
\dasBVo{V}{A}(\l):\downarrow\!\!V\map \sp(\A),
\end{equation}
given by
\begin{eqnarray}
\left(\dasBVo{V}{A}(\l)\right)  (V^{\prime})
    &:=& \GT{\dastoo{V^{\prime}}{A}}(\Sig (i_{V^{\prime}V})(\l))\\
    &=& \GT{\dastoo{V^{\prime}}{A}}(\l|_{V^{\prime}})\\
    &=& \brak{\l|_{V^{\prime}}}{\dastoo{V^{\prime}}{A}}\\
    &=& \brak\l{\dastoo{V^{\prime}}{A}}         \label{order-reversing_function}
\end{eqnarray}
for all $V^{\prime}\subseteq V$. We note that as $V^{\prime}$
becomes smaller, $\dastoo{V^{\prime}}{A}$ becomes larger (or stays
the same) in the spectral order, and hence in the usual order on
operators. Therefore, $\dasBVo{V}{A}(\l):\downarrow\!\!
V\map\sp(\A)$ is an \emph{order-reversing function}, for each
$\l\in\Sig_V$.

It is worth noting that daseinisation of $\A$, i.e., the
approximation of the self-adjoint operator $\A$ in the
\emph{spectral} order, allows to define a function
$\dasBVo{V}{A}(\l)$ (for each $\l\in\Sig_{V}$) with values in the
spectrum of $\A$, since we have
$\sp(\dastoo{V}{A})\subseteq\sp(\A)$, see
\eq{sp(das(A))IsInsp(A)}. If we had chosen an approximation in the
usual \emph{linear} order on $\BH_\sa$, then the approximated
operators would not have a spectrum that is contained in $\sp(\A)$
in general.

Let
\begin{eqnarray}
                \dasBVo{V}{A}:\Sig_V &\map& \SpA_V\\
                \l &\mapsto& \dasBVo{V}{A}(\l)
\end{eqnarray}
denote the set of order-reversing functions from $\downarrow\!\!
V$ to $\sp(\A)$ obtained in this way. We then have the following,
fundamental, result which can be regarded as a type of
`non-commutative' spectral theorem in which each bounded,
self-adjoint operator $\A$ is mapped to an arrow from $\Sig$ to
$\SR$:
\begin{theorem}\label{Th:ST}
The mappings $\dasBVo{V}{A}$, $V\in\Ob{\V{}}$, are the components
of a natural transformation/arrow $\dasBo{A}:\Sig\map\SpA$.
\end{theorem}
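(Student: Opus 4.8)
The plan is to verify directly the two conditions required of a natural transformation between presheaves on $\V{}$: first, that each purported component $\dasBVo{V}{A}:\Sig_V\map\SpA_V$ really lands in $\SpA_V$ (\ie\ that $\dasBVo{V}{A}(\l)$ is a genuine order-reversing function $\downarrow\!\!V\map\sp(\A)$), and second, that the family $\{\dasBVo{V}{A}\}_{V\in\Ob{\V{}}}$ is compatible with the restriction maps of $\Sig$ and of $\SpA$, \ie\ that the naturality square commutes for every inclusion $i_{V'V}:V'\subseteq V$.

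For well-definedness, I would fix $\l\in\Sig_V$ and examine the assignment $V'\mapsto(\dasBVo{V}{A}(\l))(V')=\brak{\l}{\dastoo{V'}{A}}$ for $V'\subseteq V$. Two facts suffice. First, the values lie in $\sp(\A)$: since $\dastoo{V'}{A}\in V'$, the number $\brak{\l}{\dastoo{V'}{A}}=\GT{\dastoo{V'}{A}}(\l|_{V'})$ is a value of a Gel'fand transform and hence lies in $\sp(\dastoo{V'}{A})$, which is contained in $\sp(\A)$ by \eq{sp(das(A))IsInsp(A)}. Second, the function is order-reversing: if $V''\subseteq V'\subseteq V$, then outer daseinisation in the smaller context is spectrally larger, $\dastoo{V''}{A}\succeq_s\dastoo{V'}{A}$ (immediate from the characterisation of $\dastoo{\mathcal S}{A}$ as the spectral minimum of the self-adjoint operators in $\mathcal S$ dominating $\A$, since a smaller algebra offers fewer candidates and hence a larger minimum); as $\succeq_s$ implies the usual order and $\l$ is a positive state, this gives $\brak{\l}{\dastoo{V''}{A}}\geq\brak{\l}{\dastoo{V'}{A}}$, which is exactly \eq{dasBVVprime>V}. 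Thus $\dasBVo{V}{A}(\l)\in\SpA_V$.

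The naturality square is the heart of the matter, but it turns out to be essentially forced by the functoriality of $\Sig$. Fixing $i_{V'V}:V'\subseteq V$, I must show $\SpA(i_{V'V})\circ\dasBVo{V}{A}=\dasBVo{V'}{A}\circ\Sig(i_{V'V})$ as maps $\Sig_V\map\SpA_{V'}$. Taking $\l\in\Sig_V$ and evaluating both sides at an arbitrary $V''\subseteq V'$: the left-hand side restricts the order-reversing function $\dasBVo{V}{A}(\l)$ to $\downarrow\!\!V'$ and reads off its value at $V''$, giving $\brak{\l}{\dastoo{V''}{A}}$; the right-hand side first forms $\l|_{V'}=\Sig(i_{V'V})(\l)$ and then evaluates $\dasBVo{V'}{A}(\l|_{V'})$ at $V''$, giving $\brak{(\l|_{V'})|_{V''}}{\dastoo{V''}{A}}$. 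Since $(\l|_{V'})|_{V''}=\l|_{V''}$ by functoriality of the restriction maps of $\Sig$, and since $\dastoo{V''}{A}\in V''\subseteq V$ is evaluated equally well by $\l$ or by its restriction, the two expressions coincide. Hence the square commutes for every inclusion, and the components assemble into an arrow $\dasBo{A}:\Sig\map\SpA$.

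The only genuinely non-formal ingredients are the two facts invoked in the well-definedness step — the spectrum containment $\sp(\dastoo{V'}{A})\subseteq\sp(\A)$ and the context-monotonicity of outer daseinisation in the spectral order — both already established earlier in the text. Once these are granted, naturality reduces to the transitivity of restriction of Gel'fand spectral elements, so I do not anticipate a real obstacle; the only point demanding care is the bookkeeping of the order direction (order-reversing rather than order-preserving) and the observation that evaluation by $\l$ and by $\l|_{V''}$ agree on operators already lying in the smaller algebra.
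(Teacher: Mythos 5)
Your proof is correct and follows essentially the same route as the paper's: the paper establishes the well-definedness facts (the spectrum containment \eq{sp(das(A))IsInsp(A)} and the order-reversing property via the monotonicity of outer daseinisation) in the text immediately preceding the theorem, and its proof of naturality is exactly your computation, stated tersely as ``$(\dasBVo{V}{A}(\l))|_{V^{\prime}}= \dasBVo{V^{\prime}}{A}(\l|_{V^{\prime}})$, which follows directly from the definitions.'' Your version merely unpacks this by evaluating both sides at an arbitrary $V''\subseteq V'$ and invoking $(\l|_{V'})|_{V''}=\l|_{V''}$, which is the same argument in more detail.
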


\begin{proof}
We only have to prove that, whenever $V^{\prime}\subset V$, the
diagram
\begin{center}
\setsqparms[1`1`1`1;1000`700]                                 
\square[\Sig_V`\SpA_V`\Sig_{V^{\prime}}`\SpA_{V^{\prime}};    
\dasBVo{V}{A}```\dasBVo{V^{\prime}}{A}]                       
\end{center}

\noindent commutes. Here, the vertical arrows are the restrictions
of the relevant presheaves from the stage $V$ to
$V^\prime\subseteq V$.

In fact, the commutativity of the diagram follows directly from
the definitions. For each $\l \in\Sig_{V}$, the composition of the
upper arrow and the right vertical arrow gives
\begin{equation}
    (\dasBVo{V}{A}(\l))|_{V^{\prime}}=
    \dasBVo{V^{\prime}}{A}(\l|_{V^{\prime}}),
\end{equation}
which is the same function that we get by first restricting $\l$
from $\Sig_V$ to $\Sig_{V^{\prime}}$ and then applying
$\dasBVo{V^{\prime}}{A}$.
\end{proof}

\

In this way, to each physical quantity $\A$ in quantum theory
there is assigned a natural transformation $\dasBo{A}$ from the
state object $\Sig$ to the presheaf $\SpA$. Since $\SpA$ is a
sub-object of $\SR$ for each $\A$, $\dasBo{A}$ can also be seen as
a natural transformation/arrow from $\Sig$ to $\SR$. Hence the
presheaf $\SR$ in the topos $\SetH{}$ is one candidate for the
quantity-value object of quantum theory. Note that it follows from
Theorem \ref{DasOfsaOpsInjective} that the mapping
\begin{eqnarray}
    \theta:\BH_\sa &\map& \rm {Hom}_{\SetH{}}(\Sig,\SR)\\
                \A &\mapsto& \dasBo{A}
\end{eqnarray}
is injective. \footnote{Interestingly, these results all carry
over to  an arbitrary von Neumann algebra $\mathcal
N\subseteq\BH$. In this way, the formalism is flexible enough to
adapt to situations where we have symmetries (which can described
mathematically by a von Neumann algebra $\mathcal N$ that has a
non-trivial commutant) and super-selection rules (which
corresponds to $\mathcal N$ having a non-trivial centre).}

If $S$ denotes our quantum system, then, on the level of the
formal language $\L{S}$, we expect the mapping $A\map\hat A$ to be
injective, where $A$ is a function symbol of signature
$\Si\map\R$. It follows that we have obtained a a faithful
representation of these function symbols by arrows
$\dasBo{A}:\Sig\map\SR$ in the topos $\SetH{}$.

Similarly, there is an order-preserving function
\begin{equation}
                \dasBVi{V}{A}(\l): \;\downarrow\!\! V\map\sp(\A),
\end{equation}
that is defined for all $V^{\prime}\subseteq V$ by
\begin{eqnarray}
   \left(  \dasBVi{V}{A}(\l)\right)  (V^{\prime})
 &=& \GT{\delta^{i}(\A)_{V^{\prime}}}(\Sig(i_{V^{\prime}V})(\l))\\
 &=& \brak\l{\delta^{i}(\A)_{V^{\prime}}}.              \label{order-preserving_function}
\end{eqnarray}
Since $\dastoi{V^{\prime}}{A}$ becomes smaller (or stays the same)
as $V^{\prime}$ gets smaller, $\dasBVi{V}{A}(\l)$ indeed is an
\emph{order-preserving} function from $\downarrow\!\! V$ to
$\sp(\A)$ for each $\l\in\Sig_V$. Again, approximation in the
spectral order (in this case from below) allows us to define a
function with values in $\sp(\A)$, which would not be possible
when using the linear order.

Clearly, we can use the functions
$\dasBVi{V}{A}(\l),\;\l\in\Sig_{V}$, to define a natural
transformation $\dasBi{A}:\Sig\map\OP$ from the spectral presheaf,
$\Sig$, to the presheaf $\OP$ of real-valued, order-preserving
functions on $\downarrow\!\! V$. The components of $\dasBi{A}$ are
\begin{eqnarray}
\dasBVi{V}{A}:\Sig_V &\map& \ps{\rm{sp}(\A)^\preceq}_V\\
                \l &\mapsto& \dasBVi{V}{A}(\l).
\end{eqnarray}
It follows from Theorem \ref{DasOfsaOpsInjective} that the mapping
from self-adjoint operators to natural transformations $\dasBi{A}$
is injective.

The functions obtained from inner and outer daseinisation can be
combined to give yet another presheaf, and one that will be
particularly useful for the physical interpretation of these
constructions. The general definition is the following.
\begin{definition}
Let $\mathcal{P}$ be a partially-ordered set. The
\emph{$\mathcal{P}$-valued presheaf, $\PR{\mathcal P}$, of
order-preserving and order-reversing functions on $\V{}$} is
defined as follows:

(i) On objects $V\in Ob(\V{})$:
\begin{equation}
  \ps{\mathcal{P}^{\leftrightarrow}}_{V}
  :=\{(\mu,\nu)\mid\mu\in\mathcal{OP}(\downarrow\!\!V,\mathcal{P}),
\nu\in\mathcal{OR}(\downarrow\!\!V,\mathcal{P}),\mu\leq\nu\}
\end{equation}
where $\downarrow\!\! V\subset\Ob{\V{}}$ is the set of all
sub-algebras $V^{\prime}$ of $V$. Note that we introduce the
condition $\mu\leq\nu$, i.e., for all $V'\in\downarrow\!\!V$ we
demand $\mu(V')\leq\nu(V')$.

(ii) On morphisms $i_{V^{\prime}V}:V^{\prime}\subseteq V$:
\begin{eqnarray}
  \ps{\mathcal{P}^{\leftrightarrow}}(i_{V^{\prime}V}):\ps
 {\mathcal{P}^{\leftrightarrow}}_{V}  &\longrightarrow&
 \ps{\mathcal{P}^{\leftrightarrow}}_{V^{\prime}}\\
 (\mu,\nu)  &\longmapsto& (\mu|_{V^{\prime}},\nu|_{V^{\prime}}),
\end{eqnarray}
where $\mu|_{V^{\prime}}$ denotes the restriction of $\mu$ to
$\downarrow\!\! V^{\prime}\subseteq\downarrow\!\! V$, and
analogously for $\nu|_{V^{\prime}}$.
\end{definition}
Note that since we have the condition $\mu\leq\nu$ in (i), the
presheaf $\PR{\mathcal P}$ is not simply the product of the
presheaves $\ps{\mathcal{P}}^\succeq$ and
$\ps{\mathcal{P}}^\preceq$.

As we will discuss shortly, the presheaf, $\PR{\mathR}$, of
order-preserving and order-reversing, real-valued functions is
closely related to the `$k$-extension' of the presheaf $\SR$ (see
the Appendix for details of the $k$-extension procedure).

Now let
\begin{equation}
  \dasBV{V}{A}:=\left(  \dasBVi{V}{A}(\cdot),\dasBVo{V}{A}
  (\cdot)\right):
  \Sig_{V}\map\PR{\mathR}_{V}
\end{equation}
denote the set of all pairs of order-preserving and
order-reversing functions from $\downarrow\!\! V$ to $\mathR$ that
can be obtained from inner and outer daseinisation. It is easy to
see that we have the following result:

\begin{theorem}\label{Th:ST_3}
The mappings $\dasBV{V}{A}$, $V\in \Ob{\V{}}$, are the components
of a natural transformation $\dasB{A}:\Sig\map\PR{\mathR}$.
\end{theorem}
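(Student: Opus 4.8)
The plan is to prove Theorem \ref{Th:ST_3} by reducing it to the two naturality results already established for the inner and outer daseinisation maps, namely Theorem \ref{Th:ST} (for $\dasBo{A}:\Sig\map\SpA$, and hence into $\SR$) and its order-preserving analogue $\dasBi{A}:\Sig\map\OP$. The key observation is that by definition the component map $\dasBV{V}{A}$ is simply the pairing $\left(\dasBVi{V}{A}(\cdot),\dasBVo{V}{A}(\cdot)\right)$, so each of its two slots is a component of a natural transformation that we already know to exist. Thus the bulk of the verification is free, and what remains is to check (i) that the pair genuinely lands in $\PR{\mathR}_V$ rather than merely in the product $\OP_V\times\SR_V$, and (ii) that the restriction maps of $\PR{\mathR}$ are compatible with this pairing.

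First I would verify the well-definedness of $\dasBV{V}{A}$, i.e.\ that its image lies in $\PR{\mathR}_V$. By the definition of $\PR{\mathR}$ this requires three things: that $\dasBVi{V}{A}(\l)$ is order-preserving (established in the text just after \eq{order-preserving_function}), that $\dasBVo{V}{A}(\l)$ is order-reversing (established after \eq{order-reversing_function}), and crucially that $\dasBVi{V}{A}(\l)\leq\dasBVo{V}{A}(\l)$ pointwise on $\downarrow\!\!V$. This last inequality is the one genuinely new thing to check here, but it follows immediately from the relation $\dastoi{V^\prime}{A}\preceq_s\dastoo{V^\prime}{A}$ recorded in \eq{dasiE<dasioE}: applying the Gel'fand transform (which is order-preserving, since $\A\preceq_s\hat B$ implies $\A\preceq\hat B$) at each $V^\prime\subseteq V$ gives $\brak{\l}{\dastoi{V^\prime}{A}}\leq\brak{\l}{\dastoo{V^\prime}{A}}$, which is precisely $\mu(V^\prime)\leq\nu(V^\prime)$ with $\mu=\dasBVi{V}{A}(\l)$ and $\nu=\dasBVo{V}{A}(\l)$.

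Next I would verify naturality, namely that for every inclusion $V^\prime\subseteq V$ the square with top arrow $\dasBV{V}{A}$, bottom arrow $\dasBV{V^\prime}{A}$, and vertical arrows the restriction maps of $\Sig$ and of $\PR{\mathR}$ commutes. Since the restriction map $\PR{\mathR}(i_{V^\prime V})$ acts componentwise as $(\mu,\nu)\mapsto(\mu|_{V^\prime},\nu|_{V^\prime})$, the square for the pair commutes if and only if the two separate squares for the order-preserving slot and the order-reversing slot commute. But those are exactly the naturality squares verified in the proof of Theorem \ref{Th:ST} and in the corresponding statement for $\dasBi{A}$; concretely, one uses the identity $\left(\dasBVo{V}{A}(\l)\right)|_{V^\prime}=\dasBVo{V^\prime}{A}(\l|_{V^\prime})$ (and its inner analogue), which holds because both sides, evaluated at any $V^{\prime\prime}\subseteq V^\prime$, equal $\brak{\l}{\dastoo{V^{\prime\prime}}{A}}$. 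Assembling the two componentwise commutative squares yields commutativity of the square for $\dasB{A}$.

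I do not expect a serious obstacle here; the theorem is essentially a packaging result once Theorems \ref{Th:ST} and \ref{DasOfsaOpsInjective} and the spectral-order inequality \eq{dasiE<dasioE} are in hand. The only point requiring genuine (if brief) attention is the inequality constraint $\mu\leq\nu$ built into the definition of $\PR{\mathR}$, since without it one would only obtain an arrow into the product presheaf; as noted above this is settled by \eq{dasiE<dasioE} together with the monotonicity of the Gel'fand transform. The remaining work is the routine commutative-square chase, which I would dispatch by appealing to the already-proven naturality of each slot rather than re-deriving it.
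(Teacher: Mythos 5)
Your proof is correct and follows essentially the route the paper intends: the paper dispatches this theorem with ``it is easy to see,'' the implicit argument being exactly your componentwise reduction to the naturality of $\dasBo{A}$ (Theorem \ref{Th:ST}) and of $\dasBi{A}$, together with the observation that $\dastoi{V'}{A}\preceq_s\dastoo{V'}{A}$ forces the pair into $\PR{\mathR}_V$ rather than merely into the product presheaf. If anything, your explicit check of the constraint $\mu\leq\nu$ via the Gel'fand transform is more careful than what the paper records.
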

Again from Theorem \ref{DasOfsaOpsInjective}, the mapping from
self-adjoint operators to natural transformations, $\hat A\rightarrow%
\dasB{A}$, is injective.

Since $\dasBVi{V}{A}(\l)\leq\dasBVo{V}{A}(\l)$ for all
$\l\in\Sig_V$, we can interpret each pair
$(\dasBVi{V}{A}(\l),\dasBVo{V}{A}(\l))$ of values as an
\emph{interval}, which gives a first hint at the physical
interpretation.

\subsection{Inner and Outer Daseinisation from Functions on Filters}
\label{SubSec_DasFromFctsOnFilters} There is a close relationship
between inner and outer daseinisation, and certain functions on
the filters in the projection lattice $\PH$ of $\BH$. We give a
summary of these results here: details can be found in de Groote's
work \cite{deG05,deG05b}, the article \cite{Doe05b}, and a
forthcoming paper \cite{Doe07}. This subsection serves as a
preparation for the physical interpretation of the arrows
$\dasB{A}:\Sig\map\PR{\mathR}$.

\paragraph{Filter bases, filters and ultrafilters.} We first need
some basic definitions. Let $\mathbb{L}$ be a lattice with zero
element $0$. A subset $f$ of $\mathbb{L}$ is called a \emph{filter
base} if (i) $0\neq f$ and (ii) for all $a,b\in f$, there is a
$c\in f$ such that $c\leq a\wedge b$.

A subset $F$ of a lattice $\mathbb{L}$ with zero element $0$ is a
(proper) \emph{filter} (or \emph{dual ideal}) if (i) $0\notin F$,
(ii) $a,b\in F$ implies $a\wedge b\in F$ and (iii) $a\in F$ and
$b\geq a$ imply $b\in F$. In other words, a filter is an upper set
in the lattice $\mathbb{L}$ that is closed under finite minima.

By Zorn's lemma, every filter is contained in a maximal filter.
Obviously, such a maximal filter is also a maximal filter base.

Let $\mathbb{L}'$ be a sublattice of $\mathbb{L}$ (with common
$0$), and let $F'$ be a filter in $\mathbb{L}'$. Then $F'$, seen
as a subset of $\mathbb{L}$, is a filter base in $\mathbb{L}$. The
smallest filter in $\mathbb{L}$ that contains $F'$ is the
\emph{cone over} $F'$ \emph{in} $\mathbb{L}$:
\begin{equation}
                        \mathcal{C}_{\mathbb{L}}(F'):=\{b\in\mathbb{L}\mid\exists
                        a\in F':a\leq b\}.
\end{equation}
This is nothing but the upper set $\uparrow\!\!F'$ of $F'$ in
$\mathbb{L}$.

In our applications, $\mathbb{L}$ typically is the lattice $\PH$
of projections in $\BH$, and $\mathbb{L}'$ is the lattice
$\mathcal{P}(V)$ of projections in an abelian sub-algebra $V$.

If $\mathbb{L}$ is a \emph{Boolean} lattice, i.e., if it is a
distributive lattice with minimal element $0$ and maximal element
$1$, and a complement (negation)
$\neg:\mathbb{L}\rightarrow\mathbb{L}$ such that $a\vee\neg a=1$
for all $a\in\mathbb{L}$, then we define an \emph{ultrafilter}
$\tilde{F}$ to be a maximal filter in $\mathbb{L}$. An ultrafilter
$\tilde{F}$ is characterised by the following property: for all
$a\in\mathbb{L}$, either $a\in\tilde{F}$ or $\neg a\in\tilde{F}$.
This can easily be seen: we have $a\vee\neg a=1$ by definition.
Let us assume that $\tilde{F}$ is an ultrafilter and
$a\notin\tilde{F}$. This means that there is some $b\in\tilde{F}$
such that $b\wedge a=0$. Using distributivity of the lattice
$\mathbb{L}$, we get
\begin{equation}
b=b\wedge(a\vee\neg a)=(b\wedge a)\vee(b\wedge\neg a)=b\wedge\neg
a,
\end{equation}
so $b\leq\neg a$. Since $b\in\tilde{F}$ and $\tilde{F}$ is a
filter, this implies $\neg a\in\tilde{F}$. Conversely, if $\neg
a\notin\tilde{F}$, we obtain $a\in\tilde{F}$.

The projection lattice $\mathcal{P}(V)$ of an abelian von Neumann
algebra $V$ is a Boolean lattice. The maximal element is the
identity operator $\widehat{1}$ and, as we saw earlier, the
complement of a projection is given as
$\neg\hat\a=\hat{1}-\hat\a$. Each ultrafilter $\tilde{F}$ in
$\mathcal{P}(V)$ hence contains either $\hat\a$ or
$\hat{1}-\hat\a$ for all $\hat\a\in\mathcal{P}(V)$.

\paragraph{Spectral elements and ultrafilters.} Let $V\in
\Ob{\V{}}$, and let $\l\in\Sig_{V}$ be a spectral element of the
von Neumann algebra $V$. This means that $\l$ is a multiplicative
state of $V$. For all projections $\hat\a\in\mathcal{P}(V)$, we
have
\begin{equation}
    \brak\l{\hat\a}=\brak\l{\hat\a^{2}}=
        \brak\l{\hat\a}\brak\l{\hat\a},
\end{equation}
and so $\brak\l{\hat\a}\in\{0,1\}$. Moreover,
$\brak\l{\hat{0}}=0$, $\brak\l{\hat{1}}=1$, and if
$\brak\l{\hat\a}=0$, then $\brak\l{\hat{1}-\hat\a}=1$ (since
$\brak\l{\hat\a}+ \brak\l{\hat{1}-\hat\a}=\brak\l{\hat{1}}$).
Hence, for each $\hat\a \in\mathcal{P}(V)$ we have either
$\brak\l{\hat\a}=1$ or $\brak\l{\hat{1}-\hat\a}=1$. This shows
that the family
\begin{equation}
    F_{\l}:=\{\hat\a\in\mathcal{P}(V)\mid\brak\l{\hat\a}=1\}
\end{equation}
of projections is an ultrafilter in $\mathcal{P}(V)$. Conversely,
each $\l\in\Sig_{V}$ is uniquely determined by the set
$\{\brak\l{\hat\a}\mid\hat\a\in\mathcal{P}(V)\}$ and hence by an
ultrafilter in $\mathcal{P}(V)$. This shows that there is a
bijection between the set $\mathcal{Q}(V)$ of ultrafilters in
$\mathcal{P}(V)$ and the Gel'fand spectrum $\Sig_{V}$.

\paragraph{Observable and antonymous functions.} Let $\mathcal{N}$ be
a von Neumann algebra, and let $\mathcal{F(N)}$ be the set of
filters in the projection lattice $\mathcal{P(N)}$ of
$\mathcal{N}$. De Groote has shown \cite{deG05b} that to each
self-adjoint operator $\A\in\mathcal{N}$, there corresponds a,
so-called, `observable function' $f_{\A}:\mathcal{F(N)}
\map\sp(\A)$. If $\mathcal{N}$ is abelian, $\mathcal{N}=V$, then
$f_{\A}|_{\mathcal{Q}(V)}$ is just the Gel'fand transform of $\A$.
However, it is striking that $f_{\A}$ can be defined even if
$\mathcal{N}$ is \emph{non}-abelian; for us, the important example
is $\mathcal{N}=\BH$.

If $\{\hat{E}_{\mu}^{A}\}_{\mu\in\mathR}$ is the spectral family
of $\A$, then $f_{\A}$ is defined as
\begin{eqnarray}
        f_{\A}:\mathcal{F(N)} &\map& \sp(\A)\nonumber\\
        F &\mapsto& \inf\{\mu\in\mathR\mid\hat{E}_{\mu}^A\in F\}.
\end{eqnarray}
Conversely, given a bounded function $f:\mathcal{D(H)}\map\mathR$
with certain properties, one can find a unique self-adjoint
operator $\A\in\BH$ such that $f=f_{\A}$.

It can be shown that each observable function is completely
determined by its restriction to the space of maximal filters
\cite{deG05b}. Let $Q(\mathcal{N})$ denote the space of maximal
filters in $\mathcal{P(N)}$. The sets
\begin{equation}
                        \mathcal{Q}_\P(\mathcal{N}):=\{F\in\mathcal{Q(N)}\mid\P\in F\},
                        \ \ \P\in\mathcal{P(N)},
\end{equation}
form the base of a totally disconnected topology on
$\mathcal{Q(N)}$. Following de Groote, this space is called the
\emph{Stone spectrum} of $\mathcal{N}$. If $\mathcal{N}$ is
abelian, $\mathcal{N}=V$, then, upon the identification of maximal
filters (which are ultrafilters) in $\mathcal{P}(V)$ and spectral
elements in $\Sig_V$, the Stone spectrum $\mathcal{Q}(V)$ is the
Gel'fand spectrum $\Sig_{V}$ of $V$.

This shows that for an arbitrary von Neumann algebra
$\mathcal{N}$, the Stone spectrum $\mathcal{Q(N)}$ is a
generalisation of the Gel'fand spectrum (the latter is only
defined for abelian algebras). The observable function $f_{\A}$ is
a generalisation of the Gel'fand transform of $\A$.

We want to show that the observable function $f_{\dastoo{A}{V}}$
of the outer daseinisation of $\A$ to $V$ can be expressed by the
observable function $f_{\A}$ of $\A$ directly. Since this works
for all $V\in\Ob{\V{}}$, we obtain a nice encoding of all the
functions $f_{\dastoo{V}{A}}$ and hence of the self-adjoint
operators $\dastoo{V}{A}$. The result (already shown in
\cite{deG05}) is that, for all stages $V\in\Ob{\V{}}$ and all
filters $F$ in $\mathcal{F}(V)$,
\begin{equation}\label{Eq_f_dastooVA(I)=f_A(C(I))}
                f_{\dastoo{V}{A}}(F)=f_{\A}(\mathcal{C}_{\BH}(F)).
\end{equation}
We want to give an elementary proof of this. We need

\begin{lemma} \label{L_deltaT^-1(D)=Cone(D)}
Let $\mathcal{N}$ be a von Neumann algebra, $\mathcal{S}$ a von
Neumann sub-algebra of $\mathcal{N}$, and let
$\delta_{\mathcal{S}}^{i}:\mathcal{P(N)}\rightarrow\mathcal{P(S)}$
be the inner daseinisation map on projections. Then, for all
filters $F\in\mathcal{F(S)}$,
\begin{equation}
(\delta_{\mathcal{S}}^{i})^{-1}(F)=\mathcal{C}_{\mathcal{N}}(F).
\end{equation}
\end{lemma}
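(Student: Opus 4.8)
The plan is to establish the set equality $(\delta^i_{\mathcal S})^{-1}(F)=\mathcal{C}_{\mathcal N}(F)$ by proving the two inclusions separately, using nothing beyond the definition of inner daseinisation and the filter axioms recalled above. Throughout I write $\hat\beta\preceq\hat Q$ for the projection ordering, and I would first isolate two elementary facts about the map $\delta^i_{\mathcal S}:\mathcal{P(N)}\map\mathcal{P(S)}$, $\hat P\mapsto\bigvee\{\hat\beta\in\mathcal{P(S)}\mid\hat\beta\preceq\hat P\}$, that drive the whole argument: (a) $\delta^i_{\mathcal S}(\hat P)\preceq\hat P$ for every $\hat P\in\mathcal{P(N)}$; and (b) for $\hat\beta\in\mathcal{P(S)}$ one has $\hat\beta\preceq\hat P$ if and only if $\hat\beta\preceq\delta^i_{\mathcal S}(\hat P)$. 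Fact (b) just says that $\delta^i_{\mathcal S}(\hat P)$ is the largest projection of $\mathcal S$ below $\hat P$, which is exactly its defining role.

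For the inclusion $\mathcal{C}_{\mathcal N}(F)\subseteq(\delta^i_{\mathcal S})^{-1}(F)$ I would take $\hat Q\in\mathcal{C}_{\mathcal N}(F)$, so by definition of the cone there is some $\hat\beta\in F$ with $\hat\beta\preceq\hat Q$. Since $F\subseteq\mathcal{P(S)}$, fact (b) gives $\hat\beta\preceq\delta^i_{\mathcal S}(\hat Q)$; as $\delta^i_{\mathcal S}(\hat Q)\in\mathcal{P(S)}$ dominates the element $\hat\beta$ of $F$, the upper-set axiom (iii) for filters forces $\delta^i_{\mathcal S}(\hat Q)\in F$, whence $\hat Q\in(\delta^i_{\mathcal S})^{-1}(F)$. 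For the reverse inclusion I would take $\hat P$ with $\delta^i_{\mathcal S}(\hat P)\in F$ and simply set $\hat\beta:=\delta^i_{\mathcal S}(\hat P)$; by fact (a) this $\hat\beta$ satisfies $\hat\beta\preceq\hat P$ and lies in $F$, which is precisely the witness required for $\hat P\in\mathcal{C}_{\mathcal N}(F)$. Combining the two inclusions yields the lemma.

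The one step that is not purely formal, and which I expect to be the main (though modest) obstacle, is fact (a): that the join defining $\delta^i_{\mathcal S}(\hat P)$ does not overshoot $\hat P$. The subtlety is that the join is computed inside $\mathcal S$, whereas $\hat P$ need not belong to $\mathcal S$. The hard part will be justifying that the supremum of a family of projections each dominated by the fixed projection $\hat P$ is again dominated by $\hat P$; I would argue this by identifying the join with the projection onto the closed linear span of the ranges of the $\hat\beta$, all of which lie inside the closed range of $\hat P$. This uses crucially that $\mathcal S$ is a \emph{von Neumann} algebra, so that $\mathcal{P(S)}$ is a complete lattice whose joins agree with those computed in $\BH$ --- exactly the completeness property for which von Neumann algebras were preferred over $C^*$-algebras in this programme. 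With (a) in hand, the forward direction of (b) is immediate (any $\hat\beta\in\mathcal{P(S)}$ with $\hat\beta\preceq\hat P$ is one of the projections whose join is $\delta^i_{\mathcal S}(\hat P)$), and its backward direction follows from (a) by transitivity.
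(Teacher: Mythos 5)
Your proof is correct and follows essentially the same route as the paper's: both inclusions are derived from the two defining properties of inner daseinisation (that $\delta^i_{\mathcal S}(\hat P)\preceq\hat P$, and that $\delta^i_{\mathcal S}(\hat P)$ dominates every projection of $\mathcal S$ lying below $\hat P$) together with the upward-closure of the filter $F$. The only cosmetic difference is that you prove $(\delta^i_{\mathcal S})^{-1}(F)\subseteq\mathcal{C}_{\mathcal N}(F)$ directly by exhibiting $\delta^i_{\mathcal S}(\hat P)$ as the witness, where the paper argues by contrapositive; your careful justification of $\delta^i_{\mathcal S}(\hat P)\preceq\hat P$ via closed ranges and the completeness of the projection lattice is sound, though the paper simply takes this for granted from the definition.
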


\begin{proof}
If $\hat{Q}\in F\subset\mathcal{P(S)}$, then $(\delta_{\mathcal{S}}%
^{i})^{-1}(\hat{Q})=\{\P\in\mathcal{P(N)}\mid\delta^{i}
(\P)_{\mathcal{S}}=\hat{Q}\}$. Let $\P\in\mathcal{P(N)}$ be such
that there is a $\hat{Q}\in F$ with $\hat{Q}\leq\P$, i.e.,
$\P\in\mathcal{C}_{\mathcal{N}}(F)$. Then
$\delta^{i}(\P)_{\mathcal{S}} \geq\hat{Q}$, which implies
$\delta^{i}(\P)_{\mathcal{S}}\in F$, since $F$ is a filter in
$\mathcal{P(S)}$. This shows that
$\mathcal{C}_{\mathcal{N}}(F)\subseteq(\delta_{\mathcal{S}}^{i})^{-1}(F)$.
Now let $\P\in\mathcal{P(N)}$ be such that there is no $\hat{Q}\in
F$ with $\hat{Q}\leq\P$. Since $\delta
^{i}(\P)_{\mathcal{S}}\leq\P$, there also is no $\hat {Q}\in F$
with $\hat{Q}\leq\delta^{i}(\P)_{\mathcal{S}}$, so
$\P\notin(\delta_{\mathcal{S}}^{i})^{-1}(F)$. This shows that
$(\delta
_{\mathcal{S}}^{i})^{-1}(F)\subseteq\mathcal{C}_{\mathcal{N}}(F)$.
\end{proof}

\

We now can prove

\begin{theorem} \label{P_f_AEncodesAllf_delta^o(A)}
Let $\A\in\mathcal{N}_{sa}$. For all von Neumann sub-algebras
$\mathcal{S}\subseteq\mathcal{N}$ and all filters
$F\in\mathcal{F(S)}$, we have%
\begin{equation}
f_{\delta^{o}(\A)_{\mathcal{S}}}(F)=f_{\A}(\mathcal{C}%
_{\mathcal{N}}(F)).
\end{equation}
\end{theorem}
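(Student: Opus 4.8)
The plan is to unwind the definition of the observable function and reduce the identity to the already-proved Lemma \ref{L_deltaT^-1(D)=Cone(D)}. Recall that for a self-adjoint operator $\hat B$ in a von Neumann algebra, with spectral family $\{\hat E^B_\mu\}_{\mu\in\mathR}$, the observable function evaluated on a filter $F$ is $f_{\hat B}(F)=\inf\{\mu\in\mathR\mid\hat E^B_\mu\in F\}$. Thus the whole statement is really an assertion about which spectral projections lie in a given filter, and the proof should consist of a short chain of definitional rewritings.

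First I would identify the spectral family of the outer daseinisation $\delta^o(\A)_{\mathcal S}$. By the defining formula for outer daseinisation \eq{Def:dastooVA} (in the general form of property \textbf{1}, with $V$ replaced by $\mathcal S$), equivalently by property \textbf{7}, the spectral family of $\delta^o(\A)_{\mathcal S}$ is precisely $\mu\mapsto\delta^i_{\mathcal S}(\hat E^A_\mu)$. Hence, directly from the definition of the observable function,
$$ f_{\delta^o(\A)_{\mathcal S}}(F)=\inf\{\mu\in\mathR\mid \delta^i_{\mathcal S}(\hat E^A_\mu)\in F\}. $$
Next I would use the elementary equivalence $\delta^i_{\mathcal S}(\hat E^A_\mu)\in F \Leftrightarrow \hat E^A_\mu\in(\delta^i_{\mathcal S})^{-1}(F)$ together with Lemma \ref{L_deltaT^-1(D)=Cone(D)}, which gives $(\delta^i_{\mathcal S})^{-1}(F)=\mathcal{C}_{\mathcal N}(F)$. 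Substituting rewrites the index set of the infimum as $\{\mu\in\mathR\mid\hat E^A_\mu\in\mathcal{C}_{\mathcal N}(F)\}$, and this infimum is by definition $f_{\A}(\mathcal{C}_{\mathcal N}(F))$, which closes the argument.

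Since the real content is isolated in Lemma \ref{L_deltaT^-1(D)=Cone(D)}, there is no single hard analytic step here; the points needing care are bookkeeping. One must confirm that the two observable functions are evaluated on the correct lattices: $f_{\delta^o(\A)_{\mathcal S}}$ is defined on filters in $\mathcal{P(S)}$ (appropriate, since $\delta^o(\A)_{\mathcal S}\in\mathcal S$ and $F\in\mathcal{F(S)}$), whereas $f_{\A}$ is defined on filters in $\mathcal{P(N)}$, so that evaluating it on the cone $\mathcal{C}_{\mathcal N}(F)\subseteq\mathcal{P(N)}$ is legitimate. One must also check that $\mu\mapsto\delta^i_{\mathcal S}(\hat E^A_\mu)$ really is a spectral family inside $\mathcal S$, which is exactly what the construction of outer daseinisation of a self-adjoint operator guarantees. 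I expect the only genuine subtlety to be verifying that the two index sets $\{\mu\mid\delta^i_{\mathcal S}(\hat E^A_\mu)\in F\}$ and $\{\mu\mid\hat E^A_\mu\in\mathcal{C}_{\mathcal N}(F)\}$ coincide as subsets of $\mathR$ term-by-term, so that their infima agree; but this is immediate once Lemma \ref{L_deltaT^-1(D)=Cone(D)} is applied to each $\mu$.
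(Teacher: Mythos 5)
Your proposal is correct and follows essentially the same route as the paper's proof: unwind the definition of $f_{\delta^o(\A)_{\mathcal S}}$, use that the spectral family of the outer daseinisation is $\mu\mapsto\delta^i_{\mathcal S}(\hat E^A_\mu)$, and apply Lemma \ref{L_deltaT^-1(D)=Cone(D)} to replace $(\delta^i_{\mathcal S})^{-1}(F)$ by $\mathcal{C}_{\mathcal N}(F)$. The bookkeeping points you flag (which lattices the two observable functions live on, and the term-by-term identification of the two index sets) are exactly the content of the paper's chain of equalities.
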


\begin{proof}
We have%
\begin{align*}
f_{\delta^{o}(\A)_{\mathcal{S}}(F)}  &  =\inf\{\lambda\in
\mathR\mid\hat{E}_{\lambda}^{\delta^{o}(\A)_{\mathcal{S}}}\in
F\}\\
&  =\inf\{\lambda\in\mathR\mid\delta^{i}(\hat{E}_{\lambda}^{A}%
)_{\mathcal{S}}\in F\}\\
&  =\inf\{\lambda\in\mathR\mid\hat{E}_{\lambda}^{A}\in(\delta
_{\mathcal{S}}^{i})^{-1}(F)\}\\
&  =\inf\{\lambda\in\mathR\mid\hat{E}_{\lambda}^{A}\in\mathcal{C}%
_{\mathcal{N}}(F)\}\\
&  =f_{\A}(\mathcal{C}_{\mathcal{N}}(F)).
\end{align*}
The second equality is the definition of outer daseinisation (on
the level of spectral projections, see \eq{dasiE}). In the
penultimate step, we used Lemma \ref{L_deltaT^-1(D)=Cone(D)}.
\end{proof}

\

This clearly implies (\ref{Eq_f_dastooVA(I)=f_A(C(I))}). We saw
above that to each $\l\in\Sig_{V}$ there corresponds a unique
ultrafilter $F_{\l}\in\mathcal{Q}(V)$. Since $\dastoo{V}{A}\in
V_{\sa}$, the observable function $f_{\dastoo{V}{A}}$ is the
Gel'fand transform of $\delta ^{o}(\A)_{V}$, and so, upon
identifying the ultrafilter $F_{\l}$ with the spectral element
$\l$, we have
\begin{equation}\label{Eq_f_doVA(F_lambda)=lambda(doVA)}
f_{\dastoo{V}{A}}(F_{\l})=\overline{\delta^{o}(\A)_{V}}(\l)
                            =\brak\l{\dastoo{V}{A}}.
\end{equation}
From (\ref{Eq_f_dastooVA(I)=f_A(C(I))}) we have
\begin{equation}
   \brak\l{\dastoo{V}{A}}=f_{\dastoo{V}{A}}(F_{\l})=
                f_{\A}(\mathcal{C}_{\BH}(F_{\l}))
\end{equation}
for all $V\in\Ob{\V{}}$ and for all $\l\in\Sig_V$. In this sense,
the observable function $f_{\A}$ encodes all the outer
daseinisations $\dastoo{V}{A}$, $V\in\Ob{\V{}}$, of $\A$.

There is also a function, $g_{\A}$, on the filters in $\PH$ that
encodes all the inner daseinisations $\dastoi{V}{A}$, $V\in
\Ob{\V{}}$. This function is given for an arbitrary von Neumann
algebra $\mathcal{N}$ by
\begin{eqnarray}
    g_{\A}:\mathcal{F(N)} &\map& \sp(\A)\\
    F &\mapsto& \sup\{\l\in\mathR\mid\hat{1}-\hat{E}_{\l}^A\in F\}
\end{eqnarray}
and is called the \emph{antonymous function} of $\A$
\cite{Doe05b}. If $\mathcal{N}$ is abelian, then
$g_{\A}|_{\mathcal{Q}(V)}$ is the Gel'fand transform of $\A$ and
coincides with $f_{\A}$ on the space $\mathcal{Q}(V)$ of maximal
filters, i.e., ultrafilters in $\mathcal{P}(V)$. As functions on
$\mathcal{F}(V)$, $f_{\A}$ and $g_{\A}$ are different also in the
abelian case. For an arbitrary von Neumann algebra $\mathcal{N}$,
the antonymous function $g_{\A}$ is another generalisation of the
Gel'fand transform of $\A$.

There is a close relationship between observable and antonymous
functions \cite{deG05b,Doe05b}: for all von Neumann algebras
$\mathcal{N}$ and all self-adjoint operators
$\A\in\mathcal{N}_{\sa}$, it holds that
\begin{equation}
                        -f_{\A}=g_{-\A}.
\end{equation}

There is a relation analogous to
(\ref{Eq_f_dastooVA(I)=f_A(C(I))}) for antonymous functions: for
all $V\in\Ob{\V{}}$ and all filters $F$ in $\mathcal{F}(V)$,
\begin{equation}\label{Eq_g_dastoiVA(I)=g_A(C(I))}
                g_{\dastoi{V}{A}}(F)=g_{\A}(\mathcal{C}_{\BH}(F)).
\end{equation}

This follows from

\begin{theorem} \label{P_g_AEncodesAllg_delta^i(A)}
Let $\A\in\mathcal{N}_{sa}$. For all von Neumann sub-algebras
$\mathcal{S}\subseteq\mathcal{N}$ and all filters
$F\in\mathcal{F(S)}$,
we have%
\begin{equation}
                        g_{\delta^{i}(\A)_{\mathcal{S}}}(F)=g_{\A}(\mathcal{C}
                        _{\mathcal{N}}(F)).
\end{equation}

\end{theorem}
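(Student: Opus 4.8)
The plan is to avoid manipulating the inner-daseinised spectral family $\hat E^{\dastoi{\mathcal{S}}{A}}_\l=\bigwedge_{\mu>\l}\delta^o(\hat E^A_\mu)_{\mathcal{S}}$ directly — the infimum over $\mu>\l$ makes the supremum/infimum bookkeeping in the antonymous function awkward — and instead to reduce the whole statement to the already-established observable-function result, Theorem \ref{P_f_AEncodesAllf_delta^o(A)}, by exploiting the symmetry $\A\mapsto-\A$. The three ingredients I would use are: (i) de Groote's relation $-f_{\hat B}=g_{-\hat B}$ between observable and antonymous functions, valid for any self-adjoint $\hat B$ in any von Neumann algebra; (ii) the behaviour of daseinisation under multiplication by $-1$ recorded in the sixth listed property above, namely $\delta^o(-\A)_{\mathcal{S}}=-\dastoi{\mathcal{S}}{A}$; and (iii) Theorem \ref{P_f_AEncodesAllf_delta^o(A)} itself.

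Concretely, first I would apply (i) with the self-adjoint operator $\dastoi{\mathcal{S}}{A}$ in the role of $\hat B$ to write $g_{\dastoi{\mathcal{S}}{A}}(F)=-f_{-\dastoi{\mathcal{S}}{A}}(F)$. Next I would rewrite the argument of $f$ using (ii): since $-\dastoi{\mathcal{S}}{A}=\delta^o(-\A)_{\mathcal{S}}$, this equals $-f_{\delta^o(-\A)_{\mathcal{S}}}(F)$. Because $-\A\in\mathcal{N}_{\sa}$, Theorem \ref{P_f_AEncodesAllf_delta^o(A)} applies verbatim to $-\A$ and gives $f_{\delta^o(-\A)_{\mathcal{S}}}(F)=f_{-\A}(\mathcal{C}_{\mathcal{N}}(F))$. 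Finally, applying (i) once more in the form $-f_{-\A}=g_{\A}$ yields $g_{\dastoi{\mathcal{S}}{A}}(F)=-f_{-\A}(\mathcal{C}_{\mathcal{N}}(F))=g_{\A}(\mathcal{C}_{\mathcal{N}}(F))$, which is exactly the claim. Equation \eqref{Eq_g_dastoiVA(I)=g_A(C(I))} then follows by specialising $\mathcal{N}=\BH$, $\mathcal{S}=V$.

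The main obstacle, if one wants a self-contained argument rather than quoting (i) and (ii), is verifying these two auxiliary identities in precisely the generality required: arbitrary sub-algebras $\mathcal{S}\subseteq\mathcal{N}$ and arbitrary (not necessarily maximal) filters $F\in\mathcal{F(S)}$. Identity (ii) is the operator statement $\delta^o(-\A)_{\mathcal{S}}=-\dastoi{\mathcal{S}}{A}$, which holds because $\A\mapsto-\A$ sends the spectral family $\{\hat E^A_\l\}$ to essentially $\{\hat 1-\hat E^A_{-\l}\}$ (up to the usual left/right-continuity adjustment) and correspondingly interchanges inner and outer daseinisation of projections; this is just what underlies the sixth listed property. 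Identity (i) I would take as quoted from de Groote.

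Should a direct proof be preferred, I would instead mirror the computation that establishes Theorem \ref{P_f_AEncodesAllf_delta^o(A)}: expand $g_{\dastoi{\mathcal{S}}{A}}(F)=\sup\{\l\in\mathR\mid\hat 1-\hat E^{\dastoi{\mathcal{S}}{A}}_\l\in F\}$, use De~Morgan together with $\hat 1-\delta^o(\P)_{\mathcal{S}}=\dastoi{\mathcal{S}}{P}$ to rewrite $\hat 1-\hat E^{\dastoi{\mathcal{S}}{A}}_\l=\bigvee_{\mu>\l}\delta^i(\hat 1-\hat E^A_\mu)_{\mathcal{S}}$, and then pass the cone $\mathcal{C}_{\mathcal{N}}(F)$ through the preimage of $\delta^i_{\mathcal{S}}$ via Lemma \ref{L_deltaT^-1(D)=Cone(D)}. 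The extra reindexing caused by the $\bigvee_{\mu>\l}$ makes this route more delicate than the symmetry argument, which is why I favour the latter.
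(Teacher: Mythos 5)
Your proof is correct, but it follows a genuinely different route from the paper's. The paper proves this theorem by a direct computation on the definition of the antonymous function: it expands $g_{\delta^{i}(\A)_{\mathcal{S}}}(F)=\sup\{\l\in\mathR\mid\hat 1-\hat E_{\l}^{\delta^{i}(\A)_{\mathcal{S}}}\in F\}$, substitutes $\hat E_{\l}^{\delta^{i}(\A)_{\mathcal{S}}}=\bigwedge_{\mu>\l}\delta^{o}(\hat E^{A}_{\mu})_{\mathcal{S}}$, performs exactly the reindexing step you flag as delicate (replacing the infimum over $\mu>\l$ by the value at $\l$ inside the supremum), applies the duality $\hat 1-\delta^{o}(\hat E^A_\l)_{\mathcal{S}}=\delta^{i}(\hat 1-\hat E^A_\l)_{\mathcal{S}}$, and finishes with Lemma \ref{L_deltaT^-1(D)=Cone(D)} — a computation structurally parallel to, but independent of, the proof of Theorem \ref{P_f_AEncodesAllf_delta^o(A)}. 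You instead deduce the inner statement from the outer one by conjugating with $\A\mapsto-\A$, using $g_{\hat B}=-f_{-\hat B}$ and $\delta^{o}(-\A)_{\mathcal{S}}=-\dastoi{\mathcal{S}}{A}$; both identities are available in the generality you need (the first is quoted in the paper for arbitrary von Neumann algebras and hence applies within $\mathcal{S}$ to $\dastoi{\mathcal{S}}{A}$ and within $\mathcal{N}$ to $-\A$; the second is item 6 of the listed properties of daseinisation), and the cone $\mathcal{C}_{\mathcal{N}}(F)$ is a filter in $\mathcal{N}$ so the final application is legitimate. Your argument buys a cleaner proof that avoids the $\bigwedge_{\mu>\l}$ bookkeeping entirely and makes the inner/outer duality do the work explicitly; the paper's direct computation buys self-containedness, in that it does not lean on the relation $-f_{\A}=g_{-\A}$, which the paper only cites from de Groote rather than proving.
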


\begin{proof}
We have%
\begin{align*}
g_{\delta^{i}(\A)_{\mathcal{S}}}(F)=  &  \sup\{\lambda\in
\mathR\mid\hat{1}-\hat{E}_{\lambda}^{\delta^{i}(\hat
{A})_{\mathcal{S}}}\in F\}\\
=  &
\sup\{\lambda\in\mathR\mid\hat{1}-\bigwedge_{\mu>\lambda}\delta
^{o}(\hat{E}_{\mu}^{A})_{\mathcal{S}}\in F\}\\
=  & \sup\{\lambda\in\mathR\mid\hat{1}-\delta^{o}(\hat{E}_{\lambda
}^{A})_{\mathcal{S}}\in F\}\\
=  & \sup\{\lambda\in\mathR\mid\delta^{i}(\hat{1}-\hat{E}_{\lambda
}^{A})_{\mathcal{S}}\in F\}\\
=  &  \sup\{\lambda\in\mathR\mid\hat{1}-\hat{E}_{\lambda}^{A}%
\in(\delta_{\mathcal{S}}^{i})^{-1}(F)\}\\
=  &  \sup\{\lambda\in\mathR\mid\hat{1}-\hat{E}_{\lambda}^{A}%
\in\mathcal{C}_{\mathcal{N}}(F)\}\\
=  &  g_{\A}(\mathcal{C}_{\mathcal{N}}(F)),
\end{align*}
where in the penultimate step we used Lemma
\ref{L_deltaT^-1(D)=Cone(D)}.
\end{proof}

\

Let $\l\in\Sig_{V}$, and let $F_{\l}\in\mathcal{Q}(V)$ be the
corresponding ultrafilter. Since $\dastoi{V}{A}\in V$, the
antonymous function $g_{\dastoi{V}{A}}$ is the Gel'fand
transform of $\dastoi{V}{A}$, and we have%
\begin{equation}\label{Eq_g_diVA(F_lambda)=lambda(diVA)}
                g_{\dastoi{V}{A}}(F_{\l})=
\overline{\delta^{i}(\A)_{V}}(\l)=\brak\l{\dastoi{V}{A}}.
\end{equation}
From (\ref{Eq_g_dastoiVA(I)=g_A(C(I))}), we get
\begin{equation}
 \brak\l{\dastoi{V}{A}}=g_{\dastoi{V}{A}}(F_{\l})=
                        g_{\A}(\mathcal{C}_{\BH}(F_{\l}))
\end{equation}
for all $V\in \Ob{\V{}}$ and all  $\l\in\Sig_{V}$. Thus the
antonymous function $g_\A$ encodes all the inner daseinisations
$\dastoi{V}{A}$, $V\in\Ob{\V{}}$, of $\A$.

\subsection{A Physical Interpretation of the Arrow
$\dasB{A}:$\:\ps{$\Sigma$}$\:\map\:$\ps{$\mathR^\leftrightarrow$}}
\label{SubSubSec:PhysIntArrow}

Let $\ket\psi\in\mathcal{H}$ be a unit vector in the Hilbert space
of the quantum system. The expectation value of a self-adjoint
operator $\A\in \BH$ in the state $\ket\psi$ is given by
\begin{equation}
  \bra\psi\A\ket\psi=
\int_{-||\A||}^{||\A||}\l\, d\bra\psi\hat{E}_{\l}^A\ket\psi.
\end{equation}

In the discussion of truth objects in section
\ref{Sec:TruthValues}, we introduced the maximal filter
$T^{\ket\psi}$ in $\PH$,\footnote{Since $\PH$ is not distributive,
$T^{\ket\psi}$ is not an ultrafilter; \ie\ there are projections
$\P\in\PH$ such that neither $\P\in T^{\ket\psi}$ nor $\hat
1-\P\in T^{\ket\psi}$.} given by (cf. \eq{Def:TpsiGlobal})
\begin{equation}
T^{\ket\psi}:=\{\hat\a\in\PH\mid\hat\a\succeq\ketbra\psi\},
\end{equation}
where $\ketbra\psi$ is the projection onto the one-dimensional
subspace of $\mathcal H$ generated by $\ket\psi$. As shown in
\cite{Doe05b}, the expectation value $\bra\psi\A\ket\psi$ can be
written as
\begin{equation}
         \bra\psi\A\ket\psi=
       \int_{g_{\A}(T^{\ket\psi})}^{f_{\A}(T^{\ket\psi})}
\l\, d\bra\psi\hat{E}_{\l}^A\ket\psi.
\end{equation}

In an instrumentalist interpretation,\footnote{Which we avoid in
general, of course!} one would interpret $g_{\A}(T^{\ket\psi})$,
resp.\ $f_{\A}(T^{\ket\psi})$, as the smallest, resp.\ largest,
possible result of a measurement of the physical quantity $A$ when
the state is $\ket\psi$. If $\ket\psi$ is an eigenstate of $\A$,
then $\bra\psi\A\ket\psi$ is an eigenvalue of $\A$, and in this
case, $\bra\psi\A\ket\psi\in\sp(\A);$  moreover,
\begin{equation}
\bra\psi\A\ket\psi =g_{\A}(T^{\ket\psi})=f_{\A}(T^{\ket\psi}).
\end{equation}
If $\ket\psi$ is not an eigenstate of $\A$, then
\begin{equation}
g_{\A}(T^{\ket\psi})<\bra\psi\A\ket\psi<f_{\A}(T^{\ket\psi}).
\end{equation}
For details, see \cite{Doe05b}.

Let $V$ be an abelian sub-algebra of $\BH$ such that $\Sig_V$
contains the spectral element, $\l^{\ket\psi}$, associated with
$\ket\psi$.\footnote{This is the element defined by
$\l^{\ket\psi}(\A):=\bra\psi\A\ket\psi$ for all $\A\in V$. It is
characterised by the fact that $\l^{\ket\psi}(\ketbra\psi)=1$ and
$\l^{\ket\psi}(\hat{Q})=0$ for all $\hat{Q}\in\mathcal{P}(V)$ such
that $\hat{Q}\ketbra\psi=\hat{0}$. We have $\l^{\ket\psi}
\in\Sig_V$ if and only if $\ketbra\psi\in\mathcal{P}(V)$.} The
corresponding ultrafilter in $\mathcal{P}(V)$ consists of those
projections $\hat{\a}\in\mathcal{P}(V)$ such that $\hat\a\succeq
\ketbra\psi$. This is just the evaluation, $\ps\TO^{\ket\psi}_V$,
at stage $V$ of our truth object, $\ps\TO^{\ket\psi}$; see
\eq{TOpsi2b}.

Hence the cone $\mathcal{C}(\ps\TO^{\ket\psi}_V):=%
\mathcal{C}_{\BH}(\ps\TO^{\ket\psi}_V)$ consists of all
projections $\hat{R}\in\PH$ such that $\hat{R}\succeq\ketbra\psi$;
and so, for all stages $V$ such that
$\ketbra\psi\in\mathcal{P}(V)$ we have
\begin{equation}
        \mathcal{C}(\ps\TO^{\ket\psi}_V)=T^{\ket\psi}.
\end{equation}
This allows us to write the expectation value as
\begin{eqnarray}
   \bra\psi\A\ket\psi
   &=&\int_{g_{\A}(\mathcal{C}
   (\ps\TO^{\ket\psi}_V)}^{f_{\A}(\mathcal{C}
   (\ps\TO^{\ket\psi}_V)}\l\, d\bra\psi\hat{E}_{\l}^A\ket\psi\\
  &=& \int_{g_{\dastoi{V}{A}}
  (\ps\TO^{\ket\psi}_V)}^{f_{\dastoo{V}{A}}
  (\ps\TO^{\ket\psi}_V)}\l\, d\bra\psi\hat{E}_{\l}^A\ket\psi
\end{eqnarray}
for these stages $V$.

Equations (\ref{Eq_f_doVA(F_lambda)=lambda(doVA)}) and
(\ref{Eq_g_diVA(F_lambda)=lambda(diVA)}) show that
$f_{\dastoo{V}{A}}(\ps\TO^{\ket\psi}_V)=\bra\psi\dastoo{V}{A}\ket\psi$
and \mbox{$g_{\delta^{i}(\A)_{V}}(\ps\TO^{\ket\psi}_V)=
\bra\psi\dastoi{V}{A}\ket\psi$.} In the language of
instrumentalism, for stages $V$ for which
$\l^{\ket\psi}\in\Sig_{V}$,  the value
$\bra\psi\dastoi{V}{A}\ket\psi\in\sp(\A)$ is the smallest possible
measurement result for $\A$ in the quantum state $\ket\psi$; and
$\bra\psi\dastoo{V}{A}\ket\psi\in\sp(\A)$ is the largest possible
result.

These results depend on the fact that we use (inner and outer)
daseinisation, i.e., approximations in the spectral, not the
linear order.

If $\l\in\Sig_{V}$ is not of the form $\l=\l^{\ket\psi}$, for
some$\ket\psi\in\Hi$,  then the cone $\mathcal{C}(F_{\l})$ over
the ultrafilter $F_{\l}$ corresponding to $\l$ cannot be
identified with a vector in $\Hi$. Nevertheless, the quantity
$\mathcal{C}(F_{\l})$ is well-defined, and
(\ref{Eq_f_dastooVA(I)=f_A(C(I))}) and
(\ref{Eq_g_dastoiVA(I)=g_A(C(I))}) hold. If we go from $V$ to a
sub-algebra $V^{\prime}\subseteq V$, then $\delta^{i}(\hat
{A})_{V^{\prime}}\preceq\dastoi{V}{A}$ and $\delta^{o}(\hat
{A})_{V^{\prime}}\succeq\dastoo{V}{A}$, hence
\begin{eqnarray}
     \brak\l{\delta^{i}(\A)_{V^{\prime}}} &\leq&
     \brak\l{\delta^{i}(\A)_{V}},\\
     \brak\l{\delta^{o}(\A)_{V^{\prime}}} &\geq&
     \brak\l{\delta^{o}(\A)_{V}}
\end{eqnarray}
for all $\l\in\Sig_V$.

We can interpret the function
\begin{eqnarray}
  \dasBV{V}{A}:\Sig_{V} &\map& \PR{\mathR}_{V}\\
  \l &\mapsto& \dasBV{V}{A}(\l)=
           \left( \dasBVi{V}{A}(\l),\dasBVo{V}{A}(\l) \right)
\end{eqnarray}
as giving the `spread' or `range' of the physical quantity $A$ at
stages $V^{\prime}\subseteq V$. Each element $\l\in\Sig_{V}$ gives
its own `spread' $\dasBV{V}{A}(\l):\downarrow\!\! V\map
\sp(\A)\times\sp(\A)$. The intuitive idea is that at stage $V$,
given a point $\l\in\Sig_{V}$, the physical quantity $A$ `spreads
over' the subset of the spectrum, $\sp(\A)$, of $\A$  given by the
closed interval of $\sp(\A)\subset\mathR$ defined by
\begin{equation}
  \lbrack\dasBVi{V}{A}(\l)(V),\dasBVo{V}{A}(\l)(V)]
                \cap\sp(\A)=
  \lbrack\brak\l{\dastoi{V}{A}},\brak\l{\delta^{o}(\A))_{V}}]
                \cap\sp(\A),
\end{equation}
For a proper sub-algebra $V^{\prime}\subset V$, the spreading is
over the (potentially larger) subset
\begin{equation}
  \lbrack\dasBVi{V}{A}(\l)(V^{\prime}),
        \dasBVo{V}{A}(\l)(V^{\prime})]\cap\sp(\A)=
  \lbrack\brak\l{\delta^{i}(\A)_{V^{\prime}}},
         \brak\l{\delta^{o}(\A)_{V^{\prime}}}]\cap\sp(\A).
\end{equation}

All this is local in the sense that these expressions are defined
at a stage $V$ and for sub-algebras, $V^\prime$, of $V$, where
$\l\in\Sig_{V}$. No similar global construction or interpretation
is possible, since the spectral presheaf $\Sig$ has no global
elements, i.e., no points (while the \emph{set} $\Sig_{V}$ does
have points).

As we go down to smaller sub-algebras $V^{\prime}\subseteq V$, the
spread gets larger. This comes from the fact that $\A$ has to be
adapted more and more as we go to smaller sub-algebras
$V^{\prime}$. More precisely, $\A$ is approximated from below by
$\delta^{i}(\hat {A})_{V^{\prime}}\in V^{\prime}$ and from above
by $\delta^{o}(\hat {A})_{V^{\prime}}\in V^{\prime}$. This
approximation gets coarser as $V^{\prime}$ gets smaller, which
basically means that $V^{\prime}$ contains less and less
projections.

It should be remarked that $\dasB{A}$ does not assign actual
values to the physical quantity $A$, but rather the possible
\emph{range} of such values; and these are independent of any
state $\ket\psi$.  This is analogous to the classical case where
physical quantities are represented by  real-valued functions on
state space. The range of possible values is state-independent,
but the actual value possessed by a physical quantity \emph{does}
depend on the state of the system.

\paragraph{The quantity-value presheaf $\PR{\mathR}$ as the interval
domain.} Spitters and Heunen observed \cite{HeuSpit07} that the
presheaf $\PR{\mathR}$ is the \emph{interval domain} in out topos.
This object has mainly been considered in theoretical computer
science \cite{Esc96} and can be used to systematically encode
situations where real numbers are only known---or can only be
defined---up to a certain degree of accuracy. Approximation
processes can be well described using the mathematics of domain
theory. Clearly, this has close relations to our physical
situation, where the real numbers are spectral values of
self-adjoint operators and coarse-graining (or rather the inverse
process of fine-graining) can be understood as a process of
approximation.

\subsection{The value of a physical quantity in a quantum state}
\label{subsec_ValueOfPhysQuantInQState} We now want to discuss how
physical quantities, represented by natural transformations
$\dasB{A}$, acquire `values' in a given quantum state. Of course,
this is not as straightforward as in the classical case, since
from the Kochen-Specker theorem, we know that physical quantities
do not have real numbers as their values. As we saw, this is
related to the fact that there are no microstates, i.e., the
spectral presheaf has no global elements.

In classical physics, a physical quantity $A$ is represented by a
function $\breve{A}:\mathcal{S}\rightarrow\mathR$ from the state
space $\mathcal{S}$ to the real numbers. A point $s\in\mathcal{S}$
is a microstate, and the physical quantity $A$ has the value
$\breve{A}(s)$ in this state.

We want to mimic this as closely as possible in the quantum case.
In order to do so, we take a pseudo-state
\begin{equation}
      \w^{\ket\psi}:= \dasmap(\ketbra\psi)=V\mapsto \bigwedge
       \{\hat\a\in\G_V\mid\ketbra\psi\preceq\hat\a\}
\end{equation}
(see (\ref{Def:wpsi_2})) and consider it as a sub-object of
$\Sig$. This means that at each stage $V\in\Ob{\V{}}$, we consider
the set
\begin{equation}
   \ps\w^{\ket\psi}_{V}:=\{\l\in\Sig_V\mid\brak\l{\delta^o(\ketbra\psi)_V}=1\}
                        \subseteq\Sig_V.        \label{wurst_sub-object}
\end{equation}
Of course, the sub-object of $\Sig$ that we get simply is
$\ps{\delta^o(\ketbra\psi)}$. Sub-objects of this kind are as
close to microstates as we can get, see the discussion in section
\ref{SubSubSec:TOQT} and \cite{Doe07b}. We can then form the
composition
\begin{equation}
 \ps\w^{\ket\psi}\rightarrow\Sig\overset{\dasB{A}}{\longrightarrow}
                        \PR{\mathR},
\end{equation}
which is also denoted by $\dasB{A}(\ps\w^{\ket\psi})$. One can
think of this arrow as being the `value' of the physical quantity
$A$ in the state described by $\ps\w^{\ket\psi}$.

The first question is if we actually obtain a sub-object of
$\PR{\mathR}$ in this way. Let $V,V'\in\Ob{\V{}}$, $V'\subseteq
V$. We have to show that
\begin{equation}
               \PR{\mathR}(i_{V'V})(\dasB{A}_V(\ps\w^{\ket\psi}_{V}))%
                \subseteq\dasB{A}_{V'}(\ps\w^{\ket\psi}_{V'}).
\end{equation}
Let $\l\in\ps\w^{\ket\psi}_V$, then
\begin{equation}
                        \PR{\mathR}(i_{V'V})(\dasB{A}_V(\l))=(\dasB{A}_V(\l))|_{V'}
                        =\dasB{A}_{V'}(\l|_{V'}).
\end{equation}
By definition, we have
\begin{equation}
        \ps\w^{\ket\psi}_{V'}=\Sig(i_{V'V})(\ps\w^{\ket\psi}_V)
             =\{\l|_{V'}\mid\l\in\ps\w^{\ket\psi}_V\},
\end{equation}
that is, \emph{every} $\l'\in\ps\w^{\ket\psi}_{V'}$ is given as
the restriction of some $\l\in\ps\w^{\ket\psi}_V$. This implies
that we even obtain the equality
\begin{equation}
  \PR{\mathR}(i_{V'V})(\dasB{A}_V(\ps\w^{\ket\psi}_V))
                        =\dasB{A}_{V'}(\ps\w^{\ket\psi}_{V'}),
\end{equation}
so $\dasB{A}(\ps\w^{\ket\psi})$ is indeed a sub-object of
$\PR{\mathR}$.

\paragraph{Values as pairs of functions and eigenvalues.} At each
stage $V\in\Ob{\V{}}$, we have pairs of order-preserving and
order-reversing functions $\dasB{A}(\l)$, one function for each
$\l\in\w^{\ket\psi}_V$. If $\ket\psi$ is an eigenstate of $\hat A$
and $V$ is an abelian sub-algebra that contains $\hat A$, then
$\dastoi{V}{A}=\dastoo{V}{A}=\hat A$. Moreover,
$\ps\w^{\ket\psi}_V$ contains the single element
$\l_{\ketbra\psi}\in\Sig_V$, which is the pure state that assigns
$1$ to $\ketbra\psi$ and $0$ to all projections in
$\mathcal{P}(V)$ orthogonal to $\ketbra\psi$.

Evaluating $\dasB{A}(\ps\w^{\ket\psi})$ at $V$ hence gives a pair,
consisting of an order-preserving function
$\dasBi{A}_V(\l_{\ketbra\psi}): \downarrow\!\!V\rightarrow\sp(\hat
A)$ and an order-reversing function
$\dasBo{A}_V(\l_{\ketbra\psi}):\downarrow\!\!V\rightarrow\sp(\hat
A)$:
\begin{equation}
\dasB{A}_V(\ps\w^{\ket\psi}_{V})=(\dasBi{A}_V(\l_{\ketbra\psi}),
    \dasBo{A}_V(\l_{\ketbra\psi})).
\end{equation}
The value of both functions at stage $V$ is $\overline{\hat A}
(\l_{\ketbra\psi})=\la\l_{\ketbra\psi},\hat A\ra$, which is the
eigenvalue of $\hat A$ in the state $\ket\psi$. In this sense, we
get back the ordinary eigenvalue of $\hat A$ when the system is in
the eigenstate $\psi$.

\paragraph{A simple example.} We consider the value of the self-adjoint
projection operator $\ketbra\psi$, seen as (the representative of)
a physical quantity, in the (pseudo-)state $\ps\w^{\ket\psi}$. We
remark that $\sp(\ketbra\psi)=\{0,1\}$. By definition,
\begin{equation}
                        \breve{\delta}(\ketbra\psi)_V(\l)=(\breve{\delta}^i(\ketbra\psi)_V(\l),%
                        \breve{\delta}^o(\ketbra\psi)_V(\l))
\end{equation}
for all $V\in\Ob{\V{}}$ and all $\l\in\Sig_V$. In particular, the
function
\begin{equation}
                \breve{\delta}^o(\ketbra\psi)_V(\l):\downarrow\!\!V\rightarrow\{0,1\}
\end{equation}
is given as (see (\ref{order-reversing_function}), for all
$V'\subseteq V$,
\begin{equation}
 \breve{\delta}^o(\ketbra\psi)_V(\l)(V')=
                        \brak\l{\delta^o(\ketbra\psi)_{V'}}.
\end{equation}
If $\l\in\ps\w^{\ket\psi}_{V}$, then
$\brak\l{\delta^o(\ketbra\psi)_V}=1$, see
(\ref{wurst_sub-object}). Hence, for all
$\l\in\ps\w^{\ket\psi}_V$, we obtain, for all $V'\subseteq V$,
\begin{equation}
 \breve{\delta}^o(\ketbra\psi)_V(\l)(V')=
                        \brak\l{\delta^o(\ketbra\psi)_{V'}}=1.  \label{OR_function_is_constant_1}
\end{equation}
If we denote the constant function on $\downarrow\!\!V$ with value
$1$ as $1_{\downarrow V}$, then we can write
\begin{equation}
                        \breve{\delta}(\ketbra\psi)_V(\l)=(\breve{\delta}^i(\ketbra\psi)_V(\l),%
                        1_{\downarrow V})
\end{equation}
for all $V$ and all $\l\in\ps\w^{\ket\psi}_V$. The constant
function $1_{\downarrow V}$ trivially is an order-reversing
function from $\downarrow\!\!V$ to $\sp(\ketbra\psi)$. We now
consider the function
\begin{equation}
    \breve{\delta}^i(\ketbra\psi)_V(\l):\downarrow\!\!V
    \rightarrow\{0,1\}.
\end{equation}
It is given as (see (\ref{order-preserving_function}), for all
$V'\subseteq V$,
\begin{equation}
    \breve{\delta}^i(\ketbra\psi)_V(\l)(V')=%
                        \brak\l{\delta^i(\ketbra\psi)_{V'}}.
\end{equation}
If $\ketbra\psi\in\mathcal{P}(V')$, then, for all
$\l\in\ps\w^{\ket\psi}_{V'}$, we have
$\brak\l{\delta^i(\ketbra\psi)_{V'}}=\brak\l{\ketbra\psi}=1$. If
$\ketbra\psi\notin\mathcal{P}(V')$, then
$\delta^i(\ketbra\psi)_{V'}=\hat 0$, since
$\delta^i(\ketbra\psi)_{V}\preceq\ketbra\psi$ and $\ketbra\psi$ is
a projection onto a one-dimensional subspace, so
$\delta^i(\ketbra\psi)_{V'}$ must project onto the
zero-dimensional subspace.

Thus we get, for all $V$, for all $\l\in\ps\w^{\ket\psi}_V$ and
all $V'\subseteq V$:
\begin{equation}
                        \breve{\delta}^i(\ketbra\psi)_V(\l)(V')=
                        \left\{
                        \begin{tabular}
                        [c]{ll}%
                        $1$ & if $\ketbra\psi\in V'$\\
                        $0$ & if $\ketbra\psi\notin V'$\\
                        \end{tabular}
                        \ \right.               \label{order-preserving_for_onedim_proj}
\end{equation}
Summing up, we have completely described the `value'
$\breve\delta(\ketbra\psi)(\ps\w^{\ket\psi})$ of the physical
quantity described by $\ketbra\psi$ in the pseudo-state given by
$\ps\w^{\ket\psi}$.

There is an immediate generalisation of one part of this result:
Consider an arbitrary non-zero projection $\P\in\V{}$,\footnote{Of
course, if $\P$ is not a projection onto a one-dimensional
subspace, then it cannot be identified with a state.} the
corresponding sub-object $\ps{\daso{P}}$ of $\Sig$ obtained from
outer daseinisaion, and the sub-object $\dasB{P}(\ps{\daso{P}})$
of $\PR{\mathR}$. For all $V\in\Ob{\V{}}$ and all
$\l\in\ps{\daso{P}}_{V}$,  a completely analogous argument to the
one given above shows that for the order-reversing functions
$\dasBo{P}_{V}(\l):\downarrow\!\!V\rightarrow\{0,1\}$, we always
obtain the constant function $1_{\downarrow V}$.

The behaviour of the order-preserving functions
$\dasBi{P}_{V}(\l):\downarrow\!\!V\rightarrow\{0,1\}$ is more
complicated than in the case that $\P$ projects onto a
one-dimensional subspace. In general, $\P\notin V'$ does not imply
$\dasBi{P}_{V'}(\l)(V')=0$ for $\l\in\ps{\daso{P}}_{V}$, so the
analogue of (\ref{order-preserving_for_onedim_proj}) does not hold
in general.

\subsection{Properties of \ps{$\mathR^\leftrightarrow$}.}
From the perspective of our overall programme, Theorem
\ref{Th:ST_3} is a key result and shows that $\PR{\mathR}$ is a
possible choice for the quantity-value object for quantum theory.
To explore this further, we start by noting some elementary
properties of the presheaf $\PR{\mathR}$. Analogous arguments
apply to the presheaves $\OP$ and $\SR$.
\begin{enumerate}
                \item The presheaf $\PR{\mathR}$ has global elements: namely, pairs
                of order-preserving and order-reversing functions on the partially-ordered
                set $\Ob{\V{}}$ of objects in the category $\V{}$; \ie\ pairs of functions
                $(\mu,\nu):\Ob{\V{}}\map\mathR$ such that:
                \begin{equation}
                                \forall V_1,V_2\in\Ob{\V{}}, V_2\subseteq V_1: \mu(V_2)\leq\mu(V_1),\nu(V_2)\geq\nu(V_1).
                        \label{Def:GammaRD}
                \end{equation}

                \item
                \begin{enumerate}
                        \item Elements of $\Ga\PR{\mathR}$ can be added: \ie\ if $(\mu_1,\nu_1),(\mu_2,\nu_2)\in\Ga\PR{\mathR}$,
                        define $(\mu_1,\nu_1)+(\mu_2,\nu_2)$ at each stage $V$ by
                                \begin{equation}
                        ((\mu_1,\nu_1)+(\mu_2,\nu_2))(V^\prime):=(\mu_1(V^\prime)+\mu_2(V^\prime),\nu_1(V^\prime)+\nu_2(V^\prime))
                                                                                                \label{Def:mu+nu}
                                \end{equation} for all $V^\prime\subseteq V$. Note that if
                                $V_2\subseteq V_1\subseteq V$, then $\mu_1(V_2)\leq \mu_1(V_1)$ and
                                $\mu_2(V_2)\leq\mu_2(V_1)$, and so $\mu_1(V_2)+\mu_2(V_2)\leq\mu_1(V_1)
                                +\mu_(V_1)$. Likewise, $\nu_1(V_2)+\nu_2(V_2)\geq\nu_1(V_1)
                                +\nu_2(V_1)$. Thus the definition of $(\mu_1,\nu_1)+(\mu_2,\nu_2)$ in \eq{Def:mu+nu}
                                makes sense. Obviously, addition is commutative  and associative.

                                \item  However, it is \emph{not} possible to define `$(\mu_1,\nu_1)-(\mu_2,\nu_2)$'
                                in this way since the difference between two order-preserving
                                functions may not be order-preserving, nor need the difference
                                of two order-reversing functions be order-reversing. This
                                problem is addressed in Section \ref{Sec:PskSR}.

                                \item A zero/unit element can be defined for the additive
                                structure on $\Ga\PR{\mathR}$ as $0(V):=(0,0)$ for all $V\in\Ob{\V{}}$, where
                                $(0,0)$ denotes a pair of two copies of the function that is constantly $0$
                                on $\Ob{\V{}}$.

                                It follows from (a) and (c) that $\Ga\PR{\mathR}$ is a commutative monoid
                                (\ie\ a semi-group with a unit).
                \end{enumerate}

    The commutative monoid structure for $\Ga\PR{\mathR}$ is a
                reflection of the stronger fact that $\PR{\mathR}$ is a
        \emph{commutative-monoid} object in the topos $\SetH{}$.
        Specifically, there is an arrow
        \begin{eqnarray}
                        &+:\PR{\mathR}\times\PR{\mathR}\map\PR{\mathR},\\
                        &+_V((\mu_1,\nu_1),(\mu_2,\nu_2)):=(\mu_1+\mu_2,\nu_1+\nu_2)
        \end{eqnarray}
        for all $(\mu_1,\nu_1),(\mu_2,\nu_2)\in\PR{\mathR}_V$,
        and for all stages $V\in\Ob{\V{}}$. Here, $(\mu_1+\mu_2,\nu_1+\nu_2)$ denotes the
        real-valued function on $\downarrow\!\!V$ defined by
        \begin{equation}
                (\mu_1+\mu_2,\nu_1+\nu_2)(V^\prime):=(\mu_1(V^\prime)+\mu_2(V^\prime),\nu_1(V^\prime)+\nu_2(V^\prime))
                                                \label{Def:mu+nuFn}
    \end{equation}
    for all $V^\prime\subseteq V$.

    \item The real numbers, $\mathR$, form a ring, and so it is
    natural to see if a multiplicative structure can be put on
    $\Ga\PR{\mathR}$. The obvious `definition' would be, for all $V$,
    \begin{equation}
                (\mu_1,\nu_1)(\mu_2,\nu_2)(V):=(\mu_1(V)\mu_2(V),\nu_1(V)\nu_2(V))
                                                                        \label{Def:ab:=}
    \end{equation}
    for $(\mu_1,\nu_1),(\mu_2,\nu_2)\in\Ga\PR{\mathR}$. However, this
    fails because the right hand side of \eq{Def:ab:=} may not be a pair
    consisting of an order-preserving and an order-reversing function. This problem
    arises, for example, if $\nu_1(V)$ and $\nu_2(V)$ become negative: then, as $V$ gets
    smaller, the product $\nu_1(V)\nu_2(V)$ gets larger and thus defines an
    \emph{order-preserving} function.
 \end{enumerate}

\subsection{The Representation of Propositions From Inverse Images}
\label{_Sec_RepOfPropositions} In Section
\ref{SubSec:PropLangPhys}, we introduced a simple propositional
language, $\PL{S}$, for each system $S$, and discussed its
representations for the case of classical physics. Then, in
Section \ref{Sec:QuPropSpec} we analysed the, far more
complicated, quantum-theoretical representation of this language
in the set of clopen subsets of the spectral presheaf, $\Sig$, in
the topos $\SetH{}$. This gives a representation of the primitive
propositions $\SAin\De$ as sub-objects of $\Sig$:
\begin{equation}
        \piqt(\Ain\De):=\ps{\delo\big(\hat E[A\in\De]\big)}
                                                \label{Def:piAinD}
\end{equation}
where `$\delta^o$' is the (outer) daseinisation operation, and
$\hat E[A\in\De]$ is the spectral projection corresponding to the
subset $\De\cap\sp(\A)$ of the spectrum, $\sp(\A)$, of the
self-adjoint operator $\A$.

We now want to remark briefly on the nature, and representation,
of propositions using the `local' language $\L{S}$.

In any classical representation, $\s$, of $\L{S}$ in $\Set$, the
representation, $\R_\s$, of the quantity-value symbol $\R$ is
always just the real numbers $\mathR$. Therefore, it is simple to
take a subset $\De\subseteq\mathR$ of $\mathR,$ and construct the
propositions $\SAin\De$. In fact, if $A_\s:\Si_\s\map\mathR$ is
the representation of the function symbol $A$ with signature
$\Si\map\R$, then $A_\s^{-1}(\De)$ is a subset of the symplectic
manifold $\Si_\s$ (the representation of the ground type $\Si$).
This subset, $A_\s^{-1}(\De)\subseteq\Si_\s$, represents the
proposition $\SAin\De$ in the Boolean algebra of all (Borel)
subsets of $\Si_\s$.

We should consider the analogue of these steps in the
representation, $\phi$, of the same language, $\L{S}$, in the
topos $\tau_\phi:=\SetH{}$.  In fact, the  issues to be discussed
apply to a representation in \emph{any} topos.

We first note that  if $\Xi $ is a sub-object of $\R_\phi$, and if
$A_\phi:\Si_\phi\map{\mathcal R}_\phi$, then there is an
associated sub-object of $\Si_\phi$, denoted $A_\phi^{-1}(\Xi)$.
Specifically, if $\chi_{\Xi}:\R_\phi\map\O_{\tau_\phi}$ is the
characteristic arrow of the sub-object $\Xi$, then
$A_\phi^{-1}(\Xi)$ is defined to be the sub-object of $\Si_\phi$
whose characteristic arrow is $\chi_{\Xi}\circ
A_\phi:\Si_\phi\map\O_{\tau_\phi}$. These sub-objects are
analogues of the subsets, $A_\s^{-1}(\De)$, of the classical state
space $\Si_\s$: as such, they can represent propositions. In this
spirit, we could denote by $\SAin\Xi$ the proposition which the
sub-object $A_\phi^{-1}(\Xi)$ represents, although, of course, it
would be a mistake to interpret $\SAin\Xi$ as asserting that the
value of something lies in something else: in a general topos,
there are no such values.

In the case of quantum theory, the arrows
$A_\phi:\Si_\phi\map\R_\phi$ are of the form
$\dasB{A}:\Sig\map\PR{\mathR}$ where $\R_\phi:=\PR{\mathR}$. It
follows that the propositions in our $\L{S}$-theory are
represented by the sub-objects $\dasB{A}^{-1}(\ps{\Xi})$ of
$\Sig$, where $\ps{\Xi}$ is a sub-object of $\PR{\mathR}$.

To interpret such propositions, note first that in the
$\PL{S}$-propositions $\SAin\De$,  the range `$\De$' belongs to
the world that is external to the language. Consequently, the
\emph{meaning} of $\De$ is given  independently of $\PL{S}$. This
`externally interpreted' $\De$ is then inserted into the quantum
representation of $\PL{S}$ via the daseinisation of propositions
discussed in Section \ref{Sec:QuPropSpec}.

However, the situation is very different for the
$\L{S}$-propositions $\SAin\Xi$. Here, the quantity `$\Xi$'
belongs to the particular topos $\tau_\phi$, and hence it is
representation dependent. The implication is that the `meaning' of
$\SAin\Xi$  can only be discussed from `within the topos' using
the internal language that is associated with $\tau_\phi$, which,
we recall, carries the translation of $\L{S}$ given by the
topos-representation $\phi$.

From  a conceptual perspective, this situation is  `relational',
with the meanings of the various propositions being determined by
their relations to each other as formulated in the internal
language of the topos. Concomitantly, the  meaning of `truth'
cannot be understood using the \emph{correspondence theory} (much
favoured by instrumentalists) for there is nothing external to
which a proposition can `correspond'. Instead, what is needed is
more like a \emph{coherence} theory of truth in which  a whole
body of propositions is considered together \cite{Gray90}. This is
a fascinating subject, but further discussion  must be deferred to
later work.

\subsection{The relation between the formal languages $\L{S}$ and
$\PL{S}$} In the propositional language $\PL{S}$, we have symbols
$\SAin{\Delta}$ representing primitive propositions. In the
quantum case, such a primitive proposition is represented by the
outer daseinisation $\ps{\daso{P}}$ of the projection
corresponding to the proposition. (The spectral theorem gives the
link between propositions and projections.)

We now want to show that the sub-objects of $\Sig$ of the form
$\ps{\daso{P}}$ for $\P\in\PH$ also are part of the language
$\L{S}$. More precisely, we will show that $\ps{\daso{P}}$ can be
obtained as the inverse image of a certain sub-object of
$\PR{\mathR}$.

The sub-object of $\PR{\mathR}$ that we consider is
$\dasB{P}(\ps{\daso{P}})$. We take the inverse image of this
sub-object under the natural transformation
$\dasB{P}:\Sig\rightarrow\PR{\mathR}$. This means that we assume
that the language $\L{S}$ contains a function symbol
$P:\Sigma\rightarrow\R$ that is represented by the natural
transformation $\dasB{P}$.

One more remark: although it may look as if we put in from the
start the sub-object $\ps{\daso{P}}$ that we want to construct,
this is not the case: we can take the inverse of an
\emph{arbitrary} sub-object of $\PR{\mathR}$, and we happen to
choose $\dasB{P}(\ps{\daso{P}})$. Forming the inverse image of a
sub-object of $\PR{\mathR}$ under the natural transformation
$\dasB{P}$ is analogous to taking the inverse image $f^{-1}\{r\}$
of some real value $r$ under some real-valued function $f$. The
real value $r$ can be given as the value $r=f(x)$ of the function
at some element $x$ of its domain. This does not imply that
$f^{-1}\{r\}=\{x\}$: the inverse image may contain more elements
than just $\{x\}$. Likewise, we have to discuss whether the
inverse image $\dasB{P}^{-1}(\dasB{P}(\ps{\daso{P}}))$ equals
$\ps{\daso{P}}$ or is some larger sub-object of $\Sig$.

We start with the case that $\P=\ketbra\psi$, i.e., $\P$ is the
projection onto a one-dimensional subspace.

\begin{theorem}
The inverse image $\breve\delta(\ketbra\psi)^{-1}(\breve\delta
(\ketbra\psi)(\ps\w^{\ket\psi}))$ is $\ps\w^{\ket\psi}$.
\end{theorem}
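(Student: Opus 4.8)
The plan is to prove the set-equality $\breve\delta(\ketbra\psi)^{-1}(\breve\delta(\ketbra\psi)(\ps\w^{\ket\psi}))=\ps\w^{\ket\psi}$ by establishing the two inclusions separately at each stage $V\in\Ob{\V{}}$. The inclusion $\ps\w^{\ket\psi}\subseteq\breve\delta(\ketbra\psi)^{-1}(\breve\delta(\ketbra\psi)(\ps\w^{\ket\psi}))$ is immediate from the general fact that $K\subseteq f^{-1}(f(K))$ for any map, so the real work lies entirely in the reverse inclusion. First I would unwind the definitions: by \eq{wurst_sub-object}, $\l\in\ps\w^{\ket\psi}_V$ iff $\brak\l{\delta^o(\ketbra\psi)_V}=1$, and an element $\l\in\Sig_V$ lies in the inverse image exactly when $\breve\delta(\ketbra\psi)_V(\l)$ equals one of the pairs $\breve\delta(\ketbra\psi)_V(\l')$ arising from some $\l'\in\ps\w^{\ket\psi}_V$.

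The key computational input is the simple-example calculation already carried out just before the theorem. There it was shown that for $\P=\ketbra\psi$ (a rank-one projection), and for \emph{any} $\l\in\ps\w^{\ket\psi}_V$, the order-reversing component is the constant function $1_{\downarrow V}$ (equation \eq{OR_function_is_constant_1}), while the order-preserving component is given by \eq{order-preserving_for_onedim_proj}: $\breve\delta^i(\ketbra\psi)_V(\l)(V')=1$ if $\ketbra\psi\in V'$ and $0$ otherwise. The decisive observation is that this order-preserving function \emph{does not depend on $\l$ at all} — it is determined entirely by which sub-algebras $V'\subseteq V$ contain $\ketbra\psi$. Therefore the value $\breve\delta(\ketbra\psi)_V(\l)$ is the \emph{same} pair of functions for every $\l\in\ps\w^{\ket\psi}_V$; call this common value $(\mu_0,1_{\downarrow V})$.

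Given this, the reverse inclusion reduces to showing that if $\l\in\Sig_V$ satisfies $\breve\delta(\ketbra\psi)_V(\l)=(\mu_0,1_{\downarrow V})$, then in fact $\l\in\ps\w^{\ket\psi}_V$. The hard part will be controlling the order-reversing component to force $\brak\l{\delta^o(\ketbra\psi)_V}=1$. I would evaluate the order-reversing function at the top stage $V'=V$: by \eq{order-reversing_function} we have $\breve\delta^o(\ketbra\psi)_V(\l)(V)=\brak\l{\delta^o(\ketbra\psi)_V}$. If $\breve\delta^o(\ketbra\psi)_V(\l)=1_{\downarrow V}$, then evaluating at $V$ gives $\brak\l{\delta^o(\ketbra\psi)_V}=1$, which is precisely the membership condition \eq{wurst_sub-object} for $\ps\w^{\ket\psi}_V$. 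Thus matching the order-reversing component alone already pins down membership, and the inclusion follows.

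The main obstacle I anticipate is the bookkeeping needed to guarantee that $\ps\w^{\ket\psi}_V$ is nonempty at each relevant stage, so that the "common value" $(\mu_0,1_{\downarrow V})$ is actually realised and the argument above has content; this requires knowing that $\delta^o(\ketbra\psi)_V$, being a nonzero projection, has at least one spectral element $\l$ with $\brak\l{\delta^o(\ketbra\psi)_V}=1$, which follows from Gel'fand theory since the Gel'fand transform of a nonzero projection takes the value $1$ somewhere on $\Sig_V$. I would also need to verify naturality/compatibility across the restriction maps $i_{V'V}$ so that the stage-wise equalities assemble into an equality of sub-objects of $\Sig$; but since the proof of the theorem preceding Definition~\ref{Def_SpectralPresheaf} already established that $\ps\w^{\ket\psi}=\ps{\delta^o(\ketbra\psi)}$ is a genuine sub-object, and the inverse image of a sub-object is automatically a sub-object, this compatibility is inherited and requires only a routine check rather than fresh work.
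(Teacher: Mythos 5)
Your proposal is correct and takes essentially the same route as the paper: the paper also notes the trivial inclusion $\ps\w^{\ket\psi}\subseteq\ps{S}$ and then uses the fact (from \eq{OR_function_is_constant_1}) that the order-reversing component $\breve\delta^o(\ketbra\psi)_V(\tilde\l)$ is constantly $1$ for every $\tilde\l\in\ps\w^{\ket\psi}_V$, so that any $\l$ in the inverse image must satisfy $\brak\l{\delta^o(\ketbra\psi)_V}=1$ and hence lie in $\ps\w^{\ket\psi}_V$. The only differences are cosmetic: the paper frames the reverse inclusion as a proof by contradiction rather than directly, and your extra observations (that the order-preserving component is also $\l$-independent, and the non-emptiness of $\ps\w^{\ket\psi}_V$) are correct but not needed for the argument.
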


\begin{proof}
Let $\ps{S}:=\breve\delta(\ketbra\psi)^{-1}(\breve\delta
(\ketbra\psi)(\ps\w^{\ket\psi}))$. In any case, we have
\begin{equation}
                        \ps{S}\supseteq\ps\w^{\ket\psi}.
\end{equation}
Let us assume that the inclusion is proper. Then there exists some
$V\in\Ob{\V{}}$ such that
\begin{equation}
   \ps{S}(V)\supset\ps\w^{\ket\psi}_{V}=\{\l\in\Sig_V\mid\la\l,
              \delta^o(\ketbra\psi)_V\ra=1\},
\end{equation}
which is equivalent to the existence of some $\l_0\in \ps{S}_V$
such that
\begin{equation}
   \la{\l_0},\delta^o(\ketbra\psi)_V\ra=0.
                        \label{S_larger_gives_lambda_with_0}
\end{equation}
By definition, we have $\ps{S}(V)=\{\l\in\Sig_V\mid\breve\delta%
(\ketbra\psi)_V(\l)\in\breve
\delta(\ketbra\psi)_V(\ps\w^{\ket\psi})\}$. For all
$\tilde\l\in\ps\w^{\ket\psi}_{V}$, it holds that
\begin{equation}
   \la\tilde\l,\delta^o(\ketbra\psi)_V\ra=1,
\end{equation}
see (\ref{OR_function_is_constant_1}). This implies that for all
$\l\in \ps{S}(V)$, we must have
$\la\l,\delta^o(\ketbra\psi)_V\ra=1$, which contradicts
(\ref{S_larger_gives_lambda_with_0}). Hence there cannot be a
proper inclusion $\ps{S}\supset\ps\w^{\ket\psi}$, and we rather
have equality
\begin{equation}
    \ps{S}(V)=\ps\w^{\ket\psi}_V
\end{equation}
for all $V\in\Ob{\V{}}$.
\end{proof}

The proof is based on the fact that the order-reversing functions
of the form $\breve\delta^o(\ketbra\psi)_V(\l):\,\downarrow\!\!V
\rightarrow\{0,1\}$, where $V\in\Ob{\V{}}$ and
$\l\in\ps\w^{\ket\psi}_{V}$, are constant functions $1_{\downarrow
V}$. The remark at the end of Section
\ref{subsec_ValueOfPhysQuantInQState} shows that this holds more
generally for arbitrary non-zero projections $\P$. Hence we
obtain:

\begin{corollary}
The inverse image $\dasB{P}^{-1}(\dasB{P}(\ps{\daso{P}}))$ is
$\ps{\daso{P}}$.
\end{corollary}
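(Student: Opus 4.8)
The plan is to follow the proof of the preceding theorem almost verbatim, replacing the special input used there (constancy of the order-reversing function attached to $\ketbra\psi$) by its generalisation to an arbitrary non-zero projection, which is exactly the remark recorded at the end of Section \ref{subsec_ValueOfPhysQuantInQState}. Write $\ps{S}:=\dasB{P}^{-1}(\dasB{P}(\ps{\daso{P}}))$. First I would dispose of the easy inclusion: since $\ps{S}$ is the inverse image of the image of $\ps{\daso{P}}$, the unit of the image/inverse-image adjunction gives $\ps{S}\supseteq\ps{\daso{P}}$ with no computation at all. Everything then reduces to excluding a proper inclusion.

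For the reverse inclusion I would argue stage-wise and by contradiction, using the fact that in the presheaf topos $\SetH{}$ both images and inverse images are computed pointwise, so that $\ps{S}_V=\{\l\in\Sig_V\mid \dasB{P}_V(\l)\in\dasB{P}_V(\ps{\daso{P}}_V)\}$ for each $V\in\Ob{\V{}}$. Suppose the inclusion were proper; then there is a stage $V$ and some $\l_0\in\ps{S}_V$ with $\l_0\notin\ps{\daso{P}}_V$, i.e.\ $\brak{\l_0}{\dastoo{V}{P}}=0$ (recall $\dastoo{V}{P}$ is a projection, so the functional $\l_0$ takes value $0$ or $1$ on it). Membership $\l_0\in\ps{S}_V$ supplies some $\tilde\l\in\ps{\daso{P}}_V$ with $\dasB{P}_V(\l_0)=\dasB{P}_V(\tilde\l)$; in particular the order-reversing components coincide, $\dasBo{P}_V(\l_0)=\dasBo{P}_V(\tilde\l)$.

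Now comes the key step. Because $\tilde\l\in\ps{\daso{P}}_V$, the remark at the end of Section \ref{subsec_ValueOfPhysQuantInQState} tells us that $\dasBo{P}_V(\tilde\l)$ is the constant function $1_{\downarrow V}$. Hence $\dasBo{P}_V(\l_0)=1_{\downarrow V}$ as well, and evaluating this order-reversing function at $V$ itself while using \eq{order-reversing_function} yields $\brak{\l_0}{\dastoo{V}{P}}=\dasBo{P}_V(\l_0)(V)=1$, contradicting $\brak{\l_0}{\dastoo{V}{P}}=0$. Therefore $\ps{S}_V=\ps{\daso{P}}_V$ for every $V$, which is the assertion of the corollary.

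I expect the only genuine content to sit in this key step, and it is already supplied by the cited remark; the remainder is bookkeeping. Two points deserve care, though. First, only the order-reversing half of the pair $\dasB{P}_V(\l)=(\dasBi{P}_V(\l),\dasBo{P}_V(\l))$ may be invoked: the inner, order-preserving component $\dasBi{P}_V(\l)$ can behave non-trivially once $\P$ no longer projects onto a one-dimensional subspace (as the same remark warns), so the argument must be run entirely through $\dasBo{P}$. Second, one should confirm the pointwise computation of the image $\dasB{P}(\ps{\daso{P}})$ and of the inverse image in $\SetH{}$, and that $\dasB{P}(\ps{\daso{P}})$ is in fact a sub-object of $\PR{\mathR}$; the latter is established exactly as in the sub-object verification given just before the theorem for $\ps\w^{\ket\psi}$, and that verification transfers to $\ps{\daso{P}}$ without change.
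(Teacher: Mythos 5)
Your proof is correct and follows essentially the same route as the paper: the paper obtains the corollary precisely by observing that the theorem's argument uses only the constancy $1_{\downarrow V}$ of the order-reversing functions $\dasBo{P}_V(\l)$ for $\l\in\ps{\daso{P}}_V$, which the remark at the end of Section \ref{subsec_ValueOfPhysQuantInQState} supplies for arbitrary non-zero projections. Your explicit cautions (working only with the order-reversing component, and the pointwise computation of images and inverse images in $\SetH{}$) are sound and merely make explicit what the paper leaves implicit.
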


\section{Extending the Quantity-Value Presheaf to an Abelian Group Object}
\label{Sec:PskSR}
\subsection{Preliminary Remarks}
We have shown how each self-adjoint operator, $\A$, on the Hilbert
space $\Hi$ gives rise to an arrow $\dasB{A}:\Sig\map\PR{\mathR}$
in the topos $\SetH{}$. Thus, in the topos representation, $\phi$,
of $\L{S}$ for the theory-type `quantum theory', the arrow
$\dasB{A}:\Sig\map\PR{\mathR}$ is one possible
choice\footnote{Another choice is to use the presheaf $\SR$ as the
quantity-value object, or the isomorphic presheaf $\OP$.} for  the
representation, $A_\phi:\Si_\phi\map{\mathcal R}_\phi$, of the
function symbol, $A:\Si\map\mathcal R$.

This implies that the quantity-value object, $\R_\phi$, is the
presheaf, $\PR{\mathR}$. However, although such an identification
is possible, it does impose certain restrictions on the formalism.
These stem from the fact that $\PR{\mathR}$ is only a
\emph{monoid}-object in $\SetH{}$, and $\Ga\PR{\mathR}$ is only a
monoid, whereas the real numbers of standard physics are an
abelian group; indeed, they are a commutative ring.

In standard classical physics, $\Hom{\Set}{\Si_{\s}}{\mathR}$ is
the set of real-valued functions on the manifold $\Si_{\s}$; as
such, it possesses the structure of a commutative ring. On the
other hand, the set of arrows $\Hom{\SetH{}}{\Sig}{\PR{\mathR}}$
has only the structure of an additive monoid. This additive
structure is defined locally in the following way. Let
$\alpha,\beta\in\Hom{\SetH{}}{\Sig}{\PR{\mathR}}$. At each stage
$V\in\Ob{\V{}}$, $\alpha_V$ is a pair $(\mu_{1,V},\nu_{1,V})$,
consisting of a function $\mu_{1,V}$ from $\Sig_V$ to $\OP_V$, and
a function $\nu_{1,V}$ from $\Sig_V$ to $\SR_V$. For each
$\l\in\Sig_V$, one has an order-preserving function
$\mu_{1,V}(\l):\downarrow\!\!V\rightarrow\mathR$, and an
order-reversing function
$\nu_{1,V}(\l):\downarrow\!\!V\rightarrow\mathR$. We use the
notation $\alpha_V(\l):=(\mu_{1,V}(\l),\nu_{1,V}(\l))$.

Similarly, $\beta$ is given at each stage $V$ by a pair of
functions $(\mu_{2,V},\nu_{2,V})$, and for all $\l\in\Sig_V$, we
write $\beta_V(\l):=(\mu_{2,V}(\l),\nu_{2,V}(\l))$

We define, for all $V\in\Ob{\V{}}$, and all $\l\in\Sig_V$, (c.f.\
\eq{Def:mu+nuFn})
\begin{eqnarray}
                (\alpha+\beta)_V(\l) &=& ((\mu_{1,V}(\l),\nu_{1,V}(\l))+(\mu_{2,V}(\l),\nu_{2,V}(\l)))\\
                &:=& (\mu_{1,V}(\l)+\mu_{2,V}(\l),\nu_{1,V}(\l)+\nu_{2,V}(\l))\\
                &=& \alpha_V(\l)+\beta_V(\l),
                                        \label{Def:a+b}
\end{eqnarray}
It is clear that $(\alpha+\beta)_V(\l)$ is a pair consisting of an
order-preserving and an order-reversing function for all $V$ and
all $\l\in\Sig_V$, so that $\alpha+\beta$ is well
defined.\footnote{To avoid confusion we should emphasise that, in
general, the sum $\das{A}+\das{B}$ is \emph{not} equal to
$\delta(\A+\hat B)$.}

Arguably, the fact that $\Hom{\SetH{}}{\Sig}{\PR{\mathR}}$ is only
a monoid\footnote{An internal version of this result would show
that the exponential object $\PR{\mathR}^{\,\Sig}$ is a monoid
object in the topos $\SetH{}$. This could well be true, but we
have not studied it in detail.} is  a weakness in so far as we are
trying to make quantum theory `look' as much like classical
physics as possible. Of course, in more obscure applications such
as Planck-length quantum gravity, the nature of the quantity-value
object is very much open for debate. But when applied to regular
physics, we might like our formalism to look more like classical
physics than the monoid-only structure of
$\Hom{\SetH{}}{\Sig}{\PR{\mathR}}$.

The need for a subtraction, i.e. some sort of abelian group
structure on $\PR{\mathR}$, brings to mind the well-known
Grothendieck $k$-construction that is much used in algebraic
topology and other branches of pure mathematics. This gives a way
of `extending' an abelian semi-group to become an abelian group,
and this technique can be adapted to the present situation. The
goal is to construct a `Grothendieck completion',
$k(\PR{\mathR})$, of $\PR{\mathR}$ that is an abelian-group object
in the topos $\SetH{}$.\footnote{Ideally, we might like $\kSR$ or
$k(\PR{\mathR})$ to be a commutative-ring object, but this is not
true.}

Of course, we can apply the $k$-construction also to the presheaf
$\SR$ (or $\OP$, if we like). This comes with an extra advantage:
it is then possible to define the square of an arrow
$\dasBo{A}:\Sig\rightarrow\SR$, as is shown in the Appendix.
Hence, given arrows $\dasBo{A}$ and $\dasBo{A^2}$, we can define
an `intrinsic dispersion':\footnote{The notation used here is
potentially a little misleading. We have not given any meaning to
`$A^2$' in the language $\L{S}$; \ie\ in its current form, the
language does not give meaning to the square of a function symbol.
Therefore, when we write $\dasBo{A^2}$ this must be understood as
being the Gel'fand transform of the outer daseinisation of the
operator $\A^2$.}
\begin{equation}
        \nabla(\A):=\dasBo{A^2}-\dasBo{A}^2.   \label{Def:nabla}
\end{equation}
Since the whole $k$-construction is quite complicated and is not
used in this paper beyond the present section, we have decided to
put all the relevant definitions into the Appendix where it can be
read at leisure by anyone who is interested.

Interestingly, there is a close relation between $\PR{\mathR}$ and
$\kSR$, as shown in the next subsection.

\subsection{The relation between \underline{$\mathR^\leftrightarrow$}\ and $k$(\underline{$\mathR^\succeq$}).}
In Section \ref{SubSec:SR}, we considered the presheaf
$\PR{\mathR}$ of order-preserving and order-reversing functions as
a possible quantity-value object. The advantage  of this presheaf
is the symmetric utilisation of inner and outer daseinisation, and
the associated physical interpretation of arrows from $\Sig$ to
$\PR{\mathR}$.

It transpires that $\PR{\mathR}$ is closely related to $\kSR$.
Namely, for each $V$, we can define an equivalence relation
$\equiv$ on $\PR{\mathR}_V$ by
\begin{equation}\label{EquivRelInPR}
                (\mu_1,\nu_1) \equiv (\mu_2,\nu_2)
\makebox{ iff } \mu_1+\nu_1=\mu_2+\nu_2.
\end{equation}
Then $\PR{\mathR}/\equiv$ is isomorphic to $\kSR$ under the
mapping
\begin{equation}
    [\mu,\nu]\mapsto[\nu,-\mu]\in\kSR_V
\end{equation}
for all $V$ and all
$[\mu,\nu]\in(\PR{\mathR}/\equiv)_V$.\footnote{ This
identification also explains formula (\ref{EquivRelInPR}), which
may look odd at first sight. Recall that $[\mu,\nu]\in(\PR{\mathR}/%
\equiv)_V$ means that $\mu$ is order-preserving and $\nu$ is
order-reversing.}

However, there is a difference between the arrows that represent
physical quantities. The arrow $[\dasBo{A}]:\Sig\map\kSR$ is given
by first sending $\A\in\BH_\sa$ to $\dasBo{A}$ and then taking
$k$-equivalence classes---a construction that only involves outer
daseinisation. On the other hand, there is  an arrow
$[\dasB{A}]:\Sig\map\PR{\mathR}/\equiv$, given by first sending
$\A$ to $\dasB{A}$ and then taking the equivalence classes defined
in (\ref{EquivRelInPR}). This involves both inner and outer
daseinisation.

We can show that $[\dasBo{A}]$ uniquely determines $\A$ as
follows: Let
\begin{equation}
     [\dasBo{A}]:\Sig\map\kSR
\end{equation}
denote the natural transformation from the spectral presheaf to
the abelian group-object $\kSR$, given by first sending $\A$ to
$\dasBo{A}$ and then taking the $k$-equivalence classes at each
stage $V$. The monoid $\SR$ is embedded into $\kSR$ by sending
$\nu\in\SR_V$ to $[\nu,0]\in\kSR_V$ for all $V$, which implies
that $\A$ is also uniquely determined by
$[\dasBo{A}]$.\footnote{In an analogous manner, one can show that
the arrows $\dasBi{A}:\Sig\map\OP$ and $[\dasBi{A}]:\Sig\map
k(\OP)$ uniquely determine $\A$, and that the arrow
$\dasB{A}:\Sig\map\PR{\mathR}$ also uniquely determines $\A$.} We
note that, currently, it is an open question if $[\dasB{A}]$ also
fixes $\A$ uniquely.

We now have constructed several presheaves that are abelian group
objects within $\SetH{}$, namely $k(\PR{\mathR})$, $\kSR$ and
$\PR{\mathR}/\equiv$. The latter two are isomorphic presheaves, as
we have shown. All three presheaves can serve as the
quantity-value presheaf if one wants to have an abelian-group
object for this purpose. Intuitively, if the quantity-value object
is only an abelian-monoid object like $\PR{\mathR}$, then the
`values' can only be added, while in the case of an abelian-group
object, they can be added and subtracted.

\subsection{Algebraic properties of the potential quantity-value presheaves}
As matters stand, we have several possible choices for the
quantity-value presheaf, which is the representation for quantum
theory of the symbol $\R$ of the formal language $\L{S}$ that
describes our physical system. In this sub-section, we want to
compare the algebraic properties of these various presheaves. In
particular, we will consider the presheaves $\SR$, $\kSR$,
$\PR{\mathR}$ and $k(\PR{\mathR})$.\footnote{The presheaf $\OP$ is
isomorphic to $\SR$ and hence will not be considered separately.}

\paragraph{Global elements.} We first note that all these presheaves have global elements. For example, a global element of $\SR$ is given by an order-reversing function $\nu:\V{}\map\mathR$. As remarked in the Appendix, we have $\Ga\kSR\simeq k(\Ga\SR)$. Global elements of $\PR{\mathR}$ are pairs $(\mu,\nu)$ consisting of an order-preserving function $\mu:\V{}\map\mathR$ and an order-reversing function $\nu:\V{}\map\mathR$. Finally, it is easy to show that $\Ga k(\PR{\mathR})\simeq k(\Ga\PR{\mathR})$.

\paragraph{The real number object as a sub-object.} In a presheaf topos $\Set^{\mathcal C^{op}}$, the Dedekind real number object $\ps\mathR$ is the constant functor from $\mathcal C^{op}$ to $\mathR$. The presheaf $\ps\mathR$ is an internal field object (see e.g. \cite{Jst02}).

The presheaf $\SR$ contains the constant presheaf $\ps\mathR$ as a
sub-object: let $V\in\Ob{\V{}}$ and $r\in\ps\mathR_V\simeq
\mathR$. Then the function $c_{r,V}:\downarrow\!\!V\map\mathR$
that has the constant value $r$ is an element of $\SR_V$ since it
is an order-reversing function. Moreover, the global sections of
$\ps\mathR$ are given by constant functions $r:\V{}\map\mathR$,
and such functions are also global sections of $\SR$.

The presheaf $\SR$ can be seen as a sub-object of $\kSR$: let
$V\in\V{}$ and $\nu\in\SR_V$, then $[\nu,0]\in\kSR_V$. Thus the
real number object $\ps\mathR$ is also a sub-object of $\kSR$.

A real number $r\in\ps\mathR_V$ defines the pair
$(c_{r,V},c_{r,V})$ consisting of two copies of the constant
function $c_{r,V}:\downarrow\!\!V\map\mathR$. Since $c_{r,V}$ is
both order-preserving and order-reversing, $(c_{r,V},c_{r,V})$ is
an element of $\PR{\mathR}_V$ and hence $\ps\mathR$ is a
sub-object of $\PR{\mathR}$. Since $\PR{\mathR}$ is a sub-object
of $k(\PR{\mathR})$, the latter presheaf also contains the real
number object $\ps\mathR$ as a sub-object.

\paragraph{Multiplying with real numbers and vector space structure.} Let $c_r\in\Ga\ps\mathR$ be the constant function on $\V{}$ with value $r$. The global element $c_r$ of $\ps\mathR$ defines locally, at each $V\in\V{}$, a constant function $c_{r,V}:\ \downarrow\!\!V\map\mathR$. We want to consider if, and how, multiplication with these constant functions is defined in the various presheaves. We will call this `multiplying with a real number'.

Let $\mu\in\SR_V$. For all $V'\in\downarrow\!\!V$, we define the
product
\begin{equation}
                (c_{r,V}\mu)(V'):=c_{r,V}(V')\mu(V')=r\mu(V').
\end{equation}
If $r\geq 0$, then $r\mu:\downarrow\!\!V\map\mathR$ is an
order-reversing function again. However, if $r<0$, then $r\mu$ is
order-preserving and hence not an element of $\SR_V$. This shows
that for the presheaf $\SR$ only multiplication by non-negative
real numbers is well-defined.

However, if we consider $\kSR$ then multiplication with an
arbitrary real number is well-defined. For simplicity, we first
consider $r=-1$, i.e., negation. Let $[\nu,\ka]\in\SR_V$, then,
for all $V'\in\downarrow\!\!V$,
\begin{eqnarray}
                (c_{-1,V}[\nu,\ka])(V') &:=& c_{-1,V}(V')[\nu(V'),\ka(V')]\\
                &=& -[\nu(V'),\ka(V')]\\
                &=& [\ka(V'),\nu(V')],
\end{eqnarray}
so we have
\begin{equation}
                c_{-1,V}[\nu,\ka]=-[\nu,\ka]=[\ka,\nu].
\end{equation}
This multiplication with the real number $-1$ is, of course,
defined in such a way that it fits in with the additive group
structure on $\kSR$.

It follows that multiplying an element $[\nu,\ka]$ of $\kSR_V$
with an arbitrary real number $r$ can be defined as
\begin{equation}
                c_{r,V}[\nu,\kappa]:=\left\{
                \begin{tabular}
                [c]{ll}%
                $[c_{r,V}\nu,c_{r,V}\ka]=[r\nu,r\ka]$ & if $r\geq0$\\
                $-[c_{-r,V}\nu,c_{-r,V}\ka]=[-r\ka,-r\nu]$ & if $r<0.$%
                \end{tabular}
                \right.
\end{equation}
\begin{remark} In this way, the group object $\kSR$ in $\SetH{}$ becomes a vector space object, with the field object $\ps\mathR$ as the scalars.
\end{remark}

Interestingly, one can define multiplication with arbitrary real
numbers also for $\PR{\mathR}$, although this presheaf is not a
group object in $\SetH{}$. Let $(\mu,\nu)\in\PR{\mathR}_V$, so
that $\mu:\downarrow\!\!V\map\mathR$ is an order-preserving
function and $\nu:\downarrow\!\!V\map\mathR$ is order-reversing.
Let $r$ be an arbitrary real number. We define
\begin{equation}
                c_{r,V}(\mu,\nu):=\left\{
                \begin{tabular}
                [c]{ll}%
                $(c_{r,V}\mu,c_{r,V}\nu)=(r\mu,r\nu)$ & if $r\geq0$\\
                $(c_{r,V}\nu,c_{r,V}\mu)=(r\nu,r\mu)$ & if $r<0.$%
                \end{tabular}
                \right.
\end{equation}
This is well-defined since if $\mu$ is order-preserving, then
$-\mu$ is order-reversing, and if $\nu$ is order-reversing, then
$-\nu$ is order-preserving. For $r=-1$, we obtain
\begin{equation}
                c_{-1,V}(\mu,\nu)=-(\mu,\nu)=(-\nu,-\mu).
\end{equation}
But this does \emph{not} mean that $-(\mu,\nu)$ is an additive
inverse of $(\mu,\nu)$. Such inverses do not exist in
$\PR{\mathR}$, since it is not a group object. Rather, we get
\begin{equation}
                (\mu,\nu)+(-(\mu,\nu))=(\mu,\nu)+(-\nu,-\mu)=(\mu-\nu,\nu-\mu).                 \label{pseudo-inverse}
\end{equation}
If, for all $V'\in\downarrow\!\!V$, we interpret the absolute
value $|(\mu-\nu)(V')|$ as a measure of uncertainty as given by
the pair $(\mu,\nu)$ at stage $V'$, then we see from
(\ref{pseudo-inverse}) that adding $(\mu,\nu)$ and $(-\nu,-\mu)$
gives a pair $(\mu-\nu,\nu-\mu)\in\PR{\mathR}_V$ concentrated
around $(0,0)$, but with an uncertainty twice as large (for all
stages $V'$). We call $-(\mu,\nu)=(-\nu,-\mu)$ the
\emph{pseudo-inverse} of $(\mu,\nu)\in\PR{\mathR}$.

More generally, we can define a second monoid structure (besides
addition) on $\PR{\mathR}$, called \emph{pseudo-subtraction} and
given by
\begin{equation}
(\mu_1,\nu_1)-(\mu_2,\nu_2):=(\mu_1,\nu_1)+(-\nu_2,-\mu_2)=
(\mu_1-\nu_2,\nu_1-\mu_2).
\end{equation}
This operation has a neutral element, namely $(c_{0,V},c_{0,V})$,
for all stages $V\in\V{}$, which of course is also the neutral
element for addition. In this sense, $\PR{\mathR}$ is close to
being a group object. Taking equivalence classes as described in
(\ref{EquivRelInPR}) makes $\PR{\mathR}$ into a group object,
$\PR{\mathR}/\equiv$, isomorphic to $\kSR$.

Since multiplication with arbitrary real numbers is well-defined,
the presheaf $\PR{\mathR}$ is `almost a vector space object' over
$\ps\mathR$.

Elements of $k(\PR{\mathR})_V$ are of the form
$[(\mu_1,\nu_1),(\mu_2,\nu_2)]$. Multiplication with an arbitrary
real number $r$ is defined in the following way:
\begin{equation}
                c_{r,V}[(\mu_1,\nu_1),(\mu_2,\nu_2)]:=\left\{
                \begin{tabular}
                [c]{ll}%
                $[(r\mu_1,r\nu_1),(r\mu_2,r\nu_2)]$ & if $r\geq0$\\
                $[(-r\mu_2,-r\nu_2),(-r\mu_1,-r\nu_1)]$ & if $r<0.$%
                \end{tabular}
                \right.
\end{equation}
The additive group structure on $k(\PR{\mathR})$ implies
\begin{equation}
                -[(\mu_1,\nu_1),(\mu_2,\nu_2)]=[(\mu_2,\nu_2),(\mu_1,\nu_1)],
\end{equation}
so the multiplication with the real number $-1$ fits in with the
group structure. On the other hand, this negation is completely
different from the negation on $\PR{\mathR}$ (where
$-(\mu,\nu)=(-\nu,-\mu)$ for all $(\mu,\nu)\in\PR{\mathR}_V$ and
all $V\in\Ob{\V{}}$).

\begin{remark} The presheaf $k(\PR{\mathR})$ is a vector space
object in $\SetH{}$, with $\ps\mathR$ as the scalars.
\end{remark}

\section{The Role of Unitary Operators}
\label{Sec:Unitary}
\subsection{The Daseinisation of Unitary Operators}
Unitary operators play an important  role in the formulation of
quantum theory, and we need to understand the analogue of this in
our topos formalism.

Unitary operators arise in the context of both `covariance' and
`invariance'. In elementary quantum theory, the `covariance'
aspect comes the fact that if we have made the associations
\begin{eqnarray*}
\mbox{Physical state }&\mapsto&
        \mbox{state vector $\ket\psi\in\Hi$}\\[5pt]
\mbox{Physical observable }A &\mapsto&
        \mbox{self-adjoint operator $\A$ acting on $\Hi$}
\end{eqnarray*}
then the same physical predictions will be obtained if the
following associations are used instead
\begin{eqnarray}
\mbox{Physical state }&\mapsto&
     \mbox{state vector $\U \ket\psi\in\Hi$} \label{UCovar}\\[5pt]
\mbox{Physical observable }A &\mapsto&
    \mbox{self-adjoint operator }\U\A\U^{-1}
                \mbox{ acting on $\Hi$}\nonumber
\end{eqnarray}
for any unitary operator $\hat U$. Thus the mathematical
representatives of physical quantities are defined only up to
arbitrary transformations of the type above. In non-relativistic
quantum theory, this leads to the canonical commutation relations;
the angular-momentum commutator algebra; and the unitary time
displacement operator. Similar considerations in relativistic
quantum theory involve the Poincar\'e group.

The `invariance' aspect of unitary operators arises when the
operator commutes with the Hamiltonian, giving rise to conserved
quantities.

\paragraph{Daseinisation of unitary operators.} As a side remark,
we first consider the question if daseinisation can be applied to
a unitary operator $\U$. The answer is clearly `yes', via the
spectral representation:
\begin{equation}
        \U=\int_{\mathR}e^{i\l} d\hat E^U_\l
\end{equation}
where $\l\mapsto E^\U_\l$ is the spectral family for $\U$. Then,
in analogy with \eqs{Def:dastooVA}{Def:dastoiVA} we have the
following: {\definition The \emph{outer daseinisation},
$\dastoo{}U$, resp.\ the \emph{inner daseinisation}, $\dastoi{}U$,
of a unitary operator $\U$ are defined as follows:
\begin{eqnarray}
\dastoo{V}{U}&:=&\int_\mathR e^{i\l}\,
       d\big(\delta^i_V(\hat E^U_{\l}) \big),\label{Def:dastooVU}
       \\
       \dastoi{V}{U}&:=&\int_{\mathR}e^{i\l}\, d
       \big(\bigwedge_{\mu>\l}\delta^o_V(\hat E^U_{\mu})\big),
\label{Def:dastoiVU}
\end{eqnarray}
at each stage $V$. }

To interpret these entities\footnote{It would be possible to
`complexify' the presheaf $\kSR$ in order to represent unitary
operators as arrows from $\Sig$\ to $\mathC\kSR$.  Similar remarks apply to the presheaf $\PR{\mathR}$. However, there
is no obvious physical use for this procedure.} we need to
introduce a new presheaf defined as follows.

{\definition The {\em outer, unitary de Groote presheaf}, $\dOU$,
is defined by:
\begin{enumerate}
\item[(i)] On objects $V\in\Ob{\V{}}$:  $\dOU_V:=V_{\rm un}$, the
collection of unitary operators in $V$.

\item[(ii)] On morphisms $i_{V^{\prime}V}:V^{\prime }\subseteq V:$
The mapping $\dOU(i_{V^{\prime}\, V}):\dOU_V
\map\dOU_{V^{\prime}}$ is given by
\begin{eqnarray}
        \dOU(i_{V^{\prime}\, V})(\hat\alpha)&:=&
                \dastoo{V^{\prime}}{\alpha}\\
&=&\int_\mathR e^{i\l}\,  d
\big(\delta^i(\hat E^\alpha_\l)_{V^{\prime}}\big)\\
&=&\int_\mathR e^{i\l}\,  d\big(\H(i_{V^\prime\,V})(\hat
E^\alpha_\l)\big)
\end{eqnarray}
for all $\hat\alpha\in\dOU_V$.
\end{enumerate}
} \noindent Clearly, (i) there is an analogous definition of an
`inner', unitary de Groote presheaf; and (ii) the map $V\mapsto
\dastoo{V}{U}$ defines a global element of $\dOU$.

This definition has the interesting consequence that, at each
stage $V$,
\begin{equation}
        \delta^o(e^{i\A})_V=e^{i\dastoo{V}{A}}
\end{equation}
A particular example of this construction is the one-parameter
family of unitary operators, $t\mapsto e^{it\hat H}$, where $\hat
H$ is the Hamiltonian of the system.

Of course, in our case everything commutes. Thus suppose
$g\mapsto\U_g$ is a representation of a Lie group $G$ on the
Hilbert space $\Hi$. Then these operators can be daseinised to
give the map $g\mapsto\delta^o(\U_g)$, but generally this is not a
representation of $G$ (or of its Lie algebra) since, at each stage
$V$ we have
\begin{equation}
 \delta^o(\U_{g_1})_V\delta^o(\U_{g_2})_V =
 \delta^o(\U_{g_2})_V\delta^o(\U_{g_1})_V
\end{equation}
for all $g_1,g_2\in G$. Clearly, there is an analogous result for
inner daseinisation.

\subsection{Unitary Operators and Arrows in $\SetH{}$.}
\subsubsection{The Definition of $\ell_{\hat U}:\Ob{\V{}}\map\Ob{\V{}}$}
In classical physics, the analogue of unitary operators are
`canonical transformations'; \ie\ symplectic diffeomorphisms from
the state space $\S$ to itself. This suggests that should try to
associate arrows in $\SetH{}$ with each unitary operator $\hat U$.

Thus we want to see if unitary operators can  act on the objects
in $\SetH{}$. In fact, if $\UH$ denotes the group of all unitary
operators in $\Hi$, we would like to find a \emph{realisation} of
$\UH$ in the topos $\SetH{}$.

As a first step, if $\U\in\UH$ and $V\in\Ob{\V{}}$ is an abelian
von Neumann sub-algebra of $\BH$, let us define
\begin{equation}
        \ell_\U(V):=\{\U\A\U^{-1}\mid \A\in V\}.
                                \label{Def:ellU(V)}
\end{equation}
It is clear that $\ell_\U(V)$ is a unital, abelian algebra of
operators, and that self-adjoint operators are mapped into
self-adjoint operators.  Furthermore, the map
$\A\mapsto\U\A\U^{-1}$ is continuous in the weak-operator
topology, and hence, if $\{\A_i\}_{i\in I}$ is a weakly-convergent
net of operators in  $V$, then $\{\U\A_i\U^{-1}\}_{i\in I}$ is a
weakly-convergent net of operators in $\ell_\U(V)$, and vice
versa.

It follows that $\ell_\U(V)$ is an abelian von Neumann algebra
(\ie\ it is weakly closed), and hence $\ell_\U$ can be viewed as a
map $\ell_\U:\Ob{\V{}}\map\Ob{\V{}}$.  We note the following:
\begin{enumerate}
\item Clearly, for all $\U_1,\U_2\in\UH$,
\begin{equation}
        \ell_{\U_1}\circ\ell_{\U_2}=\ell_{\U_1\U_2}
        \label{RepUHV(H)}
\end{equation}
Thus $\U\mapsto\ell_\U$ is a realisation of the group $\UH$ as a
group of transformations of $\Ob{\V{}}$.

\item For all $U\in\UH$, $V$ and $\ell_\U(V)$ are
\emph{isomorphic} sub-algebras of $\BH$, and
$\ell_\U^{-1}=\ell_{\U^{-1}}$.

\item If $V^\prime\subseteq V$, then, for all $\U\in\UH$,
\begin{equation}
        \ell_\U(V^\prime)\subseteq\ell_\U (V).
\end{equation}
Hence, each transformation $\ell_\U$ preserves the
partial-ordering of the poset category $\V{}$.

From this it follows  that each $\ell_\U:\Ob{\V{}}\map\Ob{\V{}}$
is a \emph{functor} from the category $\V{}$ to itself.

\item One consequence of the order-preserving property of
$\ell_\U$ is as follows. Let  $S$ be a sieve of arrows on $V$,
\ie\ a collection of sub-algebras of $V$ with the property that if
$V^\prime\in S$, then, for all $V^{\prime\prime}\subseteq
V^\prime$ we have $V^{\prime\prime}\in S$. Then
\begin{equation}
        \ell_\U(S):=\{\ell_\U(V^\prime)\mid V^\prime\in S\}
        \label{Def:ellU(S)}
\end{equation}
is a sieve of arrows on $\ell_\U(V)$.\footnote{In the partially
ordered set $\V{}$, an arrow from $V^{\prime}$ to $V$ can be
identified with the sub-algebra $V^{\prime}\subseteq V$, since
there is exactly one arrow from $V^{\prime}$ to $V$.}
\end{enumerate}

\subsubsection{The Effect of $\ell_\U$ on Daseinisation}
We recall that if $\P$ is any projection, then the (outer)
daseinisation, $\dastoo{V}{P}$, of $\P$ at stage $V$ is
(\eq{Def:dasouter})
\begin{equation}
\dastoo{V}{P}:=\bigwedge\big\{\hat{Q}\in\mathcal{P}(V)\mid
\hat{Q}\succeq \P\big\}
\end{equation}
where we have resorted once more to using the propositional
language $\PL{S}$. Thus
\begin{eqnarray}
        \U\dastoo{V}{P}\U^{-1}&=&\U\bigwedge
\big\{\hat{Q}\in\mathcal{P}(V)\mid\hat{Q}\succeq \P\big\}\U^{-1} \nonumber\\
        &=& \bigwedge\big\{\U\hat{Q}\U^{-1}\in\mathcal{P}
(\ell_\U(V))\mid\hat{Q}\succeq \P\big\} \nonumber\\
        &=& \bigwedge\big\{\U\hat{Q}\U^{-1}\in\mathcal{P}
(\ell_\U(V))\mid\U\hat{Q}\U^{-1}\succeq \U\P\U^{-1}\big\} \nonumber\\
        &=& \delo(\U\P\U^{-1})_{\ell_\U(V)}             \label{Udas(P)U_1=}
\end{eqnarray}
where we used the fact that the map $\hat Q\mapsto\U\hat Q\U^{-1}$
is weakly continuous.

Thus we have the important result
\begin{equation}
\U\dastoo{V}{P}\U^{-1}=\delo(\U\P\U^{-1})_{\ell_\U(V)}
                \label{Utransdas}
\end{equation}
for all unitary operators $\U$, and for all stages $V$. There is
an analogous result for inner daseinisation.

Equation \eq{Utransdas} can be applied to the de Groote presheaf
$\dG$ to give
\begin{equation}
\U\dastoo{V}{A}\U^{-1}=\delta^{o}(\U\A\U^{-1})_{\ell_\U(V)}
\end{equation}
for unitary operators $\U$, and all stages $V$.

We recall that the truth sub-object, $\ps\TO^{\ket\psi}$, of the
outer presheaf, $\G$, is defined at each stage $V$ by (cf
\eq{Def:dasinner})
\begin{eqnarray}
 \ps\TO^{\ket\psi}_V&:=&\{\hat\alpha\in \G_V\mid
                {\rm Prob}(\hat\alpha;\ket\psi)=1\}\label{TOpsi1Rep1}
                                                \nonumber\\[2pt]
        &=&\{\hat\alpha\in \G_V\mid
                \bra\psi\hat\alpha\ket\psi=1\}     \label{TOpsi2Rep1}
\end{eqnarray}
The neo-realist, physical interpretation of $\ps\TO^{\ket\psi}$ is
that the `truth' of the proposition represented by $\hat P$  is
\begin{eqnarray}
\TValM{\daso{P}\in\ps\TO^{\ket\psi}}_V &:=&\{V^\prime\subseteq
V\mid
\dastoo{V^\prime}{P}\in\ps\TO^{\ket\psi}_{V^\prime}\}  \\
        &=&\{V^\prime\subseteq V\mid \bra\psi\dastoo{V^\prime}{P}
                        \ket\psi=1\}\label{TVDas}
\end{eqnarray}
for all stages $V$. We then get
\begin{eqnarray}
            \nonumber &\ell_\U&\big(\TValM{\daso{P}\in\ps\TO^{\ket\psi}}_V\\
            &=& \ell_\U\{V^\prime\subseteq V\mid\bra\psi
        \dastoo{V^\prime}{P}\ket\psi=1\}\\
            &=& \{\ell_\U(V^\prime)\subseteq\ell_\U(V)\mid\bra\psi
        \dastoo{V^\prime}{P}\ket\psi=1\}\\
            &=& \{\ell_\U(V^\prime)\subseteq\ell_\U(V)\mid\bra\psi
        \U^{-1}\U\dastoo{V^\prime}{P}\U^{-1}\U\ket\psi=1\}\\
            &=& \{\ell_\U(V^\prime)\subseteq
                \ell_\U(V)\mid\bra\psi\U^{-1}\delo(\U\P\U^{-1})_{\ell_\U(V)}
        \U\ket\psi=1\}\\
            &=& \TValM{\delo(\U\P\U^{-1})\in\ps\TO^{\U\ket\psi}}_{\ell_\U (V).}
\end{eqnarray}
Thus we get the important result
\begin{equation}
\TValM{\delo(\U\P\U^{-1})\in\ps\TO^{\U\ket\psi}}_{\ell_\U(V)}=
        \ell_\U\big(\TValM{\daso{P}\in\ps\TO^{\ket\psi}}_V\big).
                \label{[]UCovar}
\end{equation}
This can be viewed as the topos analogue of the statement in
\eq{UCovar} about the invariance of the results of quantum theory
under the transformations $\ket\psi\mapsto\U\ket\psi$,
$\A\mapsto\U\A\U^{-1}$. Of course, there is a pseudo-state
analogue of all these expressions involving the sub-objects
$\ps\w^{\ket\psi}$, $\ket\psi\in\Hi$.

\subsubsection{The $\U$-twisted Presheaf}
Let us return once more to the definition \eq{Def:ellU(V)} of the
functor $\ell_\U:\V{}\map\V{}$. As we shall see later, any such
functor induces a `geometric morphism' from $\SetH{}$ to
$\SetH{}$. The exact definition is not needed here: it suffices to
remark that part of this geometric morphism is an arrow
$\ell_\U^*:\SetH{}\map\SetH{}$ defined by
\begin{equation}
        \ps{F}\mapsto\ell_\U^*\ps{F}:=\ps{F}\circ\ell_\U.
\end{equation}

Note that, if $\U_1,\U_2\in\UH$ then, for all presheaves $\ps{F}$,
\begin{eqnarray}
        \ell_{\U_2}^*(\ell_{\U_1}^*\ps{F})&=&
                (\ell_{\U_1}^*\ps{F})\circ\ell_{\U_2}=
        (\ps{F}\circ\ell_{\U_1})\circ\ell_{\U_2}\nonumber\\
        &=&\ps{F}\circ(\ell_{\U_1}\circ\ell_{\U_2})=
                        \ps{F}\circ\ell_{\U_1\U_2}\nonumber\\
&=&\ell_{\U_1\U_2}^*\ps{F}.
\end{eqnarray}
Since this is true for all functors $\ps{F}$ in $\SetH{}$, we
deduce that
\begin{equation}
        \ell_{\U_2}^*\circ\ell_{\U_1}^*=\ell_{\U_1\U_2}^*
\end{equation}
and hence the map $\U\mapsto\ell_\U^*$ is an (anti-)representation
of the group $\UH$ by arrows in the topos $\SetH{}$.

Of particular interest to us are the presheaves $\ell_\U^*\Sig$
and $\ell_U ^*\kSR$. We denote them by $\Sig^\U$ and $\kSR^\U$
respectively and say that they are `$\U$-twisted'.

{\theorem For each $\U\in\UH$, there is a natural isomorphism
$\iota:\Sig\map\Sig^\U$ as given in the following diagram
\begin{center}
\setsqparms[1`1`1`1;1000`700]
\square[\Sig_V`\Sig_V^\U`\Sig_{V^{\prime}}`\Sig_{V^\prime}^\U;
\iota^\U_V`\Sig(i_{V^\prime V})`\Sig^\U(i_{V^\prime
V})`\iota^\U_{V^\prime}]
\end{center}
where, at each stage $V$,
\begin{equation}
        (\iota^U_V(\l))(\A):=\brak\l{\U^{-1}\A\U}
\end{equation}
for all $\l\in\Sig_V$, and all $\A\in V_{\sa}$.}

The proof, which just involves chasing round the diagram above
using the basic definitions, is not included here.

Even simpler is the following theorem: {\theorem For each
$\U\in\UH$, there is a natural isomorphism
$\ka^\U:\SR\map(\SR)^\U$ whose components
$\ka^\U_V:\SR_V\map(\SR)^\U_V$ are given by
\begin{equation}
\ka^\U_V(\mu)(\ell_\U(V^\prime)):=\mu(V^\prime) \label{Def:kRRU}
\end{equation}
for all $V^\prime\subseteq V$.}

Here, we recall $\mu\in\SR_V$ is a function
$\mu:\downarrow\!\!V\map\mathR$ such that if $V_2\subseteq
V_1\subseteq V$ then $\mu(V_2)\geq\mu(V_1)$, \ie\ an
order-reversing function. In \eq{Def:kRRU} we have used the fact
that there is a bijection between the sets
$\downarrow\!\!\ell_\U(V)$ and $\downarrow\!\!V$.

Finally, {\theorem We have the following commutative diagram:
\begin{center}
                \setsqparms[1`1`1`1;1000`700]
                \square[\Sig`\Sig^\U`\SR`\SR^\U.;
                \iota^\U`\dasB{A}`\breve{\delta}(\U^{-1}\A\U)`\ka^\U]
\end{center}
}

\paragraph{The analogue of unitary operators for a general topos.}
It is  interesting to reflect on  the analogue of the above
constructions for a general topos. It soon becomes  clear that,
once again, we encounter the antithetical concepts of `internal'
and `external'.

For example, in the discussion above, the unitary operators and
the group $\UH$ lie outside the topos $\SetH{}$ and enter directly
from the underlying, standard quantum formalism. As such, they are
external to both the languages $\PL{S}$ and $\L{S}$.  We
anticipate that notions of `covariance' and `symmetry'  have
applications well beyond those in classical physics and quantum
physics. However,  at the very least, in a general topos one would
presumably replace the external $\UH$ with an internal group
object in the topos concerned. And, of course,  the notion of
`symmetry' is closely related to the concept to time, and time
development, which opens up a Pandora's box of possible
speculation. These issues are important, and await further
development.

\section{The Category of Systems}
\label{Sec:CatSys}
\subsection{Background Remarks}
We now return to the more general aspects of our theory, and study
its application  to a \emph{collection} of systems, each one of
which may be associated with a  different topos. For example, if
$S_1,S_2$ is a pair of systems, with associated topoi $\tau(S_1)$
and $\tau(S_2)$, and if $S_1$ is a sub-system of $S_2$, then we
wish to consider how $\tau(S_1)$ is related to $\tau(S_2)$.
Similarly, if a composite system is formed from a pair of systems
$S_1,S_2$, what relations are there between the topos of the
composite system and the topoi of the constituent parts?

Of course, in one sense, there is  only one true `system', and
that is the universe as a whole. Concomitantly,  there  is just
one local language, and one topos. However, in practice,  the
science community divides the universe conceptually into portions
that are sufficiently simple to be amenable to theoretical and/or
empirical discussion. Of course, this division is not unique,  but
it must be such that the coupling between portions is weak enough
that,  to a good approximation, their theoretical models can be
studied in isolation from each other. Such an essentially
isolated\footnote{The ideal monad has no windows.} portion of the
universe is called a `sub-system'. By an abuse of language,
sub-systems of the universe are usually called `systems'  (so that
the universe as a whole is one super-system), and then we can talk
about `sub-systems' of these systems; or `composites' of them; or
sub-systems of the composite systems, and so on.

In practice,  references by physicists to systems and
sub-systems\footnote{The word `sub-system' does not only mean a
collection of objects that is spatially localised. One could also
consider sub-systems of field systems by focussing on a just a few
modes of the fields as is done, for example, in the
Robertson-Walker model for cosmology. Another possibility would be
to use fields localised in some fixed space, or space-time region
provided that this is consistent with the dynamics.}   do not
generally signify \emph{actual} sub-systems of the real universe
but rather idealisations of possible systems. This is what a
physics lecturer  means when he or she starts a lecture by saying
``Consider a point particle moving in three dimensions.....''.

To develop these ideas further we need   mathematical control over
the  systems of interest, and their interrelations. To this end,
we start by focussing on some collection, $\Sys$, of physical
systems to which a particular theory-type is deemed to be
applicable. For example, we could consider   a collection of
systems that are to be discussed using the methodology of
classical physics; or systems to be discussed using standard
quantum theory; or whatever. For completeness, we  require that
every sub-system of a system in $\Sys$ is itself a member of
$\Sys$, as is every composite of members of $\Sys$.

We shall assume that  the systems in $\Sys$ are all associated
with local languages of the type discussed earlier, and that they
all have the \emph{same} set of ground  symbols which, for the
purposes of the present discussion, we take to be just $\Si$ and
$\R$. It follows that the languages $\L{S}$, $S\in\Sys$, differ
from each other only in the set of function symbols $\F{S}$; \ie\
the set of \emph{physical quantities}.

As a simple example of the system-dependence of the set of
function symbols  let system $S_1$ be a point particle moving in
one dimension, and let the set of physical quantities be
$\F{S_1}=\{x,p,H\}$. In the language $\L{S_1}$, these
function-symbols represent the position, momentum, and energy of
the system respectively. On the other hand, if $S_2$ is a particle
moving in three dimensions, then in the language $\L{S_2}$ we
could have $\F{S_2}= \{x,y,z,p_x,p_y,p_z,H\}$ to allow for
three-dimensional position and momentum. Or, we could decide to
add angular momentum as well, to give the set $\F{S_2}=
\{x,y,z,p_x,p_y,p_z,J_x,J_y,J_z,H\}$.

\subsection{The Category $\Sys$}
\subsubsection{The Arrows and Translations for the Disjoint Sum
$S_1\sqcup S_2$.}\label{SubSubSec:ATDS}
 The use of local languages is central to our
overall topos scheme, and therefore we need to understand, in
particular, (i) the relation between the languages $\L{S_1}$ and
$\L{S_2}$ if $S_1$ is a sub-system of $S_2$; and (ii) the relation
between $\L{S_1}$, $\L{S_2}$ and $\L{S_1\di S_2}$, where $S_1\di
S_2$ denotes the composite of systems $S_1$ and $S_2$.

These discussions can be made more precise by regarding $\Sys$ as
a category whose objects are the systems.\footnote{To control the
size of  $\Sys$  we  assume that the collection of objects/systems
is  a \emph{set} rather than a more general class.} The arrows in
$\Sys$ need  to cover two basic types of relation: (i) that
between $S_1$ and $S_2$ if $S_1$ is a `sub-system' of $S_2$; and
(ii) that between a composite system, $S_1\di S_2$, and its
constituent systems, $S_1$ and $S_2$.

This may seem straightforward but, in fact,  care is needed since
although the idea  of a `sub-system'  seems intuitively clear, it
is hard to give a physically acceptable definition that is
universal. However,  some insight into this idea can be gained by
considering its meaning in classical physics. This is very
relevant for the general scheme since one of our main goals is to
make all theories `look' like classical physics in the appropriate
topos.

To this end, let $S_1$ and $S_2$ be classical systems whose state
spaces are the symplectic manifolds $\S_1$ and $\S_2$
respectively. If $S_1$ is deemed to be a sub-system of $S_2$, it
is natural to require that $\S_1$ is a \emph{sub-manifold} of
$\S_2$, \ie\ $\S_1\subseteq\S_2$. However, this condition cannot
be used as a \emph{definition} of a `sub-system' since the
converse may not be true: \ie\ if $\S_1\subseteq\S_2$, this does
not necessarily mean that, from a physical perspective, $S_1$
could, or would, be said to be a sub-system of $S_2$.\footnote{
For example, consider the diagonal sub-manifold
$\De(\S)\subset\S\times\S$ of the symplectic manifold $\S\times\S$
that represents the composite $S\di S$ of two copies of a system
$S$. Evidently, the states in $\De(\S)$ correspond to the
situation in which both copies of $S$\ `march together'. It is
doubtful if this would be recognised physically as a sub-system.}

On the other hand, there are situations where being a sub-manifold
clearly \emph{does} imply being a physical sub-system. For
example, suppose the state space $\S$ of a system $S$ is a
disconnected manifold with two components $\S_1$ and $\S_2$, so
that $\S$ is the disjoint union, $\S_1\coprod\S_2$, of the
sub-manifolds $\S_1$ and $\S_2$.  Then it  seems physically
appropriate to say that the system $S$ itself is disconnected, and
to write $S=S_1\sqcup S_2$ where the symplectic manifolds that
represent the sub-systems $S_1$ and $S_2$ are $\S_1$ and $\S_2$
respectively.

One reason why it is reasonable to call $S_1$ and $S_2$
`sub-systems' in this particular situation is that any continuous
dynamical evolution of a state point in $\S\simeq\S_1\sqcup \S_2 $
will always lie in either one component or the other. This
suggests that perhaps, in general, a necessary condition for a
sub-manifold $\S_1\subseteq\S_2$ to represent a physical
sub-system is that the dynamics of the system $S_2$ must be such
that $\S_1$ is  mapped into itself  under the dynamical evolution
on $\S_2$; in other words, $\S_1$ is a
\emph{dynamically-invariant} sub-manifold of $\S_2$. This
correlates with the idea mentioned earlier that sub-systems are
weakly-coupled with each other.

However, such a dynamical restriction is not something that should
be coded into the languages, $\L{S_1}$ and $\L{S_2}$: rather, the
dynamics is to be  associated with the \emph{representation} of
these languages in the appropriate topoi.

Still, this caveat does not apply to the disjoint sum $S_1\sqcup
S_2$ of two systems $S_1,S_2$, and we will assume that, in
general, (\ie\ not just in classical physics) it is legitimate to
think of $S_1$ and $S_2$ as being sub-systems of $S_1\sqcup S_2$;
something that we indicate by defining  arrows $i_1:S_1\map
S_1\sqcup S_2$, and $i_2:S_2\map S_1\sqcup S_2$ in $\Sys$.

To proceed further it is important to understand the connection
between the putative arrows in the category $\Sys$, and the
`translations' of the associated languages. The first step is to
consider what can be said about the relation between $\L{S_1\sqcup
S_2}$, and $\L{S_1}$ and $\L{S_2}$. All three languages share the
same ground-type symbols, and so what we are concerned with is the
relation between the function symbols of signature $\Si\map\R$ in
these languages.

By considering what is meant intuitively by the disjoint sum, it
seems plausible that each physical quantity for the system
$S_1\sqcup S_2$ produces a physical quantity for $S_1$, and
another one for $S_2$. Conversely, specifying a pair of physical
quantities---one for $S_1$ and one for $S_2$---gives a physical
quantity for $S_1\sqcup S_2$. In other words,
\begin{equation}
        \F{S_1\sqcup S_2}\simeq \F{S_1}\times\F{S_2}\label{FS1sumS2}
\end{equation}
However, it is important not to be too dogmatic about statements
of this type since in non-classical theories  new possibilities
can arise that are counter to intuition.

Associated with \eq{FS1sumS2} are the maps $\L{i_1}:\F{S_1\sqcup
S_2}\map\F{S_1}$ and $\L{i_2}:\F{S_1\sqcup S_2}\map\F{S_2}$,
defined as  the projection maps of the product. In the theory of
local languages, these transformations are essentially
\emph{translations} \cite{Bell88} of $\L{S_1\sqcup S_2}$ in
$\L{S_1}$ and $\L{S_2}$ respectively; a situation that we denote
$\L{i_1}:\L{S_1\sqcup S_2}\map\L{S_1}$, and $\L{i_2}:\L{S_1\sqcup
S_2}\map\L{S_2}$.

To be more precise, these operations are translations if, taking
$\L{i_1}$ as the explanatory example, the map
$\L{i_1}:\F{S_1\sqcup S_2}\map \F{S_1}$ is supplemented with the
following map from the ground symbols of $\L{S_1\sqcup S_2}$ to
those of $\L{S_1}$:
\begin{eqnarray}
        \L{i_1}(\Si)&:=&\Si,            \label{tSigma}\\
        \L{i_1}(\R)&:=&\R,              \label{tR}    \\
        \L{i_1}(1)&:=&1,                \label{t1}    \\
        \L{i_1}(\Omega)&:=&\Omega.       \label{tOmega}
\end{eqnarray}
Such a translation map is then extended  to all type symbols using
the definitions
\begin{eqnarray}
        \L{i_1}(T_1\times T_2\times\cdots\times T_n)&=&
        \L{i_1}(T_1)\times \L{i_1}(T_2)  \times\cdots\times
        \L{i_1}(T_n), \\[2pt]
        \L{i_1}(PT)&=&P[\L{i_1}(T)]
\end{eqnarray}
for all finite $n$ and all type symbols $T,T_1,T_2,\ldots,T_n$.
This, in turn, can be extended inductively to all terms in the
language. Thus, in our case, the translations act trivially on all
the type symbols.

\paragraph{Arrows in $\Sys$ \emph{are} translations.}
Motivated by this argument  we now turn everything around and, in
general, \emph{define} an arrow $j:S_1\map S$ in the category
$\Sys$ to mean that there is some \emph{physically meaningful} way
of transforming the physical quantities in $S$ to physical
quantities in $S_1$. If, for any pair of systems $S_1,S$ there is
more than one such transformation, then there will be more than
one arrow from $S_1$ to $S$.

To make this more precise, let $\Loc$ denote the collection of all
(small\footnote{This means that the collection of symbols is a
set, not a more general class.}) local languages. This is a
category whose objects are the local languages, and whose arrows
are translations between languages.  Then our basic assumption is
that the association $S\mapsto\L{S}$ is a covariant functor from
$\Sys$ to $\Loc^{\rm op}$, which we denote as ${\cal
L}:\Sys\map\Loc^{\rm op}$.

Note that the  combination of a pair of arrows in $\Sys$  exists
in so far as the associated translations can be combined.

\subsubsection{The Arrows and Translations for the
Composite System $S_1\di S_2$.}\label{SubSubSec:ArrTranComp} Let
us now consider the composition $S_1\di S_2$ of a pair of systems.
In the case of classical physics,  if $\S_1$ and $\S_2$ are the
symplectic manifolds that represent the systems $S_1$ and $S_2$
respectively, then the manifold that represents the composite
system is the cartesian product $\S_1\times\S_2$. This is
distinguished by the existence of the two projection functions
$\pr_1:\S_1\times \S_2\map \S_1$ and $\pr_2:\S_1\times
\S_2\map\S_2$.

It seems reasonable to impose the same type of structure on
$\Sys$: \ie\ to require there to be arrows $p_1:S_1\di S_2\map
S_1$ and $p_2:S_1\di S_2\map S_2$ in $\Sys$. However, bearing in
mind the definition above, these arrows $p_1,p_2$ exist if, and
only if, there are corresponding translations
$\L{p_1}:\L{S_1}\map\L{S_1\di S_2}$, and
$\L{p_2}:\L{S_2}\map\L{S_1\di S_2}$. But there \emph{are} such
translations: for if $A_1$ is a physical quantity for system
$S_1$, then $\L{p_1}(A_1)$ can be defined as that same physical
quantity, but now regarded as pertaining to the combined system
$S_1\di S_2$; and analogously for system $S_2$.\footnote{For
example, if $A$ is the energy of particle $1$, then we can talk
about this energy in the combination of a pair of particles. Of
course, in---for example---classical physics there is no reason
why the energy of particle $1$ should be \emph{conserved} in the
composite system, but that, dynamical, question is a different
matter.}  We shall denote this translated quantity,
$\L{p_1}(A_1)$, by $A_1\di 1$.

Note that we do \emph{not} postulate any simple relation between
$\F{S_1\di S_2}$ and $\F{S_1}$ and $\F{S_2}$; \ie\ there is no
analogue of \eq{FS1sumS2} for combinations of systems.

The definitions above of the basic arrows suggest that we might
also want to impose the following conditions:
\begin{enumerate}
\item The arrows $i_1:S_1\map S_1\sqcup S_2$, and
$i_2:S_2\map S_1\sqcup S_2$ are \emph{monic} in $\Sys$.

\item The arrows $p_1:S_1\di S_2\map S_1$ and
$p_2:S_1\di S_2\map S_2$ are \emph{epic} arrows in $\Sys$.
\end{enumerate}
However, we do \emph{not} require that $S_1\cup S_2$ and $S_1\di
S_2$ are the co-product and product, respectively, of $S_1$ and
$S_2$ in the category $\Sys$.

\subsubsection{The Concept of `Isomorphic' Systems.}
We also need to decide  what it means to say that two systems
$S_1$ and $S_2$ are \emph{isomorphic}, to be denoted $S_1\simeq
S_2$. As with the concept of sub-system, the notion of isomorphism
is to some extent a matter of definition rather than obvious
physical structure, albeit with the expectation that isomorphic
systems in $\Sys$ will correspond to isomorphic local languages,
and be represented by isomorphic mathematical objects in any
concrete realisation of the axioms: for example, by isomorphic
symplectic manifolds in classical physics.

To a considerable extent, the physical meaning of `isomorphism'
depends on whether one is dealing with actual physical systems, or
idealisations of them. For example,  an electron confined in a box
in Cambridge is presumably isomorphic to one  confined in the same
type of box in London, although they are not the same physical
system. On the other hand, when a lecturer says ``Consider an
electron trapped in a box....'', he/she is referring to an
idealised system.

One could, perhaps, say that an idealised system is an
\emph{equivalence class} (under isomorphisms)\ of real systems,
but even working only with  idealisations does not entirely remove
the need for the concept of isomorphism.

For example, in classical mechanics, consider the (idealised)\
system $S$ of a point particle moving in a box, and let $1$ denote
the `trivial system' that consists of just a single  point with no
internal or external degrees of freedom. Now consider the  system
$S\di 1$. In classical mechanics this is represented by the
symplectic manifold $\S\times\{*\}$, where $\{*\}$\ is a single
point, regarded as a zero-dimensional manifold. However,
$\S\times\{*\}$ is isomorphic to the manifold $\S,$ and it is
clear physically that the system $S\di 1$ is isomorphic to the
system $S$.  On the other hand, one cannot say that $S\di 1$ is
literally \emph{equal} to $S$, so the concept of `isomorphism'
needs to be maintained.

One thing that \emph{is} clear is that if $S_1\simeq S_2$ then
$\F{S_1}\simeq \F{S_2}$, and if any other non-empty sets of
function symbols are present, then they too must be isomorphic.

Note that when introducing a trivial system, $1$, it necessary to
specify its local language, $\L{1}$. The set of function symbols
$\F{1}$ is not completely empty since, in classical physics, one
does have a preferred physical quantity, which is just the number
$1$. If one asks what is meant in general by the `number $1$' the
answer is not trivial since, in the reals $\mathR$, the number $1$
is the multiplicative identity. It would be possible to add the
existence of such a unit to the axioms for $\R$ but this would
involve introducing a multiplicative structure and we do not know
if there might be physically interesting topos representations
that do not have this feature.

For the moment then, we will  say that the trivial system has just
a single physical quantity, which in classical physics translates
to the number $1$.  More generally, for the language $\L{1}$ we
specify that $\F{1}:=\{I\}$, \ie\ $\F{1}$ has just a single
element, $I$, say. Furthermore, we add the axiom
\begin{equation}
 :\forall \va{s}_1\forall \va{s}_2,I(\va{s}_1)=I(\va{s}_2),
\end{equation}
where $\va{s}_1$ and $\va{s}_2$ are variables of type $\Si$. In
fact, it seems natural to add such a trivial quantity to the
language $\L{S}$ for \emph{any} system $S$, and from now on we
will assume that this has been done.

A related issue is that, in classical physics, if $A$ is a
physical quantity, then so is $rA$ for any $r\in\mathR$. This is
because the set of classical quantities
$A_\s:\Si_\s\map\R_\s\simeq\mathR$ forms a \emph{ring} whose
structure derives from the ring structure of $\mathR$. It would be
possible to add ring axioms for $\R$ to the language $\L{S}$, but
this is too strong, not least because, as shown earlier, it fails
in quantum theory. Clearly, the general question of axioms for
$\R$ needs more thought: a task for later work.

If desired, an `\emph{empty}' system, $0$, can be added too, with
$\F{0}:=\emptyset$. This, so called, `pure language', $\L{0}$, is
an initial object in the category $\Loc$.

\subsubsection{An Axiomatic Formulation of the Category $\Sys$}
Let us now  summarise, and clarify, our list of axioms for a
category $\Sys$:
\begin{enumerate}
        \item  The collection $\Sys$ is a small category
    whose  objects are the systems of interest (or, if desired,
    isomorphism classes of such systems) and whose
    arrows are defined as above.

        Thus the fundamental property of an arrow $j:S_1\map S$ in $\Sys$
        is that it induces, and is essentially \emph{defined by}, a
        translation $\L{j}:\L{S}\map \L{S_1}$. Physically, this corresponds
        to  the physical quantities for system $S$ being `pulled-back' to
        give physical quantities for system $S_1$.

        Arrows of particular interest are those associated with
        `sub-systems' and `composite systems', as discussed above.

        \item The axioms for a category are satisfied because:
        \begin{enumerate}
            \item Physically, the ability to form composites of
            arrows follows from the concept of `pulling-back' physical
            quantities. From a mathematical perspective,
            if $j:S_1\map S_2$ and $k:S_2\map S_3$,
            then the translations give functions $\L{j}:\F{S_2}
                \map \F{S_1}$ and $\L{k}:\F{S_3} \map \F{S_2}$. Then clearly
                $\L{j}\circ \L{k}:\F{S_3}\map \F{S_1}$, and this can thought of as
                the translation corresponding to the arrow $k\circ j:S_1\map S_3$.

            The associativity of the law of arrow combination can be
            proved in a similar way.

           \item We add by hand a special arrow ${\rm id}_S : S \map S$
           which is defined to correspond to the translation $\L{\id_S}$ that
          is given by the identity map on $\F{S}$. Clearly, ${\rm id}_S :
         S \map S$ acts an an identity morphism should.
        \end{enumerate}

        \item For any pair of systems $S_1,S_2$, there is a
        \emph{disjoint sum}, denoted $S_1\sqcup S_2$. The disjoint sum has
        the following properties:
        \begin{enumerate}
            \item For all systems $S_1,S_2,S_3$ in $\Sys$:
            \begin{equation}
                        (S_1\sqcup S_2)\sqcup S_3\simeq S_1\sqcup (S_2\sqcup S_3).
            \end{equation}

            \item For all systems $S_1, S_2$ in $\Sys$:
        \begin{equation}
                    S_1\sqcup S_2 \simeq S_2\sqcup S_1.
            \end{equation}

        \item There are  arrows in $\Sys$:
        \begin{equation}
                        i_1:S_1\map S_1\sqcup S_2 \mbox{\ \ and }
                        i_2:S_2\map S_1\sqcup S_2
        \end{equation}
          that are associated with translations in the sense discussed in
          Section \ref{SubSubSec:ATDS}. These are associated with the
          decomposition
                \begin{equation}
                \F{S_1\sqcup S_2}\simeq\F{S_1}\times\F{S_2}.
                                \label{F(S1+S2)=FS1xFS2}
                \end{equation}
        \end{enumerate}

        We assume that if $S_1,S_2$ belong to $\Sys$, then $\Sys$ also
        contains $S_1\sqcup S_2$.

        \item For any given pair of systems $S_1,S_2$, there is a
        \emph{composite} system in $\Sys$, denoted\footnote{The product
        operation in a monoidal category is often written `$\otimes$'.
        However,  a different symbol  has been used here to avoid
        confusion with existing usages in physics of the tensor product
        sign `$\otimes$'.} $S_1\di S_2$, with the following properties:
        \begin{enumerate}
            \item For all systems $S_1,S_2,S_3$ in $\Sys$:
            \begin{equation}
                        (S_1\di S_2)\di S_3 \simeq
                 S_1\di (S_2\di S_3).\label{(Sys: Assoc)}
            \end{equation}

            \item For all systems $S_1, S_2$ in $\Sys$:
        \begin{equation}
                    S_1\di S_2 \simeq S_2 \di S_1.            \label{(SysSym)}
            \end{equation}

            \item There are  arrows in $\Sys$:
        \begin{equation}
                        p_1:S_1\di S_2 \map S_1\mbox{ and }
                        p_2:S_1\di S_2 \map S_2
        \end{equation}
          that are associated with translations in the sense discussed
           in Section \ref{SubSubSec:ArrTranComp}.
        \end{enumerate}
        We assume that if $S_1,S_2$ belong to $\Sys$, then $\Sys$ also
        contains the composite system $S_1\di S_2$.

        \item It seems physically reasonable to add the axiom
        \begin{equation}
            (S_1\sqcup S_2)\di S\simeq (S_1\di S)\sqcup (S_2\di S)
                        \label{S1cupS2diS}
        \end{equation}
        for all systems $S_1,S_2,S$. However, physical intuition can
        be a dangerous thing, and so, as with most of these axioms,
        we are not dogmatic, and feel free to change them as
        new insights emerge.

        \item There is a trivial system, $1$,  such that for all systems
        $S$, we have
        \begin{equation}
        S \di 1\simeq  S \simeq1 \di S
                                \label{(Sys: Unit)}
        \end{equation}

        \item It may be convenient to postulate an `empty system', $0$,
        with the properties
        \begin{eqnarray}
        S\di 0&\simeq& 0\di S\simeq 0 \\
        S\sqcup 0&\simeq& 0\sqcup S\simeq S
        \end{eqnarray}
        for all systems $S$.

        Within the meaning given to arrows in $\Sys$, $0$ is a
        \emph{terminal object} in $\Sys$. This is because the empty set of
        function symbols of signature $\Si\map\R$ is a subset of any other
        set of function symbols of this signature.
\end{enumerate}

It might seem tempting to postulate that composition laws are
well-behaved with respect to arrows. Namely, if $j:S_1\map S_2$,
then, for any $S$, there is an arrow $S_1\di S\map S_2\di S$ and
an arrow $S_1\sqcup S\map S_2\sqcup S$.\footnote{A more accurate
way of capturing this idea is to say that the operation
$\Sys\times\Sys\map\Sys$ in which
\begin{equation}
\langle S_1,S_2\rangle\mapsto S_1\di S_2 \label{(Sys: BiFunc)}
\end{equation}
is a \emph{bi-functor} from $\Sys\times\Sys$ to $\Sys$. Ditto for
the operation in which $\la S_1,S_2\ra\mapsto S_1\sqcup S_2$.}

In the case of the disjoint sum, such an arrow can be easily
constructed using \eq{F(S1+S2)=FS1xFS2}. First split the function
symbols in $\F{S_1\sqcup S}$ into $\F{S_1}\times\F{S}$ and the
function symbols in $\F{S_2\sqcup S}$ into $\F{S_2}\times\F{S}$.
Since there is an arrow $j:S_1\map S_2$, there is a translation
$\L{j}:\L{S_2}\map \L{S_1}$, given by a mapping
$\L{j}:\F{S_2}\map\F{S_1}$. Of course, then there is also a
mapping $\L{j}\times
\L{\id_S}:\F{S_2}\times\F{S}\map\F{S_1}\times\F{S}$, \ie\ a
translation between $\L{S_2\sqcup S}$ and $\L{S_1\sqcup S}$. Since
we assume that there is an arrow in $\Sys$ whenever there is a
translation (in the opposite direction), there is indeed  an arrow
$S_1\sqcup S\map S_2\sqcup S$.

In the case of the composition, however, this would require a
translation $\L{S_2\di S}\map \L{S_1\di S}$, and this cannot be
done in general since we have no \emph{prima facie} information
about the  set of function symbols $\F{S_2\di S}$. However, if we
restrict the arrows in $\Sys$ to be those associated with
sub-systems, combination of systems, and compositions of such
arrows, then it is easy to see that the required translations
exist (the proof of this makes essential use of \eq{S1cupS2diS}).

If we make this restriction of arrows, then the axioms
\eq{(SysSym)}, \eqs{(Sys: Unit)}{(Sys: BiFunc)}, mean that,
essentially, $\Sys$ has the structure of a \emph{symmetric
monoidal}\footnote{In the actual definition of a monoidal category
the two isomorphisms in \eq{(Sys: Unit)} are separated from each
other, whereas  we have identified them. Further more, these
isomorphism are required to be natural. This seems a correct thing
to require in our case, too.} category in which the monoidal
product operation is `$\di$', and the left and right unit object
is $1$. There is also a monoidal structure associated with the
disjoint sum `$\sqcup$', with $0$ as the unit object.

We say `essentially' because in order to comply with all the
axioms of a monoidal category, $\Sys$ must satisfy certain
additional, so-called, `coherence' axioms. However, from a
physical perspective these are very plausible statements about (i)
how the unit object $1$ intertwines with the $\di$-operation; how
the null object intertwines with the $\sqcup$-operation; and (iii)
certain properties of quadruple products (and disjoint sums) of
systems.

\paragraph{A simple example of a category $\Sys$.}
It might be helpful at this point to give a simple example of a
category $\Sys$. To that end, let $S$ denote a point particle that
moves in three dimensions, and let us suppose that $S$ has no
sub-systems other than the trivial system $1$. Then  $S\di S$ is
defined to be a pair of particles moving in three dimensions, and
so on. Thus the objects in our category are $1$, $S$, $S\di S$,
$\ldots$,  $S\di S\di\cdots S$ $\ldots$ where the `$\di$'
operation is formed any finite number of times.

At this stage, the only arrows are those that are associated with
the constituents of a composite system. However, we could
contemplate adding to the systems the disjoint sum $S\sqcup (S\di
S)$ which is a system that is either one particle or two particles
(but, of course, not both at the same time). And, clearly, we
could extend this to $S\sqcup (S\di S)\sqcup (S\di S\di S)$, and
so on. Each of these disjoint sums comes with its own arrows, as
explained above.

Note that this particular category of systems has the property
that it can be treated using either classical physics or quantum
theory.

\subsection{Representations of $\Sys$ in Topoi}
We assume that all the systems in $\Sys$ are to be treated with
the same theory type. We also assume that systems in $\Sys$ with
the \emph{same} language are to be represented in the same topos.
Then we define:\footnote{As emphasised already, the association
$S\mapsto \L{S}$ is generally not one-to-one: \ie\ many systems
may share the same language. Thus, when we come discuss the
representation of the language $\L{S}$ in a topos, the extra
information about the system $S$  is used in fixing the
representation.} {\definition\label{Defn:TopReal} A \emph{topos
realisation} of $\Sys$ is an association, $\phi$, to each system
$S$ in $\Sys$, of a triple $\phi(S)=\la\rho_{\phi,
S},\L{S},\tau_\phi(S)\ra$ where:

\begin{enumerate}
        \item[(i)] $\tau_\phi(S)$ is the topos in which the
        theory-type applied to system $S$ is to be realised.

       \item[(ii)] $\L{S}$ is the local language in $\Loc$
       that is associated with $S$.
        This is not dependent on the realisation $\phi$.

\item[(iii)]$\rho_{\phi, S}$ is
       a representation of the local language $\L{S}$ in the
       topos $\tau_\phi(S)$. As a more descriptive piece of notation
        we write $\rho_{\phi,S}:\L{S}\sq\tau_\phi(S)$.
        The key part of this representation is the map
\begin{equation}
        \rho_{\phi,S}:\F{S}\map\Hom{\tau_\phi(S)}{\Si_{\phi,S}}
        {\R_{\phi,S}}
\end{equation}
where $\Si_{\phi,S}$ and $\R_{\phi,S}$ are the state object and
quantity-value object, respectively, of the representation $\phi$
in the topos $\tau_\phi(S)$. As a convenient piece of notation we
write $A_{\phi,S}:=\rho_{\phi,S}(A)$ for all $A\in\F{S}$.
\end{enumerate}
} \noindent This definition is only partial; the possibility of
extending it will be discussed shortly.

Now, if $j:S_1\map S$ is an arrow in $\Sys$, then there is a
translation arrow ${\cal L}{(j)}:\L{S}\map\L{S_1}$. Thus we have
the beginnings of a commutative diagram
\begin{equation}
\setsqparms[1`1`-1`1;1000`700]\label{ComDiag}
\square[S_1`\la\rho_{\phi,S_1},\L{S_1},\tau_\phi(S_1)
\ra`S`\la\rho_{\phi,S},\L{S},\tau_\phi(S)\ra; \phi`j`?\times{\cal
L}(j)\times?`\phi]
\end{equation}
However, to be useful, the arrow on the right hand side of this
diagram should refer to some relation between (i) the topoi
$\tau_\phi(S_1)$ and $\tau_\phi(S)$; and (ii) the realisations
$\rho_{\phi,S_1}:\L{S_1}\sq\tau_\phi(S_1)$ and
$\rho_{\phi,S}:\L{S}\sq\tau_\phi(S)$: this is the significance of
the two `?' symbols in the arrow written `$?\times\L{j}\times?$'.

Indeed, as things stand, Definition \ref{Defn:TopReal} says
nothing  about relations between the topoi representations of
different systems in $\Sys$. We are particularly interested in the
situation where there are two different systems $S_1$ and $S$ with
an arrow $j:S_1\map S$ in $\Sys$.

We know that the arrow $j$ is associated with a translation
$\L{j}:\L{S}\map\L{S_1}$, and an attractive possibility,
therefore, would be to seek, or postulate, a `covering' map
$\phi(\L{j}): \Hom{\tau_\phi(S)}{\Si_{\phi,S}}{\R_{\phi,S}} \map
\Hom{\tau_\phi(S_1)}{\Si_{\phi,S_1}}{\R_{\phi,S_1}}$ to be
construed as a topos representation of the translation
$\L{j}:\L{S}\map \L{S_1}$, and hence of the  arrow $j:S_1\map S$
in $\Sys$.

This raises the questions of what properties these `translation
representations' should possess in order to justify saying that
they `cover' the translations. A minimal requirement is that if
$k:S_2\map S_1$ and $j:S_1\map S$, then the map $\phi(\L{j\circ
k}): \Hom{\tau_\phi(S)}{\Si_{\phi,S}}{\R_{\phi,S}} \map
\Hom{\tau_\phi(S_2)}{\Si_{\phi,S_2}}{\R_{\phi,S_2}}$ factorises as
\begin{equation}
        \phi(\L{j\circ k})=\phi(\L{k})\circ\phi(\L{j}).
                                                \label{phiL(jcirck)}
\end{equation}
We also require that
\begin{equation}
\phi(\L{\id_S})=\id: \Hom{\tau_\phi(S)}{\Si_{\phi,S}}{\R_{\phi,S}}
\map \Hom{\tau_\phi(S)}{\Si_{\phi,S}}{\R_{\phi,S}}
\label{phiL(idS)}
\end{equation}
for all systems $S$.

The conditions \eq{phiL(jcirck)} and \eq{phiL(idS)} seem eminently
plausible, and they are not particularly strong. A far more
restrictive axiom would be to require the following diagram to
commute:
\begin{equation}\label{ComDLphi0}
        \setsqparms[1`1`1`1;1200`700]
        \square[\F{S}`\Hom{\tau_\phi(S)}{\Si_{\phi,S}}
        {\R_{\phi,S}}`\F{S_1}`\Hom{\tau_\phi(S_1)}
        {\Si_{\phi,S_1}}{\R_{\phi,S_1}};
        \rho_{\phi,S}`\L{j}`\phi(\L{j})`\rho_{\phi,S_1}]
\end{equation}
At first sight, this requirement seems very appealing. However,
caution is needed when postulating `axioms' for a theoretical
structure in physics. It is easy to get captivated by the
underlying mathematics and to assume, erroneously, that  what is
mathematically elegant is necessarily true in the physical theory.

The translation $\phi(\L{j})$ maps an arrow from $\Si_{\phi,S}$ to
$\R_{\phi,S}$ to an arrow from $\Si_{\phi,S_1}$ to
$\R_{\phi,S_1}$. Intuitively, if $\Si_{\phi,S_1}$ is a `much
larger' object than $\Si_{\phi,S}$ (although since they lie in
different topoi, no direct comparison is available), the
translation can only be `faithful' on some part of
$\Si_{\phi,S_1}$ that can be identified with (the `image' of)
$\Si_{\phi,S}$. A concrete example of this will show up in the
treatment of composite quantum systems, see Subsection
\ref{SubSec:TranslCompQT}. As one might expect, a form of
entanglement plays a role here.

\subsection{Classical Physics in This  Form}
\subsubsection{The Rules so Far.}
Constructing maps $\phi(\L{j}):
\Hom{\tau_\phi(S)}{\Si_{\phi,S}}{\R_{\phi,S}} \map
\Hom{\tau_\phi(S_1)}{\Si_{\phi,S_1}}{\R_{\phi,S_1}}$ is likely to
be complicated when $\tau_\phi(S)$ and $\tau_\phi(S_1)$ are
different topoi, and so we begin  with the example of classical
physics, where the topos is always $\Set$.

In general, we are interested in the relation(s) between the
representations $\rho_{\phi,S_1}: \L{S_1}\sq\tau_\phi(S_1)$ and
$\rho_{\phi,S}:\L{S}\sq\tau_\phi(S)$ that is associated with an
arrow $j:S_1\map S$ in $\Sys$. In classical physics, we only have
to study the relation between the representations $\rho_{\s,S_1}:
\L{S_1}\sq\Set$ and $\rho_{\s,S}:\L{S}\sq\Set$.

Let us summarise what we have said so far (with $\s$ denoting the
$\Set$-realisation of classical physics):
\begin{enumerate}
    \item For any system $S$ in $\Sys$, a representation
        $\rho_{\s,S}:\L{S}\sq\Set$ consists of the following ingredients.
        \begin{enumerate}
                \item The ground symbol $\Si$  is
                represented by a symplectic  manifold,
                $\Si_{\s,S}:=\rho_{\s,S}(\Si)$, that serves as the
                classical state space.

                \item For all systems $S$, the ground symbol $\cal R$ is
                represented by the real numbers $\mathR$, \ie\
                $\R_{\s,S}=\mathR$, where $\R_{\s,S}:=\rho_{\s,S}(\R)$.

                \item Each function symbol $A:\Si\map\R$ in $\F{S}$ is
                represented by a function $A_{\s,S}=\rho_{\s,S}(A):
                \Si_{\s,S}\map\mathR$ in the set of functions\footnote{In
                practice, these functions are required to be measurable
                with respect to the Borel structures on the symplectic
                manifold $\Si_\s$ and $\mathR$. Many of the functions
                will also be smooth, but we will not go into such
                details here.} $C(\Si_{\s,S}, \mathR)$.
        \end{enumerate}

        \item The trivial system is mapped to a singleton set $\{*\}$
    (viewed as a zero-dimensional symplectic manifold):
    \begin{equation}
            \Si_{\s,1} := \{*\}.
    \end{equation}
        The empty system is represented by the empty set:
        \begin{equation}
        \Si_{\s,0}:=\emptyset.
        \end{equation}

        \item Propositions about the system $S$ are represented by (Borel)
        subsets of the state space $\Si_{\s,S}$.

        \item The composite system $S_1\di S_2$
        is represented by the
    Cartesian product $\Si_{\s,S_1}\times \Si_{\s,S_2}$; \ie\
    \begin{equation}
            \Si_{\s,\,S_1\di S_2}\simeq \Si_{\s,S_1}\times\Si_{\s,S_2}.
                                            \label{sigmaS1S2}
    \end{equation}

    The disjoint sum $S_1\sqcup S_2$ is represented by the disjoint
        union $\Si_{\s,S_1}\coprod \Si_{\s,S_2}$;\ie\
    \begin{equation}
        \Si_{\s,S_1\sqcup S_2}\simeq\Si_{\s,S_1}{\textstyle\coprod}\Si_{\s,S_2}.
     \end{equation}

    \item Let $j : S_1\map S$ be an arrow in $\Sys$. Then
        \begin{enumerate}
            \item There is a  translation map $\L{j}:\F{S}\map \F{S_1}$.

                \item  There is a symplectic function $\s(j):\Si_{\s,S_1}
        \map \Si_{\s,S}$ from the symplectic manifold $\Si_{\s,S_1}$
                to the symplectic manifold $\Si_{\s,S}$.
    \end{enumerate}
\end{enumerate}

The existence of this function $\s(j):\Si_{\s,S_1} \map
\Si_{\s,S}$ follows directly from the properties of sub-systems
and composite systems in classical physics. It is discussed in
detail below in Section \eq{SubSubSec:DetailsTrans}. As we shall
see, it underpins the classical realisation of our axioms.

These properties of the arrows stem from the fact that the
linguistic function symbols in $\F{S}$ are represented by
real-valued functions in $C(\Si_{\s,S},\mathR)$. Thus we can write
$\rho_{\s,S}:\F{S}\map C(\Si_{\s,S},\mathR)$, and similarly
$\rho_{\s,S_1}:\F{S_1}\map C(\Si_{\s,S_1},\mathR)$. The diagram in
\eq{ComDLphi0} now becomes
\begin{equation}\label{ComDLsClass}
        \setsqparms[1`1`1`1;1000`700]
        \square[\F{S}`C(\Si_{\s,S},\mathR)`\F{S_1}`C(\Si_{\s,S_1},\mathR);
        \rho_{\s,S}`\L{j}`\s(\L{j})`\rho_{\s,S_1}]
\end{equation}
and, therefore, the question of interest is if there is a
`translation representation' function
$\s(\L{j}):C(\Si_{\s,S},\mathR)\map C(\Si_{\s,S_1},\mathR)$ so
that this diagram commutes.

Now, as stated above, a physical quantity, $A$, for the system $S$
is represented in classical physics by a real-valued function
$A_{\s,S}=\rho_{\s,S}(A) :\Si_{\s, S}\map\mathR$. Similarly, the
representation of $\L{j}(A)$  for $S_1$ is given by a function
$A_{\s,S_1}:=\rho_{\s,S_1}(A):\Si_{\s,S_1}\map \mathR$. However,
in this classical case we also have the function
$\s(j):\Si_{\s,S_1}\map \Si_{\s,S}$, and it is clear that we can
use it to define
$[\rho_{\s,S_{1}}(\L{j}(A)](s):=\rho_{\s,S}(A)\big(\s(j)(s)\big)$
for all $s\in\Si_{\s,S_1}$. In other words
\begin{equation}
        \rho_{\s,S_1}\big(\L{j}(A)\big)=\rho_{\s,S}(A)\circ\s(j)
\end{equation}
or, in simpler notation
\begin{equation}
        \big((\L{j}(A)\big)_{\s,S_1}=A_{\s,S}\circ\s(j).
\end{equation}
But then it is clear that a translation-representation function
$\s(\L{j}):C(\Si_{\s,S},\mathR) \map C(\Si_{\s,S_1},\mathR)$ with
the desired property of making \eq{ComDLsClass} commute can be
defined by
\begin{equation}
\s(\L{j})(f):=f\circ\s(j)       \label{ClassPB}
\end{equation}
for all $f\in C(\Si_{\s,S},\mathR)$; \ie\ the function
$\s(\L{j})(f):\Si_{\s,S_1}\map\mathR$ is the usual pull-back of
the function $f:\Si_{\s,S}\map\mathR$ by the function
$\s(j):\Si_{\s,S_1}\map\Si_{\s,S}$. Thus, in the case of classical
physics, the commutative diagram in \eq{ComDiag} can be completed
to give
\begin{equation}                                \label{ComDiagC3}
        \setsqparms[1`1`-1`1;1000`700]
        \square[S_1`\la\rho_{\s,S_1},\L{S_1},
        \Set\ra`S`\la\rho_{\s,S},\L{S},\Set\ra;
        \s`j`\s(\L{j})\times\L{j}\times\id`\s]
\end{equation}

\subsubsection{Details of the Translation Representation.}
\label{SubSubSec:DetailsTrans}
\paragraph{The translation representation for a disjoint sum of classical
systems.} We first consider arrows of the form
\begin{equation}
S_{1}\overset{i_{1}}{\map}S_{1}\sqcup S_{2}\overset{i_{2}
}{\leftarrow}S_{2}
\end{equation}
from the components $S_{1}$, $S_{2}$ to the disjoint sum
$S_{1}\sqcup S_{2}$. The systems $S_{1}$, $S_{2}$ and $S_{1}\sqcup
S_{2}$ have symplectic manifolds $\Si_{\s,S_{1}}$,
$\Si_{\s,S_{2}}$ and $\Si_{\s,S_{1}\sqcup
S_{2}}=\Si_{\s,S_{1}}\coprod\Si_{\s,S_{2}}$. We write $i:=i_{1}$.

Let $S$ be a classical system. We assume that the function symbols
$A\in \F{S}$ in the language $\L{S}$ are in bijective
correspondence with an appropriate subset of the functions
$A_{\s,S}\in C(\Si_{\s,S},\mathR)$.\footnote{Depending on the
setting, one can assume that $\F{S}$ contains function symbols
corresponding bijectively to measurable, continuous or smooth
functions.}

There is an obvious translation representation. For if
$A\in\F{S_1\sqcup S_2}$, then since $\Si_{\s,S_1\sqcup
S_2}=\Si_{\s,S_1}\coprod\Si_{\s,S_1}$, the associated function
$A_{\s, S_1\sqcup S_2}:\Si _{\s,S_1\sqcup S_2} \map\mathR$ is
given by a pair of functions $A_{1}\in C(\Si_{\s,S_1},\mathR)$ and
$A_{2}\in C(\Si_{\s,S_2},\mathR)$; we write $A_{\s, S_1\sqcup
S_2}=\la A_1,A_2\ra$. It is natural to demand that the translation
representation $\s(\L{i})(A_{\s, S_1\sqcup S_2})$ is $A_1$.  Note
that what is essentially being discussed here is the
classical-physics representation of the relation \eq{FS1sumS2}.

The canonical choice for $\s(i)$ is
\begin{eqnarray}
  \s(i):\Si_{\s,S_1}  &  \map&\Si_{\s,S_1\sqcup S_2}=
                         \Si_{\s,S_1}{\textstyle\coprod}\Si_{\s,S_2}\\
                s_{1}  &  \mapsto& s_1.
\end{eqnarray}
Then the pull-back along $\s(i)$,
\begin{eqnarray}
\s(i)^*:C(\Si_{\s,S_1\sqcup S_2},\mathR)  & \map& C(\Si_{\s,S_1},\mathR)\\
A_{\s,S_1\sqcup S_2}  &  \mapsto& A_{\s,S_1\sqcup S_2} \circ
\s(i),
\end{eqnarray}
maps (or `translates') the topos representative $A_{\s,S_1\sqcup
S_2} =\la A_1, A_2\ra$ of the function symbol $A\in\F{S_1\sqcup
S_2}$ to a real-valued function $A_{\s,S_1\sqcup S_2} \circ \s(i)$
on $\Si_{\s,S_1}$. This function is clearly equal to $A_1$.

\paragraph{The translation in the case of a composite classical system.}
We now consider arrows in $\mathbf{Sys}$ of the form
\begin{equation}
S_{1}\overset{p_{1}}{\leftarrow}S_{1}\di S_{2}\overset{p_2
}{\map}S_2
\end{equation}
from the composite classical system $S_1\di S_2$ to the
constituent systems $S_1$ and $S_2$. Here, $p_1$ signals that
$S_{1}$ is a constituent of the composite system $S_1\di S_2$,
likewise $p_2$. The systems $S_{1}$, $S_{2}$ and $S_{1}\di S_{2}$
have symplectic manifolds $\Si_{\s,S_{1}}$, $\Si_{\s,S_{2}}$ and
$\Si_{\s,S_{1}\di S_{2} }=\Si_{\s,S_{1}}\times\Si_{\s,S_{2}}$,
respectively; \ie\  the state space of the composite system
$S_{1}\di S_{2}$ is the cartesian product of the state spaces of
the components. For typographical simplicity in what follows we
denote $p:=p_{1}$.

There is a canonical translation $\L{p}$ between the languages
$\L{S_1}$ and $\L{S_{1}\di S_{2}}$ whose representation is the
following. Namely, if $A$ is in $\F{S_1}$, then the corresponding
function $A_{\s,S_1}\in C(\Si_{\s,S_{1} },\mathR)$ is translated
to a  function $\s(\L{p}) (A_{\s,S_1})\in C(\Si_{\s,S_{1}\di
S_{2}},\mathR)$ such that
\begin{equation}
        \s(\L{p})(A_{\s,S_1})(s_1,s_2)=A_{\s,S_1}(s_1)
\end{equation}
for all $(s_1,s_2)\in\Si_{\s,S_1}\times\Si_{\s,S_2}$.

This natural translation representation is based on the fact that,
for the symplectic manifold $\Si_{\s,S_1\di
S_2}=\Si_{\s,S_1}\times\Si_{\s,S_2}$, each point
$s\in\Si_{\s,S_1\di S_2}$ can be identified with a pair,
$(s_1,s_2)$, of points $s_1\in\Si_{\s,S_1}$ and
$s_2\in\Si_{\s,S_2}$. This is possible since the cartesian product
$\Si_{\s,S_1}\times\Si_{\s,S_2}$ is a product in the categorial
sense and hence has projections $\Si_{\s,S_1}\leftarrow
\Si_{\s,S_1}\times\Si_{\s,S_2}\map\Si_{\s,S_2}$. Then the
translation representation of functions is constructed in a
straightforward manner. Thus, let
\begin{eqnarray}
\s(p):\Si_{\s,S_1}\times\Si_{\s,S_2}  &  \map&\Si_{\s,S_1}\nonumber\\
        (s_1,s_2)  &  \mapsto& s_1
\end{eqnarray}
be the canonical projection. Then, if $A_{\s,S_1}\in
C(\Si_{\s,S_1},\mathR)$, the function
\begin{equation}
A_{\s,S_1}\circ\s(p)\in
C(\Si_{\s,S_{1}}\times\Si_{\s,S_{2}},\mathR)
\end{equation}
is such that, for all
$(s_1,s_2)\in\Si_{\s,S_1}\times\Si_{\s,S_2}$,
\begin{equation}
A_{\s,S_1}\circ\s(p)(s_{1},s_{2})=A_{\s,S_1}(s_1).
\end{equation}
Thus we can define
\begin{equation}
        \s(\L{p})(A_{\s,S_1}):=A_{\s,S_1}\circ\s(p).
\end{equation}
Clearly, $\s(\L{p})(A_{\s,S_1})$ can be seen as the representation
of the function symbol $A\di 1\in\F{S_1\di S_2}$.

\section{Theories of Physics in a General Topos}
\label{Sec:ToposAxioms}
\subsection{The Pull-Back Operations}
\subsubsection{The Pull-Back of Physical Quantities.}
Motivated by the above, let us try now to see what can be said
about the scheme in general. Basically, what is involved is the
topos representation of translations of languages. To be more
precise, let $j:S_1\map S$ be an arrow in $\Sys$, so that there is
a translation $\L{j}:\L{S}\map \L{S_1}$ defined by the translation
function $\L{j}:\F{S}\map \F{S_1}$. Now suppose that the systems
$S$ and $S_1$ are represented in the topoi $\tau_\phi(S)$ and
$\tau_\phi(S_1)$ respectively. Then, in these representations, the
function symbols of signature $\Si\map\cal R$  in $\L{S}$ and
$\L{S_1}$ are represented by elements of
$\Hom{\tau_\phi(S)}{\Si_{\phi,S}}{\R_{\phi,S}}$ and
$\Hom{\tau_\phi(S_1)}{\Si_{\phi,S_1}}{\R_{\phi,S_1}}$
respectively.

Our task is to find a  function
\begin{equation}
 \phi(\L{j}):\Hom{\tau_\phi(S)}{\Si_{\phi,S}}
 {\R_{\phi,S}}\map\Hom{\tau_\phi(S_1)}{\Si_{\phi,S_1}}{\R_{\phi,S_1}}
\end{equation}
that can be construed as the topos representation of the
translation $\L{j}:\L{S}\map \L{S_1}$, and hence of the  arrow
$j:S_1\map S$ in $\Sys$. We are particularly interested in seeing
if $\phi(\L{j})$ can be chosen so that the following diagram, (see
\eq{ComDLphi0}) commutes:
\begin{equation}                                \label{ComDLphi}
        \setsqparms[1`1`1`1;1200`700]
        \square[\F{S}`\Hom{\tau_\phi(S)}{\Si_{\phi,S}}
        {\R_{\phi,S}}`\F{S_1}`\Hom{\tau_\phi(S_1)}
        {\Si_{\phi,S_1}}{\R_{\phi,S_1}};
        \rho_{\phi,S}`\L{j}`\phi(\L{j})`\rho_{\phi,S_1}]
\end{equation}
However, as has been emphasised already, it is not clear that one
\emph{should} expect to find a function $\phi(\L{j}):
\Hom{\tau_\phi(S)}{\Si_{\phi,S}}{\R_{\phi,S}} \map
\Hom{\tau_\phi(S_1)}{\Si_{\phi,S_1}}{\R_{\phi,S_1}}$  with this
property. The existence and/or properties of such a function will
be dependent on the theory-type, and it seems unlikely that much
can be said in general about the diagram \eq{ComDLphi}.
Nevertheless, let us see how far we \emph{can} get in discussing
the existence of such a function in general.

Thus, if $\mu\in\Hom{\tau_\phi(S)}{\Si_{\phi,S}}{\R_{\phi,S}}$,
the critical question is if there is some `natural' way whereby
this arrow can be `pulled-back' to give an element
$\phi(\L{j})(\mu)\in\Hom{\tau_\phi(S_1)}
{\Si_{\phi,S_1}}{\R_{\phi,S_1}}$.

The first pertinent remark is that $\mu$ is an arrow in the topos
$\tau_\phi(S)$, whereas the sought-for pull-back will be an arrow
in the topos $\tau_\phi(S_1)$, and so we need a mechanism for
getting from one topos to the other (this problem, of course, does
not arise in classical physics since the topos of every
representation is always $\Set$).

The obvious way of implementing this change of topos is via some
\emph{functor}, $\tau_\phi(j)$ from $\tau_\phi(S)$\ to
$\tau_\phi(S_1)$. Indeed, given such  a functor, an arrow
$\mu:\Si_{\phi,S}\map\R_{\phi,S}$ in $\tau_\phi(S)$ is transformed
to the arrow
\begin{equation}
\tau_\phi(j)(\mu):
\tau_\phi(j)(\Si_{\phi,S})\map\tau_\phi(j)(\R_{\phi,S})
                                \label{taut(A)}
\end{equation}
 in $\tau_\phi(S_1)$.

To convert this to  an arrow from $\Si_{\phi,S_1}$ to
$\R_{\phi,S_1}$, we need to supplement \eq{taut(A)} with a pair of
arrows $\phi(j),\beta_\phi(j)$ in $\tau_\phi(S_1)$ to get the
diagram:
\begin{equation}                                \label{PBPhimu}
        \setsqparms[1`-1`1`0;1000`450]
        \square[\tau_\phi(j)(\Si_{\phi,S})`\tau_\phi(j)(\R_{\phi,S})
                `\Si_{\phi,S_1}`\R_{\phi,S_1};
        \tau_\phi(j)(\mu)`\phi(j)`\beta_\phi(j)`]
\end{equation}
The pull-back, $\phi(\L{j})(\mu)\in
\Hom{\tau_\phi(S_1)}{\Si_{\phi,S_1}}{\R_{\phi,S_1}}$, with respect
to these choices can then be defined as
\begin{equation}
  \phi(\L{j})(\mu):=\beta_\phi(j)\circ \tau_\phi(j)(\mu)\circ\phi(j).
                \label{Def:phi(t)(A)}
\end{equation}
It follows that a key part of the construction of a topos
representation, $\phi$, of $\Sys$ will be to specify the functor
$\tau_\phi(j)$ from $\tau_\phi(S)$ to $\tau_\phi(S_1)$, and the
arrows $\phi(j):\Si_{\phi,S_1}\map\tau_\phi(j)(\Si_{\phi,S})$ and
$\beta_\phi(j):\tau_\phi(j)(\R_{\phi,S})\map\R_{\phi,S_1}$ in the
topos $\tau_\phi(S_1)$. These need to be defined in such a way as
to be consistent with a chain of arrows $S_2\map S_1\map S$.

When applied to the representative $A_{\phi,S}:\Si_{\phi,S}\map
\R_{\phi,S}$ of a physical quantity $A\in\F{S}$, the diagram
\eq{PBPhimu}  becomes (augmented with the upper half)
\begin{equation}                                \label{PBPhiA}
        \xext=3000
        \yext=3000
        \setsqparms[1`-1`1`1;1200`450]
        \square[\tau_\phi(j)(\Si_{\phi,S})`\tau_\phi(j)
        (\R_{\phi,S})`\Si_{\phi,S_1}`\R_{\phi,S_1};
        `\phi(j)`\beta_\phi(j)`\phi(\L{j})(A_{\phi,S})]
        \setsqparms[1`1`1`0;1200`450]
        \putsquare(-1470,450)[\Si_{\phi,S}`\R_{\phi,S}``;
        A_{\phi,S}`\tau_\phi(j)`\tau_\phi(j)`\tau_\phi(j)(A_{\phi,S})]
\end{equation}
The commutativity of \eq{ComDLphi} would then require
\begin{equation}
        \phi(\L{j})(A_{\phi,S})=(\L{j}A)_{\phi,S_1}\label{ClassCom}
\end{equation}
or, in a more expanded notation,
 \begin{equation}
        \phi(\L{j})\circ\rho_{\phi,S}=\rho_{\phi,S_1}\circ\L{j},
                                                                \label{ClassComEx}
\end{equation}
where both the left hand side and the right hand side of
\eq{ClassComEx} are mappings from $\F{S}$ to $\Hom{\tau_\phi(S_1)}
{\Si_{\phi,S_1}}{\R_{\phi,S_1}}$.

Note that the analogous diagram in classical physics is simply
\begin{equation}
        \setsqparms[1`-1`1`1;1000`450]
        \square[\Si_{\s,S}`\mathR`\Si_{\s,S_1}`\mathR;
        A_{\s,S}`\s(j)`\id`\s(\L{j})(A_{\s,S})]
\end{equation}
and the commutativity/pull-back condition \eq{ClassCom} becomes
\begin{equation}
\s(\L{j})(A_{\s,S})=(\L{j}A)_{\phi,S_1}
\end{equation}
which is satisfied by virtue of \eq{ClassPB}.

It is clear from the above that the arrow
$\phi(j):\Si_{\phi,S_1}\map\tau_\phi(j)(\Si_{\phi,S})$ can be
viewed as the topos analogue of the map
$\s(j):\Si_{\s,S_1}\map\Si_{\s,S}$ that arises in classical
physics whenever there is an arrow $j:S_1\map S$.

\subsubsection{The Pull-Back of Propositions.}
More insight can be gained into the nature of the triple
$\la\tau_\phi(j),\phi(j),\beta_\phi(j)\ra$ by considering the
analogous  operation for propositions. First, consider  an arrow
$j:S_1\map S$ in $\Sys$ in classical physics. Associated with this
there is (i) a translation $\L{j}:\L{S}\map \L{S_1}$; (ii) an
associated translation mapping $\L{j}:\F{S}\map \F{S_1}$; and
(iii) a symplectic function $\s(j):\Si_{\s,S_1}\map \Si_{\s,S}$.

Let $K$ be a (Borel) subset of the state space, $\Si_{\s,S}$;
hence $K$ represents a proposition about the system $S$. Then
$\s(j)^*(K):=\s(j)^{-1}(K)$ is a subset of $\Si_{\s,S_1}$ and, as
such, represents a proposition about the system $S_1$.  We say
that $\s(j)^*(K)$ is the \emph{pull-back} to $\Si_{\s,S_1}$ of the
$S$-proposition represented by $K$. The existence of such
pull-backs is part of the consistency of the representation of
propositions in classical mechanics, and it is important to
understand what the analogue of this is in our topos scheme.

Consider the general case with the two systems $S_1,S$ as above.
Then let $K$ be a proposition, represented as a sub-object of
$\Si_{\phi,S}$, with a monic arrow $i_K:
K\hookrightarrow\Si_{\phi,S}$. The question now is if the triple
$\la\tau_\phi(j),\phi(j),\beta_\phi(j)\ra$ can be used to pull $K$
back to give a proposition in $\tau(S_1)$, \ie\ a sub-object of
$\Si_{\phi, S_1}$?

The first requirement is that the functor
$\tau_\phi(j):\tau_{\phi}(S)\map \tau_\phi(S_1)$ should
\emph{preserve monics}. In this case, the monic arrow $i_K:
K\hookrightarrow\Si_{\phi,S}$ in $\tau_\phi(S)$ is transformed to
the monic arrow
\begin{equation}
        \tau_\phi(j)(i_K):\tau_\phi(j)(K)
        \hookrightarrow\tau_\phi(j)(\Si_{\phi,S})
\end{equation}
in $\tau_\phi(S_1)$; thus $\tau_\phi(j)(K)$ is a sub-object of
$\tau_\phi(j)(\Si_{\phi,S})$ in $\tau_\phi(S_1)$. It is a property
of a topos that the pull-back of a monic arrow is monic ; \ie\ if
$M\hookrightarrow Y$ is monic, and if $\psi:X\map Y$, then
$\psi^{-1}(M)$ is a sub-object of $X$.  Therefore, in the case of
interest, the monic arrow $
\tau_\phi(j)(i_K):\tau_\phi(j)(K)\hookrightarrow
\tau_\phi(j)(\Si_{\phi,S})$ can be pulled back along
$\phi(j):\Si_{\phi,S_{1}}\map \tau_\phi(j)(\Si_{\phi,S})$ (see
diagram \eq{PBPhiA}) to give the monic
$\phi(j)^{-1}(\tau_\phi(j)(K))\subseteq\Si_{\phi,S_1}$. This is a
candidate for the pull-back of the proposition represented by the
sub-object $K\subseteq \Si_{\phi,S}$.

In conclusion, propositions can be pulled-back provided that the
functor $\tau_\phi(j):\tau_{\phi}(S)\map \tau_\phi(S_1)$ preserves
monics. A sufficient way of satisfying this requirement is for
$\tau_\phi(j)$ to be left-exact. However, this raises the question
of ``where do left-exact functors come from?''.

\subsubsection{The Idea of a Geometric Morphism.}
\label{SubSubSec:GeomMorph} It transpires that there \emph{is} a
natural source of left-exact functors, via the idea of a
\emph{geometric morphism}. This fundamental concept in topos
theory is defined as follows \cite{MM92}.

\begin{definition}
A \emph{geometric morphism} $\phi:{\cal F}\map{\cal E}$ between
topoi $\cal F$ and $\cal E$ is a pair of functors $\phi^*:{\cal
E}\map{\cal F}$ and $\phi_*:{\cal F}\map \cal E$ such that
\begin{enumerate}
    \item[(i)] $\phi^*\dashv \phi_*$, \ie\ $\phi^*$ is left adjoint
    to $\phi_*$;
    \item[(ii)] $\phi^*$ is left exact, \ie\ it preserves
    all finite limits.
\end{enumerate}
The morphism $\phi^*:{\cal E}\map{\cal F}$ is called the
\emph{inverse image} part of the geometric morphism $\varphi$;
$\phi_*:{\cal F}\map \cal E$ is called the \emph{direct image}
part. \end{definition}

Geometric morphisms are very important because they are the topos
equivalent of continuous functions. More precisely, if $X$ and $Y$
are topological spaces, then any continuous function $f:X\map Y$
induces a geometric morphism between the topoi ${\rm Sh}(X)$ and
${\rm Sh}(Y)$ of sheaves on $X$ and $Y$ respectively.  In
practice, just as the arrows in the category of topological spaces
are continuous functions, so in any category whose objects are
topoi, the arrows are normally defined to be geometric morphisms.
In our case, as we shall shortly see, all the examples of
left-exact functors that arise in the quantum case do, in fact,
come from geometric morphisms. For these reasons, from now on we
will postulate that any arrows between our topoi arise from
geometric morphisms.

One central property of a geometric morphism is that it preserves
expressions written in terms of geometric logic. This greatly
enhances the attractiveness of assuming from the outset that the
internal logic of the system languages, $\L{S}$, is restricted to
the sub-logic afforded by geometric logic.

\emph{En passant}, another key result for us is the following
theorem (\cite{MM92} p359):
\begin{theorem}\label{Th:phiC->D}
If $\varphi:{\cal C}\map {\cal D}$ is a functor between categories
$\cal C$ and $\cal D$, then it induces a geometric morphism (also
denoted $\varphi$)
\begin{equation}
        \varphi:\SetC{{\cal C}}\map\SetC{{\cal D}}
\end{equation}
for which the functor $\varphi^*:\SetC{{\cal D}}\map\SetC{{\cal
C}}$ takes a functor $\ps{F}:{\cal D}\map\Set$ to the functor
\begin{equation}
   \varphi^*(\ps{F}):=\ps{F}\circ\varphi^\op    \label{Def:phi*(F)}
\end{equation}
from $\cal C$ to $\Set$.

In addition, $\varphi^*$ has a left adjoint $\varphi_!$; \ie\
$\varphi_!\dashv\varphi^*$.
\end{theorem}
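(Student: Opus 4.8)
The plan is to recognise Theorem \ref{Th:phiC->D} as the standard fact that restriction along a functor between presheaf categories carries both a left and a right adjoint, each furnished by a Kan extension, and then to assemble three ingredients: functoriality of $\varphi^*$, its left-exactness, and the existence of the two adjoints. First I would check that the prescription $\varphi^*(\ps{F}):=\ps{F}\circ\varphi^\op$ really defines a functor $\SetC{\cal C}\map\SetC{\cal D}$ in the stated direction. On objects this is immediate, since if $\ps{F}:\cal D^\op\map\Set$ and $\varphi^\op:\cal C^\op\map\cal D^\op$ is the opposite of $\varphi$, then $\ps{F}\circ\varphi^\op:\cal C^\op\map\Set$ is an object of $\SetC{\cal C}$. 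On arrows, a natural transformation $\eta:\ps{F}\Rightarrow\ps{G}$ is sent to the whiskered transformation whose component at $C\in\Ob{\cal C}$ is $\eta_{\varphi(C)}$, and preservation of identities and composites follows at once from this componentwise definition.

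The decisive observation for the left-exactness clause is that limits and colimits in a presheaf topos $\SetC{\cal D}$ are computed objectwise in $\Set$, and that $\varphi^*$ intertwines evaluation: for each $C\in\Ob{\cal C}$ one has $\varphi^*(\ps{F})(C)=\ps{F}(\varphi(C))$, so that evaluation at $C$ factors through evaluation at $\varphi(C)$. Since the evaluation functors $\SetC{\cal D}\map\Set$ preserve all limits and colimits, and a (co)cone in $\SetC{\cal C}$ is (co)limiting precisely when it is so after evaluating at every object, it follows directly that $\varphi^*$ preserves all limits and all colimits. In particular $\varphi^*$ preserves finite limits, which is condition (ii) in the Definition of a geometric morphism and the subtler of the two to secure.

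Next I would produce the adjoints. Because $\Set$ is both complete and cocomplete, pointwise left and right Kan extensions along $\varphi^\op$ exist, and I would set $\varphi_!:=\operatorname{Lan}_{\varphi^\op}$ and $\varphi_*:=\operatorname{Ran}_{\varphi^\op}$. Explicitly, for $\ps{G}\in\SetC{\cal C}$ and $D\in\Ob{\cal D}$ one writes these as a colimit and a limit over the relevant comma categories of $\varphi$ and $D$,
\[
(\varphi_!\ps{G})(D)=\operatorname{colim}_{(D\to\varphi(C))}\ps{G}(C),\qquad
(\varphi_*\ps{G})(D)=\lim_{(\varphi(C)\to D)}\ps{G}(C),
\]
with the directions of the indexing arrows dictated by the passage to opposite categories (a point to be tracked carefully but essentially bookkeeping).

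The genuine content, and what I expect to be the main obstacle, is verifying the adjunction bijection
\[
\Hom{\SetC{\cal D}}{\varphi_!\ps{G}}{\ps{F}}\;\cong\;\Hom{\SetC{\cal C}}{\ps{G}}{\varphi^*\ps{F}}
\]
naturally in $\ps{G}$ and $\ps{F}$ (together with its dual for $\varphi^*\dashv\varphi_*$); this is exactly the universal property of the left Kan extension and requires real verification rather than mere symbol-pushing. Once it is in hand one has the chain $\varphi_!\dashv\varphi^*\dashv\varphi_*$. Finally I would assemble the conclusion: the pair $(\varphi^*,\varphi_*)$ satisfies $\varphi^*\dashv\varphi_*$ and has $\varphi^*$ left exact, so by the Definition it constitutes a geometric morphism $\varphi:\SetC{\cal C}\map\SetC{\cal D}$ whose inverse-image part is precisely $\varphi^*(\ps{F})=\ps{F}\circ\varphi^\op$, while the additional adjunction $\varphi_!\dashv\varphi^*$ has been obtained en route.
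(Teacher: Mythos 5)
Your proposal is correct and is essentially the standard argument: the paper itself offers no proof of this theorem, citing it directly from Mac Lane and Moerdijk (p.~359), and the proof given there is exactly your construction of $\varphi_!\dashv\varphi^*\dashv\varphi_*$ via pointwise left and right Kan extensions along $\varphi^{\op}$, with the pair $(\varphi^*,\varphi_*)$ supplying the geometric morphism. Your comma-category formulas for $\varphi_!$ and $\varphi_*$ have the arrow directions right after passing to opposite categories, and your direct verification that $\varphi^*$ preserves all limits (via pointwise computation in presheaf topoi) is sound, though strictly redundant once $\varphi_!\dashv\varphi^*$ is established, since any right adjoint preserves limits.
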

We will use this important theorem in several crucial places.

\subsection{The Topos Rules for Theories of Physics}
\label{SubSec:GeneralToposAxioms} We will now present  our general
rules for using topos theory in the mathematical representation of
physical systems and their theories. {\definition The category
${\cal M}(\Sys)$ is the following:
\begin{enumerate}
\item The objects of ${\cal M}(\Sys)$ are the topoi that are to be
used in representing the systems in $\Sys$.

\item The arrows from $\tau_1$ to $\tau_2$ are defined to be
the  geometric morphisms from $\tau_2$ to $\tau_1$. Thus the
inverse part, $\varphi^*$, of an arrow
$\varphi^*:\tau_1\map\tau_2$ is a left-exact functor from $\tau_1$
to $\tau_2$.
\end{enumerate}
}

{\definition The  rules for using topos theory are as follows:
}\label{D_GeneralToposRules}
\begin{enumerate}
\item A \emph{topos realisation}, $\phi$, of $\Sys$ in
${\cal M}(\Sys)$ is an assignment, to each system $S$ in $\Sys$,
of a triple $\phi(S)=\la\rho_{\phi, S},\L{S},\tau_\phi(S)\ra$
where:
\begin{enumerate}
        \item $\tau_\phi(S)$ is the topos in  ${\cal M}(\Sys)$ in
         which the physical theory of  system $S$ is to be realised.

       \item $\L{S}$ is the local language that is associated with $S$.
        This is independent of the realisation, $\phi$, of $\Sys$
        in ${\cal M}(\Sys)$.

\item $\rho_{\phi,S}:\L{S}\sq\tau_\phi(S)$ is
       a representation of the local language $\L{S}$ in the
       topos $\tau_\phi(S)$.

\item In addition, for each arrow $j:S_1\map S$ in $\Sys$ there
is a triple  $\la\tau_\phi(j)$,$\phi(j)$, $\beta_\phi(j)\ra$ that
interpolates between $\rho_{\phi,S}:\L{S}\sq\tau_\phi(S)$ and
$\rho_{\phi,S_1}:\L{S_1}\sq\tau_\phi(S_1)$; for details see below.
\end{enumerate}

\item
\begin{enumerate}
\item The representations, $\rho_{\phi,S}(\Si)$ and
$\rho_{\phi,S}(\R)$,  of the ground symbols $\Si$ and $\R$ in
$\L{S}$ are denoted $\Si_{\phi,S}$ and $\R_{\phi, S}$,
respectively. They are known as the `state object' and
`quantity-value object' in $\tau_\phi(S)$.

\item The representation by $\rho_{\phi,S}$ of each function symbol
$A:\Si\map\R$ of the system $S$ is an arrow, $\rho_{\phi,S}(A):
\Si_{\phi,S}\map\R_{\phi,S}$ in $\tau_\phi(S)$; we will usually
denote this arrow as $A_{\phi, S}:\Si_{\phi,S}\map\R_{\phi,S}$.

\item Propositions about the system $S$ are represented by sub-objects
of $\Si_{\phi,S}$. These will typically be of the form
$A_{\phi,S}^{-1}(\Xi)$, where $\Xi$ is a sub-object of
$\R_{\phi,S}$.\footnote{Here, $A_{\phi,S}^{-1}(\Xi)$ denotes the
sub-object of $\Si_{\phi,S}$ whose characteristic arrow is
$\chi_{\Xi}\circ A_{\phi,S}:\Si_{\phi,S}\map
\Omega_{\tau_\phi(S)}$, where $\chi_{\Xi}:\R_{\phi,S}\map
\Omega_{\tau_\phi(S)}$ is the characteristic arrow of the
sub-object $\Xi$.}
\end{enumerate}

\item Generally, there are  no `microstates' for the system $S$;
\ie\  no global elements (arrows $1\map \Si_{\phi,S}$) of the
state object $\Si_{\phi,S}$; or, if there are any, they may not be
enough to determine $\Si_{\phi,S}$ as an object in $\tau_\phi(S)$.

Instead, the role of a state is played by a `truth sub-object'
$\TO$ of $P\Si_{\phi,S}$.\footnote{In classical physics, the truth
object corresponding to a micro-state $s$ is the collection of all
propositions that are true in the state $s$.} If
$J\in\Sub{\Si_{\phi,S}}\simeq\Ga(P\Si_{\phi,S})$, the `truth of
the proposition represented by $J$' is defined to be
\begin{equation}
\TValM{J\in\TO}=
        \Val{\tilde{J}\in\tilde\TO}_\phi\,
      \circ   \la\name{J},\name\TO\ra
\end{equation}
See Section \ref{SubSub:TruthObjects} for full information on the
idea of a `truth object'. Alternatively, one may use pseudo-states
rather than truth objects, in which case the relevant truth values
are of the form $\nu(\w\subseteq J)$.

\item  There is a `unit object' $1_{{\cal M}(\Sys)}$ in ${\cal
M}(\Sys)$ such that if $1_\Sys$ denotes the trivial system in
$\Sys$ then, for all topos realisations $\phi$,
    \begin{equation}
        \tau_\phi(1_\Sys)=1_{{\cal M}(\Sys)}.
    \end{equation}

Motivated by the results for quantum theory (see Section
\ref{SubSubSec:TransSum}), we postulate that the unit object
$1_{{\cal M}(\Sys)}$ in ${\cal M}(\Sys)$ is  the category of sets:
    \begin{equation}
            1_{{\cal M}(\Sys)}=\Set.
    \end{equation}

\item To each arrow $j:S_1\map S$ in $\Sys$, we have the following:
\begin{enumerate}
\item There is a translation $\L{j}:\L{S}\map\L{S_1}$.
This is specified by a map between function symbols:
$\L{j}:\F{S}\map\F{S_1}$.

\item With the translation
$\L{j}:\F{S}\map\F{S_1}$ there is associated a corresponding
function
\begin{equation}
        \phi(\L{j}):\Hom{\tau_\phi(S)}{\Si_{\phi,S}}{\R_{\phi,S}}
        \map\Hom{\tau_\phi(S_1)}{\Si_{\phi,S_1}}{\R_{\phi,S_1}}.
\end{equation}
These may, or may not, fit together in the commutative diagram:
\begin{equation}           \label{Diag_CommOfTranslationAndRep}
        \setsqparms[1`1`1`1;1200`700]
        \square[\F{S}`\Hom{\tau_\phi(S)}{\Si_{\phi,S}}
        {\R_{\phi,S}}`\F{S_1}`\Hom{\tau_\phi(S_1)}
        {\Si_{\phi,S_1}}{\R_{\phi,S_1}};
        \rho_{\phi,S}`\L{j}`\phi(\L{j})`\rho_{\phi,S_1}]
\end{equation}
\item The function  $\phi(\L{j}): \Hom{\tau_\phi(S)}{\Si_{\phi,S}}
{\R_{\phi,S}} \map\Hom{\tau_\phi(S_1)}{\Si_{\phi,S_1}}
{\R_{\phi,S_1}}$ is built from the following ingredients. For each
topos realisation $\phi$, there is a triple
$\la\nu_\phi(j),\phi(j),\beta_\phi(j)\ra$ where:
\begin{enumerate}
\item[(i)] $\nu_\phi(j): \tau_\phi(S_1)\map\tau_\phi(S)$ is a geometric morphism; \ie\ an arrow in the category  ${\cal M}(\Sys)$ (thus
$\nu_\phi(j)^*:\tau_\phi(S)\map\tau_\phi(S_1)$ is left exact).

\vspace{5pt} {\bf N.B.} To simplify the notation a little we will
denote $\nu_\phi(j)^*$ by $\tau_\phi(j)$. This is sensible in so
far as, for the most part, only the inverse part of $\nu_\phi(j)$
will be used in our constructions. \vspace{5pt}

\item[(ii)]
$\phi(j):\Si_{\phi,S_1}\map \tau_\phi(j)\big(\Si_{\phi,S}\big)$ is
an arrow in the topos $\tau_\phi(S_1)$.

\item[(iii)]
$\beta_\phi(j):\tau_\phi(j)\big(\R_{\phi,S}\big)
\map\R_{\phi,S_1}$ is an arrow in the topos $\tau_\phi(S_{1})$.
\end{enumerate}
These fit together in the diagram
\begin{equation}                                \label{PBPhiA2}
        \xext=3000
        \yext=3000
        \setsqparms[1`-1`1`1;1200`450]
        \square[\tau_\phi(j)(\Si_{\phi,S})`\tau_\phi(j)(\R_{\phi,S})
        `\Si_{\phi,S_1}`\R_{\phi,S_1};
        `\phi(j)`\beta_\phi(j)`\phi(\L{j})(A_{\phi,S})]
        \setsqparms[1`1`1`0;1200`450]
        \putsquare(-1470,450)[\Si_{\phi,S}`\R_{\phi,S}``;
        A_{\phi,S}`\tau_\phi(j)`\tau_\phi(j)`\tau_\phi(j)(A_{\phi,S})]
\end{equation}
The arrows $\phi(j)$ and $\beta_\phi(j)$ should behave
appropriately under composition of arrows in $\Sys$.

The commutativity of the diagram \eq{Diag_CommOfTranslationAndRep}
is equivalent to the relation
\begin{equation}
\phi(\L{j})(A_{\phi,S})=[\L{j}(A)]_{\phi,S_1}\label{CDpullback}
\end{equation}
for all $A\in\F{\phi,S}$. As we keep emphasising, the satisfaction
or otherwise of this relation will depend on the theory-type and,
possibly, the representation $\phi$.

\item If a proposition in $\tau_\phi(S)$ is represented by the
monic arrow, $K\hookrightarrow\Si_{\phi,S}$, the `pull-back' of
this proposition to $\tau_\phi(S_1)$ is defined to be
$\phi(j)^{-1}\big(\tau_\phi(j)(K)\big)\subseteq\Si_{\phi,S_1}$.
\end{enumerate}

\item
\begin{enumerate}
\item If $S_1$ is a sub-system of $S$, with an associated arrow
$i:S_1\map S$ in $\Sys$ then, in the diagram in \eq{PBPhiA2}, the
arrow $\phi(j):\Si_{\phi,S_1}\map\tau_\phi(j)(\Si_{\phi,S})$ is a
monic arrow in $\tau_\phi(S_{1})$.

In other words, $\Si_{\phi,S_1}$ is a sub-object of
$\tau_\phi(j)(\Si_{\phi,S})$, which is denoted
\begin{equation}
\Si_{\phi,S_1}\subseteq
        \tau_\phi(j)(\Si_{\phi,S}).  \label{SubSysSS}
\end{equation}

We may also want to conjecture
\begin{equation}
\R_{\phi,S_1}\simeq
            \tau_\phi(j)\big(\R_{\phi,S}\big).  \label{RsimR}
\end{equation}

\item Another possible conjecture is the following: if $j:S_1\map S$
is an epic arrow in $\Sys$, then, in the diagram in \eq{PBPhiA2},
the arrow $\phi(j):\Si_{\phi,S_1}\map \tau_\phi(j)(\Si_{\phi,S})$
is an epic arrow in $\tau_\phi(S_1)$.

In particular, for the epic arrow $p_1:S_1\di S_2\map S_1$, the
arrow $\phi(p_1):\Si_{\phi,S_1\di S_2}\map
\tau_\phi\big(\Si_{\phi,S_1}\big)$ is an epic arrow in the topos
$\tau_\phi(S_1\di S_2)$.
\end{enumerate}
\end{enumerate}

One should not read Rule 2.\ above as implying that the choice of
the state object and quantity-value object are \emph{unique} for
any given system $S$. These objects would at best be selected only
up to isomorphism in the topos $\tau(S) $. Such morphisms in the
 $\tau(S)$\footnote{Care is needed not to confuse morphisms
in the topos $\tau(S)$ with morphisms in the category ${\cal
M}(\Sys)$ of topoi. An arrow from the object $\tau(S)$ to itself
in the category ${\cal M}(\Sys)$ is a geometric morphism in the
topos $\tau(S)$. However, not every arrow in $\tau(S)$ need arise
in this way, and an important role can be expected to be played by
arrows of this second type. A good example is when $\tau(S)$ is
the category of sets, $\Set$. Typically,
$\tau_\phi(j):\Set\map\Set$ is the identity, but there are many
morphisms from an object $O$ in $\Set$ to itself: they are just
the functions from $O$ to $O$.} can be expected to play a key role
in developing the topos analogue of the important idea of a
\emph{symmetry}, or \emph{covariance} transformation of the
theory.

In the example of classical physics,  for all systems we have
$\tau(S)=\Set$ and $\Si_{\s,S}$ is a symplectic manifold, and the
collection of all symplectic manifolds is a category. It would be
elegant if we could assert that, in general, for a given
theory-type the possible state objects in a given topos $\tau$
form the objects of an \emph{internal} category in $\tau$.
However, to make such a statement would require a general theory
of state-objects and, at the moment, we do not have such a thing.

From a more conceptual viewpoint we note that the `similarity' of
our axioms to those of standard classical physics is reflected in
the fact that (i) physical quantities are represented by arrows
$A_{\phi,S}:\Si_{\phi,S}\map\R_{\phi,S}$; (ii) propositions are
represented by sub-objects of $\Si_{\phi,S}$; and (iii)
propositions are assigned truth values. Thus any theory satisfying
these axioms `looks' like classical physics, and has an associated
neo-realist interpretation.

\section{The General Scheme applied to Quantum Theory}
\label{Sec:ReviewQT}
\subsection{Background Remarks}
We now want to study the extent to which our `rules' apply to the
topos representation of quantum theory.

For a quantum system with (separable) Hilbert space $\Hi$, the
appropriate topos (what we earlier called $\tau_\phi(S)$) is
$\SetH{}$: the category of presheaves over the category (actually,
partially-ordered set) $\V{}$ of unital, abelian von Neumann
sub-algebras of the algebra, $\BH$,\ of bounded operators on
$\Hi$.

A particularly important object in $\SetH{}$ is the \emph{spectral
presheaf} $\Sig$, where, for each $V$, $\Sig_V$ is defined to be
the Gel'fand spectrum of the abelian algebra $V$. The sub-objects
of $\Sig$ can be identified as the topos representations of
propositions, just as the subsets of $\S$ represent propositions
in classical physics.

In Sections \ref{Sec:psSR} and \ref{Sec:PskSR}, several closely
related choices for a quantity-value object $\R_\phi$ in $\SetH{}$
were discussed. In order to keep the notation simpler, we
concentrate here on the presheaf $\SR$ of real-valued,
order-reversing functions. All results hold analogously if the
presheaf $\PR{\mathR}$ (which we actually prefer for giving a
better physical interpretation) is used.\footnote{Since the
construction of the arrows $\dasB{A}:\Sig\map\PR{\mathR}$ involves
both inner and outer daseinisation, we would have double work with
the notation, which we avoid here.}

Hence, physical quantities $A:\Si\rightarrow\R$, which correspond
to self-adjoint operators $\A$, are represented by natural
transformations/arrows $\dasBo{A}:\Sig\rightarrow\SR$.  The
mapping $\A\mapsto\dasBo{A}$ is injective. For brevity, we write
$\dasB{A}:=\dasBo{A}$.\footnote{Note that this is \emph{not} the
same as the convention used earlier, where $\dasB{A}$ denoted a
different natural transformation!}

\subsection{The Translation Representation for  a Disjoint
Sum of Quantum Systems}\label{SubSubSec:TransSum} Let $\Sys$ be a
category whose objects are systems that can be treated using
quantum theory. Let $\L{S}$ be the local language of a system $S$
in $\Sys$ whose quantum Hilbert space is denoted $\Hi_S$. We
assume that to each function symbol, $A:\Si\map\R$, in $\L{S}$
there is associated a self-adjoint operator $\hat
A\in\mathcal{B(H} _{S}),$\footnote{More specifically, one could
postulate that the elements of $\F{S}$ are associated with
self-adjoint operators in some unital von Neumann sub-algebra of
$\mathcal{B(H}_{S})$.} and that the map
\begin{eqnarray}
         \F{S} &\rightarrow& \BH_\sa\\
        A &\mapsto& \hat A
\end{eqnarray}
is injective (but not necessarily surjective, as we will see in
the case of a disjoint sum of quantum systems).

We consider first arrows of the form
\begin{equation}
S_{1}\overset{i_{1}}{\rightarrow}S_{1}\sqcup S_{2}\overset{i_{2}
}{\leftarrow}S_{2}
\end{equation}
from the components $S_{1}$, $S_{2}$ to a disjoint sum
$S_{1}\sqcup S_{2}$; for convenience we write $i:=i_{1}$.  The
systems $S_{1}$, $S_{2}$ and $S_{1}\sqcup S_{2}$ have the Hilbert
spaces $\Hi_1$, $\Hi_2$ and $\Hi_1\oplus\Hi_2$, respectively.

As always, the translation $\L{i}$ goes in the opposite direction
to the arrow $i$, so
\begin{equation}
\L{i}:\F{S_1\sqcup S_2}\map \F{S_1}.
\end{equation}
Then our first step is find an  `operator translation' from the
relevant self-adjoint operators in $\Hi_1\oplus\Hi_2$ to those in
$\Hi_1$,

To do this, let $A$ be a function symbol in $\F{S_1\sqcup S_2}$.
In Section \ref{SubSubSec:ATDS}, we argued that $\F{S_1\sqcup
S_2}\simeq\F{S_1}\times\F{S_2}$ (as in \eq{FS1sumS2}), and hence
we introduce the notation $A=\la A_1, A_2\ra$, where
$A_1\in\F{S_1}$ and $A_2\in\F{S_2}$. It is then natural to assume
that the quantisation scheme is such that the operator, $\hat A$,
on $\Hi_1\oplus\Hi_2$ can be decomposed as $\hat A=\hat
A_1\oplus\hat A_2$, where the operators $\hat A_1$ and $\hat A_2$
are defined on $\Hi_1$ and $\Hi_2$ respectively, and correspond to
the function symbols $A_1$ and $A_2$.\footnote{It should be noted
that our scheme does not use all the self-adjoint operators on the
direct sum $\Hi_1\oplus\Hi_2$: only the `block diagonal' operators
of the form $\hat A=\hat A_1\oplus\hat A_2$ arise.} Then the
obvious operator translation is $\hat A\mapsto\hat
A_1\in\mathcal{B(H}_{1})_{sa}$.

We now consider the general rules in the  Definition
\ref{D_GeneralToposRules} and see to what extent they apply in the
example of quantum theory.

\textbf{1.} As we have stated several times, the topos
$\tau_{\phi}(S)$ associated with a quantum system $S$ is
\begin{equation}
\tau_{\phi}(S)=\Set^{\mathcal{V(H}_{S})^{op}}.
\end{equation}
Thus (i) the objects of the category $\mathcal{M}(\Sys)$ are topoi
of the form $\Set^{\mathcal{V(H}_{S})^{op}}$, $S\in\Ob\Sys$; and
(ii) the arrows between two topoi are defined to be  geometric
morphisms. In particular, to each arrow $j:S_1\map S$ in $\Sys$
there must correspond a geometric morphism
$\nu_\phi(j):\tau_\phi(S_1)\map\tau_\phi(S)$ with associated
left-exact functor
$\tau_\phi(j):=\nu_\phi(j)^*:\tau_\phi(S)\map\tau_\phi(S_{1})$. Of
course, the existence of these functors in the quantum case has
yet to be shown.

\textbf{2.} The realisation $\rho_{\phi,S}:\L{S}\rightsquigarrow
\tau_\phi(S)$ of the language $\L{S}$ in the topos $\tau_\phi(S)$
is given as follows. First, we define the state object
$\Si_{\phi,S}$ to be  the spectral presheaf,
$\Sig^{\mathcal{V(H}_{S})}$, over $\mathcal{V(H}_{S}\mathcal{)}$,
the context category of $\mathcal{B(H}_{S})$. To keep the notation
brief, we will denote\footnote{Presheaves are always denoted by
symbols that are underlined.}
 $\Sig^{\mathcal{V(H}_{S})}$ as $\Sig^{\Hi_{S}}$.

Furthermore, we define the quantity-value object, $\R_{\phi,S}$,
to be the presheaf $\SR{}^{\;\Hi_{S}}$ that was defined in Section
\ref{Sec:psSR} . Finally, we define
\begin{equation}
A_{\phi,S}:=\dasB{A},
\end{equation}
for all $A\in\F{S}$. Here $\dasB{A}:\Sig^{\Hi_{S}
}\map\SR{}^{\Hi_S}$ is constructed using the Gel'fand transforms
of the (outer) daseinisation of $\hat A$, for details see later.

\textbf{3.} The truth object $\ps\TO^{\ket\psi}$ corresponding to
a pure state $\ket\psi$ was discussed in Section
\ref{SubSubSec:TOQT}. Alternatively, we have the pseudo-state
$\ps\w^{\ket\psi}$.

\textbf{4.} Let $\Hi=\mathC$ be the one-dimensional Hilbert space,
corresponding to the trivial quantum system $1$. There is exactly
one abelian sub-algebra of $\mathcal{B}(\mathC)\simeq\mathC$,
namely $\mathC$ itself. This leads to
\begin{equation}
\tau_{\phi}(1_{\Sys})=\Set^{\{*\}}\simeq\Set=1_{\mathcal{M}(\Sys)}.
\end{equation}

\textbf{5.} Let $A\in \F{S_{1}\sqcup S_{2}}$ be a function symbol
for the system $S_1\sqcup S_2$. Then, as discussed above, $A$ is
of the form $A=\la A_1, A_2\ra$ (compare equation \eq{FS1sumS2}),
which corresponds to a self-adjoint operator $\hat
A_1\oplus\hat{A}_2\in\mathcal{B(H}_{1}\oplus\Hi_2)_{\sa}$. The
topos representation of $A$ is the natural transformation
$\breve\delta(\la A_1, A_2\ra):\Sig^{\Hi_1\oplus\Hi_2}
\map\SR{}^{\;\Hi_1\oplus\Hi_2}$, which is defined at each stage
$V\in\Ob{\mathcal{V(H}_{1}\oplus\Hi_2)}$ as
\begin{eqnarray}
\breve\delta(\la A_1,A_2\ra)_V:\Sig^{\Hi_1\oplus\Hi_2}_V
  &\map&\SR{}^{\;\Hi_1\oplus\Hi_2}_V \nonumber  \\
\l &\mapsto&\{V^\prime\mapsto \brak{\l|_{V^{\prime}}} {\delta(
\hat{A_1}\oplus\hat{A_2})_{V^{\prime}}}    \mid
V^{\prime}\subseteq V\} \label{Def:dasB(A1+A2)}
\end{eqnarray}
where the right hand side \eq{Def:dasB(A1+A2)} denotes an
order-reversing function.

We will need the following:
\begin{lemma}
\label{L_DaseinisationAndDirectSum}Let
$\hat{A_1}\oplus\hat{A_2}\in\mathcal{B(H}_{1}\oplus\Hi_2)_{\sa}$,
and let $V=V_1\oplus V_2\in\Ob{\mathcal{V}(\Hi_1\oplus\Hi_2)}$
such that $V_1\in\Ob{\V{1}}$ and $V_2\in\Ob{\V{2}}$. Then
\begin{equation}
\delta(\hat{A_1}\oplus\hat{A_2})_{V}=\delta(\hat{A_1})_{V_1}
\oplus\delta(\hat{A_2})_{V_2}. \label{das(A1+A2)=}
\end{equation}

\end{lemma}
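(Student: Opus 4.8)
The plan is to track how each of the three ingredients in the definition of outer daseinisation, $\dastoo{V}{A}=\int_\mathR \l\, d\big(\delta^i_V(\hat E^A_\l)\big)$ from \eq{Def:dastooVA}, behaves under the hypothesised decomposition $V=V_1\oplus V_2$: namely, the spectral family of the operator, the inner daseinisation of projections, and the spectral integral that reconstructs a self-adjoint operator from a spectral family. Since outer daseinisation of an operator is built entirely from these three operations, establishing that each respects the direct sum will immediately give the result.

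First I would record the behaviour of spectral families under direct sums. For self-adjoint $\hat{A_1}$ on $\Hi_1$ and $\hat{A_2}$ on $\Hi_2$, the spectral family of $\hat{A_1}\oplus\hat{A_2}$ on $\Hi_1\oplus\Hi_2$ is $\hat E^{A_1\oplus A_2}_\l=\hat E^{A_1}_\l\oplus\hat E^{A_2}_\l$ for all $\l\in\mathR$. This is the standard fact that the spectral measure of a block-diagonal operator is the direct sum of the component spectral measures; it follows by verifying the three defining properties of a spectral family (right-continuity, the limits $\hat 0$ and $\hat 1$ at $\mp\infty$, and $\int_\mathR\l\,d(\hat E^{A_1}_\l\oplus\hat E^{A_2}_\l)=\hat{A_1}\oplus\hat{A_2}$) componentwise. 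The same componentwise reasoning handles the reverse direction, the spectral integral: a block-diagonal spectral family integrates to the direct sum of the two component operators, because the spectral measure $d\brak{\psi_1\oplus\psi_2}{(\cdot)(\psi_1\oplus\psi_2)}$ splits additively.

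The main work is the middle step: inner daseinisation of projections respects the direct sum. Since $V=V_1\oplus V_2$ as von Neumann algebras, every projection in $V$ is block-diagonal, so the projection lattice factorises as $\PV\cong\mathcal{P}(V_1)\times\mathcal{P}(V_2)$, with the order $\preceq$ and the operations $\vee,\wedge$ all computed componentwise. Hence for a block-diagonal projection $\P=\P_1\oplus\P_2$ one has $\hat\beta_1\oplus\hat\beta_2\preceq\P_1\oplus\P_2$ if and only if $\hat\beta_1\preceq\P_1$ and $\hat\beta_2\preceq\P_2$, and taking the join in $\PV$ of all such projections splits as
\[
\delta^i(\P_1\oplus\P_2)_V
=\Big(\bigvee\{\hat\beta_1\in\mathcal{P}(V_1)\mid\hat\beta_1\preceq\P_1\}\Big)
\oplus\Big(\bigvee\{\hat\beta_2\in\mathcal{P}(V_2)\mid\hat\beta_2\preceq\P_2\}\Big)
=\dastoi{V_1}{P_1}\oplus\dastoi{V_2}{P_2}.
\]
The point requiring care is exactly that the supremum in the product lattice is evaluated componentwise even though the indexing pairs range over a product set; this is immediate once one notes that the order is a product order and the two constraint sets are independent.

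Finally I would assemble the pieces. Applying the middle step to the spectral projections supplied by the first step gives, for each $\l$,
\[
\delta^i_V\big(\hat E^{A_1\oplus A_2}_\l\big)
=\delta^i_{V_1}\big(\hat E^{A_1}_\l\big)\oplus\delta^i_{V_2}\big(\hat E^{A_2}_\l\big),
\]
so the daseinised spectral family is again block-diagonal. Feeding this into the integral representation then yields
\[
\delta^o(\hat{A_1}\oplus\hat{A_2})_V
=\int_\mathR\l\,d\big(\delta^i_{V_1}(\hat E^{A_1}_\l)\oplus\delta^i_{V_2}(\hat E^{A_2}_\l)\big)
=\dastoo{V_1}{A_1}\oplus\dastoo{V_2}{A_2},
\]
which is \eq{das(A1+A2)=}. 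The only genuine obstacle is the projection-lattice factorisation and the componentwise computation of the join; the spectral-theoretic bookkeeping in the outer steps is routine. The identical argument with the roles of $\delta^i$ and $\delta^o$ interchanged (and the right-continuous regularisation $\bigwedge_{\mu>\l}$ inserted, as in \eq{Def:dastoiVA}) establishes the analogous statement for inner daseinisation, so the lemma holds for both daseinisations.
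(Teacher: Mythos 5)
Your proof is correct and follows essentially the same route as the paper's: observe that every projection in $V=V_1\oplus V_2$ is block-diagonal so that inner daseinisation of a block-diagonal projection splits componentwise, then apply this to the (block-diagonal) spectral family of $\hat{A_1}\oplus\hat{A_2}$ and integrate. You simply spell out the spectral-family bookkeeping and the componentwise computation of the join more explicitly than the paper does.
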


\begin{proof}
Every projection $\hat Q\in V$ is of the form $\hat Q=\hat
Q_{1}\oplus\hat Q_{2}$ for unique projections $\hat
Q_{1}\in\mathcal{P(H}_{1})$ and $\hat Q_{2}
\in\mathcal{P(H}_{2})$. Let $\P\in\PH$ be of the form
$\P=\P_1\oplus\P_2$ such that $\P_1\in\mathcal{P(H}_1)$ and
$\P_2\in\mathcal{P(H}_1)$. The largest projection in $V$ smaller
than or equal to $\P$, \ie\ the inner daseinisation of $\P$ to
$V$, is
\begin{equation}
\dastoi{V}{P}=\hat{Q}_{1}\oplus\hat{Q}_{2},
\end{equation}
where $\hat{Q}_{1}\in\mathcal{P(}V_1)$ is the largest projection
in $V_1$ smaller than or equal to $\P_{1}$, and $\hat{Q}_{2}%
\in\mathcal{P(}V_2)$ is the largest projection in $V_2$ smaller
than or equal to $\P_{2}$, so
\begin{equation}
\dastoi{V}{P}=\delta(\P_{1})_{V_1}\oplus\delta(\P_{2})_{V_2}.
\end{equation}
This implies $\delta(\hat{A}\oplus\hat{B}
)_{V}=\delta(\hat{A})_{V_1}\oplus\delta(\hat{B})_{V_2}$, since
(outer) daseinisation of a self-adjoint operator just means inner
daseinisation of the projections in its spectral family, and all
the projections in the spectral family of $\A\oplus\hat B$ are of
the form $\P=\P_1\oplus\P_2$.
\end{proof}

As discussed in Section \ref{Sec:ToposAxioms}, in order to mimic
the construction that we have in the classical case, we need  to
pull back the arrow/natural transformation $\breve\delta(\la
A_1,A_2\ra): \Sig^{\Hi_1\oplus\Hi_2}
\map\SR{}^{\;\Hi_1\oplus\Hi_2}$  to obtain an arrow from
$\Sig^{\Hi_1}$ to $\SR{}^{\;\Hi_1}$. Since we decided that the
translation on the level of operators sends
$\hat{A_1}\oplus\hat{A_2}$ to $\hat{A_1}$, we expect that this
arrow from $\Sig^{\Hi_1}$ to $\SR{}^{\;\Hi_1}$ is $\dasB{A_1}$. We
will now show how this works.

The presheaves $\Sig^{\Hi_1\oplus\Hi_2}$ and $\Sig^{\Hi_1}$ lie in
different topoi, and in order to `transform' between them we need
we need a (left-exact) functor from the topos
$\Set^{\mathcal{V(\Hi}_1\oplus\Hi_2)^{op}}$to the topos
$\SetH{1}$: this is the functor
$\tau_\phi(j):\tau_\phi(S)\map\tau_\phi(S_1)$ in \eq{PBPhiA2}. One
natural place to look for such a functor is as the inverse-image
part of a geometric morphism from $\Set^{\mathcal{V(H}_1)^{op}}$
to $\Set^{\mathcal{V(H}_{1}\oplus\Hi_2)^{\op}}$. According to
Theorem \ref{Th:phiC->D}, one source of such a geometric morphism,
$\mu$, is a functor
\begin{equation}
        m:\V{1}\map \mathcal{V(H}_{1}\oplus\Hi_2),
\end{equation}
and the obvious choice for this is
\begin{equation}
        m(V):=V\oplus \mathC\hat 1_{\Hi_2}
\end{equation}
for all $V\in\Ob{\V{1}}$. This function from $\Ob{\V{1}}$ to
$\Ob{\mathcal{V(H}_{1}\oplus\Hi_2)}$ is clearly order preserving,
and hence $m$ is a genuine functor.

Let $\mu:\Set^{\mathcal{V(H}_{1})^\op}\map
\Set^{\mathcal{V(H}_1\oplus\Hi_2)^\op}$ denote the geometric
morphism induced by $m$. The inverse-image functor of $\mu$ is
given by
\begin{eqnarray}
\mu^*:\Set^{\mathcal{V(H}_1\oplus\Hi_2)^\op}&\map&
                \Set^{\mathcal{V(H}_{1})^\op}         \\
                \ps{F}  &  \mapsto& \ps{F}\circ m^\op.
\end{eqnarray}
This means that, for all $V\in \Ob{\V{1}}$, we have
\begin{equation}
(\mu^*\ps{F}^{\Hi_1\oplus\Hi_2})_V=\ps{F}^{\Hi_1\oplus\Hi_2}_{m(V)}=
\ps{F}^{\Hi_1\oplus\Hi_2}_{V\oplus \mathC\hat
1_{\Hi_2}}.\label{mu*Si+2}
\end{equation}
For example, for the spectral presheaf we get
\begin{equation}
(\mu^*\Sig^{\Hi_1\oplus\Hi_2})_V=\Sig^{\Hi_1\oplus\Hi_2}_{m(V)}=
        \Sig^{\Hi_1\oplus\Hi_2}_{V\oplus \mathC\hat 1_{\Hi_2}}.
\end{equation}
This is the functor that is denoted
$\tau_\phi(j):\tau_\phi(S_1)\map\tau_\phi(S)$ in  \eq{PBPhiA2}.

We next need to find an arrow
$\phi(i):\Sig^{\Hi_1}\map\mu^*\Sig^{\Hi_1\oplus\Hi_2}$ that is the
analogue of the arrow
$\phi(j):\Si_{\phi,S_1}\map\tau_\phi(j)(3\Si_{\phi,S})$ in
\eq{PBPhiA2}.

For each $V$, the set $(\mu^*\Sig^{\Hi_1\oplus\Hi_2})_V=
\Sig^{\Hi_1\oplus\Hi_2}_{V\oplus \mathC\hat 1_{\Hi_2}}$ contains
two types of spectral elements $\l$: the first type are those $\l$
such that $\brak\l{\hat 0_{\Hi_1}\oplus\hat 1_{\Hi_2}}=0$. Then,
clearly, there is some $\tilde\l\in\Sig^{\Hi_1}_V$ such that
$\brak{\tilde\l}{\A}=\brak\l{\A\oplus\hat 0_{\Hi_2}}=
\brak\l{\A\oplus\hat1_{\Hi_2}}$ for all $\A\in V_\sa$. The second
type of spectral elements $\l\in\Sig^{\Hi_1\oplus\Hi_2}_{V\oplus
\mathC\hat 1_{\Hi_2}}$ are such that $\brak\l{\hat
0_{\Hi_1}\oplus\hat 1_{\Hi_2}}=1$. In fact, there is exactly one
such $\l$, and we denote it by $\l_0$. This shows that
$\Sig^{\Hi_1\oplus\Hi_2}_{V\oplus \mathC\hat 1_{\Hi_2}}
\simeq\Sig^{\Hi_1}_V\cup\{\l_0\}$. Accordingly, at each stage $V$,
the mapping $\phi(i)$ sends each $\tilde\l\in\Sig^{\Hi_1}_V$ to
the corresponding $\l\in\Sig^{\Hi_1\oplus\Hi_2}_{V\oplus
\mathC\hat 1_{\Hi_2}}$.

The presheaf $\SR{}^{\;\Hi_1\oplus\Hi_2}$ is given at each stage
$W\in\Ob{\mathcal{V}(\Hi_1\oplus\Hi_2)}$ as the order-reversing
functions $\nu:\downarrow\!\!W\rightarrow\mathR$, where
$\downarrow\!\!W$ denotes the set of unital, abelian von Neumann
sub-algebras of $W$. Let $W=V\oplus\mathC\hat 1_{\Hi_2}$. Clearly,
there is a bijection between the sets $\downarrow\!\!
W\subset\Ob{\mathcal{V}(\Hi_1\oplus\Hi_2)}$ and $\downarrow\!\! V
\subset\Ob{\V{}}$. We can thus identify
\begin{equation}
        (\mu^*\SR{}^{\;\Hi_1\oplus\Hi_2})_V=
        \SR{}^{\;\Hi_1\oplus\Hi_2}_{V\oplus \mathC\hat 1_{\Hi_2}}
        \simeq\SR^{\Hi_1}_V
\end{equation}
for all $V\in\Ob{\V{}}$. This gives an isomorphism $\beta_\phi(i):
\mu^*\SR{}^{\;\Hi_1\oplus\Hi_2}\map\SR{}^{\;\Hi_1}$, which
corresponds to the arrow
$\beta_\phi(j):\tau_\phi(j)(\R_{\phi,S})\map\R_{\phi,S_1}$ in
\eq{PBPhiA2}.

Now consider the arrow $\breve\delta(\la A_1, A_2\ra):
\Sig^{\Hi_1\oplus\Hi_2}\map\SR{}^{\;\Hi_1\oplus\Hi_2}$. This is
the analogue of the arrow $A_{\phi,S}:\Si_{\phi,S}\map\R_{\phi,S}$
in \eq{PBPhiA2}. At each stage
$W\in\Ob{\mathcal{V}(\Hi_1\oplus\Hi_2)}$, this arrow is given by
the (outer) daseinisation $\delta(\A_1\oplus\A_2)_{W^{\prime}}$
for all $W^{\prime}\in\downarrow\!\!W$. According to Lemma
\ref{L_DaseinisationAndDirectSum}, we have
\begin{equation}
        \delta(\hat{A}_1\oplus\hat{A}_2)_{V\oplus\mathC\hat 1_{\Hi_2}}=
\dasto{V}{A_1}\oplus\delta(A_2)_{\mathC\hat 1_{\Hi_2}}
        =\dasto{V}{A_1}\oplus\rm{max(sp}(\A_2))\hat 1_{\Hi_2}
\end{equation}
for all $V\oplus\mathC\hat
1_{\Hi_2}\in\Ob{\mathcal{V}(\Hi_1\oplus\Hi_2)}$. This makes clear
how the arrow
\begin{equation}
\mu^*(\breve\delta(\la A_1, A_2\ra)):\mu^*\Sig^{\Hi_1\oplus\Hi_2}
\map\mu^*\SR{}^{\;\Hi_1\oplus\Hi_2}
\end{equation}
is defined. Our conjectured pull-back/translation representation
is
\begin{equation}
        \phi(\L{i})\big(\breve\delta(\la A_1, A_2\ra)\big):=
    \beta_\phi(i)\circ\mu^*(\breve\delta(\la A_1, A_2\ra))
    \circ\phi(i):\Sig^{\Hi_1}\map\SR{}^{\;\Hi_1}.
\end{equation}
Using the definitions of $\phi(i)$ and $\beta_\phi(i)$, it becomes
clear that
\begin{equation}
\beta_\phi(i)\circ\mu^*(\breve\delta(\la A_1, A_2\ra))
\circ\phi(i)=\dasB{A_1}.
\end{equation}
Hence,  the commutativity condition in \eq{CDpullback} is
satisfied for arrows in $\Sys$ of the form $i_{1,2}:S_{1,2}\map
S_1\sqcup S_2$.

\subsection{The Translation Representation for Composite
Quantum Systems}\label{SubSec:TranslCompQT} We now consider arrows
in $\Sys$ of the form
\begin{equation}
S_{1}\overset{p_{1}}{\leftarrow}S_{1}\di S_{2}\overset{p_{2}
}{\rightarrow}S_{1},
\end{equation}
where the quantum systems $S_{1}$, $S_{2}$ and $S_{1}\di S_{2}$
have the Hilbert spaces $\Hi_1$, $\Hi_2$ and $\Hi_1\otimes \Hi_2$,
respectively.\footnote{As usual, the composite system $S_{1}\di
S_{2}$ has as its Hilbert space the tensor product of the Hilbert
spaces of the components.}

The canonical translation\footnote{As discussed in Section
\ref{SubSubSec:ArrTranComp}, this translation, $\L{p_1}$,
transforms a physical quantity $A_1$ of system $S_{1}$ into a
physical quantity $A_1\di1$, which is the `same' physical quantity
but now seen as a part of the composite system $S_{1}\di S_{2}$.
The symbol $1$ is the trivial physical quantity: it is represented
by the operator $\hat{1}_{\Hi_2}$.} $\L{p_1}$ between the
languages $\L{S_1}$ and $\L{S_{1}\di S_{2}}$ (see Section
\ref{SubSubSec:ArrTranComp}) is such that if $A_1$ is a function
symbol in $\F{S_1}$, then the corresponding operator
$\hat{A_1}\in\mathcal{B(H}_1)_{\sa}$ will be `translated' to the
operator $\hat{A_1}\otimes\hat{1}_{\Hi_2}\in
\mathcal{B(H}_{1}\otimes\Hi_2)$. By assumption, this corresponds
to the function symbol $A_1\di1$ in $\F{S_{1}\di S_{2}}$.

\subsubsection{Operator Entanglement and Translations.} We should be
cautious about what to expect from this translation when we
represent a physical quantity $A:\Si\map\R$ in $\F{S_1}$ by an
arrow between presheaves, since there are no canonical projections
\begin{equation}
\Hi_1\leftarrow\Hi_1\otimes\Hi_2 \rightarrow\Hi_2,
\end{equation}
and hence no canonical projections
\begin{equation}
\Sig^{\Hi_1}\leftarrow\Sig^{\Hi_1\otimes\Hi_2}\rightarrow
\Sig^{\Hi_2}
\end{equation}
from the spectral presheaf of the composite system to the spectral
presheaves of the components.\footnote{On the other hand, in the
classical case, there \emph{are} canonical projections
\begin{equation}
\Si_{\s,S_{1}}\leftarrow\Si_{\s,S_{1}\di S_{2}}
\rightarrow\Sigma_{\s,S_{2}}
\end{equation}
because the symplectic manifold $\Si_{\s,S_{1}\di S_{2}}$ that
represents the composite system is the cartesian product
$\Si_{\s,S_{1}\di S_{2}}=\Si_{\s,S_{1}}\times\Si_{\s,S_{2}}$,
which is a product in the categorial sense and hence comes with
canonical projections.}

This is the point where a form of \emph{entanglement} enters the
picture. The spectral presheaf $\Sig^{\Hi_1\otimes\Hi_2}$ is a
presheaf over the context category
$\mathcal{V}(\Hi_1\otimes\Hi_2)$ of $\Hi_1\otimes\Hi_2$. Clearly,
the context category $\V{1}$ can be embedded into
$\mathcal{V}(\Hi_1\otimes\Hi_2)$ by the mapping $V_1\mapsto
V_1\otimes\mathC\hat 1_{\Hi_2}$, and likewise $\V{2}$ can be
embedded into $\mathcal{V}(\Hi_1\otimes\Hi_2)$. But not every
$W\in\Ob{\mathcal{V}(\Hi_1\otimes\Hi_2)}$ is of the form
$V_1\otimes V_2$.

This comes from the fact that not all vectors in
$\Hi_1\otimes\Hi_2$ are of the form $\psi_1\otimes \psi_2$, hence
not all projections in $\mathcal{P}(\Hi_1\otimes\Hi_2)$ are of the
form $\P_{\psi_1}\otimes\P_{\psi_2}$, which in turn implies that
not all $W\in\mathcal{V}(\Hi_1\otimes\Hi_2)$ are of the form
$V_1\otimes V_2$. There are more contexts, or world-views,
available in $\mathcal{V}(\Hi_1\otimes\Hi_2)$ than those coming
from $\V{1}$ and $\V{2}$. We call this `operator entanglement'.

The topos representative of $\hat{A_1}$ is
$\dasB{A_1}:\Sig^{\Hi_1}\map \SR{}^{\;\Hi_1}$, and the
representative of $\hat{A_1}\otimes\hat{1}_{\Hi_2}$ is
$\breve\delta(A_1\di 1):
\Sig^{\Hi_1\otimes\Hi_2}\map\SR{}^{\;\Hi_1\otimes\Hi_2}$. At
sub-algebras $W\in\Ob{\mathcal{V(H}_{1}\otimes\Hi_2)}$ which are
\emph{not} of the form $W=V_1\otimes V_2$ for any
$V_1\in\Ob{\mathcal{V(H}_{1})}$ and
$V_2\in\Ob{\mathcal{V(H}_{2})}$, the daseinised operator
$\delta(\hat{A_1}_{W}\otimes\hat{1}_{\Hi_2})\in W_{\sa}$ will not
be of the form
$\delta(\hat{A_1})_{V_1}\otimes\delta(\hat{A_1})_{V_2}$.\footnote{Currently,
it is even an open question if
$\delta(\hat{A_1}_{W}\otimes\hat{1}_{\Hi_2})=
\delta(\hat{A_1})_{V_1}\otimes\hat{1}_{\Hi_2}$ if $W=V_1\otimes
V_2$ for a non-trivial algebra $V_2$.} On the other hand, it is
easy to see that  $\delta(\hat{A_1}\otimes\hat{1}_{\Hi_2})_{W}=
\delta(\hat{A_1})_{V_1}\otimes\hat{1}_{\Hi_2}$ if
$W=V_1\otimes\mathC\hat 1_{\Hi_2}$.

Given a physical quantity $A_1$, represented by the arrow
$\dasB{A_1}:\Sig^{\Hi_1}\rightarrow\SR^{\Hi_1}$, we can (at best)
expect that the translation of this arrow into an arrow from
$\Sig^{\Hi_1\otimes\Hi_2}$ to $\SR^{\Hi_1\otimes\Hi_2}$ coincides
with the arrow $\dasB{A_1\di 1}$ on the `\emph{image}' of
$\Sig^{\Hi_1}$ in $\Sig^{\Hi_1\otimes\Hi_2}$. This image will be
constructed below using a certain geometric morphism. As one might
expect, the image of $\Sig^{\Hi_1}$ is a presheaf $\ps P$ on
$\mathcal{V}(\Hi_1\otimes\Hi_2)$ such that $\ps
P_{V_1\otimes\mathC\hat 1_{\Hi_2}}\simeq\Sig^{\Hi_1}_{V_1}$ for
all $V_1\in\V{1}$, \ie\ the presheaf $\ps P$ can be identified
with $\Sig^{\Hi_1}$ exactly on the image of $\V{1}$ in
$\mathcal{V}(\Hi_1\otimes\Hi_2)$ under the embedding $V_1\mapsto
V_1\otimes\mathC\hat 1_{\Hi_2}$. At these stages, the translation
of $\dasB{A_1}$ will coincide with $\dasB{A_1\di 1}$. At other
stages $W\in\mathcal{V}(\Hi_1\otimes\Hi_2)$, the translation
cannot be expected to be the same natural transformation as
$\dasB{A\di 1}$ in general.

\subsubsection{A Geometrical Morphism and a Possible Translation}
The most natural approach to a translation is the following. Let
$W\in\Ob{\mathcal{V(H}_{1}\otimes\Hi_2)}$, and define $V_{W}
\in\Ob{\V{1}}$ to be the \emph{largest} sub-algebra of
$\mathcal{B(H}_{1})$ such that
$V_{W}\otimes\,\mathC\hat{1}_{\Hi_2}$ is a sub-algebra of $W$.
Depending on $W$, $V_{W}$ may, or may not, be the trivial
sub-algebra $\mathC\hat{1}_{\Hi_1}$. We note that if
$W^{\prime}\subseteq W$, then
\begin{equation}
V_{W^{\prime}}\subseteq V_{W}, \label{VW'<VW}
\end{equation}
but $W^{^{\prime}}\subset W$ only implies $V_{W^{\prime}}\subseteq
V_{W}$.

The trivial algebra $\mathC\hat{1}_{\Hi_1}$ is not an object in
the category $\V{1}$. This is why we introduce the `augmented
context category' $\V{1}_*$, whose objects are those of $\V{1}$
united with $\mathC\hat{1}_{\Hi_1}$, and with the obvious
morphisms ($\mathC\hat{1}_{\Hi_1}$ is a sub-algebra of all
$V\in\V{1}$).

Then there is a functor $n:{\mathcal{V(H}_{1}\otimes\Hi_2})\map
\V{1}_*$, defined as follows. On objects,
\begin{eqnarray}
        n:\Ob{\mathcal{V(}\Hi_1\otimes\Hi_2)} &\map& \Ob{\V{1}_*}                  \nonumber\\
    W &\mapsto& V_{W},
\end{eqnarray}
and if $i_{W^{\prime}W}:W^\prime\map W$ is an arrow in
$\mathcal{V(}\Hi_1 \otimes\Hi_2)$, we define $n(i_{W^{\prime}W}):=
i_{V_{W^{\prime}}V_{W}}$ (an arrow in $\V{1}_*$); if
$V_{W^{\prime}}=V_{W}$, then $i_{V_{W^{\prime}}V_{W}}$ is the
identity arrow $\operatorname*{id}_{V_{W}}$.

Now let
\begin{equation}
\nu:\Set^{\mathcal{V(H}_{1}\otimes\Hi_2)^{\op}}\map\Set^{{(\V{1}}_*)^{\rm
op}}
\end{equation}
denote the geometric morphism induced by $\pi$. Then the
(left-exact) inverse-image functor
\begin{equation}
\nu^*:\Set^{{(\V{1}}_*)^{\rm
op}}\map\Set^{\mathcal{V(H}_{1}\otimes\Hi_2)^\op}
\end{equation}
acts on a presheaf $\ps{F}\in\Set^{{(\V{1}}_*)^{\rm op}}$ in the
following way. For all $W\in\Ob{\mathcal{V(H}_{1}\otimes\Hi_2)}$,
we have
\begin{equation}
(\nu^*\ps{F})_W=\ps{F}_{n^{\op}(W)}=\ps{F}_{V_{W}}
\end{equation}
and
\begin{equation}
(\nu^*\ps{F})(i_{W^{\prime}W})=
                \ps{F}(i_{V_{W^{\prime}}V_{W}})
\end{equation}
for all arrows $i_{W^{\prime}W}$ in the category
$\mathcal{V(H}_{1}\otimes \Hi_2)$.\footnote{We remark, although
will not prove it here, that the inverse-image presheaf
$\nu^*\ps{F}$ coincides with the direct image presheaf
$\phi_*\ps{F}$ of $\ps{F}$ constructed from the geometric morphism
$\phi$ induced by the functor
\begin{eqnarray}
\ka:\mathcal{V(H}_{1})&\map&\mathcal{V(H}_{1}\otimes\Hi_2)
                                                \nonumber\\
V  &  \mapsto& V\otimes\mathC\hat{1}_{\Hi_2}.
\end{eqnarray}
Of course, the inverse image presheaf $\beta^*\ps{F}$ is much
easier to construct.}

In particular, for all $W\in\mathcal{V(H}_{1}\otimes\Hi_2)$, we
have
\begin{eqnarray}
(\nu^*\Sig^{\Hi_1})_W&=&\Sig^{\Hi_1}_{V_{W}},\label{Defnu*Sig}\\
(\nu^*\SR{}^{\;\Hi_1})_W  & =&\SR{}^{\;\Hi_1}_{V_{W}}.
\end{eqnarray}
Since $V_W$ can be $\mathC\hat 1_{\Hi_1}$, we have to extend the
definition of the spectral presheaf $\Sig^{\Hi_1}$ and the
quantity-value presheaf $\SR^{\Hi_1}$ such that they become
presheaves over $\V{1}_*$ (and not just $\V{1}$). This can be done
in a straightforward way: the Gel'fand spectrum $\Sig_{\mathC\hat
1_{\Hi_1}}$ of $\mathC\hat 1_{\Hi_1}$ consists of the single
spectral element $\l_1$ such that $\brak{\l_1}{\hat 1_{\Hi_1}}=1$.
Moreover, $\mathC\hat 1_{\Hi_1}$ has no sub-algebras, so the
order-reversing functions on this algebra correspond bijectively
to the real numbers $\mathR$.

Using these equations, we see that the arrow
$\dasB{A_1}:\Sig^{\Hi_1}\map\SR{}^{\;\Hi_1}$ that corresponds to
the self-adjoint operator $\hat{A_1}\in\mathcal{B(H}_{1})_{\sa}$
gives rise to the arrow
\begin{equation}
\nu^*(\dasB{A_1}):\nu^*\Sig^{\Hi_1}\map\nu^*\SR{}^{\;\Hi_1}.
                                        \label{dum2}
\end{equation}
In terms of our earlier notation, the functor
$\tau_\phi(p_1):\SetH{1}\map
\Set^{\mathcal{V(H}_{1}\otimes\Hi_2)^\op}$ is $\nu^*$, and the
arrow in \eq{dum2} is the arrow
$\tau_\phi(j)(A_{\phi,S}):\tau_\phi(j)(\Si_{\phi,S})\map
\tau_\phi(j)(\R_{\phi,S})$ in \eq{PBPhiA2} with $j:S_1\map S$
being replaced by  $p:S_1\di S_2\map S_1$, which is the arrow in
$\Sys$ whose translation representation we are trying to
construct.

The next arrow we need is the one denoted
$\beta_\phi(j):\tau_\phi(j)(\R_{\phi,S})\map\R_{\phi,S_1}$ in
\eq{PBPhiA2}. In the present case, we define
$\beta_\phi(p):\nu^*\SR{}^{\;\Hi_1}
\map\SR{}^{\;\Hi_1\otimes\Hi_2}$ as follows. Let
$\alpha\in(\nu^*\SR{}^{\;\Hi_1})_W\simeq\SR{}^{\;\Hi_1}_{V_W}$ be
an order-reversing real-valued function on $\downarrow\!\! V_W$.
Then we define an order-reversing function
$\beta_\phi(p)(\alpha)\in \SR{}^{\;\Hi_1\otimes\Hi_2}_W$ as
follows. For all $W^\prime\subseteq W$, let
\begin{equation}
        [\beta_\phi(p)(\alpha)](W^\prime):=\alpha(V_{W^\prime})
\end{equation}
which, by virtue of \eq{VW'<VW}, is an order-reversing function
and hence a member of $\SR{}^{\;\Hi_1\otimes\Hi_2}_W$.

We also need an arrow in
$\Set^{\mathcal{V(H}_{1}\otimes\Hi_2)^\op}$ from
$\Sig^{\Hi_1\otimes\Hi_2}$ to $\nu^*\Sig^{\Hi_1}$, where
$\nu^*\Sig^{\Hi_1}$ is defined in \eq{Defnu*Sig}. This is the
arrow denoted
$\phi(j):\Si_{\phi,S_1}\map\tau_\phi(j)(\Si_{\phi,S})$ in
\eq{PBPhiA2}.

The obvious choice is to restrict
$\l\in\Sig^{\Hi_{1}\otimes\Hi_2}_W$ to the sub-algebra
$V_{W}\otimes\mathC \hat{1}_{\Hi_2}\subseteq W$, and to identify
$V_{W}\otimes\mathC \hat{1}_{\Hi_1}\simeq
V_{W}\otimes\hat{1}_{\Hi_1}\simeq V_W$ as von Neumann algebras,
which gives
$\Sig^{\Hi_1\otimes\Hi_2}_{V_{W}\otimes\mathC\hat{1}_{\Hi_2}}
\simeq\Sig^{\Hi_1}_{V_{W}}$. Let
\begin{eqnarray}
\phi(p)_W:\Sig^{\Hi_1\otimes\Hi_2}_W &\map&
                            \Sig^{\Hi_1}_{V_{W}}\nonumber\\
                                \l &  \mapsto&\l|_{V_{W}}
\end{eqnarray}
denote this arrow at stage\ $W$. Then
\begin{equation}
\beta_\phi(p)\circ\nu^*(\dasB{A_1})\circ\phi(p):
\Sig^{\Hi_1\otimes\Hi_2} \map\SR{}^{\;\Hi_1\otimes\Hi_2}
\end{equation}
is a natural transformation which is defined for all
$W\in\Ob{\mathcal{V(H}_{1}\otimes\Hi_2)}$ and all $\l\in W$ by
\begin{eqnarray}
\left(\beta_\phi(p)\circ\nu^*(\dasB{A_1})\circ\phi(p)\right)_W(\l)
     &=& \nu^*(\dasB{A})(\l|_{V_{W}})\\
     &=& \{V^\prime\mapsto\brak{\l|_{V^{\prime}}}{\dasto{V^{\prime}}{A}}\mid
V^{\prime}\subseteq V_{W}\}\hspace{1.1cm}
\end{eqnarray}
This is clearly an order-reversing real-valued function on the set
$\downarrow\!\! W$ of sub-algebras of $W$, \ie\ it is an element
of $\SR{}^{\;\Hi_1\otimes\Hi_2}_W$. We define
$\beta_\phi(p)\circ\nu^*(\dasB{A_1})\circ\phi(p)$ to be the
translation representation, $\phi(\L{p})(\dasB{A_1})$ of
$\dasB{A_1}$ for the composite system.

Note that, by construction, for each $W$, the arrow
$(\beta_\phi(p)\circ\nu^*(\dasB{A_1})\circ\phi(p))_W$ corresponds
to the self-adjoint operator
$\delta(\hat{A_1})_{V_{W}}\otimes\hat{1}_{\Hi_2}\in W_{\sa}$,
since
\begin{equation}
\brak{\l|_{V_{W}}}{\delta(\hat{A_1})_{V_{W}}}=
\brak\l{\delta(\hat{A_1})_{V_{W} }\otimes\hat{1}_{\Hi_2}}
\end{equation} for all $\l\in\Sig^{\Hi_1\otimes\Hi_2}_W$.

\paragraph{Comments on these results.}
This is about as far as we can get with the arrows associated with
the composite of two quantum systems.  The results above can be
summarised in the equation
\begin{equation}                    \label{Eq_QCompSysTranslation}
        \phi(\L{p})(\dasB{A_1})_W=
\breve\delta(A_1)_{V_{W}}\otimes\hat{1}_{\Hi_2}
\end{equation}
for all contexts $W\in\Ob{\mathcal{V(H}_{1}\otimes\Hi_2)}$. If
$W\in\Ob{\mathcal{V}(\Hi_1\otimes\Hi_2)}$ is of the form
$W=V_1\otimes\mathC\hat 1_{\Hi_2}$, \ie\ if $W$ is in the image of
the embedding of $\V{1}$ into $\mathcal{V}(\Hi_1\otimes\Hi_2)$,
then $V_W=V_1$ and the translation formula gives just what one
expects: the arrow $\dasB{A_1}$ is translated into the arrow
$\dasB{A_1\di 1}$ at these stages, since
$\delta(\hat{A_1}\otimes\hat{1}_{\Hi_2})_{V_1\otimes\mathC\hat
1_{\Hi_2}}=
\delta(\hat{A_1})_{V_1}\otimes\hat{1}_{\Hi_2}$.\footnote{To be
precise, both the translation $\phi(\L{p})(\dasB{A_1})_W$, given
by (\ref{Eq_QCompSysTranslation}), and $\dasB{A\di 1}_W$ are
mappings from $\Sig^{\Hi_1\otimes\Hi_2}_W$ to
$\SR^{\Hi_1\otimes\Hi_2}_W$. Each
$\l\in\Sig^{\Hi_1\otimes\Hi_2}_W$ is mapped to an order-reversing
function on $\downarrow\!\! W$. The mappings
$\phi(\L{p})(\dasB{A_1})_W$ and $\dasB{A\di 1}_W$ coincide at all
$W^{\prime}\in\downarrow\!\! W$ that are of the form
$W^{\prime}=V^{\prime}\otimes\mathC\hat 1_{\Hi_2}$.}

If $W\in\Ob{\mathcal{V}(\Hi_1\otimes\Hi_2)}$ is not of the form
$W=V_1\otimes\mathC\hat 1_{\Hi_2}$, then it is relatively easy to
show that
\begin{equation}
      \delta(\hat{A_{1}}\otimes\hat{1}_{\Hi_2})_{W}\neq
      \delta(\hat{A_1})_{V_{W}}\otimes \hat{1}_{\Hi_2}
\end{equation}
in general. Hence
\begin{equation}
    \phi(\L{p})(\dasB{A_1}) \neq  \breve\delta(A_1\di 1), \label{phiLpdAn=}
\end{equation}
whereas, intuitively, one might have expected equality. Thus the
`commutativity' condition \eq{ClassCom} is not satisfied.

In fact, there appears to be \emph{no} operator $\hat
B\in\mathcal{B(H}_{1}\otimes\Hi_2)$ such that
$\phi(\L{p})(\dasB{A_1})=\dasB{B}$. Thus the quantity,
$\beta_\phi(p)\circ\nu^*(\dasB{A_1})\circ\phi(p)$, that is our
conjectured pull-back, is  an arrow in
$\Hom{\Set^{\mathcal{V(H}_{1}}\otimes\Hi_2)^{\op}}
{\Sig^{\Hi_1\otimes\Hi_2}}{\SR{}^{\;\Hi_1\otimes\Hi_2}}$ that is
not of the form $A_{\phi,S_1\di S_2}$ for any physical quantity
$A\in\F{S_1\di S_2}$.

Our current understanding is that this translation is `as good as
possible': the arrow
$\dasB{A_1}:\Sig^{\Hi_1}\rightarrow\SR^{\Hi_1}$ is translated into
an arrow from $\Sig^{\Hi_1\otimes\Hi_2}$ to
$\SR^{\Hi_1\otimes\Hi_2}$ that coincides with $\dasB{A_1}$ on
those part of $\Sig^{\Hi_1\otimes\Hi_2}$ that can be identified
with $\Sig^{\Hi_1}$. But $\Sig^{\Hi_1\otimes\Hi_2}$ is much
larger, and it is not simply a product of $\Sig^{\Hi_1}$ and
$\Sig^{\Hi_2}$. The context category
$\mathcal{V}(\Hi_1\otimes\Hi_2)$ underlying
$\Sig^{\Hi_1\otimes\Hi_2}$ is much richer than a simple product of
$\V{1}$ and $\V{2}$. This is due to a kind of operator
entanglement. A translation can at best give a faithful picture of
an arrow, but it cannot possibly `know' about the more complicated
contextual structure of the larger category.

Clearly, both technical and interpretational  work remain to be
done.


\section{Characteristic Properties of $\Si_\phi$, $\R_\phi$ and
$\TO/\w$} \label{Sec:CharPropsObjects}
\subsection{The State Object $\Si_\phi$}
A major motivation for our work is the desire to find mathematical
structures with whose aid genuinely new types of theory can be
constructed. Consequently, however fascinating the
`toposification' of quantum theory may be, this particular theory
should not be allowed to divert us too much from the main goal.
However, it is also important to see if any general lessons can be
learnt from what  has been done so far. This is likely to be
crucial in the construction of new theories.

In developing the topos version of quantum theory we have
constructed concrete objects in the topos to function as the state
object and quantity-value object. We have also seen how each
quantum vector state gives a precise truth object, or
`pseudo-state'.

The challenging question now is what, if anything, can be said in
general about these key ingredients in our scheme. Thus, ideally,
we would be able to specify characteristic properties for
$\Si_\phi$, $\R_\phi$, and the truth objects/pseudo-states. A
related problem is to understand if there is an object,
$\WO_\phi$, of all truth objects/pseudo-states, and, if so, what
are its defining properties. Any such universal properties could
be coded into the structure of the language, $\L{S}$, of the
system, hence ensuing that they are present in all topos
representations of $S$.
 In particular, should a symbol $\WO$ be added to $\L{S}$ as the linguistic precursor of an object of pseudo-states?

So far, we only know of two explicit examples of
physically-relevant topos representations of a system language,
$\L{S}$: (i) the representation of classical physics in $\Set$;
and (ii) the representation of quantum physics in topoi of the
form $\SetH{}$. This does provide  much to go on when it comes to
speculating on characteristic properties of the key objects
$\Si_\phi$ and $\R_\phi$. From this perspective, it would be
helpful  if there is an alternative  way of finding the quantum
objects $\Sig$ and $\PR{\mathR}$ in addition to the one provided
by the  approach that we have adopted. Fortunately, this has been
done recently by Heunen and Spitters \cite{HeuSpit07}; as we shall
see in Section \ref{SubSubSec:HeunSpit}, this does provide more
insight into a possible generic structure for $\Si_\phi$.

\subsubsection{An Analogue of a Symplectic Structure or Cotangent Bundle?}
Let us start with the state object $\Si_\phi$.  In classical
physics, this is a symplectic manifold; in quantum theory it is
the spectral presheaf $\Sig$ in the topos $\SetH{}$. Does this
suggest any  properties for $\Si_\phi$ in general?

One possibility is that the state-object, $\Si_\phi$, has some
sort of `symplectic structure'. If taken literally, this phrase
suggests synthetic differential geometry (SDG): a theory that is
based on the existence in certain topoi (not $\Set$) of genuine
`infinitesimals'. However, this seems unlikely for the quantum
topoi $\SetH{}$ and  we would probably need to extend these topoi
considerably in order to incorporate SDG. Thus when we say
``\ldots some sort of symplectic structure'', the phrase `some
sort' has to be construed rather broadly.

We suspect that, with this caveat, the state object $\Sig$ may
have such a structure, particularly for those quantum systems that
come from quantising a given classical system. However, at the
moment this is still a conjecture. We are currently studying
systems whose classical state space is  the cotangent bundle,
$T^*Q$, of a configuration space $Q$. We think that the quantum
analogue of this space is a certain presheaf, $\ps{M_Q}$, that is
associated with the  \emph{maximal} commutative sub-algebra,
$M_Q\in\Ob{\V{}}$, generated by the smooth, real-valued functions
on $Q$. This is currently work in progress.

But even if  the quantum state-object \emph{does} have a remnant
`symplectic structure', it is debatable if this should be
axiomatised in general. Symplectic structures arise in classical
physics because the underlying equations of motion are
\emph{second-order} in the `configuration' variables $q$, and
hence first-order in the pair $(q,p)$, where $p$ are the `momentum
variables'.

However if, say, Newton's equations of gravity had been
third-order in $q$, this would lead to triples $(q,p,a)$ ($a$ are
`acceleration' variables) and symplectic structure would not be
appropriate.

\subsubsection{$\Si_\phi$ as a Spectral Object: the Work of
Heunen and Spitters} \label{SubSubSec:HeunSpit}
 Another way of understanding the state
object $\Si_\phi$ is suggested by the recent work of Heunen and
Spitters \cite{HeuSpit07}. They start with a non-commutative
$C^*$-algebra, $\cal A$, of observables in some `ambient topos',
$\mathfrak{S}$---in our case, this is $\Set$---and then proceed
with the following steps:
\begin{enumerate}
\item Construct the poset category\footnote{This notation
has been chosen to suggest more clearly the analogues with our
topos constructions that use the base category $\V{}$. It is not
that used by Heunen and Spitters.}  ${\cal V}(\mathcal{A})$ of
commutative sub-algebras of $\mathcal{A}$.

\item Construct the topos, $\mathfrak{S}^{{\cal V}(\mathcal{A})}$
of covariant functors (\ie\ \emph{co}-presheaves) on the
category/poset ${\cal V}(\mathcal{A})$. \footnote{They affirm that
the operation $\mathcal{A}\mapsto\mathfrak{S}^{{\cal
V}(\mathcal{A})}$
 defines a functor from the category of $C^*$-algebras in $\mathfrak{S}$
to the category of elementary topoi and geometric morphisms.}

\item Construct the `tautological' co-presheaf
$\cps{\cal A}$ in which $\cps{\cal A}(V):=V$ for each commutative
sub-algebra, $V$, of $\cal A$. Then if $i_{V_1V_2}:V_1\subseteq
V_2$, the associated arrow $\cps{\cal A}(i_{V_1V_2}):\cps{\cal
A}(V_1)\map\cps{\cal A}(V_2)$ is just the inclusion map of
 $\cps{\cal A}(V_1)$ in $\cps{\cal A}(V_2)$.

\item They show that $\cps{\cal A}$ has the structure of a
commutative (pre\footnote{The `pre' refers to the fact that the
algebra is defined using only the co-presheaf, $\cps{\mathbb{Q}}$,
of complexified rationals. This co-presheaf is constructed from
the natural-number object, $\cps\mathN$, only. This restriction is
necessary because it is not possible to define the \emph{norm} of
a $C^*$-algebra in purely constructive terms.}) $C^*$-algebra
inside the topos

\item Using a recent, very important, result of
Banacheswski and Mulvey \cite{BanMul06}, Heunen and Spitters show
that the spectrum, $\cps{\Si}$, of the commutative algebra
$\cps{\cal A}$ can be computed internally, and that it has the
structure of an internal locale in $\mathfrak{S}^{{\cal
V}(\mathcal{A})}$.

\item They then show that, in the case of quantum theory, $\cps{\Si}$  is essentially our
spectral object, $\Sig$, but viewed as a co-presheaf of
\emph{locales}, rather than as a presheaf of topological spaces.

\end{enumerate}

Thus Heunen and Spitters differ from us in that (i) they work in a
general ambient topos $\mathfrak{S}$, whereas we use $\Set$; (ii)
they use $C^*$-algebras rather than von Neumann
algebras\footnote{One problem with $C^*$-algebras is that they
very often do not contain enough projectors; and, of course, these
are the entities that represent propositions. This obliges Heunen
and Spitters to move from a $C^*$-algebra to a $AW^*$-algebra,
which is just an abstract version of a von Neumann algebra.}; and
(iii) they use covariant rather than contravariant functors.

The fact that they recover what is (essentially) our spectral
presheaf is striking. Amongst other things, it suggests a possible
axiomatisation of the state object, $\Si_\phi$. Namely, we could
require that in \emph{any} topos representation, $\phi$, the state
object is (i) the spectrum of some internal, commutative (pre)
$C^*$-algebra; and (ii) the spectrum has the structure of an
internal locale in the topos $\tau_\phi$.

It is not currently clear whether or not it makes physical sense
to always require $\Si_\phi$ to be the spectrum of an internal
algebra. However, even in the contrary case it still makes sense
to explore the possibility that $\Si_\phi$ has the `topological'
property of being an internal locale. This opens up many
possibilities, including that of constructing the (internal)
topos, $\Sh{\Si_\phi}$, of sheaves over $\Si_\phi$.

\subsubsection{Using Boolean Algebras as the Base Category}
As remarked earlier, there are several possible choices for the
base category over which the set-valued functors are defined. Most
of our work has been based on the category, $\V{}$, of commutative
von Neumann sub-algebras of $\BH$. As indicated above, the
Heunen-Spitters constructions use the category of commutative
$C^*$-algebras. More abstractly, one can start with any
$AW^*$-algebra or $C^*$-algebra.

However, as discussed briefly in Section
\ref{SubSubSec:BoolAlgBase}, another possible choice is  the
category, $\BlH$,  of all Boolean sub-algebras of the lattice of
projection operators on $\Hi$.  The ensuing topos,
$\Set^{\BlH^\op}$, or $\Set^{\BlH}$, is interesting in its own
right, but particularly so when combined with the ideas of Heunen
and Spitters. As applied to the category $\BlH$, their work
suggests that we first construct the tautological co-presheaf
$\cps{\BlH}$ which associates to each $B\in\Ob{\BlH}$, the Boolean
algebra $B$. Viewed internally in the topos $\Set^{\BlH}$, this
co-presheaf is a Boolean-algebra object. We conjecture that the
spectrum of $\cps{\BlH}$ can be obtained in a constructive way
using the internal logic of $\Set^{\BlH}$. If so, it seems clear
that, after using the locale trick  of \cite{HeuSpit07}, this
spectrum will essentially be the same as our dual presheaf
$\ps{D}$.

Thus, in this approach, the state object is the spectrum of an
internal \emph{Boolean-algebra}, and daseinisation maps the
projection operators in $\Hi$ into elements of this algebra. This
reinforces still further our claim that quantum theory looks like
classical physics in an appropriate topos. This raises some
fascinating possibilities. For example, we make the following:
\displayE{10}{5}{4}{{\bf Conjecture:} The subject of quantum
computation is equivalent to the study of `classical' computation
in the quantum topos $\SetH{}$.}

\subsubsection{Application to Other Branches of Algebra}
\label{SubSubSec:AppAlgebra} It is clear that the scheme discussed
above could fruitfully be extended to various branches of algebra.
Thus, if $\mathfrak{A}$ is any algebraic structure\footnote{We are
assuming that the ambient topos is $\Set$, but other choices could
be considered.},  we can consider the category
$\mathcal{V}(\mathfrak{A})$ whose objects are the commutative
sub-algebras of $\mathfrak{A}$, and whose arrows are algebra
embeddings (or, slightly more generally, monomorphisms). One can
then consider the topos, $\SetC{\mathcal{V}(\mathfrak{A})}$, of
all set-valued, contravariant functors on
$\mathcal{V}(\mathfrak{A})$; alternatively, one might look at the
topos, $\Set^{\mathcal{V}(\mathfrak{A})}$, of covariant functors.

For this structure to be mathematically interesting it is
necessary that  the abelian sub-objects of $\mathfrak{A}$ have a
well-defined spectral structure. For example, let $\mathfrak{A}$
be any locally-compact topological group.  Then the
\emph{spectrum} of any commutative (locally-compact) subgroup $A$
is just the Pontryagin dual of $A$, which is itself a
locally-compact, commutative group. The \emph{spectral presheaf}
of $\mathfrak{A}$ can then be defined as the object,
$\Si_{\mathfrak{A}}$, in $\SetC{\mathcal{V}(\mathfrak{A})}$ that
is constructed in the obvious way (\ie\ analogous to the way in
which $\Sig$ was constructed) from this collection of Pontryagin
duals.

We conjecture that a careful analysis would show that, for at
least some structures of this type:
\begin{enumerate}
\item There is a `tautological' object, $\cps{\mathfrak{A}}$,
in the topos $\Set^{\mathcal{V}(\mathfrak{A})}$ that is associated
with the category  $\mathcal{V}(\mathfrak{A})$.
\item Viewed internally, this tautological  object is a commutative algebra.
\item This object has a spectrum that can be constructed internally, and is essentially the spectral presheaf, $\Si_{\mathfrak{A}}$, of $\mathfrak{A}$.
\end{enumerate}

It seems clear that, in general, the spectral presheaf,
$\Si_{\mathfrak{A}}$, is a potential candidate as the basis for
 non-commutative spectral theory.

\subsubsection{The Partial Existence of Points of $\Si_\phi$}
One of the many intriguing features of topos theory is that it
makes sense to talk about entities that only `partially exist'.
One can only speculate on what would have been Heidegger's
reaction had he been told that the answer to ``What is a thing?''
is ``Something that partially exists''. However, in the realm of
topos theory the notion of `partial existence' lies easily with
the concept of propositions that are only `partly true'.

A particularly interesting example is the existence, or otherwise,
of `points' (\ie\ global elements) of the state object $\Si_\phi$.
If $\Si_\phi$ has no global elements (as is the case for the
quantum  spectral presheaf, $\Sig$) it may still have `partial
elements'. A partial element is defined to be an arrow
$\xi:U\map\Si_\phi$ where the object $U$ in the topos $\tau_\phi$
is a sub-object of the terminal object $1_{\tau_\phi}$. Thus there
is a monic $U\hookrightarrow 1_{\tau_\phi}$ with the property that
the arrow $\xi:U\map\Si_\phi$ cannot be extended to an arrow
$1_{\tau_\phi}\map\Si_\phi$. Studying the obstruction to such
extensions could be another route to finding a cohomological
expression of the Kochen-Specker theorem.

Pedagogically, it is attractive to say that the non-existence of a
global element of $\Sig$ is analogous to the non-existence of a
cross-section of the familiar `double-circle', helical covering of
a single circle, $S^1$. This principal $\mathZ_2$-bundle over
$S^1$ is non-trivial, and hence has no cross-sections.

However,  \emph{local} cross-sections do exist, these being
defined as sections of the bundle restricted to any open subset of
the base space $S^1$. In fact, this bundle is \emph{locally
trivial}; \ie\ each point $s\in S^1$ has a neighbourhood $U_s$
such that the restriction of the bundle to $U_s$ is trivial, and
hence sections of the bundle restricted to $U_s$ exist.

There is an analogue of local triviality in the topos quantum
theory where $\tau_\phi=\SetH{}$. Thus, let $V$ be any object in
$\V{}$ and define $\downarrow\!\!V:=\{V_1\in\Ob{\V{}}\mid
V_1\subseteq V\}$. Then $\downarrow\!\!V$ is like a
`neighbourhood' of $V$; indeed, that is precisely what it is if
the poset $\Ob{\V{}}$ is equipped with the topology generated by
the lower sets. Furthermore, given any presheaf $\ps{F}$ in
$\tau_\phi$, the restriction, $\ps{F}\!\!\downarrow\!\!V$, to $V$,
can be defined as in Section \ref{SubSec:PresheafP-clSig}. It is
easy to see that, for all stages $V$, the presheaf
$\ps{F}\!\downarrow\!V$ \emph{does} have global elements. In this
sense, every presheaf in $\SetH{}$ is `locally trivial'.
Furthermore, to each $V$ there is associated a sub-object
$\ps{U}^V$ of $\ps{1}$ such that each global element of
$\ps{F}\!\downarrow\!V$ corresponds to a partial element of
$\ps{F}$.

Thus, for the topos $\SetH{}$, there is a precise sense in which
the spectral presheaf has `local elements', or `points that
partially exist'. However, it is not clear to what extent such an
assertion can, or should, be made for a general topos $\tau_\phi$.
Certainly, for any presheaf topos, $\SetC{C}$, one can talk about
`localising' with respect to the objects in the base category $C$,
but the situation for a more general topos is less clear.

\subsubsection{The Work of Corbett \emph{et al}}
Another interesting question is whether these different ways of
seeing the state-object relate at all to the work of Corbett and
his collaborators \cite{AdelCor05,Corbett07}.

For some time Corbett has been studying  what he calls `quantum'
real numbers, or `qr-numbers', as another way of a obtaining a
`realist' interpretation of quantum theory. The first step is to
take the space of states, ${\cal E}_S$, of a quantum system (where
a state is viewed as a positive linear functional on an
appropriate $C^*$-algebra, ${\cal A}$) and  equip it with the
weakest topology such that the functions
$\A\mapsto\tr{\A\hat\rho}$ are continuous for all states
$\hat\rho\in{\cal E}_S$. Then a `qr-number' is  defined as a
global element of the sheaf of germs of continuous real-valued
functions on ${\cal E}_S$. Put another way, a qr-number is a
(Dedekind) real number in the topos, $\Sh{{\cal E}_S}$, of sheaves
over ${\cal E}_S$. The fundamental physical postulate is then:
\begin{enumerate}
\item The `numerical values' of a physical
quantity, $A$, are given by the qr-numbers $a_Q(U) :=
\tr{\A\rho}_{\hat\rho\in U}$ where $U$ is an open subset of ${\cal
E}_S$.

\item Every physical quantity has a qr-number value at all times.
\item Every physical quantity has an open subset of ${\cal E}_S$
associated with it at all times. This is the extent to which the
quantity can be said to `exist'.
\end{enumerate}
Evidently this theory also is  `contextual', with the contexts now
being identified with the open sets of ${\cal E}_S$.

There seems a possible link between these ideas and the work of
Heunen and Spitter. The latter construct the (internal in the
topos $\Set^{\V{}}$) topos, $\Sh{\cps{\Si}}$, of sheaves over the
locale $\cps{\Si}$ and then show that each bounded self-adjoint
operator in $\Hi$ is represented by an analogue of an `interval
domain\footnote{This is rather like a Dedekind real number except
that the overlap axiom is missing.}' in this topos. This is their
analogue of our non-commutative spectral theory in which $\A$ is
represented by an arrow $\daso{A}:\Sig\map\SR$ (or an arrow from
$\Sig$ to $\PR{\mathR}$, if one prefers that choice of
quantity-value object).

It would be interesting to see if there is any relation between
the real numbers in the Corbett presheaf, $\Sh{{\cal E}_S}$ and
the interval domains in the Heunen-Spitters presheaf
$\Sh{{\cps\Si}}$.  Roughly speaking, we can say that Corbett
\emph{et al} assign exact values to physical quantities by making
the state `fuzzy', whereas we (and Heunen \&\ Spitter) keep the
state sharp, but ascribe `fuzzy' values to physical quantities.
Clearly, there are some interesting questions here for further
research.

\subsection{The Quantity-Value Object $\R_\phi$}
Let us turn now to  the quantity-value object $\R_\phi$. This
plays a key role in the representation of any physical quantity,
$A$, by an arrow $A_\phi: \Si_\phi\map\R_\phi$. In so far as a
`thing' is a bundle of properties, these properties refer to
values of physical quantities, and so the nature of these `values'
is of central importance.

We anticipate that $\R_\phi$ has many global elements
$1_{\tau_\phi}\map\R_\phi$, and these can be interpreted as the
possible `values' for physical quantities. If $\Si_\phi$ also has
global elements/microstates $s:1_{\tau_\phi}\map\Si_\phi$, then
these combine with any arrow $A_\phi:\Si_\phi\map\R_\phi$ to give
global elements of $\R_\phi$. It seems reasonable to refer to the
element, $A_\phi\circ s:1_{\tau_\phi}\map\R_\phi$ as the `value'
of $A$ when the microstate is $s$. However, our expectation is
that, in general, $\Si_\phi$ may well have no global elements, in
which case the interpretation of  $A_\phi:\Si_\phi\map\R_\phi$ in
terms of values is somewhat subtler.  This has to be done
internally using the language $\L{\tau_\phi}$ associated with the
topos $\tau_\phi$: the overall logical structure is a nice example
of a `coherence' theory of truth \cite{Gray90}.

As far as axiomatic properties of  $\R_\phi$ are concerned, the
minimal requirement is presumably that it should have some
ordering property that arises in all topos representations of the
system $S$. This universal property  could be coded into the
internal language, $\L{S}$, of $S$. This implements our intuitive
feeling that, in so far as the concept of `value' has any meaning,
it must be possible to say that the value of one quantity is
`larger' (or `smaller') than that of another. It seems reasonable
to expect this relation to be transitive, but that is about all.
In particular, we see no reason to suppose that this relation will
always correspond to a \emph{total} ordering: perhaps there are
pairs of physical quantities whose `values' simply cannot be
compared at all. Thus, tentatively, we can augment $\L{S}$ with
the axioms for a poset structure on $\R$.

Beyond this simple ordering property, it becomes less clear what
to assume about the quantity-value object. The example of quantum
theory shows that it is wrong to automatically equate $\R_\phi$
with the real-number object $\mathR_\phi$ in the topos
$\tau_\phi$. Indeed, we believe that is almost always the case.

However, this makes it harder to know what to assume of $\R_\phi$.
The quantum case shows that $\R_\phi$ may have considerably fewer
algebraic properties than the real-number object $\mathR_\phi$. On
a more `topological' front it is attractive to assume that
$\R_\phi$ is an internal locale in the topos $\tau_\phi$. However,
one should be cautious when conjecturing about $\R_\phi$ since our
discussion of various possible quantity-value objects in quantum
theory depended closely on the specific details of the spectral
structure in this topos.

A more general perspective is given by the work of Heunen and
Spitters \cite{HeuSpit07}. We recall that their starting point is
a (non-commutative) $C^*$-algebra, $\mathcal{A}$ in an ambient
topos $\mathfrak{S}$. Then they construct  the topos of
co-presheaves, $\mathfrak{S}^{\mathcal{V}(\mathcal{A})}$, and show
that $\cps{\mathcal{A}}$ is an internal, pre $C^*$-algebra in this
topos. Finally, they construct the spectrum, $\cps{\Si}$, of
$\cps{\mathcal{A}}$ and show that it is an internal locale.

Having shown that $\cps\Si$ has the structure of a locale, it is
rather natural to consider the (internal) topos of sheaves,
$\Sh{\cps\Si}$, over $\cps\Si$, and then construct the
`interval-domain' object in this topos\footnote{If $X$ is any
topological space it is well-known that the real numbers in the
topos $\Sh{X}$ are in one-to-one correspondence with elements of
the space, $C(X,\mathR)$, of continuous, real-valued functions on
$X$}. In the case of quantum theory, they show that this is
related to what we have called $\ps\mathR^\leftrightarrow$.

This approach might be a useful tool when looking for ways of
axiomatising $\R_\phi$. Thus, if in any topos representation
$\phi$, we assume that the state object $\Si_\phi$ is an internal
local in $\tau_\phi$, we can construct the internal topos
$\Sh{\Si_\phi}$ and consider its interval-domain number object. It
remains to be seen if this has any generic use in practice.

\subsection{The Truth Objects $\TO$, or Pseudo-State Object $\WO_\phi$}
The truth objects in a topos representation are certain
sub-objects of $P\Si_\phi.$\ Their construction will be very
theory-dependent, as are the pseudo-states, and  the pseudo-state
object, $\WO_\phi$, if there is one.  Each proposition about the
physical system is represented by a sub-object
$J\subseteq\Si_\phi$, and given a truth-object $\TO\subseteq
P\Si_\phi$, the generalised truth value of the proposition is
$\Val{J\in\TO}\in\Ga\O_\phi$; in terms of pseudo-states, $\w$, the
generalised truth values are of the form $\Val{\w\subseteq
J}\in\Ga\O_\phi$.

The key properties of the quantum truth objects,
$\ps\TO^{\ket\psi}$, (or pseudo-states $\ps\w^{\ket\psi}$) can
easily be emulated if one is dealing with a more general base
category, ${\cal C}$, so that the topos concerned is $\SetC{C}$.
However, it is not clear what, if anything, can be said about the
structure of truth objects/pseudo-states in a more generic topos
representation.

An attractive possibility is that there is a general analogue of
\eq{ComDiagC2} in the form
\begin{equation}
        \setsqparms[1`0`1`1;700`700] \label{ComDiagphi}
        \square[\w` \Si_\phi` 1_{\tau_\phi}` P\Si_\phi;
        {} ` {} ` \pi_\phi` \name{\w}]
\end{equation}
and that obstructions to the existence of global elements of the
state-object $\Si_\phi$ can be studied with the aid of this
diagram. If there is a pseudo-state object $\WO_\phi$ then this
could be a natural replacement for $P\Si_\phi$ in this diagram.

\section{Conclusion}
In this long article we have developed the idea that, for any
given theory-type (classical physics, quantum physics,
DI-physics,\ldots) the theory of a particular physical system,
$S$, is to be constructed in the framework of a certain,
system-dependent, topos. The central idea is that a local
language, $\L{S}$, is attached to each system $S$, and that the
application of a given theory-type to $S$ is equivalent to finding
a representation, $\phi$, of $\L{S}$ in a topos $\tau_\phi(S)$;
this is equivalent to finding a translation of $\L{S}$ into the
internal language associated with $\tau_\phi(S)$; or a functor to
$\tau_\phi(S)$ from the topos associated with $\L{S}$.

Physical quantities are represented by arrows in the topos from
the state object $\Si_{\phi,S}$ to the quantity-value object
$\R_{\phi,S}$, and propositions are represented by sub-objects of
the state object. The idea of a `truth sub-object' of
$P\Si_{\phi,S}$ (or a `pseudo-state' sub-object of $\Si_{\phi,S}$)
then leads to a neo-realist interpretation of propositions in
which each proposition is assigned a truth value that is a global
element of the sub-object classifier $\Omega_{\tau_\phi(S)}$.  In
general, neo-realist statements about the world/system $S$ are to
be expressed in the internal language of the topos $\tau_\phi(S)$.
Underlying this is the intuitionistic, deductive logic provided by
the local language $\L{S}$.

These axioms are based on ideas from the topos representation of
quantum theory, which we have discussed in depth. Here, the topos
involved is $\SetH{}$: the topos of presheaves over the base
category $\V{} $ of commutative, von Neumann sub-algebras of the
algebra, $\BH$, of all bounded operators on the quantum Hilbert
space, $\Hi$. Each such sub-algebra can be viewed as a context in
which the theory can be viewed from a classical perspective. Thus
a context can be described  as a `classical snap-shot', or `window
on reality', or `world-view'/weltanschauung. Mathematically, a
context is a `stage of truth':  a concept that goes back to
Kripke's use of a presheaf topos as a model of his intuitionistic
view of time and process.

We have shown how the process of `daseinisation' maps projection
operators (and hence equivalent classes of propositions) into
sub-objects of the state-object $\Sig$. We have also shown how
this can be extended to an arbitrary, bounded self-adjoint
operator, $\A$. This produces an arrow $\breve{A}:\Sig\map\ps\R$
where the minimal choice for the quantity-value object, $\ps\R$,
is the object $\SR$. We have also argued that, from a physical
perspective, it is more attractive to choose $\PR{\mathR}$ as the
quantity-value presheaf. The significance of these results is
enhanced considerably by the alternative, Heunen \& Spitters
derivation of $\Sig$ as the spectrum of an internal, commutative
algebra. These, and related, results all encourage the idea that
quantum theory can be viewed as classical theory but in a topos
other than the topos of sets, $\Set$.

Every classical system uses the same topos, $\Set$. However, in
general, the topos will be system dependent as, for example, is
the case with the quantum topoi of the form $\SetH{}$, where
$\Hi$\ is the Hilbert space of the system. This leads to the
problem of understanding how the topoi for a class of systems
behave under the action of taking a sub-system, or combining a
pair of systems to give a single composite system. We have
presented a set of axioms that capture the general ideas we are
trying to develop. Of course, these axioms are not cast in stone,
and are still partly `experimental' in nature. However, we have
shown that classical physics exactly fits our suggested scheme,
and that quantum physics `almost' does: `almost' because of the
issues concerning the translation representation of the arrows
associated with compositions of systems that were discussed in
Section \ref{SubSec:TranslCompQT}.

An important challenge for future work is to show that our general
topos scheme can be used to develop genuinely new theories of
physics, not just to rewrite old ones in a new language. Of
particular  interest  is the problem with which we motivated the
scheme in the first place: namely, to find tools for constructing
theories that go beyond quantum theory and  which do not use
Hilbert spaces, path integrals, or any of the other familiar
entities in which the continuum real and/or complex numbers play a
fundamental role.

As we have discussed, the topoi for quantum systems are of the
form $\SetH{}$, and hence  embody contextual logic in a
fundamental way. One way of going `beyond' quantum theory, while
escaping the \emph{a priori} imposition of continuum concepts, is
to use presheaves over a more general `category of contexts',
$\mathcal C$, \ie \ develop the theory in the topos
$\Set^{{\mathcal C}^{{\rm op}}}$. Such a structure embodies
contextual, multi-valued logic in an intrinsic way, and in that
sense might be said to encapsulate one of the fundamental insights
of quantum theory. However, and unlike in quantum theory, there is
no obligation to use the real or complex numbers in the
construction of the category $\mathcal C$.

Indeed, early on in this work we noted  that  real numbers  arise
in theories of physics in three different (but related) ways: (i)
as the values of physical quantities; (ii) as the values of
probabilities; and (iii) as a fundamental ingredient in models of
space and time. The first of these is now subsumed into the
quantity-value object $\R_\phi$, and which now has no \emph{a
priori} relation to the real number object in $\tau_\phi$.  The
second source of real numbers has gone completely since we no
longer have probabilities of propositions but rather generalised
truth values whose values lie in $\Ga\O_{\tau_\phi}$. The third
source is also no long binding since models of space and time in a
topos could depend on many things: for example, infinitesimals.

Of course, although true, these remarks do not of themselves give
a concrete example of a theory that is `beyond quantum theory'. On
the other hand, these ideas certainly  point in a novel direction,
and one at which, almost certainly, we would not have arrived  if
the challenge to `go beyond quantum theory' had been construed
only in terms of trying to generalise Hilbert spaces, path
integrals, and the like.

From a more general perspective, other types of topoi are possible
realms for the construction of physical theories. One
 simple, but mathematically rich example arises from
the theory of $M$-sets. Here, $M$ is a monoid and, like all
monoids, can be viewed as a category with a single object, and
whose arrows are  the elements of $M$.  Thought of as a category,
a  monoid is `complementary' to  a partially-ordered set. In a
monoid, there is only one object, but plenty of arrows from that
object to itself; whereas in a partially-ordered set there are
plenty of objects, but at most one arrow between any pair of
objects.  Thus a partially-ordered set is  the most economical
category with which to capture the concept of `contextual logic'.
On the other hand, the logic associated with a monoid is
non-contextual as there is only one object in the category.

It is easy to see that a functor from $M$ to $\Set$ is just an
`$M$-set': \ie\  a set on which $M$  acts as a monoid of
transformations. An arrow between two such $M$-sets is  an
equivariant map between them. In physicists' language, one would
say that the topos $\Set^M$---usually denoted $BM$--- is the
category of the `non-linear realisations' of $M$.

The sub-object classifier, $\O_{BM}$,  in  $BM$ is the collection
of left ideals in $M$; hence, many of the important constructions
in the topos can be handled using the language of algebra.  The
topos $BM$ is one of the simplest to define and work with and, for
that reason, it is a popular source of examples in texts on topos
theory. It would be intriguing to experiment with constructing
model theories of physics using one of these simple topoi. One
possible use of $M$-sets is discussed in \cite{Isham05b} in the
context of reduction of the state vector, but there will surely be
others.

\paragraph{Is there `un gros topos'?} It is clear that there are
many other topics for future research. A question that is of
particular interest is if there is a \emph{single} topos within
which all systems of  a given theory-type can be discussed. For
example, in the case of quantum theory the relevant topoi are of
the form $\SetH{}$, where $\Hi$ is a Hilbert space, and  the
question is whether all such topoi can be gathered together to
form a single topos (what Grothendieck termed `un gros topos')
within which all quantum systems can be discussed.

There are well-known examples of such constructions in the
mathematical literature. For example, the category, ${\rm Sh}(X)$,
of sheaves on a topological space $X$ is a topos, and there are
collections $\bf T$ of topological spaces which form a
Grothendieck site, so that the topos ${\rm Sh}(\bf{T})$ can be
constructed. A particular object  in ${\rm Sh}(\bf{T})$ will then
be a sheaf over $\bf T$ whose stalk over any object $X$ in $\bf T$
will be the topos  ${\rm Sh}(X)$.

For our purposes, the ideal situation would be if the various
categories of systems, $\Sys$, can be chosen in such a way that
${\cal M}(\Sys)$ is a site. Then the topos of sheaves, ${\rm
Sh}({\cal M}(\Sys)),$ over this site would provide a common topos
in which all systems of this theory type---\ie\ the objects of
$\Sys$---can be discussed. We do not know if this is possible, and
it is a natural subject for future study.

\paragraph{Some more speculative lines of future research.} At a
conceptual level, one motivating desire for the entire research
programme was to find a formalism that would always give some sort
of `realist' interpretation, even in the case of quantum theory
which is normally presented in an instrumentalist way. But this
particular example raises an interesting point because the
neo-realist interpretation takes place in the topos $\SetH{}$,
whereas the instrumentalist interpretation works in the familiar
topos $\Set$ of sets, and one might wonder how universal is the
use of a pair of topoi in this way.

Another, related, issue concerns the representation of the
$\PL{S}$-propositions of the form $\SAin\De$  discussed in Section
\ref{Sec:QuPropSpec}. This serves as a bridge between the
`external' world of a background spatial structure, and the
internal world of the topos. This link is not present with the
$\L{S}$ language whose propositions are purely internal terms of
type $\O$ of the form `$A(\va{s})\in\va{\De}$'. In a topos
representation, $\phi$, of $\L{S}$, these become propositions of
the form `$A\in\Xi$', where $\Xi$\ is a sub-object of $\R_\phi$.

In general, if we have an example of our axioms working
neo-realistically in a topos $\tau$, one might wonder if there is
an `instrumentalist' interpretation of the same theory in a
different topos, $\tau_i$, say? Of course, the word
`instrumentalism' is used metaphorically here, and any serious
consideration of such a pair $(\tau,\tau_i)$ would require a lot
of very careful thought.

However, if a pair $(\tau,\tau_i)$ does exist, the question then
arises of whether  there is a \emph{categorial} way of linking the
neo-realist and instrumentalist interpretations: for example, via
a functor $I:\tau\map\tau_i$. If so, is this related to some
analogue of the daseinisation operation that produced the
representation of the $\PL{S}$-propositions, $\SAin\De$ in quantum
theory? Care is needed in discussing such issues since informal
set theory is  used as a meta-language in constructing a topos,
and one has to be careful not to confuse this with the existence,
or otherwise, of an `instrumentalist' interpretation of any given
representation.

If such a functor, $I:\tau\map\tau_i$, did exist then one could
speculate  on the possibility of finding an `interpolating chain'
of functors
\begin{equation}
        \tau\map\tau^1\map\tau^2\map
        \cdots\map\tau^n\map \tau_i
\end{equation}
which could be interpreted conceptually as corresponding to an
interpolation between the philosophical views of  realism and
instrumentalism!

Even more speculatively one might wonder if ``one person's realism
is another person's instrumentalism''. More precisely, given a
pair $(\tau,\tau_i)$ in the sense above, could there be cases in
which the topos $\tau$  carries a neo-realist interpretation of a
theory with respect to an instrumentalist interpretation  in
$\tau_i$, whilst being the carrier of an instrumentalist
interpretation with respect to the neo-realism of a `higher'
topos; and so on? For example, is there some theory whose
`instrumentalist manifestation` takes place in the topos
$\SetH{}$?

On the other hand,  one might want to say that `instrumentalist'
interpretations always take place in the world of classical set
theory, so that  $\tau_i$ should always be chosen to be $\Set$. In
any event, it would be interesting to study the quantum case more
closely to see if there are any categorial relations between the
formulation in $\SetH{}$ and the instrumentalism interpretation in
$\Set$. It can be anticipated that the action of daseinisation
will play an important role here.

All this is, perhaps\footnote{To be honest, the `perhaps' should
really be replaced by `highly'.}, rather speculative but there is
a more obvious situation in which a double-topos structure will be
necessary, irrespective of philosophical musings on
instrumentalism. This is if one wants to discuss the `classical
limit' of some topos theory. In this case this limit will exist in
the topos, $\Set$, and this must be used in addition to the topos
of the basic theory. A good example of this, of course, is the
topos of quantum theory discussed in this article. If, \emph{pace}
Landsmann, one thinks of `quantisation' as a functor from $\Set$
to $\SetH{}$, then the classical limit will perhaps involve a
functor going in the opposite direction.

\paragraph{Implications for quantum gravity.}
A serious claim stemming from our work is that a successful theory
of quantum gravity should be constructed in some topos $\cal
U$---the `topos of the universe'---that is \emph{not} the topos of
sets. All entities of physical interest will be represented in
this topos, including models for space-time (if there are any at a
fundamental level in quantum gravity) and, if relevant, loops,
membranes etc. as well as incorporating the anticipated
generalisation of quantum theory.

Such a theory of quantum gravity will have a neo-realist
interpretation in the topos $\cal U$, and hence would be
particularly useful in the context of quantum cosmology. However,
in practice, physicists  divide the world up into smaller, more
easily handled, chunks, and each of them would correspond to what
earlier we have called a `system' and, correspondingly, would have
its own topos. Thus $\cal U$ is something like the `gros topos' of
the theory, and would combine together the individual
`sub-systems' in a categorial way. Of course, it is most unlikely
that there is any preferred way of dividing the universe up into
bite-sized chunks, but this is  not problematic as the ensuing
relativism is naturally incorporated into the idea of a
Grothendieck site.


\section{Appendix 1: Some Theorems and Constructions Used in the Main Text}

\subsection{Results on Clopen Subobjects of \underline{$\Sigma$}.}
\begin{theorem}{The collection, $\Subcl\Sig$, of all clopen
sub-objects of $\Sig$ is a Heyting algebra.}\label{Th:SubclSig}
\end{theorem}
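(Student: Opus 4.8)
The plan is to realise $\Subcl{\Sig}$ as a sub-Heyting-algebra of the full sub-object lattice $\Sub{\Sig}$, which is automatically a Heyting algebra since it is the lattice of sub-objects of an object in the topos $\SetH{}$. Concretely, I would invoke the alternative characterisation of a Heyting algebra recalled earlier (a bounded distributive lattice in which, for each pair $\a,\beta$, the set $\{\ga\mid\ga\wedge\a\preceq\beta\}$ has a maximal element): it then suffices to check that $\Subcl{\Sig}$ is a bounded distributive lattice, and that for clopen $\ps{S},\ps{T}$ the set $\{\ps{R}\in\Subcl{\Sig}\mid \ps{R}\wedge\ps{S}\preceq\ps{T}\}$ possesses a maximum lying in $\Subcl{\Sig}$. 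The meet, join and bounds are defined stage-wise; the real content is the construction of the relative pseudo-complement and the verification that it stays clopen.

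First I would dispose of the lattice structure. For $\ps{S},\ps{T}\in\Subcl{\Sig}$ set $(\ps{S}\wedge\ps{T})_V:=\ps{S}_V\cap\ps{T}_V$ and $(\ps{S}\vee\ps{T})_V:=\ps{S}_V\cup\ps{T}_V$, with bottom and top given by $V\mapsto\varnothing$ and $V\mapsto\Sig_V$. Finite intersections and unions of clopen sets are clopen, and the restriction maps satisfy $\Sig(i_{V'V})(\ps{S}_V\cap\ps{T}_V)\subseteq\ps{S}_{V'}\cap\ps{T}_{V'}$ and $\Sig(i_{V'V})(\ps{S}_V\cup\ps{T}_V)\subseteq\ps{S}_{V'}\cup\ps{T}_{V'}$, so these formulas define clopen sub-objects that coincide with the meet and join computed in $\Sub{\Sig}$. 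Distributivity is inherited stage-wise from the Boolean algebras $\mathcal{C}L(\Sig_V)$ of clopen subsets, via the lattice isomorphism \eq{LatticeIsomProjsAndClopenSubsets}.

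The hard part is the implication, where the stage-wise recipe inherited from $\Sub{\Sig}$ does \emph{not} stay clopen. Translating the presheaf-topos formula for implication through the isomorphism $\PV\simeq\mathcal{C}L(\Sig_V)$, one finds $(\ps{S}\Rightarrow\ps{T})_V=\bigcap_{V'\subseteq V}S_{\hat\beta_{V'}}$, where $\hat\beta_{V'}:=\neg\hat\alpha^S_{V'}\vee\hat\alpha^T_{V'}\in\mathcal{P}(V')$ is the Boolean implication of the projections $\hat\alpha^S_{V'},\hat\alpha^T_{V'}$ corresponding to $\ps{S}_{V'},\ps{T}_{V'}$ (here one uses that $\brak\l{\cdot}$ is a $\{0,1\}$-valued Boolean homomorphism on each abelian $V'$). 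This is an intersection of clopen sets, hence closed, but in general not open---the obstruction is exactly the `thin boundary' phenomenon distinguishing negation in a topology from negation in the associated measure algebra. My proposed remedy is to pass to the interior, which by the lattice isomorphism equals $S_{\hat R_V}$ with $\hat R_V:=\bigwedge_{V'\subseteq V}\hat\beta_{V'}$, the meet taken in the \emph{complete} projection lattice $\PV$ (completeness of $\PV$ is precisely why von Neumann algebras are used). Thus I would \emph{define} $(\ps{S}\Rightarrow\ps{T})_V:=S_{\hat R_V}$, clopen by construction.

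It then remains to verify two things. First, that $V\mapsto S_{\hat R_V}$ is a sub-object: for $V''\subseteq V$ the index set $\{V'\mid V'\subseteq V''\}$ is contained in $\{V'\mid V'\subseteq V\}$, so $\hat R_{V''}\succeq\hat R_V$, and a short computation with $\brak\l{\cdot}$ yields $\Sig(i_{V''V})(S_{\hat R_V})\subseteq S_{\hat R_{V''}}$. Second, and this is the crux, that $S_{\hat R}$ provides the required maximum. If $\ps{R}\in\Subcl{\Sig}$ satisfies $\ps{R}\wedge\ps{S}\preceq\ps{T}$, then $\ps{R}_{V'}\subseteq S_{\hat\beta_{V'}}$ at every stage; using that $\ps{R}$ is a sub-object one pulls this back to $\ps{R}_V\subseteq\Sig(i_{V'V})^{-1}(\ps{R}_{V'})\subseteq S_{\hat\beta_{V'}}$ for every $V'\subseteq V$, so $\ps{R}_V\subseteq\bigcap_{V'}S_{\hat\beta_{V'}}$, and since $\ps{R}_V$ is \emph{open} it lies in the interior $S_{\hat R_V}$, giving $\ps{R}\preceq(\ps{S}\Rightarrow\ps{T})$; the converse implication and the membership of $\ps{S}\Rightarrow\ps{T}$ itself in the set follow by evaluating at $V'=V$. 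I expect this adjunction argument---the interplay between interior-taking (equivalently, infinite meets of projections) and the sub-object compatibility of the candidate lower bounds $\ps{R}$---to be the main obstacle, since it is exactly what forces the clopen implication to differ from the one in $\Sub{\Sig}$ and is where completeness of the projection lattices is indispensable.
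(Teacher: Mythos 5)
Your proof is correct and follows essentially the same route as the paper: define the lattice operations stage-wise, observe that the naive (stage-wise) negation/implication is an intersection of clopen sets that is closed but possibly not open, and repair this by passing to the interior, which exists as the minimum in the complete lattice $\mathcal{C}L(\Sig_V)\simeq\PV$ --- exactly the point at which the completeness of the projection lattices of von Neumann algebras is used. You go somewhat further than the paper, which only treats negation explicitly and asserts that the implication is ``a straightforward extension'': your explicit identification $(\ps{S}\Rightarrow\ps{T})_V=S_{\hat R_V}$ with $\hat R_V=\bigwedge_{V'\subseteq V}\hat\beta_{V'}$, together with the verification of the sub-object condition and of the adjunction $\ps{R}\wedge\ps{S}\preceq\ps{T}\Leftrightarrow\ps{R}\preceq(\ps{S}\Rightarrow\ps{T})$, supplies the details the paper omits.
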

\begin{proof} First recall how a Heyting algebra structure is placed
on the set, $\Sub{\Sig}$, of all sub-objects of $\Sig$.
\paragraph{The `$\lor$'- and `$\land$'-operations.}
Let $\ps{S},\ps{T}$ be two sub-objects of $\Sig$. Then the
`$\lor$' and `$\land$' operations are defined by
\begin{eqnarray}
(\ps{S}\lor\ps{T})_V & :=&\ps{S}_V\cup \ps{T}_V\\[2pt]
(\ps{S}\land \ps{T})_V& :=&\ps{S}_V\cap \ps{T}_V
\end{eqnarray}
for all contexts $V$. It is easy to see that if $\ps{S}$ and
$\ps{T}$ are clopen sub-objects of $\Sig$, then so are
$\ps{S}\lor\ps{T}$ and $\ps{S}\land \ps{T}$.

\paragraph{The zero and unit elements.}
The zero element in the Heyting algebra $\Sub{\Sig}$ is the empty
sub-object $\ps{0}:=\{\varnothing_{V}\mid V\in\Ob{\V{}}\}$, where
$\varnothing_{V}$ is the empty subset of $\Sig_V$. The unit
element in $\Sub{\Sig}$ is $\Sig$.  It is clear that both $\ps{0}$
and $\Sig$ are clopen sub-objects of $\Sig$.

\paragraph{The `$\Rightarrow$'-operation.}
The most interesting part is the definition of the implication
$\ps{S}\Rightarrow\ps{T}$. For all $V\in\Ob{\V{}}$, it is given by
\begin{eqnarray}
(\ps{S}\Rightarrow\ps{T})_V&:=&\{\l\in\Sig_V\mid\forall\,
V^{\prime}\subseteq V,\mbox{ if }
                                                   \nonumber\\
    &{}&\hspace{1cm} \Sig(i_{V^{\prime}V})(\l)\in
    \ps{S}_{V^\prime} \mbox{ then }\Sig(i_{V^{\prime}V})(\l)\in
            \ps{T}_{V^{\prime}}\}                       \\[2pt]
    &=&\{\l\in\Sig_V\mid \forall V^{\prime}
        \subseteq V,\mbox{ if}                \nonumber\\
    &{}&\hspace{1cm} \l|_{V^{\prime}}\in \ps{S}_{V^\prime}
        \mbox{ then }\l|_{V^{\prime}}\in\ps{T}_{V^{\prime}}\}.
\end{eqnarray}

Since $\lnot\ps{S}:=\ps{S}\Rightarrow\ps{0}$, the expression for
negation follows from the above as
\begin{eqnarray}
(\lnot \ps{S})_V & =&\{\l\in\Sig_V\mid\forall\,
        V^{\prime}\subseteq V,\
\Sig(i_{V^{\prime}V})(\l)\notin \ps{S}_{V^{\prime}}\}\\[5pt]
        & =&\{\l\in\Sig_V\mid\forall\,V^{\prime}\subseteq V, \,
              \l|_{V^{\prime}}\notin\ps{S}_{V^{\prime}}\}.
\end{eqnarray}
We rewrite the formula for negation as
\begin{equation}
(\lnot \ps{S})_V=\bigcap_{V^{\prime}\subseteq
V}\big\{\l\in\Sig_V\mid \l|_{V^{\prime}}\in
\ps{S}_{V^{\prime}}^c\big\}           \label{negS(V)=}
\end{equation}
where $\ps{S}_{V^{\prime}}{}^c$ denotes the complement of
$\ps{S}_{V^{\prime}}$ in $\Sig_{V^{\prime}}$. Clearly,
$\ps{S}_{V^{\prime}}{}^c$ is clopen in $\Sig_{V^{\prime}}$ since
$\ps{S}_{V^{\prime}}$ is clopen. Since the restriction
$\Sig(i_{V^{\prime}V}):\Sig_V\map\Sig_{V^{\prime}}$ is continuous
and surjective\footnote{See proof of Theorem \ref{Theorem:
SG=SigS} below.}, it is easy to see that the inverse image
$\Sig(i_{V^{\prime}V})^{-1}(\ps{S}_{V^{\prime}}{}^{c})$ is clopen
in $\Sig_V$. Clearly,
\begin{equation}
\Sig(i_{V^{\prime}V}
)^{-1}(\ps{S}_{V^{\prime}}{}^{c})=\big\{\l\in\Sig_V\mid
\l|_{V^{\prime}}\in \ps{S}_{V^{\prime}}{}^{c}\big\}
\end{equation}
and so, from \eq{negS(V)=} we have
\begin{equation}
(\lnot \ps{S})_V=\bigcap_{V^{\prime}\subseteq V}
\Sig(i_{V^{\prime}V})^{-1}(\ps{S}_{V^{\prime}}{}^{c})
                                        \label{lnotS(V)=}
\end{equation}

The problem is that we want $(\lnot\ps{S})_V$ to be a
\emph{clopen} subset of $\Sig_V$. Now the right hand side of
\eq{lnotS(V)=} is the intersection of a family, parameterised by
$\{V^\prime\mid V^\prime\subseteq V\},$ of clopen sets. Such an
intersection is always closed, but it is only guaranteed to be
open if $\{V^\prime\mid V^\prime\subseteq V\}$ is a finite set,
which of course may  not be the case.

If $V^{\prime\prime}\subseteq V^{\prime}$ and
$\l|_{V^{\prime\prime}}\in \ps{S}_{V^{\prime\prime}}{}^{c}$, then
$\l|_{V^{\prime}}\in \ps{S}_{V^{\prime}}{}^{c}$. Indeed, if we had
$\l|_{V^{\prime}}\in \ps{S}_{V^{\prime}}$, then
$(\l|_{V^{\prime}})|_{V^{\prime\prime}}= \l|_{V^{\prime\prime}}\in
\ps{S}_{V^{\prime\prime}}$ by the definition of a sub-object, so
we would have a contradiction. This implies
$\Sig(i_{V^{\prime\prime}V}
)^{-1}(\ps{S}_{V^{\prime\prime}}{}^{c})\subseteq
\Sig(i_{V^{\prime}V})^{-1}(S_{V^{\prime}}{}^{c})$, and hence the
right hand side of \eq{lnotS(V)=} is a decreasing net of clopen
subsets of $\Sig_V$ which  converges to something, which we take
as the subset of $\Sig_V$ that is to be $(\neg\ps{S})_V$.

Here we have used the fact that the set of clopen subsets of
$\Sig_V$ is a complete lattice, where the minimum of a family
$(U_i)_{i\in I}$ of clopen subsets is defined as the interior of
$\bigcap_{i\in I}U_i$.  This leads us to define
\begin{eqnarray}
(\lnot \ps{S})_V  & :=&\mbox{int}\bigcap_{V^{\prime}\subseteq V}
\Sig(i_{V^{\prime}V} )^{-1}(\ps{S}_{V^{\prime}}{}^{c}) \\
    & =&\mbox{int}\bigcap_{V^{\prime}\subseteq V}\big\{\l\in
\Sig_V\mid \l|_{V^{\prime}}\in(S_{V^{\prime}}{}^{c})\big\}
\end{eqnarray}
as the negation in $\Subcl\Sig$. This modified definition
guarantees that $\lnot\ps{S}$ is a \emph{clopen} sub-object. A
straightforward extension of this method gives a consistent
definition of $\ps{S}\Rightarrow\ps{T}$.

\noindent This concludes the proof of the theorem.
\end{proof}

The following theorem shows the relation between the restriction
mappings of the outer presheaf $\G$ and those of the spectral
presheaf $\Sig$. We basically follow de Groote's proof of Prop.
3.22 in \cite{deG05c} and show that this result, which uses quite
a different terminology, actually gives the desired relation.

\begin{theorem}
\label{Theorem: SG=SigS} Let $V,V'\in\Ob{\V{}}$ such that
$V'\subset V$. Then
\begin{equation}
                        S_{\G(i_{V^{\prime}V})(\dastoo{V}{P})}= \Sig (i_{V^{\prime}
                        V})(S_{\dastoo{V}{P}}).
\end{equation}
\end{theorem}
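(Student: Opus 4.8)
The plan is to unwind both sides into statements about clopen subsets of the Gel'fand spectra and then to sandwich the closed set $\Sig(i_{V^\prime V})(S_{\dastoo{V}{P}})$ between two copies of $S_{\dastoo{V^\prime}{P}}$. First I would simplify the left-hand side. Since $\dastoo{V}{P}\in\mathcal{P}(V)$, the definition of the outer presheaf $\G$ on the morphism $i_{V^\prime V}$ gives $\G(i_{V^\prime V})(\dastoo{V}{P})=\dastoo{V^\prime}{(\dastoo{V}{P})}=\dastoo{V^\prime}{P}$, the last equality being the compatibility relation for daseinised projectors recorded just after the definition of $\G$. Hence the claim reduces to
\begin{equation}
S_{\dastoo{V^\prime}{P}}=\Sig(i_{V^\prime V})\big(S_{\dastoo{V}{P}}\big).\nonumber
\end{equation}
Write $r:=\Sig(i_{V^\prime V}):\Sig_V\map\Sig_{V^\prime}$, $\l\mapsto\l|_{V^\prime}$. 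I would first record that $r$ is the Gel'fand dual of the inclusion $V^\prime\hookrightarrow V$, hence continuous, and that it is surjective: the Gel'fand dual of an injective, unital $*$-homomorphism of commutative $C^*$-algebras is onto the spectrum (if its image missed a point, Urysohn's lemma would yield a nonzero function on $\Sig_{V^\prime}$ vanishing on the image, i.e.\ a nonzero element of $V^\prime$ mapped to $0$ in $V$). This is the continuity and surjectivity asserted in the footnote. I also use that $\Sig_{V^\prime}$ is compact, Hausdorff and zero-dimensional (indeed extremally disconnected).

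The first, easy inclusion $r(S_{\dastoo{V}{P}})\subseteq S_{\dastoo{V^\prime}{P}}$ is exactly the computation used in proving that $\ps{\das{P}}$ is a sub-object of $\Sig$ (around \eq{ldum1}): for $\l\in S_{\dastoo{V}{P}}$ one has $\dastoo{V^\prime}{P}\succeq\dastoo{V}{P}$, so $\hat\alpha:=\dastoo{V^\prime}{P}-\dastoo{V}{P}$ is a projector, and $\brak{\l}{\dastoo{V^\prime}{P}}=\brak{\l}{\dastoo{V}{P}}+\brak{\l}{\hat\alpha}=1+\brak{\l}{\hat\alpha}$; since this value lies in $\{0,1\}$ it must equal $1$, whence $\l|_{V^\prime}\in S_{\dastoo{V^\prime}{P}}$.

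For the reverse inclusion I would set $K:=r(S_{\dastoo{V}{P}})$, which is compact (continuous image of the clopen, hence closed, set $S_{\dastoo{V}{P}}$ in the compact space $\Sig_V$) and therefore closed in $\Sig_{V^\prime}$. The key observation is that, for $\hat\beta\in\mathcal{P}(V^\prime)$, one has $K\subseteq S_{\hat\beta}$ if and only if $\dastoo{V}{P}\preceq\hat\beta$: indeed $\brak{\l|_{V^\prime}}{\hat\beta}=\brak{\l}{\hat\beta}$ since $\hat\beta\in V^\prime\subseteq V$, so $K\subseteq S_{\hat\beta}$ is equivalent to $S_{\dastoo{V}{P}}\subseteq S_{\hat\beta}$ in $\Sig_V$, i.e.\ to $\dastoo{V}{P}\preceq\hat\beta$. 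Thus the clopen subsets of $\Sig_{V^\prime}$ containing $K$ are precisely the $S_{\hat\beta}$ with $\hat\beta\succeq\dastoo{V}{P}$. In a compact Hausdorff zero-dimensional space the intersection of all clopen neighbourhoods of a closed set is the set itself (separate each point outside $K$ from $K$ by a clopen set and use compactness), so $K=\bigcap\{S_{\hat\beta}\mid \hat\beta\in\mathcal{P}(V^\prime),\ \hat\beta\succeq\dastoo{V}{P}\}$. Taking interiors and using the lattice isomorphism $\mathcal{P}(V^\prime)\cong\mathcal{C}L(\Sig_{V^\prime})$ of \eq{LatticeIsomProjsAndClopenSubsets}, under which an infinite meet of projectors corresponds to $\operatorname{int}\bigcap$ of the clopen sets, gives $\operatorname{int}K=S_{\bigwedge\{\hat\beta\succeq\dastoo{V}{P}\}}=S_{\dastoo{V^\prime}{P}}$, the last step by the definition \eq{Def:dasouter} of outer daseinisation. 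Hence $S_{\dastoo{V^\prime}{P}}=\operatorname{int}K\subseteq K$, and together with the easy inclusion $K\subseteq S_{\dastoo{V^\prime}{P}}$ this forces $K=S_{\dastoo{V^\prime}{P}}$ (and incidentally shows $K$ is clopen), which is the assertion. The step I expect to be the main obstacle is precisely this topological one: justifying that $K$ equals the intersection of the clopen sets containing it, and then correctly handling the interior/closure corrections when passing to the infinite meet in the projection lattice. Everything else is either a direct translation through the Gel'fand correspondence or the monotonicity computation already used elsewhere.
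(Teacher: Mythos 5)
Your proof is correct, but it reaches the converse inclusion by a genuinely different route from the paper. The paper's argument is pointwise: given $\l'\in S_{\dastoo{V'}{P}}$, it forms the ultrafilter $F_{\l'}=\l'^{-1}(1)\cap\mathcal{P}(V')$, shows that $F_{\l'}\cup\{\P\}$ is a filter base in $\PV$ (the crux being that $\hat Q\wedge\P=\hat 0$ for $\hat Q\in F_{\l'}$ would force $\dastoo{V'}{P}\preceq\hat 1-\hat Q$ and hence the contradiction $1=\l'(\dastoo{V'}{P})\leq\l'(\hat 1-\hat Q)=0$), and then invokes Zorn's lemma to extend this to an ultrafilter, i.e.\ to a point $\l\in S_{\dastoo{V}{P}}$ with $\l|_{V'}=\l'$. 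You instead argue globally: you show the image $K=\Sig(i_{V'V})(S_{\dastoo{V}{P}})$ is closed, that its clopen neighbourhoods in $\Sig_{V'}$ are exactly the $S_{\hat\beta}$ with $\hat\beta\succeq\dastoo{V}{P}$, and then use the Stone-space fact that a closed set is the intersection of its clopen neighbourhoods together with the complete-lattice isomorphism $\mathcal{P}(V')\cong\mathcal{C}L(\Sig_{V'})$ (meets going to $\operatorname{int}\bigcap$) to squeeze $K$ between $S_{\dastoo{V'}{P}}=\operatorname{int}K$ and $S_{\dastoo{V'}{P}}$. Each approach has its merits: the paper's lift is more explicit and is really a local surjectivity statement about spectral elements, which the authors reuse immediately afterwards to conclude that the restriction maps are continuous, open and closed; yours dispenses with the explicit ultrafilter extension, exposes more clearly that the identity is forced by the \emph{definition} of $\dastoo{V'}{P}$ as a meet in a complete lattice, and delivers the clopenness of the image (hence openness of the restriction on basic clopen sets) in the same stroke. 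The only points that deserve care in your write-up are the ones you flag yourself — surjectivity of $\Sig(i_{V'V})$ (your Urysohn argument is fine, and is in any case only needed for the remark following the theorem, not for the theorem itself) and the fact that an order isomorphism of complete lattices automatically preserves arbitrary meets, which justifies the step $\operatorname{int}K=S_{\bigwedge\{\hat\beta\succeq\dastoo{V}{P}\}}$.
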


\begin{proof}
First of all, to simplify notation, we can replace $\dastoo{V}{P}$
by $\P$ (which amounts to the assumption that $\P\in\PV$. This
does not play a r\^ole for the current argument). By definition,
$\G(i_{V^{\prime}V})\big(\P\big)=\dastoo{V^\prime}{P}$, so we have
to show that $S_{\dastoo{V^\prime}{P}}=\Sig
(i_{V^{\prime}V})(S_{\P})$ holds.

If $\l\in S_{\P}$, then $\l(\P)=1$, which implies $\l(\hat Q)=1$
for all $\hat Q\geq\P$. In
particular, $\l(\dastoo{V'}{P})=1$, so $\Sig (i_{V^{\prime}V})(\l)=%
\l|_{V'}\in\S_{\dastoo{V^\prime}{P}}$. This shows that $\Sig
(i_{V^{\prime}V})(S_{\P})\subseteq S_{\dastoo{V^\prime}{P}}$.

To show the converse inclusion, let $\l'\in S_{\dastoo{V'}{P}}$,
which means that $\l'(\dastoo{V'}{P})=1$. We have $\P\in{\G(i_{V'V})}^{-1}%
(\dastoo{V'}{P})$. Let
\begin{equation}
                        F_{\l'}:=\{\hat Q\in\mathcal{P}(V')\mid\l'(\hat Q)=1\}
                        =\l'^{-1}(1)\cap\mathcal{P}(V').
\end{equation}
As shown in section \ref{SubSec_DasFromFctsOnFilters}, $F_{\l'}$
is an ultrafilter in the projection lattice
$\mathcal{P}(V')$.\footnote{In general, each ultrafilter $F$ in
the projection lattice of an abelian von Neumann algebra $V$
corresponds to a unique element $\l_F$ of the Gel'fand spectrum of
$V$. The ultrafilter is the collection of all those projections
that are mapped to $1$ by $\l$, i.e., $F=\l_F^{-1}(1)\cap\PV$.}
The idea is to show that $F_{\l'}\cup\P$ is a filter base in $\PV$
that can be extended to an ultrafilter, which corresponds to an
element of the Gel'fand spectrum of $V$.

Let us assume that $F_{\l'}\cup\P$ is not a filter base in $\PV$.
Then there exists some $\hat Q\in F_{\l'}$ such that
\begin{equation}
                        \hat Q\land\P=\hat Q\P=\hat 0,
\end{equation}
which implies $\P\leq\hat 1-\hat Q$, so
\begin{equation}
                        \G(i_{V'V})(\P)=\dastoo{V'}{P}\leq\G(i_{V'V})
                        (\hat 1-\hat Q)=\hat 1-\hat Q
\end{equation}
and hence we get the contradiction
\begin{equation}
                        1=\l'(\dastoo{V'}{P})\leq\l'(\hat 1-\hat Q)=0.
\end{equation}
By Zorn's lemma, the filter base $F_{\l'}\cup\P$ is contained in
some (not necessarily unique) maximal filter base in $\PV$. Such a
maximal filter base is an ultrafilter and thus corresponds to an
element $\l$ of the Gel'fand spectrum $\Sig_V$ of $V$. Since $\P$
is contained in the ultrafilter, we have $\l(\P)=1$, so
$\l\in\S_{\P}$. By construction, $\Sig
(i_{V^{\prime}V})(\l)=\l|_{V'}=\l'\in S_{\dastoo{V'}{P}}$, the
element of $\Sig_{V'}$ we started from. This shows that
$S_{\dastoo{V^\prime}{P}}\subseteq\Sig (i_{V^{\prime}V})%
(S_{\P})$, and we obtain
\begin{equation}
                        S_{\dastoo{V^\prime}{P}}=\Sig(i_{V^{\prime}V})(S_{\P}).
                        \label{Openness}
\end{equation}

It is well-known that every state $\l'\in\Sig_{V'}$ is of the form
$\l'=\Sig(i_{V'V})(\l)=\l|_{V'}$ for some $\l\in\Sig_V$. This
implies
\begin{equation}
                        \Sig(i_{V'V})^{-1}(S_{\dastoo{V'}{P}})=S_{\dastoo{V'}{P}}%
                        \subseteq\Sig_V. \label{Continuity}
\end{equation}
Note that on the right hand side, $S_{\dastoo{V'}{P}}$ (and not
$S_{\P}$, which is a smaller set in general) shows up.
\end{proof}

De Groote has shown in \cite{deG05c} that for any unital abelian
von Neumann algebra $V$, the clopen sets $S_{\hat Q}$, $\hat
Q\in\PV$, form a base of the Gel'fand topology on $\Sig_V$.
Formulas (\ref{Openness}) and (\ref{Continuity}) hence show that
the restriction mappings
\begin{eqnarray}
                        \nonumber \Sig(i_{V^{\prime}V}):\Sig_V &\rightarrow& \Sig_{V'}\\
                        \nonumber \l &\mapsto& \l|_{V'}
\end{eqnarray}
of the spectral presheaf are open and continuous. Using
continuity, it is easy to see that $\Sig (i_{V^{\prime}V})$ is
also closed: let $C\subseteq\Sig_{V}$ be a closed subset. Since
$\Sig_{V}$ is compact, $C$ is compact, and since $\Sig
(i_{V^{\prime}V})$ is continuous, $\Sig
(i_{V^{\prime}V})(C)\subseteq\Sig_{V^{\prime}}$ is compact, too.
However, $\Sig_{V^{\prime}}$ is Hausdorff, and so
$\Sig(i_{V^{\prime}V})(C)$ is closed in $\Sig_{V^{\prime}}$.

\subsection{The Grothendieck $k$-Construction for an Abelian Monoid}
Let us briefly review the  Grothendieck construction for an
abelian monoid $M$.

{\definition A \emph{group completion} of $M$ is an abelian group
$k(M)$ together with a monoid map $\theta:M\map k(M)$ that is
universal. Namely, given any monoid morphism $\phi:M\map G$, where
$G$ is an abelian group, there exists a unique \emph{group}
morphism $\phi^\prime: k(M)\map G$  such that $\phi$ factors
through $\phi^\prime$; \ie\ we have the commutative diagram
\begin{center}
   \Vtrianglep<1`1`-1;400>[M`G`k(M);\phi`\theta`\phi^{\prime}]
\end{center}
with $\phi=\phi^\prime\circ\theta$. }

\noindent It is easy to see that any such $k(M)$ is unique up to
isomorphism.

To prove existence, first take the set of all pairs $(a,b)\in
M\times M$, each of which is to be thought of heuristically as
$a-b$. Then,  note that \emph{if} inverses existed in $M$, we
would have  $a-b=c-d$ if and only if $a+d=c+b$. This suggests
defining an equivalence relation on $M\times M$ in the following
way:
\begin{equation}
  (a,b)\equiv (c,d)  \makebox{ iff $\exists g\in M$ such that }
        a+d+g=b+c+g.     \label{Def:k(M)}
\end{equation}

{\definition The \emph{Grothendieck completion} of an abelian
monoid $M$ is the pair $(k(M),\theta)$ defined as follows:
\begin{enumerate}
\item[(i)] $k(M)$  is the set of equivalence classes $[a,b]$, where
the equivalence relation is defined in \eq{Def:k(M)}. A group law
on $k(M)$ is defined by
\begin{eqnarray}
 &&{\rm (i)}\   [a,b]+[c,d]:=[a+c,b+d],           \\[2pt]
 &&{\rm (ii)}\          0_{k(M)}:=[0_M,0_M],      \\[2pt]
 &&{\rm (iii)}\  -[a,b]:=[b,a],
\end{eqnarray}
where $0_M$ is the unit in the  abelian monoid $M$.

\item[(ii)] The map $\theta:M\map k(M)$ is defined by
\begin{equation}
\theta(a):=[a,0]
\end{equation}
for all $a\in M$.
\end{enumerate}
} It is straightforward to show that (i) these definitions are
independent of the representative elements in the equivalence
classes; (ii) the axioms for a group are satisfied; and (iii) the
map $\theta$ is universal in the sense mentioned above.

It is also clear that $k$ is a \emph{functor} from the category of
abelian monoids to the category of abelian groups. For, if
$f:M_1\map M_2$ is a morphism between abelian monoids,  define
$k(f):k(M_1)\map k(M_2)$ by $k(f)[a,b]:=[f(a),f(b)]$ for all
$a,b\in M_1$.

\subsection{Functions of Bounded Variation and $\Ga$\ps{$\mathR^{\succeq}$}}
These techniques will now be applied to the set, $\Ga\SR$, of
global elements of $\SR$. We could equally well consider
$\Ga\PR{\mathR}$ and its $k$-extension, but this would just make
the notation more complex, so in this and the following
subsections, we will mainly concentrate on $\Ga\SR$ (resp. $\SR$).
The results can easily be extended to $\Ga\PR{\mathR}$ (resp.
$\PR{\mathR}$).

It was discussed in Section \ref{SubSec:SR} how global elements of
$\PR{\mathR}$ are in one-to-one correspondence with pairs
$(\mu,\nu)$ consisting of an order-preserving and an
order-reversing function on the category $\V{}$; \ie\ with
functions $\mu:\Ob{\V{}}\map\mathR$ such that, for all
$V_1,V_2\in\Ob{\V{}}$, if $V_2\subseteq V_1$ then
$\mu(V_2)\leq\mu(V_1)$ and $\nu:\Ob{\V{}}\map\mathR$ such that,
for all $V_1,V_2\in\Ob{\V{}}$, if $V_2\subseteq V_1$ then
$\nu(V_2)\geq\nu(V_1)$; see \eq{Def:GammaRD}. The monoid law on
$\Ga\PR{\mathR}$ is given by \eq{Def:mu+nuFn}.

Clearly, global elements of $\SR$ are given by order-reversing
functions $\nu:\V{}\rightarrow\mathR$, and $\Ga\SR$ is an abelian
monoid in the obvious way. Hence the Grothendieck construction can
be applied to give an abelian group $k(\Ga\SR)$. This is defined
to be the set of equivalence classes $[\nu,\ka]$ where
$\nu,\ka\in\Ga\SR$, and where $(\nu_1,\ka_1)\equiv(\nu_2,\ka_2)$
if, and only if, there exists $\alpha\in\Ga\SR$, such that
\begin{equation}
                \nu_1+\ka_2+\alpha=\ka_1+\nu_2+\alpha      \label{Def:equivkGR}
\end{equation}
Since $\Ga\SR$ has a cancellation law, we have
$(\nu_1,\ka_1)\equiv(\nu_2,\ka_2)$ if, and only if,
\begin{equation}
                \nu_1+\ka_2=\ka_1+\nu_2.
\end{equation}
Intuitively, we can think of $[\nu,\ka]$ as being `$\nu-\ka$', and
embed $\Ga\SR$ in $k(\Ga\SR)$ by $\nu\mapsto[\nu,0]$. However,
$\nu,\ka$ are $\mathR$-valued functions on $\Ob{\V{}}$ and hence,
in this case, the expression `$\nu-\ka$' also has a \emph{literal}
meaning: \ie\ as the function $(\nu-\ka)(V):=\nu(V)-\ka(V)$ for
all $V\in\Ob{\V{}}$.

This is not just a coincidence of notation. Indeed, let
$F\big(\Ob{\V{}},\mathR\big)$ denote the set of all real-valued
functions on $\Ob{\V{}}$. Then we can construct the map,
\begin{eqnarray}
          j:k(\Ga\SR)&\map   & F\big(\Ob{\V{}},\mathR\big)
                                        \label{Def:jk->F} \\
           {[}\nu,\ka]&\mapsto&  \nu-\ka  \nonumber
\end{eqnarray}
which is  well-defined on equivalence classes.

It is  easy to see that the map in \eq{Def:jk->F} is injective.
This raises the question of  the \emph{image} in
$F\big(\Ob{\V{}},\mathR\big)$ of the map $j$: \ie\ what types of
real-valued function on $\Ob{\V{}}$ can be written as the
difference between two order-reversing functions?

For functions $f:\mathR\map\mathR$, it is a standard result that a
function can be written as the difference between two monotonic
functions if, and only if, it has bounded variation. The natural
conjecture is that a similar result applies here. To show this, we
proceed as follows.

Let $f:\Ob{\V{}}\map\mathR$ be a real-valued function on the set
of objects in the category $\V{}$. At each $V\in\Ob{\V{}}$,
consider a finite chain
\begin{equation}
    C:=\{V_0,V_1,V_2,\ldots,\ V_{n-1},V\mid V_0\subset
    V_1\subset V_2\subset\cdots\subset V_{n-1}\subset V\}
\end{equation}
of proper subsets, and define the \emph{variation} of $f$ on this
chain to be
\begin{equation}
        V_f(C):=\sum_{j=1}^n|f(V_j)-f(V_{j-1})|
\end{equation}
where we set $V_n:=V$. Now take the supremum of $V_f(C)$ for all
such chains $C$. If this is finite, we say that $f$ has a
\emph{bounded variation} and define
\begin{equation}
        I_f(V):=\sup_C V_f(C)
\end{equation}

Then it is clear that (i) $V\mapsto I_f(V)$ is an order-preserving
function on $\Ob{\V{}}$; (ii) $f-I_f$ is an order-reversing
function on $\Ob{\V{}}$; and (iii) $-I_f$ is an order-reversing
function on $\Ob{\V{}}$. Thus, any function, $f$, of bounded
variation can be written as
\begin{equation}
        f\equiv(f-I_f)-(-I_f)
\end{equation}
which is the difference of two order-reversing functions; \ie\ $f$
can be expressed as the difference of two elements of $\Ga\SR$.

Conversely, it is a straightforward modification of the proof for
functions \mbox{$f:\mathR\map\mathR$}, to show that if
$f:\Ob{\V{}}\map\mathR$ is the difference of two order-reversing
functions, then $f$ is of bounded variation. The conclusion is
that $k(\Ga\SR)$ is in bijective correspondence with the set,
$\BV$, of functions  $f:\Ob{\V{}}\map\mathR$ of bounded variation.

\subsection{Taking Squares in $k(\Ga$\ps{$\mathR^{\succeq}$)}}
We can now think of $k(\Ga\SR)$ in two ways: (i) as the set of
equivalence classes $[\nu,\ka]$, of elements $\nu,\ka\in\Ga\SR$;
and (ii) as the set, $\BV$, of differences $\nu-\ka$ of such
elements.

As expected, $\BV$ is an abelian group. Indeed: suppose $\alpha
=\nu_1-\ka_1$ and $\beta=\nu_2-\ka_2$ with
$\nu_1,\nu_2,\ka_1,\ka_2\in \Ga\SR$, then
\begin{equation}
        \alpha+\beta=(\nu_1+\nu_2)-(\ka_1+\ka_2)
\end{equation}
Hence $\alpha+\beta$ belongs to $\BV$ since $\nu_1+\nu_2$ and
$\ka_1 +\ka_2$ belong to $\Ga\SR$.

\paragraph{The definition of $[\nu,0]^2$.} We will now show how to
take the square of elements of $k(\Ga\SR)$ that are of the form
$[\nu,0]$. Clearly, $\nu^2$ is well-defined as a function on
$\Ob{\V{}}$, but it may not belong to $\Ga\SR$. Indeed, if
$\nu(V)<0$ for any $V$, then the function $V\mapsto\nu^2(V)$ can
get smaller as $V$ gets smaller, so it is order-preserving instead
of order-reversing.

This suggests the following strategy. First, define functions
$\nu_+$ and $\nu_-$ by
\begin{equation}
\nu_+(V):= \left\{\begin{array}{ll}
            \nu(V) & \mbox{\ if\ $\nu(V)\geq 0$} \\[2pt]
            0 & \mbox{\ if\ $\nu(V)< 0$}
         \end{array}
        \right.
\end{equation}
and
\begin{equation}
\nu_-(V):= \left\{\begin{array}{ll}
            0 & \mbox{\ if\ $\nu(V)\geq0$} \\[2pt]
            \nu(V) & \mbox{\ if\ $\nu(V)<0$.}
         \end{array}
        \right.
\end{equation}
Clearly, $\nu(V)=\nu_+(V)+\nu_-(V)$ for all $V\in\Ob{\V{}}$. Also,
for all $V$, $\nu_+(V)\nu_-(V)=0$, and hence
\begin{equation}
        \nu(V)^2=\nu_+(V)^2+\nu_-(V)^2 \label{l2=lL2+lR2}
\end{equation}
However, (i) the function $V\mapsto\nu_+(V)^2$ is order-reversing;
and (ii) the function $V\mapsto\nu_-(V)^2$ is order-preserving.
But then $V\mapsto -\nu_-(V)^2$ is order-reversing. Hence, by
rewriting \eq{l2=lL2+lR2} as
\begin{equation}
        \nu(V)^2=\nu_+(V)^2-(-\nu_-(V)^2)
\end{equation}
we see that the function $V\mapsto \nu^2(V):=\nu(V)^2$ is an
element of $\BV$.

In terms of $k(\Ga\SR)$, we can define
\begin{equation}
        [\nu,0]^2:=[\nu_+^2, -\nu_-^2]
\end{equation}
which belongs to $k(\Ga\SR)$. Hence, although there exist
$\nu\in\Ga\SR$ that have no square in $\Ga\SR$, such global
elements of $\SR$ \emph{do} have squares in the $k$-completion,
$k(\Ga\SR)$. On the level of functions of bounded variation, we
have shown that the square of a monotonic (order-reversing)
function is a function of bounded variation.

On the other hand, we cannot take the square of an arbitrary
element $[\nu,\ka]\in\Ga\SR$, since the square of a function of
bounded variation need not be a function of bounded
variation.\footnote{We have to consider functions like $(\nu_+
+\nu_- -(\ka_+ +\ka_-))^2$, which contains terms of the form
$\nu_+\ka_-$ and $\nu_-\ka_+$: in general, these are neither
order-preserving nor order-reversing.}

\subsection{The Object $k$(\ps{$\mathR^{\succeq}$}) in the Topos $\SetH{}$.}
\subsubsection{The Definition of $k$(\ps{$\mathR^{\succeq}$}).}
The next step is to translate these results about the set
$k(\Ga\SR)$ into the construction of an object $\kSR$ in the topos
$\SetH{}$. We anticipate that, if this can be done, then
$k(\Ga\SR)\simeq\Ga\kSR$.

As was discussed in Section \eq{SubSec:SR}, the presheaf $\SR$ is
defined at each stage $V$\ by
\begin{equation}
 \SR_V:=\{\nu:\downarrow\!\!V\map\mathR
\mid\nu\in\mathcal{OR}(\downarrow\!\!V,\mathR)\}.
\end{equation}
If $i_{V^\prime V}:V^\prime\subseteq V$, then the presheaf map
from $\SR_V$ to $\SR_{V^\prime}$ is just the restriction of the
order-reversing functions from $\downarrow\!\! V$ to
$\downarrow\!\! V^\prime$.

The first step in constructing $\kSR$ is to define an equivalence
relation on pairs of functions, $\nu,\ka\in\SR_V$, for each stage
$V$, by saying that $(\nu_1,\ka_1) \equiv (\nu_2,\ka_2)$ if, and
only, there exists $\a\in\SR_V$ such that
\begin{equation}
        \nu_1(V^\prime)+\ka_2(V^\prime)+\alpha(V^\prime)=
        \ka_1(V^\prime)+\nu_2(V^\prime)+\alpha(V^\prime)
\end{equation}
for all $V^\prime\subseteq V$.

{\definition The presheaf  $\kSR$ is defined over the category
$\V{}$ in the following way.
\begin{enumerate}
\item[(i)] On objects $V\in\Ob{\V{}}$:
\begin{equation}
 \kSR_V:=\{[\nu,\ka]\mid\nu,\ka\in\mathcal{OR}(\downarrow\!\!V,\mathR)\},
\end{equation}
where $[\nu,\ka]$ denotes the $k$-equivalence class of
$(\nu,\ka)$.

\item[(ii)] On morphisms $i_{V^\prime V}:V^\prime\subseteq V$: The
arrow $\kSR(i_{V^\prime V}):\kSR_V\map\kSR_{V^\prime}$ is given by
$\big(\kSR(i_{V^\prime
V})\big)([\nu,\ka]):=[\nu|_{V^\prime},\ka|_{V^\prime}]$ for all
$[\nu,\ka]\in\kSR_V$.
\end{enumerate}
}

It is straightforward to show that $\kSR$ is an abelian
group-object in the topos $\SetH{}$. In particular, an arrow
$+:\kSR\times\kSR\map\kSR$ is defined at each stage $V$ by
\begin{equation}
  +_V\big([\nu_1,\ka_1],[\nu_2,\ka_2]\big):=
        [\nu_1+\nu_2,\ka_1+\ka_2]
\end{equation}
for all
$\big([\nu_1,\ka_1],[\nu_2,\ka_2]\big)\in\kSR_V\times\kSR_V$. It
is easy to see that (i) $\Ga\kSR\simeq k(\Ga\SR)$; and (ii) $\SR$
is a sub-object of $\kSR$ in the topos $\SetH{}$.

\subsubsection{The Presheaf $k$(\ps{$\mathR^{\succeq}$})
as the Quantity-Value Object.} We can now identify $\kSR$ as a
possible quantity-value object in $\SetH{}$. To each bounded,
self-adjoint operator $\A$, there is an arrow
$[\dasBo{A}]:\Sig\map\kSR$, given by first sending $\A\in\BH_\sa$
to $\dasBo{A}$ and then taking $k$-equivalence classes. More
precisely, one takes the monic $\iota:\SR\hookrightarrow\kSR$ and
then constructs $\iota\circ\dasBo{A}:\Sig\map\kSR.$

Since, for each stage $V$, the elements in the image of
$[\dasBo{A}]_V= (\iota\circ\dasBo{A})_V$ are of the form
$[\nu,0]$, $\nu\in\SR_V$,  their square is well-defined. From a
physical perspective, the use of $\kSR$ rather than $\SR$ renders
possible the definition of things like the `intrinsic dispersion',
$\nabla(\A):=\dasBo{A^2}-\dasBo{A}^2$; see \eq{Def:nabla}.

\subsubsection{The square of an arrow $[\dasBo{A}]$.}
An arrow $[\dasBo{A}]:\Sig\map\kSR$ is constructed by first
forming the outer daseinisation $\dasBo{A}$ of $\A$, which is an
arrow from $\Sig$ to $\SR$, and then composing with the monic
arrow from $\SR$ to $\kSR$. Since only outer daseinisation is
used, for each $V\in\V{}$ and each $\l\in\Sig_V$ one obtains an
element of $\kSR_V$ of the form $[\daso{A}_V(\l),0]$. We saw how
to take the square of these functions, and applying this to all
$\l\in\Sig_V$ and all $V\in\V{}$, we get the square
$[\dasBo{A}]^2$ of the arrow $[\dasBo{A}]$.

If we consider an arrow of the form
$\dasB{A}:\Sig\map\PR{\mathR}$, then the construction involves
both inner and outer daseinisation, see Theorem \ref{Th:ST_3}. For
each $V$ and each $\l\in\Sig_V$, we obtain a pair of functions
$(\dasi{A}_V(\l),\daso{A}_V(\l))$, which are both not constantly
$0$ in general. There is no canonical way to take the square of
these in $\PR{\mathR}_V$. Going to the $k$-extension
$k(\PR{\mathR})$ of $\PR{\mathR}$ does not improve the situation,
so we cannot define the square of an arrow $\dasB{A}$ (or
$[\dasB{A}]$ in general.

\section{Appendix 2: A Short Introduction to the Relevant Parts of Topos Theory}
\subsection{What is a Topos?}
It is impossible to give here more than the briefest of
introductions to topos theory. At the danger of being highly
imprecise, we restrict ourselves to mentioning some aspects of
this well-developed mathematical theory and give a number of
pointers to the literature. The aim merely is to give a very rough
idea of the structure and internal logic of a topos.

There are a number of excellent textbooks on topos theory, and the
reader should consult at least one of them. We found the following
books useful: \cite{LR03,Gol84,MM92,Jst02,Bell88,LamScott86}.

Topos theory is a remarkably rich branch of mathematics which can
be approached from a variety of different viewpoints. The basic
area of mathematics  is category theory; where, we recall, a
category consists of a collection of \emph{objects} and a
collection of \emph{morphisms} (or \emph{arrows}).

In the special case of the category of sets, the objects are sets,
and a morphism is a function between a pair of sets. In general,
each morphism $f$ in a category is associated with a pair of
objects\footnote{The collection of all objects in category, $\cal
C$, is denoted $\Ob{\cal C}$. The collection of arrows from $B$ to
$A$ is denoted $\Hom{\cal C}{B}{A}$. We will only be interested in
`small' categories in which both these collections are sets
(rather than the, more general, classes.)}, known as its `domain'
and  `codomain', and is written as $f:B\map A$ where $B$ and $A$
are the domain and codomain respectively. Note that this arrow
notation is used even if $f$ is not a function in the normal
set-theoretic sense. A key ingredient in the definition of a
category is that if $f:B\map A$ and $g:C\map B$ (\ie\ the codomain
of $g$ is equal to the domain of $f$) then $f$ and $g$ can be
`composed' to give an arrow $f\circ g:C\map A$; in the case of the
category of sets, this is just the usual composition of functions.

A simple  example of a category is given by  any partially-ordered
set (`poset') $\cal C$: (i) the objects are defined to be the
elements of $\cal C$; and (ii) if $p,q\in\cal C$, a morphism from
$p$ to $q$ is defined to exist if, and only if, $p\preceq q$ in
the poset structure.  Thus, in a poset regarded as a category,
there is at most one morphism between any pair of objects
$p,q\in\cal C$; if it exists, we shall write this morphism as
$i_{pq}:p\map q$. This example is important for us in form of the
`category of contexts', $\V{}$, in quantum theory. The objects in
$\V{}$ are the commutative, unital\footnote{`Unital' means that
all these algebras contain the identity operator $\hat 1\in\BH$.}
von Neumann sub-algebras of the algebra, $\BH$, of all bounded
operators on the Hilbert space $\Hi$.

\paragraph{Topoi as mathematical universes.} Every (elementary) topos
$\tau$ can be seen as a \emph{mathematical universe}. As a
category, a topos $\tau$ possesses a number of structures that
generalise constructions that are possible in the category,
$\Set$, of sets and functions.\footnote{More precisely,
\emph{small} sets and functions between them. Small means that we
do not have proper classes. One must take care in these
foundational issues to avoid problems like Russell's paradox.}
Namely, in $\Set$, we can construct new sets from given ones in
several ways. Specifically, let $S,T$ be two sets, then we can
form the cartesian product $S\times T$, the disjoint union
$S\amalg T$ and the exponential $S^{T}$---the set of all functions
from $T$ to $S$.

These constructions turn out to be fundamental, and they can all
be phrased in an abstract, categorical manner, where they are
called the `product', `co-product' and `exponential',
respectively. By  definition, in a topos $\tau$, these operations
always exist. The first and second of these properties are called
`finite completeness' and `finite co-completeness', respectively.

One consequence of the existence of finite limits is that each
topos, $\tau$, has a \emph{terminal object}, denoted by $1_\tau$.
This is characterised by the property that for any object $A$ in
the topos $\tau$, there exists exactly one arrow from $A$ to
$1_\tau$. In $\Set$, any one-element set $1=\{*\}$ is
terminal.\footnote{Like many categorical constructions, the
terminal object is fixed only up to isomorphism: all one-element
sets are isomorphic to each other, and any of them can serve as a
terminal object. Nonetheless, one speaks of \emph{the} terminal
object.}

Of course, $\Set$ is a topos, too, and it is precisely the topos
which usually plays the r\^{o}le of our mathematical universe,
since we construct our mathematical objects starting from sets and
functions between them. As a slogan, we have: a topos $\tau$ is a
category with `certain crucial' properties that are similar to
those in $\Set$. A very nice and gentle introduction to these
aspects of topos theory is the book \cite{LR03}. Other good
sources are \cite{Gol84,McL71}.

In order to `do mathematics', one must also have a logic,
including a deductive system. Each topos comes equipped with an
\emph{internal logic}, which is of \emph{intuitionistic} type. We
will now very briefly sketch the main characteristics of
intuitionistic logic and the mathematical structures in a topos
that realise this logic.

\paragraph{The sub-object classifier.}
Let $X$ be a set, and let $P(X)$ be the power set of $X$; \ie\ the
set of subsets of $X$. Given a subset $K\in P(X)$, one can ask for
each point $x\in X$ whether or not it lies in $K$. Thus there is
the \emph{characteristic function} $\chi_K:X\rightarrow \{0,1\}$
of $K$, which is defined as
\begin{equation}            \label{Def:chiK}
\chi_{K}(x):=\left\{
\begin{tabular}
[c]{ll}
$1$ & if $x\in K$\\
$0$ & if $x\notin K$
\end{tabular}
\ \right.
\end{equation}
for all $x\in X$; cf.\ \eq{TVxinKcl}. The two-element set
$\{0,1\}$ plays the r\^{o}le of a set of \emph{truth-values} for
propositions (of the form ``$x\in K$''). Clearly, $1$ corresponds
to `true', $0$ corresponds to `false', and there are no other
possibilities. This is an argument about sets, so it takes place
in, and uses the logic of, the topos $\Set$ of sets and functions.
$\Set$ is a \emph{Boolean topos}, in which the familiar two-valued
logic and the axiom ($*$) hold. (This does not contradict the fact
that the internal logic of topoi is intuitionistic, since Boolean
logic is a special case of intuitionistic logic.)

In an arbitrary topos, $\tau$, there is a special object
$\O_\tau$, called the \emph{sub-object classifier}, that takes the
r\^{o}le of the set $\{0,1\}\simeq \{{\rm false,true}\}$ of
truth-values. Let $B$ be an object in the topos, and let $A$ be a
sub-object of $B$. This means that there is a monic $A\map
B$,\footnote{A \emph{monic} is the categorical version of an
injective function. In the topos $\Set$, monics exactly are
injective functions.} (this is the categorical generalisation of
the inclusion of a subset $K$ into a larger set $X$). As in the
case of $\Set$, we can also characterise $A$ as a sub-object of
$B$ by an arrow from $B$ to the sub-object classifier $\O_\tau$;
in $\Set$, this arrow is the characteristic function $\chi
_{K}:X\map\{0,1\}$ of \eq{Def:chiK}. Intuitively, this
`characteristic arrow' from $B$ to $\O_\tau$ describes how $A$
`lies in' $B$. The textbook definition is:
\begin{definition}
In a category $\tau$ with finite limits, a \emph{sub-object
classifier} is an object $\O_\tau$, together with a monic ${\rm
true} :1_\tau\map\O_\tau$, such that to every monic $m:A\map B$ in
$\tau$ there is a unique arrow $\chi_A:B\map\O_\tau$ which, with
the given monic, forms a pullback square
\begin{center}
\setsqparms[1`2`2`1;700`700] \square[A`1_\tau`B`\O_\tau;`m`{\rm
true}`\chi_A]
\end{center}
\end{definition}

In $\Set$, the arrow ${\rm true}:1\map\{0,1\}$ is given by ${\rm
true}(*)=1$. In general, the sub-object classifier, $\O_\tau$,
need not be a set, since it is an object in the topos $\tau$, and
the objects of $\tau$ need not be sets. Nonetheless, there is an
abstract notion of \emph{elements} (or \emph{points}) in category
theory that we can use. Then the elements of $\O_\tau$ are the
truth-values available in the internal logic of our topos $\tau$,
just like `false' and `true', the elements of $\{{\rm false,
true}\}$, are the truth-values available in the topos $\Set$.

To understand the abstract notion of elements, let us consider
sets for a moment. Let $1=\{*\}$ be a one-element set, the
terminal object in $\Set$. Let $S$ be a set and consider an arrow
$e$ from $1$ to $S$. Clearly, (i) $e(*)\in S$ is an element of
$S$; and (ii) the set of all functions from $1$ to $S$ corresponds
exactly to the set of all elements of $S$.

This idea can be generalised to any category that has a terminal
object $1$. More precisely, an \emph{element} of an object $A$ is
defined to be an arrow from $1$ to $A$ in the category. For
example, in the definition of the sub-object classifier the arrow
`${\rm true}:1_\tau\map\O_\tau$' is an element of $\O_\tau$. It
may happen that an object $A$ has no elements, i.e., there are no
arrows $1_\tau\map A$. It is common to consider arrows from
subobjects $U$ of $A$ to $A$ as \emph{generalised elements}.

As mentioned above, the elements of the sub-object classifier,
understood as the arrows $1_\tau\map\O_\tau$, are the
truth-values. Moreover, the set of these arrows forms a Heyting
algebra (see, for example, section 8.3 in \cite{Gol84}). This is
how (the algebraic representation of) intuitionistic logic
manifests itself in a topos. Another, closely related fact is that
the set, $\Sub{A}$, of sub-objects of any object $A$ in a topos
forms a Heyting algebra.

\paragraph{The definition of a topos.} Let us pull together these
various remarks and list
the most important properties of a topos, $\tau$, for our
purposes:
\begin{enumerate}
\item There is a terminal object $1_\tau$ in $\tau$. Thus,
given any object $A$ in the topos, there is a unique arrow
$A\map 1_\tau$.

For any object $A$ in the topos, an arrow $1_\tau\map A$ is called
a \emph{global element} of $A$. The set of all global elements of
$A$ is denoted $\Ga A$.

Given $A,B\in\Ob{\tau}$, there is a product $A\times B$ in $\tau$.
In fact, a topos always has \emph{pull-backs}, and the product is
just a special case of this.\footnote{The conditions in 1.\ above
are equivalent to saying that $\tau$ is finitely complete.}

\item There is an initial object $0_\tau$ in $\tau$. This means
that given any object $A$ in the topos, there is a unique arrow
$0_\tau\map A$.

Given $A,B\in\Ob{\tau}$, there is a co-product $A\sqcup B$ in
$\tau$.  In fact, a topos always has \emph{push-outs}, and the
co-product is just a special case of this.\footnote{The conditions
in 2.\ above are equivalent to saying that $\tau$ is finitely
co-complete.}

\item There is \emph{exponentiation}: \ie\ given objects $A,B$ in
$\tau$ we can form an object $A^B$, which is the topos analogue of
the set of functions from $B$ to $A$ in set theory. The definitive
property of exponentiation is that, given any object $C$, there is
an isomorphism
\begin{equation}
\Hom{\tau}{C}{A^B}\simeq \Hom{\tau}{C\times B}{A}\label{Def:exp}
\end{equation}
that is natural in $A$ and $C$; \ie\ it is `well-behaved' under
morphisms of the objects involved.

\item There is a sub-object classifier $\O_\tau$.
\end{enumerate}

\subsection{Presheaves on a Poset}
\label{SubSec:presheaves-poset}

To illustrate the main ideas, we will first give a few definitions
from the theory of presheaves on a partially ordered set (or
`poset'); in the case of quantum theory, this poset is the space
of `contexts' in which propositions are asserted. We shall then
use these ideas to motivate the definition of a presheaf on a
general category.  Only the briefest of treatments is given here,
and the reader is referred to the standard literature for more
information \cite{Gol84,MM92}.

A \emph{presheaf} (also known as a \emph{varying set\/}) $\ps{X}$
on a poset $\cal C$ is a function that assigns to each $p\in\cal
C$, a set $\ps{X}_p$; and to each pair $p\preceq q$ (\ie\
$i_{pq}:p\map q$), a map $\ps{X}_{qp}:\ps{X}_q\map \ps{X}_p$ such
that (i) $\ps{X}_{pp}:\ps{X}_p\map\ps{X}_p$ is the identity map
${\rm id}_{{\ps{X}_p}}$ on $\ps{X}_p$, and (ii) whenever $p\preceq
q\preceq r$, the composite map
$\ps{X}_r\stackrel{\ps{X}_{rq}}\longrightarrow
\ps{X}_q\stackrel{\ps{X}_{qp}}\longrightarrow \ps{X}_p$ is equal
to
 $\ps{X}_r\stackrel{\ps{X}_{rp}}\longrightarrow \ps{X}_p$, \ie
\begin{equation}
        \ps{X}_{rp}= \ps{X}_{qp}\circ\ps{X}_{rq}. \label{Xrp=XqpXrq}
\end{equation}
The notation $\ps{X}_{qp}$ is shorthand for the more cumbersome
$\ps{X}(i_{pq})$; see below in the definition of a functor.

An \emph{arrow}, or \emph{natural transformation} $\eta:\ps{X}\map
\ps{Y}$ between two presheaves $\ps{X},\ps{Y}$ on $\cal C$ is a
family of maps $\eta_p:\ps{X}_p\map \ps{Y}_p$, $p\in\cal C$, that
satisfy the intertwining conditions
\begin{equation}
        \eta_p\circ\ps{X}_{qp}=\ps{Y}_{qp}\circ\eta_q
\end{equation}
whenever $p\preceq q$. This is equivalent to the commutative
diagram
\begin{equation}                                                                \label{Def:eta}
                \setsqparms[1`1`1`1;1000`700]
    \square[\ps{X}_q`\ps{X}_p`\ps{Y}_q`\ps{Y}_p;
    \ps{X}_{qp}`\eta_q`\eta_p`\ps{Y}_{qp}]
\end{equation}

It follows from these basic definitions, that a sub-object of a
presheaf $\ps{X}$ is a presheaf $\ps{K}$, with an arrow
$i:\ps{K}\map \ps{X}$ such that (i) $\ps{K}_p\subseteq \ps{X}_p$
for all $p\in\cal C$; and (ii) for all $p\preceq q$, the map
$K_{qp}:\ps{K}_q\map \ps{K}_p$ is the restriction of
$\ps{X}_{qp}:\ps{X}_q\map \ps{X}_p$ to the subset
$\ps{K}_q\subseteq\ps{X}_q$. This is shown in the commutative
diagram
\begin{equation}                                                                \label{cd}
                \setsqparms[1`1`1`1;1000`700]
    \square[\ps{K}_q`\ps{K}_p`\ps{X}_q`\ps{X}_p;
    \ps{K}_{qp}```\ps{X}_{qp}]
\end{equation}
where the vertical arrows are subset inclusions.

The collection of all presheaves on a poset $\cal C$ forms a
category, denoted $\SetC{\cal C}$.  The arrows/morphisms between
presheaves in this category the arrows (natural transformations)
defined above.

\subsection{Presheaves on a General Category}
\label{SubSec:presheaves-gen-cat} The ideas sketched above admit
an immediate generalization to the theory of presheaves on an
arbitrary `small' category $\cal C$ (the qualification `small'
means that the collection of objects is a genuine set, as is the
collection of all arrows/morphisms between any pair of objects).
To make the necessary definition we first need the idea of a
`functor':

\paragraph{The idea of a functor.}
A central concept is that of a `functor' between a pair of
categories $\cal C$ and $\cal D$. Broadly speaking, this is an
arrow-preserving function from one category to the other. The
precise definition is as follows.

\begin{definition}
\begin{enumerate}

\item {A \emph{covariant functor} $\fu{F}$ from a category $\cal
C$ to a category $\cal D$ is a function that assigns
    \begin{enumerate}
        \item to each $\cal C$-object $A$, a $\cal D$-object
        $\fu{F}_A$;

        \item {to each $\cal C$-morphism $f:B\map A$, a
$\cal D$-morphism $\fu{F}(f):\fu{F}_B\map \fu{F}_A$ such that
$\fu{F}(\id_A)={\rm id}_{\fu{F}_A}$; and, if $g:C\map B$, and
$f:B\map A$ then
    \begin{equation}
        \fu{F}(f\circ g)=\fu{F}(f)\circ
                \fu{F}(g).     \label{Def:covfunct}
    \end{equation}
        }
    \end{enumerate}
    }

\item {A {\em contravariant functor\/} $\fu{X}$ from a category
$\cal C$ to a category $\cal D$ is a function that assigns
\begin{enumerate} \item to each $\cal C$-object $A$, a $\cal
D$-object $\fu{X}_A$;

    \item {to each $\cal C$-morphism $f:B\map A$, a $\cal
D$-morphism $\fu{X}(f):\fu{X}_A\map \fu{X}_B$ such that
$\fu{X}(\id_A)=\id_{\fu{X}_A}$; and, if $g:C\map B$, and $f:B\map
A$ then
    \begin{equation}
        \fu{X}(f\circ g)=\fu{X}(g)\circ\fu{X}(f).
                        \label{Def:confunct}
    \end{equation}
        }
    \end{enumerate}
    }
\end{enumerate}
\end{definition}

The connection with the idea of a presheaf on a poset is
straightforward. As mentioned above, a poset $\cal C$ can be
regarded as a category in its own right, and it is clear that a
presheaf on the poset $\cal C$ is the same thing as a
contravariant functor $\ps{X}$ from the category $\cal C$ to the
category $\Set$ of normal sets. Equivalently, it is a covariant
functor from the `opposite' category\footnote{The `opposite' of a
category $\cal C$ is a category, denoted ${\cal C}^\op$, whose
objects are the same as those of $\cal C$, and whose morphisms are
defined to be the opposite of those of $\cal C$; \ie\ a morphism
$f:A\map B$ in ${\cal C}^\op$ is said to exist if, and only if,
there is a morphism $f:B\map A$ in $\cal C$.} ${\cal C}^{\rm op}$
to $\Set$. Clearly,  \eq{Xrp=XqpXrq} corresponds to the
contravariant condition  \eq{Def:confunct}. Note that
mathematicians usually call the objects in $\cal C$ `stages of
truth', or just `stages'. For us they are `contexts', `classical
snap-shops', or `world views'.

\paragraph{Presheaves on an arbitrary category $\cal C$.}
These remarks motivate the definition of a presheaf on an
arbitrary small category $\cal C$: namely, a {\em presheaf\/} on
$\cal C$ is a covariant functor\footnote{Throughout this series of
papers, a presheaf is indicated by a letter that is underlined.}
$\ps{X}:{\cal C}^\op\map\Set$ from ${\cal C}^\op$ to the category
of sets. Equivalently, a presheaf is a contravariant functor from
$\cal C$ to the category of sets.

We want to make the collection of presheaves on $\cal C$ into a
category, and therefore we need to define what is meant by a
`morphism' between two presheaves $\ps{X}$ and $\ps{Y}$.  The
intuitive idea is that such a morphism from $\ps{X}$ to $\ps{Y}$
must give a `picture' of $\ps{X}$ within $\ps{Y}$. Formally, such
a morphism is defined to be a \emph{natural transformation}
$N:\ps{X}\map\ps{Y}$, by which is meant a family of maps (called
the \emph{components} of $N$) $N_A:\ps{X}_A\map\ps{Y}_A$,
$A\in\Ob{\cal C}$, such that if $f:B\map A$ is a morphism in $\cal
C$, then the composite map $\ps{X}_{A}
\stackrel{N_A}\longrightarrow\ps{Y}_A\stackrel{\ps{Y}(f)}
\longrightarrow\ps{Y}_B$ is equal to $\ps{X}_A
\stackrel{\ps{X}(f)}\longrightarrow\ps{X}_B\stackrel{N_B}
\longrightarrow \ps{Y}_A$. In other words, we have the commutative
diagram
\begin{equation}                                                                \label{cdNT}
                \setsqparms[1`1`1`1;1000`700]
    \square[\ps{X}_A`\ps{X}_B`\ps{Y}_A`\ps{Y}_B;
    \ps{X}(f)`N_A`N_B`\ps{Y}(f)]
\end{equation}
of which \eq{Def:eta} is clearly a special case. The category of
presheaves on $\cal C$ equipped with these morphisms is denoted
$\SetC{\cal C}$.

The idea of a sub-object generalizes in an obvious way. Thus we
say that $\ps{K}$ is a \emph{sub-object} of $\ps{X}$ if there is a
morphism in the category of presheaves (\ie\ a natural
transformation) $\iota:\ps{K}\map\ps{X}$ with the property that,
for each $A$, the component map $\iota_A:\ps{K}_A\map\ps{X}_A$ is
a subset embedding, \ie\ $\ps{K}_A\subseteq \ps{X}_A$. Thus, if
$f:B\map A$ is any morphism in $\cal C$, we get the analogue of
the commutative diagram \eq{cd}:
\begin{equation}                                                                \label{subobject}
                \setsqparms[1`1`1`1;1000`700]
    \square[\ps{K}_A`\ps{K}_B`\ps{X}_A`\ps{X}_B;
    \ps{K}(f)```\ps{X}(f)]
\end{equation}
where, once again, the vertical arrows are subset inclusions.

The category of presheaves on $\cal C$, $\Set^{{\cal C}^{\rm
op}}$, forms a topos. We do not need the full definition of a
topos; but we do need the idea, mentioned in Section
\ref{SubSec:presheaves-poset}, that a topos has a sub-object
classifier $\O$, to which we now turn.

\paragraph{Sieves and the sub-object classifier $\Om$.}
Among the key concepts in presheaf theory is that of a `sieve',
which plays a central role in the construction of the sub-object
classifier in the topos of presheaves on a category $\cal C$.

A {\em sieve\/} on an object $A$ in $\cal C$ is defined to be a
collection $S$ of morphisms $f:B\map A$ in $\cal C$ with the
property that if $f:B\map A$ belongs to $S$, and if $g:C\map B$ is
any morphism with co-domain $B$, then $f\circ g:C\map A$ also
belongs to $S$. In the simple case where $\cal C$ is a poset, a
sieve on $p\in\cal C$ is any subset $S$ of $\cal C$ such that if
$r\in S$ then (i) $r\preceq p$, and (ii) $r'\in S$ for all
$r'\preceq r$; in other words, a sieve is nothing but a {\em
lower\/} set in the poset.

The presheaf $\Om:{\cal C}\map \Set$ is now defined as follows. If
$A$ is an object in $\cal C$, then $\Om_A$ is defined to be the
set of all sieves on $A$; and if $f:B\map A$, then
$\Om(f):\Om_A\map\Om_B$ is defined as
\begin{equation}
{\ps{\O}}(f)(S):= \{h:C\map B\mid f\circ h\in S\}
                                \label{Def:Om(f)}
\end{equation}
for all $S\in\Om_A$; the sieve $\Om(f)(S)$ is often written as
$f^*(S)$, and is known as the {\em pull-back\/} to $B$ of the
sieve $S$ on $A$ by the morphism $f:B\map A$.

It should be noted that if $S$ is a sieve on $A$, and if $f:B\map
A$ belongs to $S$, then from the defining property of a sieve we
have
\begin{equation}
        f^*(S):=\{h:C\map B\mid f\circ h\in S\}=
\{h:C\map B\}=:\ \downarrow\!\!B     \label{f*S}
\end{equation}
where $\downarrow\!\!B$ denotes the {\em principal\/} sieve on
$B$, defined to be the set of all morphisms in $\cal C$ whose
codomain is $B$.

If $\cal C$ is a poset, the pull-back operation corresponds to a
family of maps $\Om_{qp}:\Om_q\map\Om_p$ (where $\Om_p$ denotes
the set of all sieves/lower sets on $p$ in the poset) defined by
$\Om_{qp}=\Om(i_{pq})$ if $i_{pq}:p\map q$ ({\em i.e.}, $p\preceq
q$). It is straightforward to check that if $S\in\Om_q$, then
\begin{equation}
\Om_{qp}(S):=\downarrow\!{p}\cap S \label{Def:Omqp}
\end{equation}
where $\downarrow\!{p}:=\{r\in{\cal C}\mid r\preceq p\}$.

A crucial property of sieves is that the set $\Om_A$ of sieves on
$A$ has the structure of a Heyting algebra. Specifically, the unit
element $1_{\Om_A}$ in $\Om_A$ is the principal sieve
$\downarrow\!\!A$, and the null element $0_{\Om_A}$ is the empty
sieve $\emptyset$. The partial ordering in $\Om_A$ is defined by
$S_1\preceq S_2$ if, and only if, $S_1\subseteq S_2$; and the
logical connectives are defined as:
\begin{eqnarray}
    && S_1\land S_2:=S_1\cap S_2    \label{Def:S1landS2}\\
    && S_1\lor S_2:=S_1\cup S_2     \label{Def:S1lorS2} \\
    &&S_1\Rightarrow S_2:=\{f:B\map A\mid
    \mbox{$\forall g:C\map B$ if $f\circ g\in S_1$ then
                $f\circ g\in S_2$}\}\hs{30}
\end{eqnarray}
As in any Heyting algebra, the negation of an element $S$ (called
the {\em pseudo-complement\/} of $S$) is defined as $\neg
S:=S\Rightarrow 0$; so that
\begin{equation}
    \neg S:=\{f:B\map A\mid \mbox{for all
$g:C\map B$, $f\circ g\not\in S$} \}.    \label{Def:negS}
\end{equation}

It can be shown that the presheaf $\Om$ is a sub-object classifier
for the topos $\SetC{\cal C}$. That is to say, sub-objects of any
object $\ps{X}$ in this topos ({\em i.e.}, any presheaf on $\cal
C$) are in one-to-one correspondence with morphisms
$\chi:\ps{X}\map {\ps{\O}}$. This works as follows. First, let
$\ps K$ be a sub-object of $\ps{X}$ with an associated
characteristic arrow $\chi_{\ps{K}}:\ps{X}\map{\ps{\O}}$. Then, at
any stage $A$ in $\cal C$, the `components' of this arrow,
$\chi_{\ps{K}A}:\ps{X}_A\map\Om_A$, are defined as
\begin{equation}
    \chi_{\ps{K} A}(x):=\{f:B\map A\mid \ps{X}(f)(x)\in
                                \ps{K}_B\} \label{Def:chiKA}
\end{equation}
for all $x\in \ps{X}_A$. That the right hand side of
\eq{Def:chiKA} actually {\em is\/} a sieve on $A$ follows from the
defining properties of a sub-object.

Thus, in each `branch' of the category $\cal C$ going `down' from
the stage $A$, $\cha{\ps{K}}{}_A(x)$ picks out the first member
$B$ in that branch for which $\ps{X}(f)(x)$ lies in the subset
$\ps{K}_B$, and the commutative diagram \eq{subobject} then
guarantees that $\ps{X}(h\circ f)(x)$ will lie in $\ps{K}_C$ for
all $h:C\map B$.  Thus each stage  $A$ in $\cal C$ serves as a
possible context for an assignment to each $x\in \ps{X}_A$ of a
generalised truth value---a sieve belonging to the Heyting algebra
$\Om_A$.  This is the sense in which contextual, generalised truth
values arise naturally in a topos of presheaves.

There is a converse to \eq{Def:chiKA}: namely, each morphism
$\chi:\ps{X}\map{\ps{\O}}$ ({\em i.e.}, a natural transformation
between the presheaves $\ps{X}$ and ${\ps{\O}}$) defines a
sub-object $\ps{K}^\chi$ of $\ps{X}$ via
\begin{equation}
    \ps{K}^\chi_A:=\chi_A^{-1}\{1_{\Om_A}\}.
                            \label{Def:KchiA}
\end{equation}
at each stage $A$.

\paragraph{Global elements of a presheaf.}
For the category of presheaves on $\cal C$, a terminal object
$\ps{1}:{\cal C}\map \Set$ can be defined by $\ps{1}_A:=\{*\}$ at
all stages $A$ in $\cal C$; if $f:B\map A$ is a morphism in $\cal
C$ then $\ps{1}(f):\{*\}\map\{*\}$ is defined to be the map
$*\mapsto *$. This is indeed a terminal object since, for any
presheaf $\ps{X}$, we can define a unique natural transformation
$N:\ps{X}\map\ps{1}$ whose components
$N_A:\ps{X}(A)\map\ps{1}_A=\{*\}$ are the constant maps $x\mapsto
*$ for all $x\in\ps{X}_A$.

As a morphism $\ga:\ps{1}\map\ps{X}$ in the topos $\SetC{\cal C}$,
a global element corresponds to a choice of an element
$\ga_A\in\ps{X}_A$ for each stage  $A$ in $\cal C$, such that, if
$f:B\map A$, the `matching condition'
\begin{equation}
    \ps{X}(f)(\ga_A)=\ga_B \label{Def:global}
\end{equation}
is satisfied.

\section*{Acknowledgements} This research was
supported by grant RFP1-06-04 from The Foundational Questions
Institute (fqxi.org). AD gratefully acknowledges financial support
from the DAAD.

This work is also supported in part by the EC Marie Curie Research
and Training Network ``ENRAGE'' (European Network on Random
GEometry) MRTN-CT-2004-005616.

We are both very grateful to Professor Hans de Groote for his
detailed and insightful comments on our work.

CJI expresses his gratitude to Jeremy Butterfield for the lengthy,
and most enjoyable, collaboration in which were formulated the
early ideas about using topoi to study quantum theory.

\end{document}